\newcommand*\rel@kern[1]{\kern#1\dimexpr\macc@kerna}
\newcommand*\widebar[1]{%
  \begingroup
  \def\mathaccent##1##2{%
    \rel@kern{0.8}%
    \overline{\rel@kern{-0.8}\macc@nucleus\rel@kern{0.2}}%
    \rel@kern{-0.2}%
  }%
  \macc@depth\@ne
  \let\math@bgroup\@empty \let\math@egroup\macc@set@skewchar
  \mathsurround\z@ \frozen@everymath{\mathgroup\macc@group\relax}%
  \macc@set@skewchar\relax
  \let\mathaccentV\macc@nested@a
  \macc@nested@a\relax111{#1}%
  \endgroup
}
\theoremstyle{plain}
\newtheorem{theorem}{Theorem}[section]
\newtheorem{corollary}[theorem]{Corollary}
\newtheorem{lemma}[theorem]{Lemma}
\newtheorem{claim}[theorem]{Claim}
\newtheorem{fact}[theorem]{Fact}
\newtheorem*{claimNoNum}{Claim}
\newtheorem{definition}[theorem]{Definition}
\newtheorem{thm}{Theorem}
\theoremstyle{remark}
\theoremstyle{plain}
\def\R{{\mathbb{R}}}
\def\F{{\mathbb{F}}}
\def\N{{\mathbb{N}}}
\renewcommand{\Pr}{\mathop{\bf Pr\/}}
\newcommand{\E}{\mathop{\bf E\/}}
\def\var{{\mathop{\bf Var\/}}}
\def\cov{{\mathop{\bf Cov\/}}}
\newcommand{\Var}{\mathop{\bf Var\/}}
\newcommand{\abs}[1]{\left|#1\right|}
\newcommand{\one}{{\mathbbm{1}}}
\def\B{{\{0,1\}}}
\def\pmone{{\{\pm1\}}}
\def\poly{{\mathrm{poly}}}
\newcommand{\remove}[1]{}
\newcommand{\mathify}[1]{\ifmmode{#1}\else\mbox{$#1$}\fi}
\newcommand{\Rp}{\mathcal{R}_p}
\newcommand{\eps}{\varepsilon}
\newcommand{\tildeO}{{\widetilde O}}
\newcommand{\cE}{\mathcal{E}}
\newcommand{\Dtagx}{{\mathcal{D}'_x}}
\newcommand{\Dx}{{\mathcal{D}_x}}
\newcommand{\D}{{\mathcal{D}}}
\newcommand{\error}{{\mathsf{err}}}
\newcommand{\Vsmall}{V_{\mathsf{small}}}
\renewcommand{\bar}{\widebar}
\renewcommand{\hat}{\widehat}
\renewcommand{\tilde}{\widetilde}
\newcommand{\Bias}{\mathsf{Bias}}
\newcommand{\ignore}[1]{}
\begin{document}

\title{Pseudorandom Generators for Width-3 Branching Programs}

\author{
Raghu Meka\thanks{{\tt raghum@cs.ucla.edu}. Supported by NSF grant CCF-1553605.}\\
\small UCLA\\
\and 
Omer Reingold\thanks{{\tt reingold@stanford.edu}. Supported in part by NSF grant CCF-1763299.
}\\
\small Stanford University\\
\and 
Avishay Tal\thanks{\texttt{avishay.tal@gmail.com}. Supported by a Motwani Postdoctoral Fellowship and by NSF grant CCF-1763299.
}\\
\small Stanford University\\
}

\maketitle

\begin{abstract}
We construct pseudorandom generators of seed length $\tilde{O}(\log(n)\cdot \log(1/\eps))$ that $\eps$-fool \emph{ordered} read-once branching programs (ROBPs) of width $3$ and length $n$.
For \emph{unordered} ROBPs, we construct pseudorandom generators with seed length $\tilde{O}(\log(n) \cdot \poly(1/\eps))$.
This is the first improvement for pseudorandom generators fooling width $3$ ROBPs since the work of Nisan~\cite{Nisan92}.

Our constructions are based on the ``iterated milder restrictions'' approach of \cite{GopalanMRTV12} (which further extends the Ajtai-Wigderson framework~\cite{AW85}), combined with the INW-generator \cite{INW94} at the last step (as analyzed by \cite{BravermanRRY14}). For the unordered case we combine iterated milder restrictions with the  generator of \cite{CHHL18}.

Two conceptual ideas that play an important role in our analysis are:
\begin{enumerate} 
\item A \emph{relabeling technique} allowing us to analyze a relabeled version of the given branching program, which turns out to be much easier.
 \item 	Treating the number of colliding layers in a branching program as a progress measure and showing that it reduces significantly under pseudorandom restrictions.
 \end{enumerate}

In addition, we achieve nearly optimal seed-length $\tilde{O}(\log(n/\eps))$ for the classes of: (1) read-once polynomials on $n$ variables, (2) locally-monotone ROBPs of length $n$ and width $3$ (generalizing read-once CNFs and DNFs), and (3) constant-width ROBPs of length $n$ having a layer of width $2$ in every consecutive $\poly\log(n)$ layers.
\end{abstract}

%
%
%

\thispagestyle{empty}
\newpage
{\small
\tableofcontents
}
\thispagestyle{empty}
\newpage

\clearpage 
\setcounter{page}{1}


\section{Introduction}
A central challenge in complexity theory is to understand the trade-off between \emph{space} and \emph{randomness} as resources and in particular, whether $\mathsf{BPL} = \mathsf{L}$. One of the main techniques we have for approaching this question is to design pseudorandom generators that fool tests computable in small space. The latter question can be elegantly captured in the language of designing pseudorandom generators for {\sf read-once branching programs}; we define these objects next.

\begin{definition}
For $w,n \in \N$, a {\sf read-once branching program} {\sf (ROBP)} of {\sf width $w$} and {\sf length $n$} is a layered directed graph $B$ with $n+1$ layers where all but the first layer have at most $w$ nodes, the first layer has a single vertex designated the {\sf start} vertex, and the vertices in the last layer are either labeled {\sf accept} or {\sf reject}. Each vertex in the first $n$ layers has exactly two outgoing edges to vertices in the next layer with one labeled $1$ and the other labeled $-1$. 

Given a ROBP as above, it defines a function $B:\pmone^n \to \pmone$ naturally where on input $x \in \pmone^n$ starting from the start vertex, you follow the edges labeled by $x_i$ for $1 \leq i \leq n$ and output $-1$ if the last vertex reached is accepting and $1$ otherwise. 
\end{definition}

Derandomizing space-bounded computations is fundamentally related to designing pseudorandom generators (and \emph{hitting set generators}) for ROBPs as above. 

\begin{definition}
Given a class of functions $\mathcal{F} = \{f: \pmone^n \rightarrow \mathbb{R}\}$, a function $G: \pmone^r \rightarrow \pmone^n$ is a {\sf pseudorandom generator (PRG)} with {\sf error} $\eps$ (or $\eps$-{\sf fools}) $\mathcal{F}$ if for every $f \in \mathcal{F}$, 
$$\left|\Pr_{x \in_u \pmone^n}[f(x)] - \Pr_{y \in_u \pmone^r}[f(G(y))] \right| \leq \eps.$$

We say the generator is {\sf log-space explicit} if $G$ can be computed in space logarithmic in the output length $n$ and refer to $r$ as the {\sf seed-length} of the generator. 
\end{definition}

It is well-known by now that if there exists a log-space explicit PRG (or even a hitting set generator) with constant error that fools ROBPs of width $n$ and length $n$ with seed-length $O(\log n)$, then $\mathsf{BPL} = \mathsf{L}$. In this vein, a seminal result of Nisan \cite{Nisan92} gave a log-space explicit PRG that $\eps$-fools ROBPs of width $w$ and length $n$ with seed-length $r = O((\log n) \cdot \log(w n/\eps))$. Despite significant attention, improving Nisan's PRG has been a fundamental bottleneck in pseudorandomness. For width $w = 2$, it is known that {\sf small-bias spaces} fool width two ROBPs (\cite{SaksZuckerman95,BogdanovDVY13}), leading to a PRG with seed-length $O(\log(n/\eps))$. However, even for the case of $\eps$ a constant and width $w = 3$, the best provable PRG had seed-length $O(\log^2 n)$--no better than what Nisan's PRG gives for polynomial width ROBPs. Nearly optimal hitting-sets generators for width-3 ROBPs were given in \cite{SimaZ11,GopalanMRTV12} while \cite{BravermanRRY14, KouckyNP11, De11} obtained PRGs with nearly optimal seed-length for special-classes of constant-width ROBPs. 
In this work, we obtain the first improvement over Nisan's PRG for width-$3$ ROBPs:

\begin{restatable}[Main Theorem]{thm}{thmMainOrdered}
\label{thm:main-3ROBP-ordered}
For any $\eps > 0$, there exists a log-space explicit PRG that $\eps$-fools width-$3$ ROBPs with seed-length%
\footnote{Henceforth, $\tilde{O}(t)$ is used to denote $O(t\cdot \poly\log(t))$.} 
$\tilde{O}(\log(n/\eps)) + O(\log(1/\eps) \cdot \log (n))$. 
\end{restatable}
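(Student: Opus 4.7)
The plan is to build the PRG by combining iterated pseudorandom restrictions, in the Ajtai-Wigderson/GMRTV style, with the INW generator (as analyzed by BRRY) applied as a final step. Concretely, I will construct a sequence of pseudorandom restriction distributions $\rho_1, \ldots, \rho_T$, each keeping roughly a $p$-fraction of coordinates alive, and let
\[
G(y_1, \ldots, y_T, z) \;=\; \rho_1(y_1) \oplus \cdots \oplus \rho_T(y_T) \oplus H(z),
\]
where $H$ is the INW generator evaluated on the coordinates that survive all $T$ stages. BRRY's analysis of INW gives seed length $O(\log n \cdot \log(1/\eps))$ for width-$3$ permutation branching programs, which will account for the second term in the theorem. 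My task is then to show that $T$ restriction stages, each costing $\tilde{O}(\log(n/\eps))$ random bits and with $T$ at most polylogarithmic in $n$, suffice to reduce any width-$3$ ROBP to (essentially) a permutation branching program on the alive coordinates; summing stage costs will then yield the first term $\tilde{O}(\log(n/\eps))$.

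To drive the restriction analysis, I take the \emph{number of colliding layers} of the (restricted) ROBP as a progress measure, where a layer is \emph{colliding} if its transition on the three states fails to be a permutation of $\{1,2,3\}$ for at least one of the two input symbols. The key lemma will say that for a suitable pseudorandom restriction $\rho$ --- constructed from a $k$-wise independent sample XOR'd with an $\eps'$-biased sample, supported on a pseudorandom subset of alive coordinates (with $k$ and $\eps'$ tuned logarithmically in $n/\eps$) --- the restricted program has, with probability $1 - \eps / \poly(n)$, a significantly smaller number of colliding layers than the original, enough that $T = O(\log \log n)$ iterations drive the count to zero.

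The main obstacle is proving this collision-reduction lemma for a single stage. Reasoning directly about width-$3$ transitions is awkward since each layer's transition can be an arbitrary function $\{1,2,3\} \times \{\pm 1\} \to \{1,2,3\}$. This is where the \emph{relabeling technique} becomes essential: since the function computed by the ROBP is invariant under independently permuting the state labels at each internal layer, I may replace the original program by a relabeled version in which the transitions at each colliding layer take one of only a few canonical forms (for instance, a two-to-one collapse onto a designated state). Working in this canonical presentation, I can track a compact summary of the program's state --- essentially the reachability partition of $\{1,2,3\}$ --- and reduce the collision-reduction argument to a short case analysis whose randomness requirements are supplied by the bounded-independence plus small-bias construction. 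The technical subtlety I anticipate is ensuring that the relabeling is consistent with how the restriction "reads" each layer, so that the same pseudorandom source suffices to handle all canonical types simultaneously.

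Finally, assembling the pieces: each restriction generator $\rho_i$ has seed length $\tilde{O}(\log(n/\eps))$; iterating for $T = O(\log \log n)$ stages suffices to reach a permutation branching program with total error $\eps/2$ by a union bound over stages and hybrid steps; and a concluding INW step with seed length $O(\log n \cdot \log(1/\eps))$ and error $\eps/2$ completes the construction, giving the claimed seed length. Log-space explicitness follows from that of the building blocks, since each sub-generator is itself log-space explicit and they are composed by XOR.
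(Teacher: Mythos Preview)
Your high-level architecture matches the paper's: iterated pseudorandom restrictions with colliding layers as the progress measure, $O(\log\log n)$ stages, and INW/BRRY at the end. However, two of your key technical steps are genuine gaps.

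First, your description of relabeling is not the technique that makes the argument go through. Permuting \emph{state} labels at each layer is a cosmetic renaming that leaves the function unchanged and does not turn permutation layers into colliding ones. The paper's relabeling swaps \emph{edge} labels ($1\leftrightarrow -1$) at individual vertices; this \emph{does} change the computed function, and the crucial observation is that it nonetheless preserves the bias function $\Bias_T(B)$ provided you only relabel layers in $[n]\setminus T$. After relabeling those layers to be locally monotone, the Brody--Verbin collision lemma guarantees each such layer is either identity or colliding, which is what lets the restriction argument proceed. Without this bias-function invariance, you have no mechanism to show that a single pseudorandom restriction stage fools the original (possibly permutation-heavy) program; the paper in fact routes this through a reduction to the XOR of short width-$3$ ROBPs (via \cite{BogdanovDVY13}), an ingredient your outline omits entirely.

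Second, your inductive collision-reduction step has a hole. After $i$ stages you have a $(3,\ell_i,m_i)$-ROBP and want to union-bound the event that each of the $m_i$ subprograms halves its colliding layers; this costs $m_i\cdot 2^{-\Omega(\ell_i)}$, which blows up once $m_i\gg C^{\ell_i}$. The paper closes this with a structural approximation lemma (Lemma~\ref{lem:structcolliding}): any $(3,\ell,m)$-ROBP equals a $(3,\ell,C^{\ell})$-ROBP plus an error term bounded by a conjunction of $C^{\ell}$ non-zero $(3,\ell,1)$-events, and this error is shown to stay small under the pseudorandom restriction as well. Finally, the endpoint of the iteration is not ``zero colliding layers'' (a permutation program) but $\ell_k=O(\log(1/\eps))$ colliding layers; the paper then invokes BRRY via a reduction from few-colliding-layers to $\delta$-reachable programs (Lemma~\ref{lemma:negligible}), which is why the last step costs $O(\log n\cdot\log(1/\eps))$ rather than $O(\log n\cdot\log\log n)$.
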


We in fact also obtain PRG's with nearly optimal dependence for constant error for the bigger class of {\sf unordered} width-$3$ ROBPs, which are functions computable by ROBPs under some unknown permutation (see Section~\ref{prelim:ROBP} for the formal definition). In this regime, we improve the results of \cite{SteinkeVW17} that gave a PRG with seed-length $\tilde{O}(\log^3 n)$.

\begin{restatable}{thm}{thmMainUnordered}\label{thm:main-3ROBP-unordered}
For any $\eps > 0$, there exists a log-space explicit PRG that $\eps$-fools  unordered width-$3$ ROBPs with seed-length $\tilde{O}(\log(n/\eps)) + O( \poly(1/\eps) \cdot \log (n))$.
\end{restatable}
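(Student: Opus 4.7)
The plan is to mirror the strategy used for the ordered case (Theorem~\ref{thm:main-3ROBP-ordered}), but to replace the INW generator at the last step by the \cite{CHHL18} generator, which is inherently order-independent because it is driven by polarizing random walks whose analysis depends only on the Fourier growth of the test function and not on any ordering of coordinates. Concretely, one iteratively applies $O(\log n)$ pseudorandom restrictions in the Ajtai--Wigderson / GMRTV framework, each restriction fixing a constant fraction of the currently live coordinates to pseudorandom values. The restrictions are sampled using order-oblivious primitives (small-bias or limited-independence spaces), so nothing in the analysis relies on an ordering of the input; the cost of all restrictions together is $\tilde O(\log(n/\eps))$ bits of seed. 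Once the number of live coordinates has shrunk sufficiently, the residual branching program is fooled by CHHL at a cost of $O(\poly(1/\eps)\cdot \log n)$ bits, giving the two summands in the claimed seed length.

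The technical heart is showing that each pseudorandom restriction changes the acceptance probability of the unordered width-$3$ ROBP by at most $\eps/\poly\log(n)$ and, simultaneously, makes progress on a useful structural measure of the program. As advertised in the introduction, the two main devices are (a) a \emph{relabeling} of the three states in each layer into a canonical form, and (b) the number of \emph{colliding layers}---layers at which two distinct vertices merge into the same vertex under one of the transition bits---used as a progress measure. Under relabeling one shows that a pseudorandom restriction both preserves the expectation up to $\eps/\poly\log(n)$ and, with high probability, shrinks the number of colliding layers by a constant factor. Iterating for $O(\log n)$ stages leaves a program with essentially no colliding layers, which is structurally simple enough to have small $L_1$-Fourier mass at low levels.

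The main obstacle---and the reason for the $\poly(1/\eps)$ factor that does not appear in the ordered case---is establishing the Fourier growth bound needed to invoke CHHL on the residual program. In the ordered setting the \cite{BravermanRRY14} analysis of INW exploits the layered structure of the program directly, but without an ordering we instead need a quantitative bound on $\sum_{|S|=k} |\hat{f}(S)|$ for the post-restriction function $f$, and pushing this bound through the $O(\log n)$ iterated restrictions is what forces a polynomial rather than logarithmic dependence on $1/\eps$. Given the Fourier bound and the per-stage error bound, combining the restriction-based PRG with CHHL is routine: a hybrid argument along the $O(\log n)$ restriction stages accumulates the error to $\eps/2$, and CHHL contributes the remaining $\eps/2$.
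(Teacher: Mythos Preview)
Your high-level approach matches the paper: run the same iterated-restriction argument as in the ordered case, then replace the final INW/\cite{BravermanRRY14} step by the \cite{CHHL18} generator. But several details are off, and in particular your account of where the $\poly(1/\eps)$ factor comes from is not correct.

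First, the number of restriction stages is $O(\log\log n)$, not $O(\log n)$: each stage halves the parameter $\ell$ (the maximum number of colliding layers in a subprogram), starting from $\ell_0 = O(\log(n/\eps))$ and stopping at $\ell_k = \Theta(\log(1/\eps) + \log\log\log n)$. You do not continue until there are ``essentially no colliding layers''; you stop when $\ell_k$ is on this scale, and the residual program is a $(3,\ell_k,C^{\ell_k})$-ROBP, hence a width-$4$ ROBP with at most $\ell_k\cdot C^{\ell_k}$ colliding layers.

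Second, the $\poly(1/\eps)$ is not caused by ``pushing a Fourier bound through the iterated restrictions.'' There is no such propagation step. Rather, once you have the residual width-$4$ program with $\ell := \ell_k \cdot C^{\ell_k}$ colliding layers, you invoke the \cite{SteinkeVW17} Fourier-growth bound $L_{1,k}(f)\le O(w^3\ell)^k$ directly on that program, and plug this into \cite{CHHL18}, whose seed length is $O(\log(n/\eps)\cdot(\log\log n + \log(1/\eps))\cdot b^2)$ with $b=O(w^3\ell)$. Since $\ell_k=\Theta(\log(1/\eps))$ we have $\ell = C^{\Theta(\log(1/\eps))}\cdot\poly\log\log n = \poly(1/\eps)\cdot\poly\log\log n$, and that is precisely the source of the $\poly(1/\eps)\cdot\log n$ term. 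Your proposal omits the \cite{SteinkeVW17} bound entirely, which is the key ingredient that makes \cite{CHHL18} applicable here.
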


A special class of unordered width-$4$ ROBPs that have received recent attention are {\sf read-once polynomials} (see \cite{Tre10,LeeViola17}) for which we give a PRG with nearly optimal seed-length both in terms of the error and input length (up to $\poly(\log \log)$ factors):

\begin{restatable}{thm}{thmReadOncePoly}\label{thm:read-once poly}
There exists a log-space explicit $\eps$-PRG for the class of read-once polynomials on $n$ variables with seed-length $\tilde{O}(\log(n/\eps))$.
\end{restatable}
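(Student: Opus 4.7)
Write the read-once polynomial as $f(x) = c \oplus \bigoplus_{i=1}^{k} C_i(x_{S_i})$, where $S_1, \ldots, S_k \subseteq [n]$ are disjoint, $c \in \{0,1\}$, and each $C_i$ is an AND of (possibly negated) literals on the variables in $S_i$; write $t_i := |S_i|$ for the length of clause $i$. The plan is to adapt the paper's iterated milder restrictions framework to this structured class. Two structural facts are central. First, read-once polynomials are closed under restrictions: each $C_i|_\rho$ is either constant $0$ (some literal falsified), constant $1$ (all literals satisfied and no alive variables remain, which merely flips the XOR bit), or an AND on the surviving alive variables, so $f|_\rho$ is again a read-once polynomial plus a known constant. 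Second, a read-once polynomial in which every AND has length at most $d$ is an $\F_2$-polynomial of degree at most $d$, and hence is $\eps$-fooled by Viola's PRG for low-degree $\F_2$-polynomials with seed $O(d\log n + 2^d \log(1/\eps))$, which is $O(\log(n/\eps))$ for any constant $d$.

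Given these facts, my plan is to apply $R+1$ pseudorandom $1/2$-restrictions $\rho_0, \rho_1, \ldots, \rho_R$ (each drawn from a distribution combining small-bias with almost $r$-wise independence for $r = O(\log(n/\eps))$, with seed $\tilde{O}(\log(n/\eps))$ per restriction) and then fool the remaining low-degree polynomial with Viola's generator. The first restriction $\rho_0$ is used to kill all long clauses: the probability that a length-$t$ clause has no literal falsified is $(3/4)^t$ under a truly uniform $1/2$-restriction, so the small-bias of $\rho_0$ yields that all clauses of length $\geq r$ are killed except with total probability at most $\eps/4$. The subsequent $R = O(\log r) = O(\log\log(n/\eps))$ restrictions are used to shrink the remaining clauses: since every surviving clause then has length at most $r$, the $r$-wise almost independence makes each length-$t$ clause behave as under a uniform restriction, so its new length (conditional on surviving) is essentially $\Bin(t, 2/3)$-distributed with mean $2t/3$. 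Tracking the progress measure $\Phi_j := \max_i t_i^{(j)}$ and applying a concentration argument based on $r$-wise independence, one obtains $\Phi_{j+1} \leq (3/4)\,\Phi_j$ except with negligible probability per iteration, so after $R$ restrictions every surviving clause has length at most some constant $C$ with overall probability $\geq 1 - \eps/2$. Viola's PRG then fools the resulting degree-$C$ polynomial with $O(\log(n/\eps))$ further bits, giving total seed $\tilde{O}(\log(n/\eps))$.

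The main obstacle is proving the per-iteration contraction of $\Phi$ under only limited-independence randomness rather than full independence: one must rule out, with high probability, the simultaneous bad event that many surviving clauses retain atypically many of their variables under a single pseudorandom $1/2$-restriction. This reduces to concentration inequalities for sums of $r$-wise almost-independent indicators on disjoint but differently-sized blocks, of Schmidt--Siegel--Srinivasan type, together with careful tuning of the generator parameters so that $\rho_0$'s small-bias component simultaneously suffices to kill all long clauses while $\rho_1, \ldots, \rho_R$'s $r$-wise almost-independence component suffices to simulate uniform on the short clauses. A secondary subtlety is composing the restrictions $\rho_R \circ \cdots \circ \rho_0$ into a single pseudorandom object while maintaining the error budget across the phases, and arguing that the entire sequence preserves the expectation of $f$ up to $\eps$; this is exactly the kind of analysis the paper's iterated-restrictions framework is designed to handle.
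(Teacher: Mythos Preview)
Your shrinking argument has a gap that is not about limited versus full independence. Even under truly random $1/2$-restrictions, the claimed contraction $\Phi_{j+1}\le(3/4)\Phi_j$ fails with high probability once the number of surviving clauses exceeds $\exp(O(\Phi_j))$. Concretely, take $m=n^{0.1}$ clauses each of length $t_0=100\log\log(n/\eps)$ (a valid read-once polynomial on at most $n$ variables). Your $\rho_0$ does nothing special to these clauses since they are already short. At the first shrinking step, a fixed clause ends up with surviving length exceeding $(3/4)t_0$ with probability $(\log(n/\eps))^{-\Theta(1)}$; since the clauses sit on disjoint variables these events are independent, and with $n^{0.1}$ of them the probability that \emph{some} clause fails to shrink is $1-o(1)$. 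The same obstruction recurs at every subsequent step. More generally, after any number of $1/2$-restrictions you can only guarantee all clause lengths are at most $d$ when $d=\Omega(\log(m/\eps))$, which for $m$ polynomial in $n$ gives $d=\Omega(\log(n/\eps))$ --- far too large for Viola's generator to finish in seed $\tilde{O}(\log(n/\eps))$.

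The paper confronts exactly this obstacle and resolves it by a case split your proposal lacks. When the number of non-constant clauses of block-length $b$ exceeds $16^{b}$, the XOR already has expectation $\le(1-2^{-b})^{16^b}\le\exp(-8^{b})$; the paper then applies an \emph{aggressive} pseudorandom restriction keeping only a $2^{-b}$ fraction of coordinates, shows (Lemma~\ref{lemma:aggressive-real}) that enough clauses remain non-constant to keep the expectation negligible, and finishes that branch with Viola's degree-$16$ generator (the $\GMany$ construction of Lemma~\ref{lemma:Gb}). Mild and aggressive restrictions are interleaved in a decision tree and combined into one oblivious PRG by XOR-ing the outputs of all branches (Claim~\ref{claim:XOR-of-PRGs}); this yields $\GXOR$ of Theorem~\ref{thm:GXOR}, and Theorem~\ref{thm:read-once poly} is then a short reduction invoking it with $w=2$ after truncating clauses of length exceeding $\log(8n/\eps)$. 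A secondary remark: what you call a ``secondary subtlety'' --- that each mild pseudorandom restriction preserves the expectation of the XOR --- is in fact the entire content of Section~\ref{sec:XOR_of_short} (Theorem~\ref{thm:pseudorandom-restriction-XOR-short}), requiring the low/high-variance dichotomy, the Gopalan--Yehudayoff symmetric-function tail bounds, and the CHRT spectral estimate; the GMRTV analysis for read-once CNFs does not transfer directly to XORs of ANDs.
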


In comparison, the best previous PRG for read-once polynomials had seed-length $\tilde{O}(\log(n/\eps)) \cdot \log(1/\eps)$, thus in particular needed $\tilde{O}(\log^2 n)$ seed-length to fool read-once polynomials with polynomially small error. 

Our results rely on several new conceptual ideas as well as technical ingredients, including PRGs fooling other interesting intermediate classes of ROBPs, that we believe could be useful for other applications especially in the context of obtaining PRGs for constant-width ROBPs. Our results rely on the framework of \emph{iterative mild random restrictions} introduced in \cite{GopalanMRTV12} and further developed in \cite{ReingoldSV13, SteinkeVW17, GopalanY14, GKM15, HLV17, LeeViola17}, the latter two also present an elegant alternate view of the technique as \emph{bounded-independence plus noise}. We describe this framework, our proof techniques next.


\subsection{The Ajtai-Wigderson framework}

The Ajtai-Wigderson \cite{AW85} framework, that was revived and refined for ROBPs in the work of Gopalan,  Meka, Reingold, Trevisan, Vadhan~\cite{GopalanMRTV12}, provides a ``recipe'' for constructing PRGs for classes of functions that simplify under (pseudo)random restrictions. 
Roughly speaking, in order to fool a class of functions $\mathcal{C}$ it suffices to fool $\mathcal{C}$ under pseudo-random restrictions keeping each variable alive with probability $p$. 
Equivalently, it suffices to pseudorandomly assign $p$-fraction of the coordinates while approximately preserving the acceptance probability (on average) of every function $f\in \mathcal{C}$.
Suppose we have such a pseudorandom partial assignment, and assume that the class of functions $\mathcal{C}$ is closed under restrictions. Then, iteratively applying a pseudorandom partial assignment on the remaining coordinates until we assigned all of them gives us a pseudorandom generator for $\mathcal{C}$.
 We expect to assign all the coordinates after $O(p^{-1} \cdot \log n)$ iterations, thus if each iteration requires at most $s$ random bits, we get a PRG with seed-length $O(s\cdot p^{-1} \cdot \log n)$. Naively, it seems impossible to achieve nearly-logarithmic seed length  using this approach, however this was obtained in the work of \cite{GopalanMRTV12} as explained next.\looseness=-1

\paragraph{Achieving Near-Logarithmic Seed-Length.}
In the work of \cite{GopalanMRTV12} the Ajtai-Wigderson approach was used to construct $\eps$-PRGs for read-once CNFs (and read-once DNFs) with seed length $\tilde{O}(\log(n/\eps))$. 
In order to achieve nearly-logarithmic seed-length \cite{GopalanMRTV12} showed that one can assign a constant fraction of the coordinates while preserving the acceptance probability up to error $\poly(\eps/n)$ using only $s = \tilde{O}(\log(n/\eps))$ bits of randomness. Plugging into the estimates above would give naively seed-length $\tilde{O}(\log(n/\eps)\cdot \log(n))$. In order to avoid the  additional factor of $\log(n)$, they prove that after pseudorandomly assigning all but $1/\poly\log(n)$ of the coordinates, the function simplifies significantly so that it can be fooled using additional $O(\log (n/\eps))$-random bits.

We describe the approach more precisely. A $p$-pseudorandom restriction against a class of functions $\mathcal{C}$ specifies a set $T \subseteq [n]$ of roughly $p \cdot n$ of the coordinates, and an assignment $x\in \pmone^{T}$ to these coordinates, such that for any $f\in \mathcal{C}$:
$$\E_{T,x}\E_{y\in \pmone^{[n]\setminus T}}[f(x\circ y)] = \E_{z\in \pmone^n}[f(z)] \pm \eps$$
where $(x \circ y)$ denotes the string whose $T$-coordinates are taken from $x$ and other coordinates are taken from $y$. 
The main observation of \cite{GopalanMRTV12} is that given $T$, it suffices that $x$ would fool the {\sf Bias-function}, defined as
\[
\Bias_T f(x) \triangleq \E_{y\in \pmone^{[n]\setminus T}} [f(x\circ y)].
\]
This is due to the fact that 
\[
\Big|\E_{z\in \pmone^n}[f(z)] - \E_{x} \E_{y\in\pmone^{[n]\setminus T}} [f(x\circ y)]\Big|
= \Big|\E_{z\in \pmone^T} [\Bias_T f(z)] - \E_{x}[\Bias_T f(x)]\Big|.
\]
The observation that it suffices to fool the bias-function instead of just fooling the restricted functions, enabled \cite{GopalanMRTV12} to use ``mild'' restrictions with $p = \Omega(1)$ for the class of CNFs/DNFs. They show that in this case, the average of the restricted functions (i.e., the bias-function) is much easier to fool than a typical restricted function.

\section{Proof Overview}
Similarly to \cite{GopalanMRTV12}, in order to achieve a PRG with nearly-logarithmic seed-length fooling width-3 ROBPs, we show that:
\begin{enumerate}
	\item We can pseudorandomly assign half the input coordinates while preserving the acceptance probability (on average) of every width-3 ROBP up to error $\eps$, using seed-length $\tilde{O}(\log(n/\eps))$.
	\item After pseudorandomly assigning all but $1/\poly\log(n)$ of the coordinates any width-3 ROBP simplifies enough so that it can be fooled using additional $\tilde{O}(\log(n)\log(1/\eps))$ random bits.
\end{enumerate}
Both steps are involved and explained in greater detail in the next two sections.

\subsection{Pseudorandomly assigning half of the coordinates}
In Sections~\ref{sec:pseudorestriction} and \ref{sec:XOR_of_short} we prove the following theorem showing that we can pseudorandomly assign $1/\poly\log\log(n/\eps)$ of the coordinates while changing the acceptance probability by at most $\eps$.
\begin{restatable}{thm}{twosteps}
\label{thm:main_two_steps}
Let $n\in \N, \eps>0$. There exists a log-space explicit pseudorandom restriction assigning $p = 1/O(\log \log(n/\eps))^{6}$ fraction of the variables using $O(\log (n/\eps) \log\log(n/\eps))$ random bits, that maintains the acceptance probability of any unordered width-$3$ length-$n$ ROBP up to error $\eps$.
\end{restatable}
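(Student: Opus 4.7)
The plan is to carry out a single mild restriction step in the Gopalan--Meka--Reingold--Trevisan--Vadhan style, specialized to unordered width-$3$ ROBPs. Concretely, one samples a random subset $T\subseteq[n]$ of expected size $pn$ together with a partial assignment $x\in\pmone^T$, and then invokes the GMRTV reduction recalled in the introduction: it suffices to design the sampler $(T,x)$ so that $x$ pseudorandomly fools the real-valued function $\Bias_T f(\cdot)$ for every unordered width-$3$ ROBP $f$. Since $\Bias_T f(x) = e_1^\top \bigl(\prod_{i=1}^n A_i\bigr) v$ with $A_i\in\{M_i^{+1},M_i^{-1}\}$ for $i\in T$ and $A_i=(M_i^{+1}+M_i^{-1})/2$ otherwise, the target object is a real-valued matrix-product form on $\pmone^T$, whose structure I plan to control using the relabeling and progress-measure techniques advertised in the introduction.

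First I would relabel the three internal state slots at each layer so that the abstract collision pattern -- which states the $+1$ and $-1$ edges map where -- falls into one of a constant number of canonical types (e.g.\ permutation-type layers in which both edges act as permutations of $\{1,2,3\}$, versus various colliding types in which two states merge on some edge). This normalization lets us track a single uniform progress measure across the program, namely the number of colliding layers. The philosophy is that colliding layers are exactly what makes width-$3$ genuinely harder than width-$2$: whenever a pseudorandom restriction ``kills'' such a layer (by touching it on the $T$-side with a favorable $x_i$, or by observing that the layer was already effectively degenerate), the surrounding stretch of the program collapses to width-$2$ behavior, which is already fooled by an $O(\log(n/\eps))$-bit small-bias distribution.

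The bulk of the work is then a quantitative restriction lemma showing that a pseudorandom $T$ drawn from an almost $k$-wise independent distribution on subsets of $[n]$ with $k=\poly\log(n/\eps)$, combined with a small-bias assignment $x\in\pmone^T$, reduces the number of live colliding layers by a definite factor except with probability $\eps$. I would prove this by expanding $\Bias_T f$ as a multilinear form in the coordinates of $T$ and bounding the Fourier contribution of each collision type separately. Because relabeling leaves only constantly many types, a union bound over types is cheap, and one stage of reduction can be implemented with $O(\log(n/\eps))$ random bits. Since each stage only eliminates one ``scale'' of collisions, we chain the construction across $O(\log\log(n/\eps))$ internal phases, which produces the $\log\log(n/\eps)$ overhead in seed length and the $p=1/\poly\log\log(n/\eps)$ fraction assigned claimed in the statement.

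The main obstacle, I expect, is the unordered setting: the layers correspond to an unknown permutation of the coordinates, so the sampler must fool $\Bias_T f$ for \emph{all} orderings simultaneously. The relabeling technique is the key device here, because it replaces a hopeless union bound over $n!$ orderings by a union bound over a constant-size set of collision-type sequences, after which the pseudorandom restriction only has to defeat each type. Verifying that the collision-count progress measure does shrink in expectation, uniformly over all orderings and all starting collision patterns, is the step where I anticipate the analysis being most delicate and where the interaction between the relabeling and the Fourier expansion of the matrix product will have to be controlled carefully.
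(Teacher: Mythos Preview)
Your proposal misses the two structural ideas that make the paper's proof work, and the mechanism you sketch in their place does not obviously go through.

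First, the relabeling is applied in the wrong place. The paper's key observation is that $\Bias_{T_0} B$ is \emph{invariant} to relabeling the edges in layers indexed by $[n]\setminus T_0$ (the layers being averaged over). Hence one may relabel those layers to be locally monotone, obtaining $B^{T_0}$ with $\Bias_{T_0} B=\Bias_{T_0} B^{T_0}$; by the Brody--Verbin collision lemma every such layer is then colliding. This is what guarantees, with high probability over $T_0$ and the uniform $y\in\pmone^{[n]\setminus T_0}$, that the restricted program $B^{T_0}_{T_0|y}$ has a width-$2$ layer in every $O(\log(n/\eps))$ consecutive layers. Your proposal instead relabels to normalize ``collision types'' for a Fourier analysis, which neither exploits this invariance nor forces the non-alive layers to be colliding; in particular it gives no handle on permutation-type layers, which are precisely the obstruction.

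Second, the paper does not analyze $\Bias_T f$ directly. Once $B^{T_0}_{T_0|y}$ has frequent width-$2$ cuts, the Bogdanov--Dvir--Verbin--Yehudayoff decomposition writes it as a short linear combination of products $\prod_i D_{i,\alpha_i}$ where each $D_{i,\alpha_i}$ is a width-$3$ ROBP of length $b=O(\log(n/\eps))$ on disjoint variables. The remaining task is a pseudorandom restriction for the XOR of short width-$w$ ROBPs (Theorem~\ref{thm:pseudorandom-restriction-XOR-short}), whose analysis uses the CHRT spectral-norm bound and the Gopalan--Yehudayoff symmetric-polynomial tail inequality. Your ``Fourier contribution of each collision type'' sketch has no analogue of this reduction and no substitute for the XOR-of-short-ROBPs analysis. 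Relatedly, your accounting for the parameters is off: the $p=1/O(\log\log(n/\eps))^{6}$ and the extra $\log\log(n/\eps)$ in the seed length come from the parameters of that XOR restriction (with $w=3$, $b=O(\log(n/\eps))$, and bias $(\eps/n)^{O(\log b)}$), not from chaining $O(\log\log(n/\eps))$ phases. Theorem~\ref{thm:main_two_steps} is a single two-stage restriction (choose $T_0$ with marginals $1/2$, then $T\subseteq T_0$ with marginals $p$), not an iterated construction; the iteration you describe belongs to the later simplification argument, not to this theorem.
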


Given Theorem~\ref{thm:main_two_steps}, we can assign half of the coordinates by iteratively applying the pseudorandom restriction $O(1/p)$ times. This ultimately uses $O(\log (n/\eps) (\log\log(n/\eps))^7) = \tilde{O}(\log(n/\eps))$ random bits to assign half of the coordinates, as promised.

We describe the techniques that go into the proof of Theorem~\ref{thm:main_two_steps}.
The proof proceeds in two steps. The first step (described in Section~\ref{sec:pseudorestriction}) reduces the task of generating a pseudorandom restriction for width-3 ROBPs to the task of generating a pseudorandom restriction for the XOR of short (logarithmic-length) width-3 ROBPs.
The second step (described in Section~\ref{sec:XOR_of_short}) is a pseudorandom restriction for the latter class of Boolean functions.

\subsubsection{Reducing width-3 ROBPs to the XOR of short width-3 ROBPs}
Next, we explain how we reduce fooling width-3 ROBPs to fooling the XOR of short width-3 ROBPs. 
Let $B$ be a ROBP of length-$n$ and width-$3$.
We pick a set $T_0 \subseteq[n]$ of size $\approx n/2$ using an almost $O(\log(n/\eps))$-wise independent  distribution.
We wish to show that for most choices of $T_0$, we can pseudorandomly assign $pn$ of the coordinates in $T_0$, while fooling the Bias-function $\Bias_{T_0} B$.
Our \emph{main observation}	 is that for most choices for $T_0$, the bias-function $\Bias_{T_0} B$ is the average of simpler width-3 ROBPs.

Recall that every layer of edges in a ROBP contains two sets of edges, one corresponding to the transition made when the input bit equals $1$ and similarly one corresponding to the input bit equaling $-1$.
Observe that if the two sets of edges are the same, then the layer is {\sf redundant} and the value of the input bit  does not affect whether the ROBP accept or not.
We thus assume without loss of generality that there are no  redundant layers.
We say that a layer of edges is a {\sf colliding layer} if there are two edges marked by the same label (i.e. both labeled $1$ or both labeled $-1$) that enter the same vertex in the next layer.

First, suppose (ideally) that all layers in a width-3 ROBP are colliding. 
Then, under the pseudorandom restriction, with high probability, in every $O(\log (n/\eps))$ consecutive layers we will have a layer of edges whose corresponding variable is fixed to a value for which the edges in the layer collide, leaving at most $2$ vertices reachable in the next layer of vertices. 
Using a result of Bogdanov, Dvir, Verbin, Yehudayoff~\cite{BogdanovDVY13} such restricted ROBPs can be written as linear combinations of functions of the following form: XOR of width-3 ROBPs of length $O(\log(n/\eps))$ defined over disjoint sets of variables. It thus suffices to fool this XOR of short width-3 ROBPs in order to fool the restricted ROBP, as we do in Section~\ref{sec:XOR_of_short}.

The assumption that all layers in a width-3 ROBP are colliding is not necessarily true. In fact, it can be the case that in every layer of edges both the $1$-edges and the $(-1)$-edges form a permutation on the state space with no collisions. Indeed, such ROBPs are known in the literature as permutation-ROBPs. (For example, the $\text{MOD}_3(x_1, \ldots,x_n)$ function indicating whether $(\sum_{i}{x_i} \equiv 0 \mod 3)$ can be computed by width-3 permutation ROBP.)
Nonetheless, as mentioned earlier, it suffices to fool the bias-function and this task is easier than fooling each restricted function.

\paragraph{Relabeling Under The Bias Function:} 
In the following, we consider relabeling of a ROBP. Recall that in a ROBP every vertex has a pair of outgoing edges: one labeled $1$ and the other labeled $-1$. A {\sf relabeling} of a ROBP $B$ is any ROBP $B'$ that can be achieved from $B$ by swapping the labels for some of these pairs of edges.

Our \emph{key observation} is that the bias function $\Bias_T B$ of a program $B$ does not depend on the labels of the edges associated with the variables outside $T$. 
This is due to the fact that the value of $\Bias_T B(x)$ on a given partial input $x \in \pmone^T$ is the probability of acceptance of $B$ on a random assignment to the variables in $[n]\setminus T$, and this value remains the same under any relabeling of the edges associated with the variables in $[n] \setminus T$.
Moreover, a simple fact shows that any {\sf non-redundant} layer of edges can be relabeled so that it is {\sf colliding}.
Thus, for any ROBP $B$ and any fixed $T$, we can relabel the edges associated with variables with $[n]\setminus T$ so that they are colliding, yielding another width-3 ROBP, denoted $B^T$.
We get that $\Bias_T B = \Bias_T B^T$, and $B^T$ is a ROBP in which all layers in $[n]\setminus T$ are colliding. We can thus apply the previous argument and conclude that $\Bias_T B^T$ is the average of width-3 ROBPs whose vast majority have a layer of vertices of width-$2$ in every $O(\log (n/\eps))$ consecutive layers. These ROBPs are then fooled by the pseudorandom partial assignment described in the next section.

To sum up, since the bias-function is the average over all restricted functions of $B$, it also equals the average over all restricted functions of $B^T$, and these restricted functions are simple enough for us to fool.

Relabeling was previously used in~\cite{BrodyV10,Steinberger13,CGR14} to show that the best ROBPs distinguishing between certain distributions and the uniform distribution must be ``locally-monotone'' (see Section~\ref{prelim:ROBP} for the formal definition). 
In general, it is unclear how to argue locally monotone programs are the hardest ROBPs to fool. Nevertheless, in \cite{CHRT17}, relabeling helped bounding the sum of absolute values of Fourier coefficients of small width ROBPs.
In comparison, we use a relabeling technique to note that in the iterated random restrictions framework (when trying to fool the bias-function), one might as well treat the restricted layers as if they were locally monotone.

\subsubsection{Pseudorandom restrictions for the XOR of short width-3 ROBPs}
Our main result in Section~\ref{sec:XOR_of_short} is the following:
\begin{restatable}{thm}{mainintro}\label{thm:pseudorandom-restriction-XOR-short}
Let $n,w,b\in \N$, $\eps>0$. There exists a log-space explicit pseudorandom restriction assigning $p = 1/O(\log(b \cdot \log(n/\eps)))^{2w}$ fraction of  $n$ variables using $O(w \cdot \log (n/\eps) \cdot (\log\log(n/\eps) + \log(b)))$ random bits, that maintains the acceptance probability of any XOR of ROBPs of width-$w$ and length-$b$ (defined on disjoint sets of variables) up to error $\eps$.
\end{restatable}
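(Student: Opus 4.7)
The plan is to instantiate the Ajtai--Wigderson / bounded-independence-plus-noise framework in its Fourier-analytic form. I would sample the pseudorandom restriction $(T, x)$ as two independent pieces: $T \subseteq [n]$ drawn from a $p$-biased, $\delta$-almost $k$-wise independent distribution, and $x \in \pmone^T$ drawn from a $\delta$-almost $k$-wise independent distribution, with $k = O(\log(n/\eps))$ and $\delta$ inverse-polynomial in $n/\eps$. Both distributions admit log-space-explicit samplers using roughly $O(k \log(1/p) + \log(n/\delta))$ bits via standard constructions, which with $p = 1/O(\log(b \cdot \log(n/\eps)))^{2w}$ matches the target seed length.

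The analysis exploits the disjointness of the blocks through the Fourier transform. Writing $f = \prod_i g_i$ where each $g_i$ is a width-$w$, length-$b$ ROBP on a disjoint variable set $S_i$, the bias function factorizes as $\Bias_T f(x) = \prod_i \Bias_{T \cap S_i} g_i(x|_{T \cap S_i})$, so its Fourier transform tensorizes across blocks. The technical heart of the proof is showing that in expectation over $T$, the Fourier $L_1$ mass of $\Bias_T f$ above level $k$ is at most $\eps/2$. I would start from known Fourier $L_1$ estimates for small-width ROBPs (in the spirit of the Chattopadhyay--Hatami--Reingold--Tal relabeling / Fourier-growth line of work), which bound $\hat{L}_1(g_i, j)$ by roughly $(C \log b)^{O(wj)}$; this bound is inherited by the restricted bias function $\Bias_{T \cap S_i} g_i$. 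Averaging over the $p$-biased $T$ contracts the level-$j$ mass within each block by a factor of order $p^j$, and combining across blocks by convolution gives an estimate of the form
\[
\E_T\bigl[\hat{L}_1(\Bias_T f, k)\bigr] \;\leq\; \bigl(p \cdot (C \log b)^{O(w)}\bigr)^{k}.
\]
The choice $p = 1/O(\log(b\cdot \log(n/\eps)))^{2w}$ then makes each factor at most $1/2$, so taking $k = O(\log(n/\eps))$ drives the tail below $\eps/2$.

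Given this tail bound, the standard fact that $\delta$-almost $k$-wise independence $\eps/2$-fools any function with Fourier mass at most $\eps/2$ above level $k$ yields
\[
\bigl|\E_{T, x}[\Bias_T f(x)] - \E_T \E_{y}[\Bias_T f(y)]\bigr| \;\leq\; \eps,
\]
where the inner expectation is over uniform $y \in \pmone^T$ and equals $\E[f]$ exactly by swapping the order of expectations. The main obstacle I anticipate is the tensorized, $T$-averaged Fourier $L_1$ estimate: the per-block $L_1$ bounds must compose across the up to $n/b$ blocks without losing the contraction gained from $p$, and one needs care that the passage from a truly random $T$ to a $p$-biased almost $k$-wise independent $T$ is absorbed into the error budget without disturbing the level-$k$ analysis. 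The remaining steps—choosing $\delta$ and $k$ to balance the almost-independence error against the tail bound, and verifying explicitness and log-space computability—should be routine once the Fourier estimate is in hand.
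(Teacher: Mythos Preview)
Your tensorized Fourier tail bound
\[
\E_T\bigl[\hat L_{1}(\Bias_T f,\ge k)\bigr]\;\le\;\bigl(p\cdot(C\log b)^{O(w)}\bigr)^{k}
\]
is false as stated, and the gap is exactly where you flagged it. Write $\psi_i(t)=\sum_{S\subseteq B_i}|\hat g_i(S)|\,p^{|S|}t^{|S|}$ so that $\prod_i\psi_i(t)=\sum_{j}p^{j}L_{1,j}(f)\,t^{j}$; the level-$\ge k$ tail is the sum of the high-degree coefficients of $\prod_i\psi_i$. The constant term of $\psi_i$ is $|\E[g_i]|$, not $1$. The CHRT bound $L_{1,j}(g_i)\le(C\log b)^{wj}$ controls only the nonconstant part, giving $\psi_i(1)\le |\E[g_i]|+O\!\bigl(p(C\log b)^{w}\bigr)$, which can exceed $1$ whenever $|\E[g_i]|$ is closer to $1$ than $p(C\log b)^{w}$ is to $0$. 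Over $m$ blocks the product $\prod_i\psi_i(1)$ then grows like $\exp\!\bigl(\Omega(m\cdot p(C\log b)^{w})\bigr)$, and since $m$ can be $n/b$ this is astronomically large for your $p$. Concretely, take each $g_i$ to be $\mathrm{AND}$ of $b$ bits with $b=\Theta(\log(n/\eps))$ and $m=n/b$ copies: here $|\E[g_i]|=1-2^{1-b}$, the CHRT upper bound gives $\psi_i(1)\ge 1+\Omega(1/\mathrm{poly}\log b)$, and not only the full $L_1$ but also the level-$\ge k$ tail of $\Bias_T f$ is exponentially large in expectation. So almost-$k$-wise independence of $x$ cannot be leveraged through an $L_1$ tail bound.

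The paper gets around this by abandoning a global $L_1$ bound and instead splitting on the \emph{total variance} $\sum_i\Var[f_i]$. In the high-variance case $|\E[f]|=\prod_i|\E[g_i]|$ is already $\le\eps/2$, and one argues (via a moment bound, not Fourier $L_1$) that enough variance survives to the unrestricted coordinates so that $|\prod_i\tilde f_i(x)|\le\eps/2$ with high probability over $(T,x)$. In the low-variance case the $g_i$ are bucketed by variance; buckets with large variance contain only $\mathrm{poly}\log(n/\eps)$ functions, so their product is a single short ROBP to which CHRT applies directly. For buckets with tiny variance one writes $\prod_i\tilde f_i=(\prod_i\mu_i)\prod_i(1+\tilde g_i)$ and controls the elementary symmetric polynomials $S_k(\tilde g_1,\dots,\tilde g_m)$ via the Gopalan--Yehudayoff tail inequality, showing they are small rather than bounding the $L_1$ of the product. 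This variance-aware case analysis, together with an XOR lemma to recombine the buckets (which is why the paper needs bias $(\eps/n)^{O(\log b)}$ for $x$, stronger than your $k=O(\log(n/\eps))$ independence), is the missing ingredient.
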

Recall that in the previous section, we reduced the case of width-3 ROBPS to this case with $w = 3$ and $b = O(\log(n/\eps))$.
Our proof for Theorem~\ref{thm:pseudorandom-restriction-XOR-short} follows previous strategies by \cite{GopalanMRTV12, GopalanY14, GKM15, LeeViola17}. 
Indeed, the functions we are trying to fool are a special case of {\sf product-functions} that were recently studied in \cite{HLV17, LeeViola17}.
Product-functions are functions of the form $f(x) = f_1(x) \cdot f_2(x)   \cdots  f_m(x)$ where each $f_i$ depends on a set $B_i$ of at most $b$ variables, and $\{B_1, \ldots, B_m\}$ are pairwise-disjoint.

PRGs for product-functions were constructed in previous work, however none achieve the parameters we need. Haramaty, Lee and Viola \cite{HLV17} and Lee and Viola \cite{LeeViola17} constructed PRGs with seed length $\tilde{O}(b + \sqrt{m b\log(1/\eps)})$ and $\tilde{O}((b + \log(m/\eps))\cdot \log(1/\eps))$ respectively for such functions. While the latter is nearly optimal for constant $\eps$, we require $\eps$ to be smaller than $1/m$, since the reduction in the previous section from \cite{BogdanovDVY13} incurs a multiplicative factor of $m$ on the error.
Gopalan, Meka and Kane \cite{GKM15} achieve nearly optimal seed-length $\tilde{O}(\log(n/\eps))$ but only for the case where the blocks $B_1, \ldots, B_m$ are known.

The main reason we are able to achieve better seed-length is due to the fact that we further assume that the functions $f_1, \ldots, f_m$ are computed by constant-width ROBPs. We rely on the previous work of Chattopadhyay, Hatami, Reingold, Tal~\cite{CHRT17}. They constructed PRGs for constant-width length-$n$ ROBPs with seed-length $\poly\log(n)$. We observe that under an unusual setting of parameters, namely when applying this result to constant-width ROBPs of length $\poly\log(n)$, one gets seed-length $\tilde{O}(\log(n/\eps))$. This enables us to fool the XOR of any subset of $\poly\log(n)$ of the functions $f_1, \ldots, f_m$ using nearly-logarithmic seed-length. Relying on the proof strategy laid by Gopalan and Yehudayoff \cite{GopalanY14}, we bootstrap this into a pseudorandom restriction fooling the XOR of $f_1, \ldots, f_m$.

\subsection{Simplification under pseudorandom restrictions}
Recall that our proof strategy is similar to that of \cite{GopalanMRTV12}: 
\begin{enumerate}
 	\item For $i=0, \ldots, O(\log\log n)$: assign half of the remaining coordinates pseudorandomly using $\tilde{O}(\log(n/\eps))$ random bits, while changing the acceptance probability by at most~$\eps$.
\item   Pseudorandomly assign the remaining coordinates using $\tilde{O}(\log(n/\eps))$ random bits.
 \end{enumerate}

The first step was overviewed in the previous section. 
In order to carry on the second step, we wish to find some progress measure, that would decrease in each iteration of the first step.
For the case of CNFs the {\sf CNF-width} (i.e., the maximal number of literals in a clause) was a good progress measure for \cite{GopalanMRTV12}. They showed that without loss of generality the CNF-width is $O(\log(n/\eps))$ initially, and that it decreases by a constant-factor in each iteration of Step~1.

Our analogous progress measure is the number of colliding layers.
We recall that in a ROBP, some layers of edges form permutations on the state space, while others are colliding. 

We show that after the first application of step~1, with high probability the restricted ROBP can be written as a composition of $m$ subprograms $D_1, \ldots, D_m$ where each $D_i$ has at most $2$ vertices in the first and last layers and at most $\ell_0 = O(\log(n/\eps))$ colliding layers. 
Intuitively, this happens since every colliding layer reduces the width to $2$ with constant probability and thus with high probability in any $O(\log(n/\eps))$ consecutive colliding layers at least one would be set to the value that reduces the width to $2$.
This motivates the following definition.\looseness=-1
\begin{definition}
We call a ROBP $B$ a $(w, \ell,m)$-ROBP if $B$ can be written as $D_1 \circ \ldots \circ D_m$, with each $D_i$ being a width $w$ ROBP with the first and last layers having at most two vertices and each $D_i$ having at most $\ell$ colliding layers.
\end{definition}

We wish to show that the parameter $\ell$ (that bounds the maximal number of colliding layer in a subprogram $D_i$ with width-$2$ in the first and last layers) reduces by a constant factor under any iteration of step 1.
That is, to show that after iteration $i$ of step~1 we get with high probability a $(3, \ell_i,m_i)$-ROBP where $\ell_i = \ell_0/c^i$ for some constant $c<1$.
As long as $m_i \le \exp(O(\ell_i))$, an inductive argument works since the colliding layers in each individual $D_j$ reduces by a factor $c$ with probability $1-\exp(-\Omega(\ell_i))$ and we can afford a union bound over all $m_i$ subprograms.
However, we cannot afford such a union bound if $m_i \gg \exp(\ell_i)$.
To handle this, we prove the following structural result:
any $(3,\ell_i,m_i)$-ROBP can be well-approximated by $(3,\ell_i, C^{\ell_i})$-ROBPs for some constant $C$.
Furthermore, we show that the {\sf error indicator} of the approximator can be written as the AND of $C^{\ell_i}$ many $(3,\ell_i,1)$-ROBPs, and that its expectation under the uniform distribution is doubly-exponentially small in $\ell_i$. This allows us to show that the error indicator is small under the pseudo-random assignments as well, and we can safely replace a $(3,\ell_i,m_i)$-ROBP with its $(3,\ell_i, C^{\ell_i})$-ROBP approximator.

Applying the restriction and the structure result $O(\log\log n)$ times, we end up with a $(3, \ell', C^{\ell'})$ ROBP where $\ell' = O(\log(1/\eps))$. 
As a last step, we show that $(3, \ell', C^{\ell'})$-ROBPs are fooled by the INW generator \cite{INW94} with seed-length $\tilde{O}(\log(n)\log(1/\eps))$. This follows from the results of \cite{BravermanRRY14}.
For the unordered case, we use the generator from the recent work of \cite{CHHL18} for the last step, with seed-length $\tilde{O}(\log(n) \cdot \poly(1/\eps))$ (using a structural result by \cite{SteinkeVW17}).

\subsection{The proof of Theorem~\ref{thm:read-once poly}}
Theorem~\ref{thm:read-once poly} is a special case of the following theorem
\begin{restatable}{thm}{assignall}\label{thm:GXOR}
Let $n,w,b\in \N$, $\eps>0$. There exists a log-space explicit pseudorandom generator that $\eps$-fools any XOR of ROBPs of  width-$w$ and length-$b$ (defined on disjoint sets of variables), using seed-length $O(\log(b) + \log\log(n/\eps))^{2w+2} \cdot \log(n/\eps)$.
\end{restatable}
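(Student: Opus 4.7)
The plan is to derive Theorem~\ref{thm:GXOR} from Theorem~\ref{thm:pseudorandom-restriction-XOR-short} by iterating the pseudorandom restriction in the Ajtai--Wigderson framework~\cite{AW85,GopalanMRTV12}. The structural prerequisite is that the class of XORs of width-$w$ length-$b$ ROBPs on disjoint variable blocks is \emph{closed under restrictions}: fixing any subset of coordinates leaves each factor a width-$w$ ROBP of length at most $b$ on its remaining live variables, and preserves the disjoint-blocks structure. This licenses us to compose independent pseudorandom restrictions from Theorem~\ref{thm:pseudorandom-restriction-XOR-short} without ever leaving the class.

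Next, I would fix a target number of iterations $T$ and set the per-iteration error to $\eps' = \eps/(2T)$. In the $t$-th iteration I would draw a fresh seed and invoke Theorem~\ref{thm:pseudorandom-restriction-XOR-short} on the function restricted by the previous $t-1$ assignments, obtaining a pseudorandom partial assignment that fixes a $p$-fraction of the currently live variables, with $p = 1/O(\log b + \log\log(n/\eps'))^{2w}$ and seed cost $s = O(w \log(n/\eps')(\log b + \log\log(n/\eps')))$. Taking $T = \Theta(\log(n/\eps)/p)$ guarantees, by a union bound over the events that some coordinate survives all $T$ rounds (each with probability at most $(1-p)^T \leq \eps/(2n)$), that except with probability $\eps/2$ every variable is assigned after $T$ iterations. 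A standard hybrid argument across the $T$ iterations, each contributing at most $\eps'$ deviation, bounds the total difference between the pseudorandom and uniform expectations by $T\cdot\eps'\leq \eps/2$. The PRG output is simply the concatenation of the $T$ pseudorandom assignments into a vector in $\pmone^n$.

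The total seed length is $T\cdot s$, which upon substitution and careful bookkeeping yields the claimed bound $O(\log b + \log\log(n/\eps))^{2w+2}\cdot \log(n/\eps)$. Log-space explicitness is inherited from that of Theorem~\ref{thm:pseudorandom-restriction-XOR-short}, since only $O(\log T) = O(\log\log(n/\eps))$ extra bits of state are required to index the current iteration. The main subtlety is the parameter tuning needed to reach the exponent $2w+2$ rather than an unoptimized form with an extra factor of $\log n$; this exploits that the seed cost of Theorem~\ref{thm:pseudorandom-restriction-XOR-short} scales gracefully both with the number of remaining live variables (which shrinks geometrically across the rounds) and with the per-iteration error $\eps' = \eps/T$, which differs from $\eps$ only by an additive $O(\log\log(n/\eps))$ inside the logarithm. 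This kind of bookkeeping is standard in the iterated restrictions paradigm as deployed in \cite{GopalanMRTV12,GopalanY14,HLV17,LeeViola17}.
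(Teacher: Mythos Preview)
Your plan has a real gap in the seed-length accounting. Iterating Theorem~\ref{thm:pseudorandom-restriction-XOR-short} until \emph{every} coordinate is assigned forces $T=\Theta(\log(n/\eps)/p)$ rounds, and each round costs $s=\Theta(\log(n/\eps)\cdot(\log b+\log\log(n/\eps)))$ bits. Multiplying gives
\[
T\cdot s \;=\; \Theta\!\big(\log^2(n/\eps)\cdot(\log b+\log\log(n/\eps))^{2w+1}\big),
\]
which carries an extra $\log(n/\eps)$ factor over the stated bound. Your ``graceful scaling'' remark does not rescue this: even if the number of live variables $n_t$ shrinks geometrically, the per-round seed contains a $\log(1/\eps')$ term that does not shrink at all, so summing over $T$ rounds still yields a $\log(n/\eps)\cdot\log(1/\eps)$ contribution. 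This is exactly the obstacle the Ajtai--Wigderson framework always faces when one tries to iterate all the way down, and it is why \cite{GopalanMRTV12} and the present paper do something more.

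What the paper actually does is use only $O(1/p)$ iterations of Theorem~\ref{thm:pseudorandom-restriction-XOR-short} per ``macro-round'' to assign a constant fraction of the coordinates (Claim~\ref{claim:assigning-0.9999}), and then argues that the \emph{block length} $b$ is a progress measure that halves after each macro-round, so only $O(\log b)$ macro-rounds are needed. This is where the missing ingredient lies: to make the halving argument go through one needs $m\le 16^{b}$ so that a union bound over blocks succeeds, and the paper must separately handle the case $m>16^{b}$ with an entirely different ``aggressive'' restriction and a PRG based on Viola's generator for low-degree polynomials (Section~\ref{sec:4.1}, Lemma~\ref{lemma:Gb}). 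The two regimes are then combined via an XOR-of-PRGs trick (Claim~\ref{claim:XOR-of-PRGs}) to remove the adaptive dependence on which case one is in. None of this machinery appears in your proposal, and without it the claimed exponent $2w+2$ is not attainable by plain iteration.
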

We consider $b$ as the progress measure, and wish to show that this parameter reduces under pseudorandom restrictions.
This is analogous to the the number of colliding layers $\ell$ in the previous section.
However, here, in some cases, we cannot guarantee that the application of the pseudorandom restriction from Theorem~\ref{thm:pseudorandom-restriction-XOR-short} would decrease $b$.
The problematic cases are when we have the XOR of more than $\exp(b)$ functions on $b$ variables each.
We show that in such cases, an ``aggressive'' pseudorandom restriction,  assigning $1-\exp(-b)$ fraction of the variables, simplifies the function significantly, while maintaining its acceptance probability. 
Combining applications of mild-restrictions and aggressive-restrictions in a ``decision tree of random restrictions'' results in an assignment that fools the function. 
However, this does not give a PRG as the decisions made along the tree depend adaptively on the function we try to fool, and PRGs cannot depend on the function they try to fool.
We fix this by taking the XOR of several pseudorandom assignments, one per each path in this decision tree in order to construct a PRG that fools this class of functions. (For a longer overview, see Section~\ref{sec:assign_all}.)

\subsection{Organization}
In Section~\ref{sec:prelim}, we state useful definitions and results from previous work to be used throughout the paper. The rest of paper is organized such that each section starts with an overview that highlights one or two main results proved in it. Section~\ref{sec:pseudorestriction} proves the reduction from 3ROBPs to the XOR of short 3ROBPs. 
In Section~\ref{sec:XOR_of_short}, we show how to pseudorandomly assign $1/\poly\log\log(n)$ of the input coordinates while preserving the acceptance probability of the XOR of short ROBPs of constant-width (Theorem~\ref{thm:pseudorandom-restriction-XOR-short}).
In Section~\ref{sec:assign_all}, we construct pseudorandom generators (assigning all the inputs) for the XOR of short ROBPs of constant-width. As an application, we prove Theorem~\ref{thm:read-once poly}.
Then, in Section~\ref{sec:PRG_3ROBPs}, we prove Theorems~\ref{thm:main-3ROBP-ordered} and \ref{thm:main-3ROBP-unordered}.
We remark that a reader interested only in the proof of Theorem~\ref{thm:read-once poly} may skip Sections~\ref{sec:pseudorestriction} and \ref{sec:PRG_3ROBPs}.
Similarly, a reader interested only in the proof of Theorems~\ref{thm:main-3ROBP-ordered} and~\ref{thm:main-3ROBP-unordered} may skip Section~\ref{sec:assign_all}.

\newcommand{\Sel}{\mathrm{Sel}}
\section{Preliminaries}\label{sec:prelim}

Denote by $U_n$ the uniform distribution over $\pmone^n$, and by $U_S$ for $S\subseteq [n]$ the uniform distribution over $\pmone^S$. 
Denote by $\log$ the logarithm in base $2$. For any function $f: \pmone^n \to \R$, we shorthand by $\E[f] = \E_{x\sim U_n}[f(x)]$ and by  $\Var[f] = \E[f^2]-\E[f]^2$.
For an event $E$ we denote by $\one_{E}$ its indicator function.
\subsection{Restrictions}
For a set $T \subseteq [n]$ and two strings $x \in \pmone^T$, $y\in \pmone^{[n]\setminus T}$ we denote 
by $\Sel_T(x,y)$ the string with $$\Sel_T(x,y)_i = \begin{cases}x_i,& i\in T\\ y_i, &\text{otherwise.}\end{cases}$$
\begin{definition}[Restriction]\label{def:restriction}
  Let $f:\pmone^n\to\R$ be a  function. A {\sf restriction} is a pair $(T,y)$ where $T \subseteq [n]$ and $y \in \pmone^{[n]\setminus T}$. We denote by $f_{T|y}:\pmone^n \to \R$ the function $f$ restricted according to $(T,y)$, defined by $f_{T|y}(x) = f(\Sel_T(x,y))$.
\end{definition}

\begin{definition}[Random Valued Restriction]\label{def:random_valued_restriction}
Let $n \in \N$.
A random variable $(T,y)$, distributed over restrictions of $\pmone^n$ is called {\sf random-valued} if conditioned on $T$, the variable $y$ is uniformly distributed over $\pmone^{[n]\setminus T}$.
\end{definition}

\begin{definition}[$p$-Random Restriction]\label{def:p_random_restriction}
  A {\sf $p$-random restriction} is a random-valued restriction over pairs $(T,y)$ sampled in the following way: For every $i\in[n]$, independently, pick $i$ to $T$ with
  probability $p$;  
  Sample $y$ uniformly from $\pmone^{[n]\setminus T}$.
  We denote this distribution of restrictions by $\Rp$.
\end{definition}

\begin{definition}[The Bias-Function]\label{def:bias_function}
Let $f: \pmone^n \to \R$.
Let $T \subseteq [n]$. 
We denote by $\Bias_T(f): \pmone^n \to  \R$ the function defined by $(\Bias_T(f))(x) = \E_{y\sim U_{[n]\setminus T}}[f_{T|y}(x)]$.
When $T$ is clear from the context, we shorthand $\Bias_T(f)$ as $\tilde{f}$.
\end{definition}

\subsection{Fourier analysis of Boolean functions}\label{subsec:Fourier}
Any function $f: \pmone^n \to \R$ has a unique Fourier representation:
\[ f(x) = \sum_{S\subseteq[n]} \hat{f}(S) \cdot \prod_{i\in S} x_i\;,\]
where the coefficients $\hat{f}(S) \in \R$ are given by $\hat{f}(S) = \E_{x\sim U_n} [f(x) \cdot \prod_{i\in S} x_i]$.
We have $\var[f] = \sum_{\emptyset \neq S \subseteq [n]}{\hat{f}(S)^2}$.
We denote the {\sf spectral-norm} of $f$ by $L_1(f) \triangleq \sum_{S \subseteq [n]} |\hat{f}(S)|$.
For any functions $f,g: \pmone^n \to \R$ it holds that $L_1(f\cdot g) \le L_1(f) \cdot L_1(g)$ where equality holds if $f$ and $g$ depends on disjoint sets of variables. Additionally, $L_1(f+g) \le L_1(f) + L_1(g)$.
The following fact relates the Fourier coefficients of a Boolean function and its bias-function.

\begin{fact}[\protect{\cite[Proposition~4.17]{OdonnellBook}}]\label{fact:bias-fnc-Fourier}
	Let $f: \pmone^n \to \R$ and $S, T \subseteq [n]$. Then,
$\widehat{(\Bias_T f)}(S) = \hat{f}(S) \cdot \one_{\{S \subseteq T\}}$
\end{fact}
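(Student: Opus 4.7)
The plan is to compute $\widehat{(\Bias_T f)}(S)$ directly by expanding $f$ in its Fourier basis, pushing the average over the free coordinates through the sum, and reading off the coefficient.

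First I would start from $f(z) = \sum_{R \subseteq [n]} \hat f(R) \prod_{i \in R} z_i$. By the definition of the bias function, for any $x \in \pmone^n$,
\[
\Bias_T f(x) = \E_{y \sim U_{[n]\setminus T}}\bigl[f(\Sel_T(x,y))\bigr] = \sum_{R \subseteq [n]} \hat f(R) \Bigl(\prod_{i \in R \cap T} x_i\Bigr) \cdot \E_{y}\Bigl[\prod_{j \in R \setminus T} y_j\Bigr],
\]
where the split of the product over $R$ into its $R \cap T$ and $R \setminus T$ parts uses that $\Sel_T(x,y)_i$ equals $x_i$ on $T$ and $y_i$ off $T$.

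The key observation is then that because the coordinates of $y$ are independent uniform $\pm 1$ variables, $\E_y\bigl[\prod_{j \in R \setminus T} y_j\bigr]$ equals $1$ when $R \setminus T = \emptyset$ and $0$ otherwise. So only the terms with $R \subseteq T$ survive, and
\[
\Bias_T f(x) = \sum_{R \subseteq T} \hat f(R) \prod_{i \in R} x_i.
\]
This is already a multilinear polynomial in $x$ written in the Fourier basis, so by uniqueness of the Fourier representation of a function $\pmone^n \to \R$ the coefficient of $\prod_{i \in S} x_i$ in $\Bias_T f$ is $\hat f(S)$ if $S \subseteq T$ and $0$ otherwise, which is exactly the claim.

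There is no real obstacle here — this is a one-line Fourier computation. The only thing worth being slightly careful about is the bookkeeping in the product-splitting step and the invocation of character orthogonality (equivalently, $\E[y_j] = 0$) to kill the terms with $R \not\subseteq T$.
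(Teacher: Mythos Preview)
Your proof is correct. The paper does not actually prove this fact; it simply cites \cite[Proposition~4.17]{OdonnellBook}, and your direct Fourier computation is exactly the standard argument behind that proposition.
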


\subsection{Small-biased distributions}
We say that a distribution $\D$ over $\pmone^n$ is {\sf $\delta$-biased}\footnote{Note that the terms bias-function and small-biased distributions are unrelated.}  if for any non-empty $S \subseteq [n]$ it holds that
$\abs{\E_{x\sim \D}[\prod_{i\in S} x_i]} \le \delta$. \cite{NaorNaor93,AlonGHP92,AlonBNNR92,Ben-AroyaT13,Ta-Shma17} show that $\delta$-biased distributions can be sampled using $O(\log(n/\delta))$ random bits. 

Let $p \in (0,1]$. We say that a distribution $\D_p$ over subsets of $[n]$ is {\sf $\delta$-biased with marginals~$p$} if for any non-empty $S \subseteq [n]$ it holds that
$\Pr_{T\sim \D_p} [S \subseteq T] = p^{|S|} \pm \delta.
$

\begin{claim}\label{claim:sampling-t}
Let $p = 2^{-a}$ for some integer $a>0$, 
let $\D$ be an $\eps$-biased distribution over $\pmone^{na}$.
Define $\D_p$ to be a distribution over subsets of $[n]$ as follows:
Sample $x\sim \D$. 
Output 
$T = 
\{i\in [n]: \bigwedge_{j\in [a]} (x_{(i-1)a + j} = 1)\}$.
Then $\D_p$ is $\eps$-biased with marginals $p$.
\end{claim}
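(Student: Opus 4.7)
The plan is to write the indicator that $S\subseteq T$ as a short Fourier expansion and then apply the $\eps$-bias of $\D$ to every nontrivial character appearing in it. For a fixed nonempty $S \subseteq [n]$, note that $i \in T$ holds exactly when $x_{(i-1)a+j}=1$ for every $j \in [a]$. Writing $I_S = \{(i-1)a+j : i \in S,\, j \in [a]\}$ so that $|I_S| = a|S|$, the event $S \subseteq T$ is the conjunction of all $|I_S|$ bits in $I_S$ being $+1$, i.e.
\[
\one_{S \subseteq T} \;=\; \prod_{k \in I_S} \frac{1+x_k}{2} \;=\; \frac{1}{2^{a|S|}} \sum_{J \subseteq I_S} \prod_{k \in J} x_k.
\]

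Taking expectations with respect to $\D$ and using linearity, I would split off the $J=\emptyset$ term, which contributes exactly $2^{-a|S|} = p^{|S|}$, and bound every remaining term using the $\eps$-bias hypothesis: for nonempty $J \subseteq I_S$, $\bigl|\E_{x\sim \D}[\prod_{k\in J} x_k]\bigr| \le \eps$. There are $2^{a|S|}-1$ such nonempty subsets, so the triangle inequality gives
\[
\Bigl|\Pr_{T\sim \D_p}[S \subseteq T] - p^{|S|}\Bigr| \;\le\; \frac{2^{a|S|}-1}{2^{a|S|}}\cdot \eps \;\le\; \eps,
\]
which is exactly the required marginal bound.

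There is no real obstacle here; the statement is a clean consequence of the standard trick of converting an AND into a $\pm 1$-character expansion and then invoking $\eps$-bias coordinate-wise. The only mild subtlety worth pointing out in the write-up is that the number of nonzero characters that appear, namely $2^{a|S|}-1$, is exactly compensated by the $2^{-a|S|}$ normalization in front, so the error never blows up with $|S|$. Everything else (measurability/explicitness, and the fact that $\D_p$ is sampleable from $O(\log(na/\eps))$ bits whenever $\D$ is) follows immediately from the known constructions of $\eps$-biased distributions cited just before the claim.
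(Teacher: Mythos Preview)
Your proof is correct and is essentially the same argument as the paper's: the paper simply cites \cite{AlonGHP92} for the fact that in an $\eps$-biased distribution the AND of any fixed set of bits equals $1$ with probability $2^{-k}\pm\eps$, whereas you spell out that fact via the character expansion $\prod_{k\in I_S}\frac{1+x_k}{2}$. The key point in both is that the nontrivial Fourier coefficients of the AND have total mass at most $1$, so the $\eps$-bias error does not grow with $|S|$.
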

\begin{proof}
For any fixed $S$, 
	the probability that $S \subseteq T$ is exactly the probability that $\bigwedge_{i\in S, j\in [a]} (x_{(i-1)a + j} = 1)$.
	In an $\eps$-biased distribution, the latter event happens with probability $2^{-a\cdot |S|} \pm \eps$ (See \cite{AlonGHP92}).
\end{proof}

\begin{claim}\label{claim:inclusion-exclusion}
If $\D_p$ is $\delta$-biased with marginals $p$, then for any disjoint $S,S'\subseteq [n]$ it holds that 
$\Pr_{T\sim \D_p}[S \cap T = \emptyset, S' \subseteq T] = (1-p)^{|S|}\cdot p^{|S'|} \pm \delta	\cdot 2^{|S|}$.
\end{claim}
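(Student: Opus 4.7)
The plan is a direct inclusion-exclusion computation. I would start by writing the indicator of the event $\{S \cap T = \emptyset\}$ as a product over $i \in S$ of $(1 - \one_{i \in T})$, and expand it into a signed sum
\[
\one_{S \cap T = \emptyset} \;=\; \prod_{i \in S}\bigl(1 - \one_{i \in T}\bigr) \;=\; \sum_{R \subseteq S}(-1)^{|R|}\,\one_{R \subseteq T}.
\]
Multiplying by $\one_{S' \subseteq T}$ and taking expectations, I get
\[
\Pr_{T \sim \D_p}\!\bigl[S \cap T = \emptyset,\; S' \subseteq T\bigr]
\;=\; \sum_{R \subseteq S}(-1)^{|R|}\,\Pr_{T \sim \D_p}\!\bigl[(R \cup S') \subseteq T\bigr].
\]

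Next I would use disjointness of $S$ and $S'$ (so that $R \subseteq S$ is disjoint from $S'$, giving $|R \cup S'| = |R| + |S'|$) together with the defining property that $\D_p$ is $\delta$-biased with marginals $p$, to replace each probability on the right-hand side by $p^{|R|+|S'|} \pm \delta$. The empty-set case ($R = \emptyset, S' = \emptyset$) is handled trivially since both sides equal $1$, so the $\pm \delta$ bound can be applied uniformly. The main term then collapses via the binomial identity
\[
\sum_{R \subseteq S}(-1)^{|R|} p^{|R|+|S'|} \;=\; p^{|S'|} \sum_{k=0}^{|S|}\binom{|S|}{k}(-p)^{k} \;=\; p^{|S'|}(1-p)^{|S|},
\]
while the error is controlled by the triangle inequality:
\[
\left|\sum_{R \subseteq S}(-1)^{|R|}\,(\pm \delta)\right| \;\leq\; \sum_{R \subseteq S}\delta \;=\; 2^{|S|}\,\delta.
\]

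There is no real obstacle here; the only thing to be careful about is making sure that every set $R \cup S'$ appearing in the inclusion-exclusion is a legitimate set for which the marginal bound applies with additive error $\delta$ (including the empty set, which is exact). Combining the two displays gives exactly the stated estimate $(1-p)^{|S|} p^{|S'|} \pm \delta \cdot 2^{|S|}$.
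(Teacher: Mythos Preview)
Your proposal is correct and follows essentially the same approach as the paper: both expand $\one_{S \cap T = \emptyset}$ by inclusion-exclusion over subsets $R \subseteq S$, apply the $\delta$-biased-with-marginals-$p$ bound to each $\Pr[R \cup S' \subseteq T]$, and then bound the accumulated error by $2^{|S|}\delta$. The only difference is cosmetic---you write out the binomial collapse explicitly, while the paper phrases the main term as $\Pr_{T \sim U}[S \cap T = \emptyset,\, S' \subseteq T]$.
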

\begin{proof}
	By inclusion-exclusion 
	\begin{align*}\Pr_{T \sim \D_p}[S \cap T = \emptyset, S' \subseteq T] &= \sum_{R \subseteq S} (-1)^{|R|} \cdot \Pr_{T \sim \D_p}[R \cup S' \subseteq T] \\
	&= \sum_{R \subseteq S} (-1)^{|R|} \cdot (\Pr_{T \sim U}[R \cup S' \subseteq T] \pm \delta)\\
	&= \Pr_{T \sim U}[S \cap T = \emptyset, S' \subseteq T] \pm 2^{|S|}\cdot \delta.\qedhere\end{align*}
\end{proof}

\subsection{Standard tail bounds for $k$-wise independence}

\begin{lemma}[\protect{\cite[Thm.~4, restated]{SchmidtSS95}}]
\label{lemma:tail_bounds}
Let $\ell$ be an even positive integer.
	Let $X_1, \ldots, X_m$ be some $\ell$-wise independent random variables bounded in $[-1,1]$ with expectation $0$.
	Let $V = \sum_{i=1}^{m}\var[X_i]$.
	Then,
	$\E[(X_1 + \ldots +X_m)^{\ell}] \le \max\{\ell^\ell, (\ell V)^{\ell/2}\}.$
\end{lemma}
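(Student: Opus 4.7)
The plan is to apply the moment method in a direct way: expand the $\ell$-th power by the multinomial theorem, factor every expectation using $\ell$-wise independence, eliminate any monomial with a singleton exponent via $\E[X_i]=0$, and finally reduce what remains to a sum indexed by ordered partitions of $[\ell]$ into blocks of size at least two.

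First I would write
$$\E\!\left[(X_1+\cdots+X_m)^\ell\right]=\sum_{\substack{k_1,\ldots,k_m\ge 0\\ k_1+\cdots+k_m=\ell}}\binom{\ell}{k_1,\ldots,k_m}\,\E\!\left[\prod_{i=1}^m X_i^{k_i}\right].$$
Each monomial touches at most $\ell$ of the $X_i$, so by $\ell$-wise independence every expectation factors as $\prod_i\E[X_i^{k_i}]$. Any term in which some $k_i=1$ is annihilated by $\E[X_i]=0$; hence only tuples whose nonzero entries satisfy $k_i\ge 2$ survive, and their support $S:=\{i:k_i\ge 2\}$ has size $s\le \ell/2$. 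Because $|X_i|\le 1$, for $k_i\ge 2$ we have $|X_i|^{k_i}\le X_i^2$, hence $|\E[X_i^{k_i}]|\le \var[X_i]$. Grouping the surviving terms by $s$ yields
$$\E\!\left[(X_1+\cdots+X_m)^\ell\right]\le \sum_{s=1}^{\ell/2}\Bigl(\sum_{|S|=s}\prod_{i\in S}\var[X_i]\Bigr)\,T(\ell,s),$$
where $T(\ell,s):=\sum_{k_i\ge 2,\,\sum k_i=\ell}\binom{\ell}{k_{i_1},\ldots,k_{i_s}}$ is the number of ordered surjections $[\ell]\twoheadrightarrow[s]$ all of whose preimages have size at least two.

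Next I would bound the two factors separately. Expanding $V^s=(\sum_i\var[X_i])^s$ and noting that each unordered $s$-subset of $[m]$ appears $s!$ times shows that the elementary symmetric factor is at most $V^s/s!$. For $T(\ell,s)$ the crude bound $T(\ell,s)\le s^\ell$ (the count of all functions $[\ell]\to[s]$), together with $s\le \ell/2$ and Stirling's estimate $s!\ge (s/e)^s$, reduces the problem to analyzing $\sum_{s=1}^{\ell/2} V^s s^\ell/s!$. I would split this by regime: when $V<\ell$, factorial decay in $s$ makes every summand at most of order $\ell^\ell$; when $V\ge \ell$, the summand grows with $s$ and the $s=\ell/2$ term dominates, giving a bound of order $(\ell V)^{\ell/2}$ after applying Stirling to $(\ell/2)!$.

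The main obstacle is this last calibration: recovering the clean form $\max\{\ell^\ell,(\ell V)^{\ell/2}\}$ without accumulating stray multiplicative constants of the form $c^\ell$ along the way. The fix is to replace the wasteful bound $T(\ell,s)\le s^\ell$ by a tighter count that actually exploits $k_i\ge 2$ (e.g.\ the $2$-associated Stirling number $s!\cdot S_{\ge 2}(\ell,s)$, which is substantially smaller than $s^\ell$), and to use Maclaurin's inequality rather than the crude $V^s/s!$ estimate when balancing the two factors at the critical $s=\ell/2$. This tightening is exactly the content of the original Schmidt--Siegel--Srinivasan calculation, which I would essentially import to close the argument.
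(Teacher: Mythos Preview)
The paper does not prove this lemma; it is quoted verbatim as a restatement of \cite[Thm.~4]{SchmidtSS95} and used as a black box. Your outline is the standard moment-method argument underlying that reference and is correct as far as it goes; the final constant-tightening you explicitly defer to the Schmidt--Siegel--Srinivasan calculation is exactly what the paper also defers by citation, so there is nothing further to compare.
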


\subsection{Branching programs}\label{prelim:ROBP}
A {\sf read-once branching program (ROBP)} $B$ of {\sf length} $n$ and {\sf width} $w$ is a directed layered graph with $n+1$ layers of vertices denoted $V_1, \ldots, V_{n+1}$.  Each $V_i$ consists of $w_i \le w$ vertices $\{v_{i,1}, \ldots, v_{i,w_i}\}$, and between every two consecutive layers $V_i$ and $V_{i+1}$ there exists a set of directed edges (from $V_i$ to $V_{i+1}$), denoted $E_i$, such that any vertex in $V_i$ has precisely two out-going edges in $E_i$, one marked by $1$ and one marked by $-1$.
The vertices in $V_{n+1}$ are marked with either `accept' and `reject'.

A branching program $B$ and an input $x\in \pmone^n$ naturally describes a {\sf computation~path} in the layered graph:
we start at node $v_1 = v_{1,1}$ in $V_1$.
For $i=1, \ldots, n$, we traverse the edge going out from $v_i$ marked by $x_i$ to get to a node $v_{i+1} \in V_{i+1}$.
The resulting computation path is $v_1 \to v_2 \to \ldots \to v_{n+1}$.
We say that $B$ accepts $x$ iff the computation path defined by $B$ and $x$ reaches an accepting node. Naturally $B$ describes a Boolean function $B:\pmone^n \to \pmone$ whose value is $-1$ on input $x$ iff $B$ accepts $x$.

{\sf Unordered branching programs} are defined similarly, expect that there exists a permutation $\pi \in S_n$ such that in step $i$ the computation path follows the edge marked by $x_{\pi_i}$, for $i\in [n]$.
We also consider unordered branching programs on $[n]$ of shorter length $n'\le n$. In such case, the program stops after reading $n'$ input bits.%
\footnote{Note that in the unordered case, the set of bits being read could be an arbitrary subset of $[n]$ of size $n'$.}

For two programs $B_1$ and $B_2$ defined over disjoint sets of variables and having the end width of $B_1$ equal the start width of $B_2$, we denote by $B_1 \circ B_2$ the concatenation of $B_1$ and $B_2$, defined in the natural way.

\paragraph{Locally Monotone Branching Programs.}
Let $B$ be a width-$w$ length-$n$ ROBP.
For any vertex $v$ in the ROBP, denote by $\beta_v$ the probability to accept a uniformly random input starting from the vertex $v$.
Since renaming the vertices in each layer does not affect the functionality of $B$, we may assume without loss of generality that the vertices in $V_i$ are ordered according to $\beta_v$. That is, for every $i\in[n+1]$  we have $\beta_{v_{i,1}} \le \beta_{v_{i,2}} \le \ldots \le \beta_{v_{i,w_i}}$. In case of equalities, we break ties arbitrarily but commit to a strict ordering of the nodes in each layer. $B$ is called {\sf locally monotone} if for any vertex $v$ in $B$ the vertex reached from $v$ using the $1$-edge has larger or equal index than the vertex reached from $v$ using the $(-1)$-edge.

For $i\in [n]$, denote by $E_{i,1}$ the set of edges in $E_i$ marked by $1$ and similarly define $E_{i,-1}$. We say that $E_i$ is a {\sf identity layer} if  $E_{i,1} = E_{i,-1}$ (in which case $x_i$ does not affect the output of of $B$). We say that $E_i$ is a {\sf permutation layer} if both $E_{i,1}$ and $E_{i,-1}$ form a matching between $V_i$ and $V_{i+1}$ (i.e., $|V_i| = |V_{i+1}|$ and for $b\in \{-1,1\}$ no two edges in $E_{i,b}$ enter the same vertex in $V_{i+1}$). The following is a key lemma from the work of \cite{BrodyV10}.
\begin{lemma}[Collision Lemma~\cite{BrodyV10}]\label{lemma:identity or collision}
	In a locally monotone branching program, every permutation layer is an identity layer.
\end{lemma}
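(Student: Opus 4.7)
The plan is to exploit the two hypotheses (local monotonicity and the matching structure of a permutation layer) via a simple counting argument. Fix a permutation layer $E_i$ and set $k = |V_i|$. Since $E_i$ is a permutation layer we have $|V_{i+1}| = k$, and both $E_{i,1}$ and $E_{i,-1}$ are perfect matchings between $V_i$ and $V_{i+1}$. Using the ordering committed to in the definition of local monotonicity, I would index $V_i$ by $1,\ldots,k$ in increasing order of $\beta_v$, and similarly index $V_{i+1}$, so that the matchings $E_{i,1}$ and $E_{i,-1}$ correspond to bijections $\sigma,\tau:[k]\to[k]$, where the $1$-edge (resp.\ $(-1)$-edge) out of $v_{i,j}$ lands at $v_{i+1,\sigma(j)}$ (resp.\ $v_{i+1,\tau(j)}$).

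The next step is to translate the local monotonicity hypothesis into a pointwise inequality. By the definition of locally monotone, for every $j \in [k]$ the $1$-successor has index at least that of the $(-1)$-successor, i.e.\ $\sigma(j) \geq \tau(j)$. Then I would finish with the observation that $\sigma$ and $\tau$ are both permutations of $[k]$, so
\[
\sum_{j=1}^{k} \sigma(j) \;=\; \sum_{j=1}^{k} \tau(j) \;=\; \binom{k+1}{2}.
\]
A pointwise inequality between two sequences with equal sums forces equality everywhere: $\sigma(j) = \tau(j)$ for all $j\in [k]$. This means $E_{i,1} = E_{i,-1}$, which is exactly the definition of an identity layer.

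There is essentially no obstacle: once the two hypotheses are written in the symbolic form above, the proof reduces to a one-line averaging argument. The only subtlety worth flagging is that the canonical labeling of vertices by increasing $\beta$ is only well-defined modulo a tie-breaking rule, but the statement of local monotonicity refers to this same committed ordering, so the argument is consistent. No Fourier, no probabilistic, and no structural facts about $\beta_v$ beyond the ordering itself are needed — local monotonicity plus the permutation-layer assumption alone suffice.
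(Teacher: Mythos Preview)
Your proof is correct and is essentially the same approach as the paper's: the paper observes that local monotonicity forces each $1$-edge to land at an index at least as large as the corresponding $(-1)$-edge, and since both sets of edges form matchings they must coincide. You have simply made the ``must coincide'' step explicit via the equal-sums argument $\sum_j \sigma(j)=\sum_j \tau(j)$, which is the natural way to formalize the paper's one-sentence justification.
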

To see it, note that if we think of the vertices in each layer $\{v_{i,1}, \ldots, v_{i,w_i}\}$ as written from top to bottom according to $\beta_v$, then in a locally monotone program for any vertex $v$ the $1$-edge leads to the same vertex or to a vertex below the one that follows the $(-1)$-edge. Thus, assuming  both $E_{i,-1}$ and $E_{i,1}$ form a matching, the only way this could happen is if they both form the same matching.

The following is a restatement of a result from \cite{CHRT17}. We give its proof for completeness in Appendix~\ref{app:CHRT}.
\begin{theorem}\label{thm:CHRTa}
Let $B$ be an unordered oblivious read-once branching programs with width-$w$ and length-$n$. Let $\eps>0$, $p \le 1/O(\log n)^w$, $k = O(\log(n/\eps))$, and $\D_p$ be a $\delta_T$-biased distribution over subsets of $[n]$ with marginals $p$, for some $\delta_T \le p^{2k}$.
Then, with probability at least $1-\eps$ over $T\sim \D_p$, 
\[
L_1(\tilde{B}) =  \sum_{S \subseteq T}{ |\hat{B}(S)|} \le O((nw)^3/\eps).\]
\end{theorem}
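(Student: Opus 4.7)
The plan is to apply Markov's inequality to the random variable $L_1(\tilde{B}) = L_1(\Bias_T B)$ viewed as a function of $T \sim \D_p$. By Fact~\ref{fact:bias-fnc-Fourier}, $\widehat{\Bias_T B}(S) = \hat{B}(S) \cdot \one_{S \subseteq T}$, hence $L_1(\tilde{B}) = \sum_{S \subseteq T} |\hat{B}(S)|$, and
$$\E_{T \sim \D_p}[L_1(\tilde{B})] \;=\; \sum_{S} |\hat{B}(S)| \cdot \Pr_{T \sim \D_p}[S \subseteq T] \;\le\; \sum_{S} |\hat{B}(S)| \cdot (p^{|S|} + \delta_T),$$
where the inequality uses that $\D_p$ is $\delta_T$-biased with marginals $p$ (a one-set special case of \Cref{claim:inclusion-exclusion}). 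The goal is to show this expectation is at most $O((nw)^3)$; then Markov with threshold $(nw)^3/\eps$ yields failure probability at most $\eps$, as desired.

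To bound the expectation I would split the Fourier sum by level $\ell = |S|$ into a small-level range $\ell \le 2k$ and a large-level range $\ell > 2k$. For the small range, I would invoke the per-level Fourier $L_1$ bound on unordered constant-width ROBPs from \cite{CHRT17}, which has the form $L_{1,\ell}(B) \le (C \log n)^{w\ell}$ for an appropriate constant $C$ (this is where one uses that $B$ has width $w$ and length $n$; proving it is the content of the CHRT17 Fourier analysis sketched in \Cref{app:CHRT}). Choosing the hidden constant in the hypothesis $p \le 1/O(\log n)^w$ to equal $C$ makes $p^\ell \cdot L_{1,\ell}(B) \le 1$ for every $\ell$; moreover $\delta_T \le p^{2k} \le p^\ell$ for $\ell \le 2k$, so the $\delta_T$ term contributes at most as much per level. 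Summing over the $O(k) = O(\log(n/\eps))$ small levels gives a total of $O(\log(n/\eps))$.

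For the large range $\ell > 2k$, both summands in $p^\ell + \delta_T$ are bounded by $p^{2k}$, so $\Pr[S \subseteq T] \le 2 p^{2k}$ uniformly. The large-level contribution is therefore at most $2 p^{2k} \cdot L_1(B)$, and the crude polynomial bound $L_1(B) \le (nw)^{O(w)}$ for constant-width ROBPs (again via the Fourier-analytic tools of \cite{CHRT17}) combined with the doubly-exponential smallness of $p^{2k}$ makes this at most $O(1)$. Putting the two pieces together, $\E_T[L_1(\tilde{B})] = O(\log(n/\eps))$, which comfortably beats the $(nw)^3$ target, so Markov closes the argument.

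The main obstacle is invoking the per-level $L_1$ bound from \cite{CHRT17} in exactly the form $(C \log n)^{w\ell}$ needed to pair with the restriction marginal $p \le 1/O(\log n)^w$; the constants in the two hypotheses must be matched, which is the reason the theorem states $p$ in this particular $(\log n)^{w}$ form. The $\delta_T \le p^{2k}$ hypothesis, together with $k = O(\log(n/\eps))$, is precisely what makes the tail from large Fourier levels absorbable inside the polynomial budget $(nw)^3/\eps$, so that even a very loose bound on $L_1(B)$ suffices there. Everything else is routine manipulation of Fourier sums and Markov's inequality.
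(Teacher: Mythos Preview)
The large-level half of your argument has a genuine gap: the ``crude polynomial bound $L_1(B)\le (nw)^{O(w)}$'' is false, and \cite{CHRT17} does not claim it. A read-once $2$-CNF with $m=n/2$ clauses is a width-$3$ length-$n$ ROBP with $L_1 = \Theta\big((3/2)^{m}\big) = 2^{\Theta(n)}$. Meanwhile $p^{2k}$ is only quasi-polynomially small, namely $p^{2k}=(C\log n)^{-2wk}=2^{-\Theta(w\log(n/\eps)\log\log n)}$ (not ``doubly-exponentially small''), so $p^{2k}\cdot L_1(B)$ can itself be $2^{\Theta(n)}$. Hence $\E_T[L_1(\tilde B)]$ is not $O((nw)^3)$ in general, and a single Markov bound on this expectation cannot yield the conclusion.

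The paper's proof never bounds $\E_T[L_1(\tilde B)]$ directly. It applies Markov and a union bound only to the level-$j$ sums $\sum_{s\subseteq T,\,|s|=j}|\widehat{B_{v_0\to v}}(s)|$ for \emph{every} subprogram $B_{v_0\to v}$ and every $j\le 2k$; there are only $O(w^2n^3)$ such quantities, each with expectation at most $2^{-j}$, and this yields a single high-probability event. Conditioned on that event, the bound for levels $j>2k$ is then proved \emph{deterministically} by induction on $j$, using the ROBP decomposition $B_{\to v}(x)=\sum_{v_0\in V_i}B_{\to v_0}(x)\cdot B_{v_0\to v}(x)$ to split each set $s$ of size $j$ into a prefix of size $j-k$ (handled by the inductive hypothesis, since $j-k\ge k$) and a suffix of size exactly $k$ (handled by the high-probability event). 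This recursive splitting is the missing idea; no direct expectation bound can replace it, because the Fourier tail of a constant-width ROBP is simply too heavy.
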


\begin{theorem}[Implied by \protect{\cite[Thm.~2]{CHRT17}} and \protect{\cite[Thm.~4.1]{SteinkeVW17}}]\label{thm:CHRT}
Let $\mathcal{C}$ be the class of all unordered oblivious read-once branching programs on $[n]$ of length at most $n'$ and width at most $w$.
Then,
there exists a log-space explicit pseudorandom generator 
\[
\mathbf{CHRT}:\pmone^{s_{n,n',w,\eps}}\to \pmone^n
\]
that $\eps$-fools $\mathcal{C}$, 
where $s_{n,n',w, \eps} = O(\log(n')^{w+1} \log\log(n') \log(n/\eps))$.
\end{theorem}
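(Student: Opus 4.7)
The plan is to combine two off-the-shelf ingredients: a PRG for \emph{ordered} constant-width ROBPs from~\cite{CHRT17}, and a generic reduction that upgrades any such PRG into one that works in the \emph{unordered} model, due to~\cite{SteinkeVW17}. Both ingredients are already phrased as theorems, so the real work of the proof is in matching up parameters.

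\textbf{Step 1 (ordered case).} First, I would invoke [CHRT17, Thm.~2], which yields a log-space explicit PRG $G_{\text{ord}}:\pmone^{s_1}\to\pmone^{n'}$ that $\eps'$-fools ordered width-$w$ length-$n'$ ROBPs using seed length
\[
s_1 \;=\; O\!\left(\log(n')^{w+1}\,\log\log(n')\,\log(1/\eps')\right).
\]
I would instantiate it with error $\eps' = \eps/C$ for a suitable constant $C>0$, so that the final error after the next step is still at most $\eps$.

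\textbf{Step 2 (ordered $\to$ unordered).} Second, I would apply [SteinkeVW17, Thm.~4.1], which gives a black-box reduction from fooling ordered ROBPs to fooling unordered ROBPs on a potentially larger variable set $[n]$. Concretely, this reduction takes $G_{\text{ord}}$ together with $O(\eps)$-biased bits on $\{\pm1\}^n$ (sampled using $O(\log(n/\eps))$ additional seed via \cite{NaorNaor93,AlonGHP92,Ta-Shma17}) and combines them (e.g., by hashing coordinates in $[n]$ into the $n'$ output positions of $G_{\text{ord}}$ and XOR-ing the result with the small-bias sample) to obtain a log-space explicit generator $\mathbf{CHRT}:\pmone^{s}\to\pmone^{n}$ that $\eps$-fools the class~$\mathcal{C}$. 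Log-space explicitness follows because both the [CHRT17] generator and the small-bias generator are, and the composition is a simple index-wise combination.

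\textbf{Step 3 (seed-length arithmetic).} Adding the two seeds gives
\[
s \;=\; s_1 + O(\log(n/\eps)) \;=\; O\!\left(\log(n')^{w+1}\,\log\log(n')\,\log(1/\eps)\right) + O(\log(n/\eps)).
\]
Since $\log(n')^{w+1}\log\log(n')\ge 1$ and $\log(n/\eps)\ge \log(1/\eps)$, this collapses to
$s = O(\log(n')^{w+1}\,\log\log(n')\,\log(n/\eps))$, matching the claim.

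\textbf{Main obstacle.} The only non-trivial point is the parameter bookkeeping in [SVW17, Thm.~4.1]: one must confirm that its overhead is \emph{additive} $O(\log(n/\eps))$ rather than multiplicative, and that the hashing/XOR post-processing that lets $n$ exceed $n'$ does not blow up the log-space explicitness or force a larger error parameter on $G_{\text{ord}}$. Once both ingredients are cited at the right settings, the bound in Theorem~\ref{thm:CHRT} follows by the simple arithmetic above.
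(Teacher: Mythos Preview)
You have the roles of the two cited results essentially backwards. As quoted verbatim in Appendix~\ref{app:CHRT} of this paper, \cite[Thm.~2]{CHRT17} is not a PRG at all but the Fourier-growth bound $\sum_{|S|=k}|\hat B(S)|\le O(\log n')^{wk}$ for ordered width-$w$ length-$n'$ ROBPs. Since level-$k$ Fourier mass is invariant under permuting the input coordinates, this bound transfers to \emph{unordered} length-$n'$ ROBPs on $[n]$ with no additional argument whatsoever, so there is no separate ``ordered $\to$ unordered'' reduction to perform. The role of \cite[Thm.~4.1]{SteinkeVW17} is then the generic ``Fourier growth $\Rightarrow$ PRG'' template (iterated pseudorandom restrictions using small-bias sets over $[n]$): given a restriction-closed class on $[n]$ with level-$k$ mass at most $t^k$, it produces a PRG whose seed length has the claimed shape, and plugging in $t=O(\log n')^{w}$ yields $s_{n,n',w,\eps}$. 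The paper does not spell this out because it is a direct instantiation.

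The concrete mechanism you sketch in Step~2---hash the $n$ coordinate indices into the $n'$ output positions of an ordered PRG and XOR the result with a small-bias string---is not what \cite{SteinkeVW17} does, and it would require its own correctness proof: any two coordinates that collide under the hash receive the same PRG bit, and while the small-bias XOR decorrelates each fixed parity, you would still have to argue that a width-$w$ ROBP cannot exploit the joint structure between collided and non-collided positions. No such analysis appears in \cite{SteinkeVW17}, and none is needed here, because the permutation-invariance of the Fourier bound already delivers the unordered case for free.
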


\subsection{Helpful lemmas}

\begin{lemma}\label{lemma:var}
Let $a,b>0$.
If $X$ is a real-valued random variable bounded in $[-b,a]$ with mean $0$, then $\var[X] \le ab$.
\end{lemma}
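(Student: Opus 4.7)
The plan is to exploit the fact that since $\E[X] = 0$, we have $\Var[X] = \E[X^2]$, and then bound $X^2$ pointwise by an affine function whose expectation is $ab$.

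Concretely, I would start from the pointwise inequality $(x - a)(x + b) \leq 0$, which holds for every $x \in [-b, a]$, since the first factor is nonpositive and the second is nonnegative on this interval. Expanding gives
\[
x^2 + (b - a)\,x - ab \leq 0,
\qquad\text{i.e.,}\qquad
x^2 \leq ab + (a - b)\,x.
\]

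Applying this to $X$ and taking expectations yields $\E[X^2] \leq ab + (a - b)\,\E[X]$, and since $\E[X] = 0$, the right-hand side collapses to $ab$. Combined with $\Var[X] = \E[X^2] - \E[X]^2 = \E[X^2]$, this gives the claimed bound $\Var[X] \leq ab$.

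There is no real obstacle here; the only ``trick'' is choosing the quadratic majorant $ab + (a-b)x$ of $x^2$ on $[-b,a]$, which is forced by requiring equality at the two endpoints $x = -b$ and $x = a$. The whole argument is two or three lines and requires no additional machinery from the preliminaries.
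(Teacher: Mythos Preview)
Your proof is correct. The paper's argument is essentially the same idea in a different guise: it notes that since $x^2$ is convex on $[-b,a]$, $\E[X^2]$ is maximized by the two-point distribution on $\{-b,a\}$ with mean $0$, then computes that extremal value to be $ab$. Your pointwise inequality $x^2 \le ab + (a-b)x$ is exactly the secant line of $x^2$ through the endpoints $(-b,b^2)$ and $(a,a^2)$, so both arguments encode the same convexity fact; your version is a bit more direct since it avoids the extremal-distribution step and just takes one expectation.
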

	\begin{proof}
		$\var[X] = \E[X^2]$ since $\E[X]=0$. As $x^2$ is convex and $X$ domain is bounded, the maximal value that $\E[X^2]$ can get is if all of $X$'s probability mass is on the boundary. 
		Denote by $p = \Pr[X=a]$.
		Since $\E[X]=0$ we get $0 = p\cdot a + (1-p) \cdot (-b)$, i.e., $p = b/(a+b)$, thus 
		\[ \var[X] = \E[X^2] \le a^2 p + (1-p)b^2 = \frac{a^2 b}{a+b} + \frac{ab^2}{a+b} = ab\;.\qedhere\]
	\end{proof}

\begin{theorem}[Hyper-contractivity of Variance]\label{thm:HC}
Let $f: \pmone^k \to \pmone$ be a Boolean function.
Then, 
$\E_{T\sim \Rp}[\var[\tilde{f}]] \le p \cdot \var[f]$. Furthermore, if $p \le 1/3$, then
$\E_{T\sim \Rp}[\var[\tilde{f}]] \le \var[f]^{3/2}$.
\end{theorem}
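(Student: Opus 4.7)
The plan is to prove both inequalities directly from the Fourier expansion of $\tilde f$. By Fact~\ref{fact:bias-fnc-Fourier}, $\widehat{\tilde f}(S) = \hat f(S)\cdot \one_{S\subseteq T}$, so $\var[\tilde f] = \sum_{S\ne\emptyset} \hat f(S)^2\,\one_{S\subseteq T}$. Since each $i$ lies in $T\sim\Rp$ independently with probability $p$, we have $\Pr[S\subseteq T] = p^{|S|}$, and therefore
\[
\E_{T\sim\Rp}[\var[\tilde f]] \;=\; \sum_{|S|\ge 1} p^{|S|}\,\hat f(S)^2.
\]
The first inequality is immediate: since $|S|\ge 1$ and $p\le 1$, every $p^{|S|}\le p$, so the right-hand side is at most $p\sum_{|S|\ge 1}\hat f(S)^2 = p\,\var[f]$.

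For the sharper bound when $p\le 1/3$, I would apply the Bonami--Beckner hyper-contractive inequality to the centered function $g := f - \E[f]$. Recognizing the Fourier sum as a noise-operator norm, $\sum_{|S|\ge 1} p^{|S|}\hat f(S)^2 = \sum_{S} p^{|S|}\hat g(S)^2 = \|T_{\sqrt p}g\|_2^2$ (the $S=\emptyset$ term vanishes since $\hat g(\emptyset)=0$), hyper-contractivity gives $\|T_{\sqrt p}g\|_2 \le \|g\|_{1+p}$. This reduces matters to upper-bounding $\|g\|_{1+p}^2$. Using that $f$ is Boolean, and writing $\mu := \E[f]$ and $v := \var[f] = 1-\mu^2$, a direct calculation yields
\[
\|g\|_{1+p}^{1+p} \;=\; \E\bigl[|g|^{1+p}\bigr] \;=\; \tfrac{1+\mu}{2}(1-\mu)^{1+p} + \tfrac{1-\mu}{2}(1+\mu)^{1+p} \;=\; \tfrac{v}{2}\bigl[(1-\mu)^p + (1+\mu)^p\bigr].
\]
By concavity of $x\mapsto x^p$ on $[0,\infty)$ for $p\in(0,1)$ (Jensen's inequality), the bracketed average is at most $\bigl(\tfrac{(1-\mu)+(1+\mu)}{2}\bigr)^p = 1$, hence $\|g\|_{1+p}^{1+p}\le v$, i.e., $\|g\|_{1+p}^2 \le v^{2/(1+p)}$. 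Finally, $p\le 1/3$ implies $2/(1+p)\ge 3/2$, and since $v\le 1$ we conclude $v^{2/(1+p)}\le v^{3/2}$, giving the desired bound.

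The only delicate step is invoking hyper-contractivity in its $L^2$-to-$L^{1+p}$ form ($\|T_{\sqrt p}g\|_2\le\|g\|_{1+p}$) and then exploiting Boolean-ness cleanly to compute $\|g\|_{1+p}^{1+p}$; everything else is bookkeeping. Note that the milder bound $\|g\|_{1+p}^2\le\|g\|_2^2=v$ from Jensen's inequality alone falls short by exactly the factor of $v^{1/2}$ we need, which is precisely what hyper-contractivity buys.
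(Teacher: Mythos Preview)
Your proof is correct and essentially identical to the paper's own proof: both compute $\E_{T\sim\Rp}[\var[\tilde f]] = \sum_{S\neq\emptyset} p^{|S|}\hat f(S)^2$ from Fact~\ref{fact:bias-fnc-Fourier}, bound the first item trivially, and for the second item apply the $\|T_{\sqrt p}g\|_2 \le \|g\|_{1+p}$ form of hypercontractivity to $g=f-\E[f]$, evaluate $\|g\|_{1+p}^{1+p}$ explicitly using Booleanity, and bound the resulting average $\tfrac12[(1-\mu)^p+(1+\mu)^p]\le 1$ by concavity. The paper's notation differs only cosmetically ($\beta$ for your $\mu$, $N_\rho$ for your $T_\rho$).
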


\begin{proof}
First, observe that using Fact~\ref{fact:bias-fnc-Fourier} and $\var[g] = \sum_{S\neq \emptyset} \hat{g}(S)^2$,  we have
$$\E_{T\sim \Rp}[\var[\tilde{f}]] = 
\E_{T\sim \Rp}\bigg[\sum_{S\neq \emptyset} \widehat{(\Bias_T f)}(S)^2\bigg] = 
\sum_{S\neq \emptyset} \hat{f}(S)^2 \cdot \Pr_{T\sim \Rp}[S \subseteq T] = 
\sum_{S\neq \emptyset} {p^{|S|} \cdot \hat{f}(S)^2 }.$$
For the first item, we get $\E_{T\sim \Rp}[\var[\tilde{f}]] = \sum_{S\neq \emptyset} {p^{|S|} \cdot \hat{f}(S)^2 } \le p \cdot \sum_{S\neq \emptyset} {\hat{f}(S)^2} = p\cdot \var[f]$.

For the second item, we use the Hyper-contractivity Theorem \cite{Bonami70} (cf. \cite[Ch.~9]{OdonnellBook}) stating that $\|N_{\rho} g\|_2 \le \|g\|_{1+\rho^2}$ for any function $g:\pmone^n \to \R$ (where $N_{\rho}$ is the noise operator that satisfies $\hat{N_{\rho} g}(S) = \rho^{|S|} \cdot \hat{g}(S)$ for all $S\subseteq [n]$).
Take $g = f- \E[f]$ and $\rho = \sqrt{p}$.
Then, \looseness=-1
$$
\E_{T\sim \Rp}[\var[\tilde{f}]] = \sum_{S\neq \emptyset} {p^{|S|} \cdot \hat{f}(S)^2 } =  
\|N_{\sqrt{p}} g\|^{2}_2 \le \|g\|_{1+p}^{2} = \E_{x\sim U_k}[|g(x)|^{1+p}]^{2/(1+p)}
$$
We analyze the RHS.
Let $\beta = \E[f]$. 
Then, $\beta\in [-1,1]$, $\var[f] = 1-\beta^2$, and under the uniform distribution $|g(x)|$ gets value $|1-\beta| = 1-\beta$ with probability $(1+\beta)/2$ and value $|-1-\beta| = 1+\beta$ with probability $(1-\beta)/2$.
We get 
\begin{align*}
	\E_{x\sim U_k}[|g(x)|^{1+p}]
	&= \frac{1+\beta}{2} \cdot (1-\beta)^{1+p} + \frac{1-\beta}{2} \cdot (1+\beta)^{1+p} \\
	&= (1-\beta^2) \cdot (\tfrac{1}{2} (1-\beta)^{p} + \tfrac{1}{2}(1+\beta)^{p}) 
	\le 1-\beta^2 = \var[f]
\end{align*}
where the last inequality follows by concavity of $x \mapsto x^{p}$.
Overall if $p\le 1/3$, then 
$\E_{T\sim \Rp}[\var[\tilde{f}]]  \le   \var[f]^{2/(1+p)} \le  \var[f]^{3/2}$.
\end{proof}

\begin{lemma}\label{lemma:vars_prod}
	Suppose $\D_p$ is $\delta_T$-biased distribution with marginals $p$.
Let $\ell \in \N$.
Let $f_1, \ldots, f_\ell: \pmone^n \to \R$ be real valued functions, not necessarily distinct.
Then, $$\abs{\E_{T \sim \D_p}\left[ \prod_{i=1}^{\ell} \var[\tilde{f_i}] \right] - 
\E_{T \sim \Rp}\left[ \prod_{i=1}^{\ell} \var[\tilde{f_i}]\right]} \;\le  \;\delta_T \cdot\prod_{i=1}^{\ell} \var[f_i].$$
\end{lemma}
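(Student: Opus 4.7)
The plan is to expand each $\var[\tilde{f_i}]$ in the Fourier basis and then reduce the claim to a termwise comparison of the probabilities $\Pr_{T \sim \D_p}[R \subseteq T]$ versus $p^{|R|}$, where $R$ ranges over non-empty unions of spectral sets. This is the natural way to exploit the definition of a $\delta_T$-biased distribution with marginals $p$.

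First I would apply Fact~\ref{fact:bias-fnc-Fourier}, which gives $\widehat{\Bias_T f_i}(S) = \hat{f_i}(S) \cdot \one_{S \subseteq T}$, so that
\[
\var[\tilde{f_i}] \;=\; \sum_{\emptyset \neq S_i \subseteq [n]} \hat{f_i}(S_i)^2 \cdot \one_{S_i \subseteq T}.
\]
Multiplying these $\ell$ expressions and swapping sum and expectation gives
\[
\E_{T}\!\left[\prod_{i=1}^{\ell} \var[\tilde{f_i}]\right]
= \sum_{\emptyset \neq S_1, \ldots, S_\ell} \left(\prod_{i=1}^{\ell} \hat{f_i}(S_i)^2\right) \cdot \Pr_{T}\!\left[S_1 \cup \cdots \cup S_\ell \subseteq T\right],
\]
where the formula holds both for $T \sim \D_p$ and for $T \sim \Rp$. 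For the truly random restriction, the probability factors become $p^{|S_1 \cup \cdots \cup S_\ell|}$.

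Next I would take the difference of the two expansions and pass the absolute value inside the sum. Since every $S_i$ is non-empty, the union $R = S_1 \cup \cdots \cup S_\ell$ is non-empty, and the assumption that $\D_p$ is $\delta_T$-biased with marginals $p$ directly gives
\[
\left|\Pr_{T \sim \D_p}[R \subseteq T] \;-\; p^{|R|}\right| \;\le\; \delta_T
\]
for every such $R$. Hence the difference in question is bounded by
\[
\delta_T \cdot \sum_{\emptyset \neq S_1, \ldots, \emptyset \neq S_\ell} \prod_{i=1}^{\ell} \hat{f_i}(S_i)^2.
\]

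Finally, the key observation is that the remaining sum factors as a product across the independent indices $i$, and by Parseval/Plancherel each factor is exactly $\var[f_i]$:
\[
\sum_{\emptyset \neq S_1, \ldots, \emptyset \neq S_\ell} \prod_{i=1}^{\ell} \hat{f_i}(S_i)^2 \;=\; \prod_{i=1}^{\ell} \Bigg(\sum_{\emptyset \neq S_i} \hat{f_i}(S_i)^2\Bigg) \;=\; \prod_{i=1}^{\ell} \var[f_i].
\]
Substituting yields the claimed bound $\delta_T \cdot \prod_{i=1}^{\ell} \var[f_i]$. There is no genuine obstacle here: the only subtle point is noting that the non-emptiness of each $S_i$ guarantees the union is non-empty so the $\delta_T$-bias bound is actually applicable, and that the collapse of the expectation over $\prod_i \one_{S_i \subseteq T}$ to an indicator on the union is what allows a clean termwise comparison.
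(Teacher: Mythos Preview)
Your proof is correct and essentially identical to the paper's own argument: both expand each $\var[\tilde{f_i}]$ via Fact~\ref{fact:bias-fnc-Fourier}, collapse the product of indicators $\one_{S_i\subseteq T}$ to $\one_{S_1\cup\cdots\cup S_\ell\subseteq T}$, compare termwise using the $\delta_T$-bias of $\D_p$, and then factor the resulting sum as $\prod_i \var[f_i]$.
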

\begin{proof}
Using Fact~\ref{fact:bias-fnc-Fourier}, for any fixed $T$, we have
\begin{align*}
\prod_{i=1}^{\ell} \var[\tilde{f_i}] &= 
\prod_{i=1}^{\ell} \sum_{S_i \neq \emptyset} \hat{f_i}(S_i)^2 \cdot \one_{\{S_i \subseteq T\}}	= \sum_{S_1, \ldots, S_{\ell} \neq \emptyset} \hat{f_1}(S_1)^2 \cdots \hat{f_\ell}(S_\ell)^2 \cdot \one_{\{S_1 \cup \ldots \cup S_{\ell} \subseteq T\}}.
\end{align*}
Thus,
\begin{align*}
\E_{T \sim \D_p}\left[ \prod_{i=1}^{\ell} \var[\tilde{f_i}]\right] = \sum_{S_1, \ldots, S_{\ell} \neq \emptyset} \hat{f_1}(S_1)^2 \cdots \hat{f_\ell}(S_\ell)^2 \cdot (p^{|S_1 \cup \ldots \cup S_{\ell}|} \pm \delta_T)
\end{align*}
and 
\begin{align*}
\E_{T \sim \Rp}\left[ \prod_{i=1}^{\ell} \var[\tilde{f_i}]\right] = \sum_{S_1, \ldots, S_{\ell} \neq \emptyset} \hat{f_1}(S_1)^2 \cdots \hat{f_\ell}(S_\ell)^2 \cdot p^{|S_1 \cup \ldots \cup S_{\ell}|}.
\end{align*}
The difference between the two is at most 
$$\left|\E_{T \sim \D_p}\left[ \prod_{i=1}^{\ell} \var[\tilde{f_i}]\right] -  \E_{T \sim \Rp}\left[ \prod_{i=1}^{\ell} \var[\tilde{f_i}]\right]\right| \le \delta_T \cdot \sum_{S_1, \ldots, S_{\ell} \neq \emptyset} \hat{f_1}(S_1)^2 \cdots \hat{f_\ell}(S_\ell)^2 = \delta_T \cdot \prod_{i=1}^{\ell} \var[f_i],$$
which completes the proof.
\end{proof}

\section{From width-3 ROBPs to the XOR of short ROBPs}\label{sec:pseudorestriction}
In Section~\ref{sec:XOR_of_short}, we prove the following theorem.
\mainintro*
The pseudorandom restriction assigns $p$ fraction of the variables as follows: 
\begin{enumerate}
	\item Choose a set of coordinates $T \subseteq [n]$ according to a $\delta_T$-biased distribution with marginals $p$, for $\delta_T := p^{ O(\log(n/\eps))}$.
	\item Assign the variables in $T$ according to a $\delta_x$-biased distribution, for $\delta_x := (\eps/n)^{O(\log b)}$.
\end{enumerate}
Known constructions of small-biased distributions \cite{NaorNaor93,AlonGHP92,AlonBNNR92,Ben-AroyaT13,Ta-Shma17} show that it suffices to use $O(\log(n/\delta_T) + \log(n/\delta_x)) \le O(w \cdot \log (n/\eps) \cdot (\log\log(n/\eps) + \log(b)))$ random bits to sample the restriction.

In this section, we show how to design pseudorandom restrictions for unordered width-3 ROBPs from pseudorandom restrictions to the XOR of many width-3 ROBPs of length $O(\log (n/\eps))$.
We get the following theorem.
\twosteps*

\paragraph{Proof Sketch.}
In this section, we shall show that under pseudorandom restrictions leaving each variable alive with probability $1/2$, with high probability, the bias function of a ROBP $B$ can be written as a linear combination (up to a small error) over functions of the form $f_1 \cdot f_2 \cdot \ldots \cdot f_m$ where each $f_i$ is a  short subprogram of the original program of length $O(\log (n/\eps))$, and each $f_i$ is defined on a disjoint set of coordinates. Each function $g$ in the linear combination will have a weight $\alpha_g \in [-1,1]$, and the sum of absolute values of weights over all functions participating in the linear combination will be at most $n$. This will show that any generator that $\eps/n$-fools the XOR of short width-$3$ ROBPs also $\eps$-fools width-$3$ length-$n$ ROBPs under random restrictions.

The reduction will first establish that with high probability (over the choice of the set of coordinates that are left alive) the bias function of a ROBP $B$ can be written as the average of width-$3$ length-$n$ ROBPs, whose vast majority have at most $O(\log (n/\eps))$ layers between every two layers with width-$2$. 
Then, we use a result of Bogdanov, Dvir, Viola, Yehudayoff~\cite{BogdanovDVY13} that  reduces branching programs with many width-$2$ layers to the XOR of short ROBPs.\looseness=-1

We focus on the first part of the reduction. 
First, consider the case when $B$ is locally-monotone. 
In this case, every layer of edges is either the identity layer or a colliding layer (Lemma~\ref{lemma:identity or collision}). Assume without loss of generality that there are no identity layers.
Then, under a pseudorandom restriction, with high probability, in every $O(\log (n/\eps))$ consecutive layers we will have a layer of edges whose corresponding variable is fixed to the value on which the edges in the layer collide, leaving at most $2$ vertices reachable in the next layer of vertices. Removing unreachable vertices, we get that with high probability under the random restriction, in every $O(\log (n/\eps))$ consecutive layers there is a layer of vertices with width-$2$.

However, in the case that $B$ is not locally-monotone (e.g.,  when $B$ is a permutation ROBP) it could the case that the widths of all layers of vertices remain $3$ under the random restriction.
Our main observation is that since the bias function takes the average over all assignments to the restricted variables, the bias function of $B$ does not depend on the labels of edges marked by the restricted variables. 
More formally, for any $T \subseteq [n]$, if $B$ and $C$ are two ROBPs with the same graph structure that only differ on the labels on the edges in layers $[n]\setminus T$, then $\Bias_T(B) = \Bias_T(C)$.
Thus, once $T$ is fixed we may relabel the layers in $[n]\setminus T$ so that they are locally-monotone, yielding a new ROBP $B'$, and then apply the bias function.
Using the analysis of the locally monotone case, we get that the bias function of $B'$ (and thus the bias function of $B$) is the average of $B'$ over all restrictions fixing the coordinates in $[n]\setminus T$, and we know that most of these restricted ROBPs have width-$2$ in every $O(\log (n/\eps))$ consecutive layers.

Essentially, the bias function allows us to imagine as if we are taking the average over restrictions of $B'$ rather than restrictions of $B$, and restrictions of $B'$ are ``simpler'' to fool than restrictions of $B$ since they have many layers with width-$2$.

The formal argument follows.

\begin{theorem}[From width-$3$ to almost width-$2$]\label{thm:the-bias-trick}
Let $B$ be a ROBP of width-$3$ and length-$n$. 
Let $\eps>0$. 
Let $\D_{1/2}$ be a $(\eps/n)^{10}$-biased distribution over subsets of $[n]$ with marginals $1/2$.
Let $T\sim \D_{1/2}$ be a random variable.
Let $B^T$ be the branching program $B$ where the layers in $[n] \setminus T$ are relabeled so that they are locally monotone.
Then, 
$$
\Bias_T(B)(x) = \Bias_T(B^T)(x) = \E_{y\sim U_{[n]\setminus T}}[(B^{T}_{T|y}(x)]
$$
and with  probability at least $1-\eps$ over the choice of $T$ and $y$, $B^{T}_{T|y}$ can be computed by a ROBP of the form $D_1 \circ \ldots \circ D_m$ where $\{D_i\}_{i=1}^{m}$ are defined over disjoint sets of at most $b = (3\log(n/\eps))$ variables, and each $D_i$ is a width-$3$ ROBP with at most $2$ vertices on the first and last layers.
\end{theorem}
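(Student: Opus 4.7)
My plan has two parts: (A) showing $\Bias_T(B) = \Bias_T(B^T)$, which together with the definition of the bias function establishes the first equation, and (B) establishing the structural decomposition of $B^T_{T|y}$ with probability at least $1-\eps$ over $(T,y)$.

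For part (A), I will construct, for each fixed $x \in \pmone^n$, a measure-preserving bijection $\tau_x : \pmone^{[n] \setminus T} \to \pmone^{[n] \setminus T}$ satisfying $B_{T|y}(x) = B^T_{T|\tau_x(y)}(x)$ for every $y$. The construction is vertex-by-vertex: relabeling the edges at a single vertex $v \in V_i$ (with $i \in [n]\setminus T$) can be offset by the involution on $\pmone^{[n]\setminus T}$ that flips $y_i$ precisely when the $(x,y)$-computation path visits $v$. Composing these involutions across all relabeled vertices produces $\tau_x$, and averaging over uniform $y$ yields $\Bias_T(B)(x) = \Bias_T(B^T)(x)$.

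For part (B), assume without loss of generality (as in the paper) that $B$ has no identity layers. After relabeling to local monotonicity, each layer $i \in [n] \setminus T$ in $B^T$ is locally monotone and non-identity, so by the Collision Lemma (Lemma~\ref{lemma:identity or collision}) it cannot be a permutation layer and must contain two edges sharing a label and a target vertex. Therefore, with probability at least $1/2$ over $y_i$, the reachable vertex set of $V_{i+1}$ in $B^T_{T|y}$ has size at most $2$; call such an $i$ a \emph{killer}. The combinatorial core is to show that, with probability at least $1-\eps$, every window $W \subseteq [n]$ of $b = 3\log(n/\eps)$ consecutive indices contains at least one killer. Conditional on $T$, killer events are mutually independent across $W \cap ([n]\setminus T)$, so $\Pr[W \text{ has no killer} \mid T] \leq (1/2)^{|W \setminus T|}$. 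Expanding $2^{|W \cap T|} = \sum_{S \subseteq W} \one_{S \subseteq T}$ and using the $\delta_T$-bias of $\D_{1/2}$ gives
\[
\E_T\!\big[(1/2)^{|W \setminus T|}\big] = 2^{-b}\!\sum_{S \subseteq W}\!\Pr[S \subseteq T] = (3/4)^b \pm \delta_T,
\]
which for $b = 3\log(n/\eps)$ and $\delta_T = (\eps/n)^{10}$ is $O((\eps/n)^{1.24})$; a union bound over the at most $n$ windows $W$ gives total failure probability $O(\eps)$.

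On the complementary good event, I place cut points at $V_1$, at $V_{n+1}$ (WLOG of width $\leq 2$ after merging accept/reject vertices), and at every layer immediately following a killer. Consecutive cuts are then at distance at most $b$ and each has reachable width at most $2$, so the sub-ROBPs between consecutive cuts furnish the decomposition $B^T_{T|y} = D_1 \circ \cdots \circ D_m$ with each $D_i$ of length at most $b$ and at most $2$ vertices on its first and last layers. The most delicate step is (A): the intuition that the bias function ``erases'' label information on $[n]\setminus T$ must be made rigorous through an explicit coupling, even though the computation path itself depends on the very labels being averaged out. Step (B) is comparatively routine once the Collision Lemma and the $\delta_T$-biased evaluation of $\E_T[(1/2)^{|W \setminus T|}]$ are in hand.
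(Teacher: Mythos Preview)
Your proposal is correct and follows essentially the same route as the paper. The only differences are in execution: where the paper simply observes that the random walk defining $\Bias_T(B)(x)$ is oblivious to the labels on layers in $[n]\setminus T$, you build an explicit measure-preserving bijection $\tau_x$; and where the paper passes through almost $b$-wise independence to bound $\Pr[\cE_i]$ by $(3/4)^b+\eps/2n$, you compute $\E_T[(1/2)^{|W\setminus T|}]$ directly via the expansion $2^{|W\cap T|}=\sum_{S\subseteq W}\one_{S\subseteq T}$ and the $\delta_T$-biased marginals, arriving at the same $(3/4)^b+\delta_T$ bound.
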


\begin{proof}
We first observe that $\Bias_T(B)(x) = \Bias_T(B^T)(x)$.
Indeed, for any fixed $x$, $\Bias_T(B)(x)$ equals the probability that the following random-path in $B$ accepts:
\begin{quote}
Initiate $v_1$ to be the start node of $B$. For $i=1, \ldots, n$ if $i \in T$, take the edge exiting $v_i$ marked by $x_i$, otherwise (i.e., if $i\in [n]\setminus T$) pick a random edge out of the two edges exiting $v_{i}$. Denote by $v_{i+1}$ the node at the end of the edge taken in the $i$-th step. Accept if and only if $v_{n+1}$ is an accepting node.\end{quote}
 Observe that the following random process is oblivious to the labels of edges in layers $[n]\setminus T$, thus it would yield the same probability for $\Bias_T(B)$ and for $\Bias_T(B^{T})$. Overall, we got that $\Bias_T(B)$ and $\Bias_T(B^T)$ are equal as functions.

In the remainder of the proof, we analyze $\Bias_T(B^T)$.
Let $E_{i,1}$ and $E_{i,-1}$ denote the set of edges in the $i$-layer of $B$ marked by $1$ and $-1$ respectively.
We assume without loss of generality that in all layers of edges  $E_{i,1} \neq E_{i,-1}$, as otherwise the $i$-th layer is redundant and may be eliminated.
(Observe that under any relabeling of $B$ this property is preserved.)
By the collision lemma of Brody-Verbin~\cite{BrodyV10} (Lemma~\ref{lemma:identity or collision}), for any $i\in [n] \setminus T$, layer $i$ in $B^T$ has the following property: either $E_{i,1}$ or $E_{i,-1}$ has at most $2$ end-vertices.

Next, we consider the program $B^{T}_{T|y}$ for a pseudorandom $T$ and a random $y\in \pmone^{[n]\setminus T}$.
For $i=1, \ldots, n$ we say that the $i$-th layer of edges is ``good'' under the choice of $T$ and $y$, if $i \in [n] \setminus T$ and layer $E_{i,y_i}$ of $B^T$ has at most $2$ end-vertices.
Let $b = 3 \log(n/\eps)$.
For $i=1, \ldots, n-b+1$ let $\cE_i$ be the event that none of layers $\{i,i+1, \ldots, i+(b-1)\}$ is good.
Since $T$ is sampled from a $(\eps/n)^{10}$-biased distribution with marginals $1/2$, we have that $T$ is $(\eps/2n)$-almost $b$-wise independent.
Thus, up to an error of $\eps/2n$ we may analyze the event $\cE_i$ under uniform choice of a subset $T\subseteq[n]$.
Indeed, under a uniform choice for $T$ and $y$ each layer is good with probability at least $1/4$, and all $b$ layers are not good with probability at most $(3/4)^{b}$.
Overall, we get $\Pr[\cE_i] \le (3/4)^{b} + (\eps/2n) \le \eps/n$.
By the union bound, $$\Pr[\cE_1 \vee \cE_2 \vee \ldots \vee \cE_{n-b+1}] \le (n-b+1) \cdot (\eps/n) \le \eps.$$
Under the event that all $\cE_i$ are false, we get that $B^T_{T|y}$ has width $2$ in every $b$ layers. In such a case, we may write the restricted function $B^T_{T|y}$ as $D_1 \circ \ldots \circ D_m$ where each $D_i$ is a width-$3$ and length at most $b$ ROBP with at most $2$ vertices on the first and last layer.
\end{proof}

\begin{theorem}[from almost width-2 to the XOR of short ROBPs  - restatement of \protect{\cite[Thm.~2.1]{BogdanovDVY13}}]
	\label{thm:BDVY}
	Let $B$ be a ROBP of the form $D_1\circ \ldots \circ D_m$  where $\{D_i\}_{i=1}^{m}$ are defined over disjoint sets of 
	variables, and each $D_i$ is a width-$3$ ROBP with at most $2$ vertices on the first and last layers.
	Then, (as a real-valued function) $B$ can be written as a linear combination of 
	$\sum_{\alpha \in \B^m} c_{\alpha} \cdot \prod_{i=1}^{n} D_{i,\alpha_i}$
	where $D_{i,0}, D_{i,1}$ are subprograms of $D_{i}$ and $\sum_{\alpha\in \B^n}{|c_\alpha|} \le m$.
	\end{theorem}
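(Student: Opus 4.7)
The plan is to prove the decomposition by induction on $m$, peeling off one block at a time from the right end of $B = D_1 \circ \cdots \circ D_m$. The central observation is that between any two consecutive blocks the computation's state is supported on at most two vertices (since each $D_i$ has at most two end-vertices), so the carry-over from one block to the next can be encoded by a single Boolean state-indicator function. This is what lets the induction go through while the $L_1$-norm of the coefficients grows by only $+1$ per block, matching the target bound of $m$.

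The base case $m=1$ is immediate (take $D_{1,0} = B$ and $c_0 = 1$). For the inductive step, I would let $u_1, u_2$ be the (at most two) start vertices of $D_m$ and, for each $j \in \{1,2\}$, define the sub-program $f_j \in \pmone$ of $D_m$ that starts at $u_j$ and inherits the original accept/reject end-labels of $B$. Let $B' \in \pmone$ be the state-indicator Boolean function computed by $D_1 \circ \cdots \circ D_{m-1}$ with label $+1$ at $u_1$ and $-1$ at $u_2$; this is a valid sub-program of the prefix. Since $B(x) = f_j(x_m)$ precisely when the prefix ends at $u_j$, one obtains the key identity
\[
B \;=\; \tfrac{1+B'}{2}\,f_1 \;+\; \tfrac{1-B'}{2}\,f_2 \;=\; \tfrac{1}{2}(f_1+f_2) \;+\; \tfrac{1}{2}(f_1 - f_2)\,B'.
\]
Expanding this into four products of Boolean sub-programs and substituting the inductive decomposition $B' = \sum_\beta c'_\beta \prod_{i<m} D_{i,\beta_i}$ (with $\sum_\beta |c'_\beta| \le m-1$) yields $B$ in the claimed form. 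The two ``padding'' terms $\tfrac12 f_j$ are interpreted as products in which every block $i < m$ contributes the constant-$1$ sub-program of $D_i$ (valid since labeling all end-vertices of $D_i$ identically produces a constant Boolean function). The resulting coefficients satisfy $\sum_\alpha |c_\alpha| = \tfrac12 + \tfrac12 + 2 \cdot \tfrac12 \sum_\beta |c'_\beta| = 1 + \sum_\beta |c'_\beta| \le m$, closing the induction.

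The main thing to track carefully is that every factor appearing in the final expansion is genuinely a Boolean sub-program of the corresponding block, which is exactly what the at-most-two-boundary-vertices hypothesis buys. No significant obstacle is anticipated: the heart of the argument is just the two-state peel-off identity, and the $L_1$-budget grows by exactly one per block to match the claimed bound of $m$.
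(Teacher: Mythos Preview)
The paper does not prove this statement; it is quoted as a restatement of \cite[Thm.~2.1]{BogdanovDVY13}, so there is no in-paper argument to compare against. Your inductive peel-off via the identity $B=\tfrac12(f_1+f_2)+\tfrac12(f_1-f_2)\,B'$ is the standard argument, and your $L_1$ bookkeeping giving $\sum_\alpha|c_\alpha|\le 1+(m-1)=m$ is correct.

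One small caveat: the statement as written pins down exactly two subprograms $D_{i,0},D_{i,1}$ per block (so that terms are indexed by $\alpha\in\{0,1\}^m$), whereas your padding of the terms $\tfrac12 f_j$ by the constant-$1$ subprogram introduces a \emph{third} subprogram for each intermediate block $D_2,\ldots,D_{m-1}$ once the recursion is unrolled (those blocks already contribute two start-vertex subprograms in the inductive decomposition of $B'$). This is cosmetic: every application of the theorem in the paper uses only that each summand is a product of short width-$3$ subprograms on disjoint variables together with the bound $\sum_\alpha|c_\alpha|\le m$, and your argument delivers both. The stated form already contains index typos ($\prod_{i=1}^{n}$, $\alpha\in\B^n$), so the ``two per block'' should not be read too literally.
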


\begin{proof}[Proof of Theorem~\ref{thm:main_two_steps}]
We prove that the following pseudorandom restriction maintains the acceptance probability of ROBPs of width-$3$ and length-$n$ up to error $\eps$.
Let $\eps_1 := \eps/2$, $\eps_2 := \eps/2n$.
\begin{enumerate}
	\item Pick $T_0 \subseteq [n]$ using a $(\eps_1/n)^{10}$-biased distribution with marginals $1/2$.
	\item
	\begin{enumerate}
	\item Pick $T\subseteq T_0$ using a $\delta_T$-biased distribution with marginals $p = 1/O(\log \log ( n/\eps_2))^{6}$.
	\item Assign the coordinates in $T$ using a $(\eps_2/n)^{O(\log \log (n/\eps_2))}$-biased distribution $\D_x$.
	\end{enumerate}
\end{enumerate}
Equivalently, we prove that the following distribution $\eps$-fools ROBPs of width-$3$ and length-$n$.
\begin{enumerate}
	\item Pick $T_0 \subseteq [n]$ using a $(\eps_1/n)^{10}$-biased distribution with marginals $1/2$.
	\item Assign the coordinates in $[n]\setminus T_0$ uniformly at random.
	\item
	\begin{enumerate}
	\item Pick $T\subseteq T_0$ using a $\delta_T$-biased distribution with marginals $p = 1/O(\log \log ( n/\eps_2))^{6}$.
	\item  Assign the coordinates in $T_0\setminus T$ uniformly at random.
	\item Assign the coordinates in $T$ using a $(\eps_2/n)^{O(\log \log (n/\eps_2))}$-biased distribution $\Dx$.
	\end{enumerate}
\end{enumerate}

Let $y\sim U_{[n]\setminus T_0}$. Let ${\cal G}$ be the event that $B^{T_0}_{T_0|y}$  can be computed by a ROBP of the form $D_1\circ \ldots \circ D_m$  where $\{D_i\}_{i=1}^{m}$ are defined over disjoint sets of at most $b=3\log(n/\eps_1)$ variables, and each $D_i$ is a width-$3$ ROBP with at most $2$ vertices on the first and last layers.
By Theorem~\ref{thm:the-bias-trick} $\Pr({\cal G}) \ge 1-\eps_1$. 
Assuming that ${\cal G}$ happened, then by Theorem~\ref{thm:BDVY},
$B^{T_0}_{T_0|y}$ can be written as 
$\sum_{\alpha \in \B^m} c_{\alpha} \cdot \prod_{i=1}^{n}D_{i,\alpha_i}$
	where $D_{i,\alpha_i}$ are subprograms of $D_{i}$ and $\sum_{\alpha\in \B^n}{|c_\alpha|} \le m$.
For each $\alpha\in \B^m$, using Theorem~\ref{thm:pseudorandom-restriction-XOR-short} we have that 
$$
\left| 
\E_{z\sim U_{T_0}}\left[\prod_{i=1}^{m}D_{i,\alpha_i}(z)\right]
- \E_{T}\E_{x\sim \Dx}\E_{y' \sim U_{T_0 \setminus T}} \left[\prod_{i=1}^{m}D_{i,\alpha_i}(\Sel_{T}(x, y'))\right]
 \right| 
 \le \eps_2\;.
$$
By linearity of expectation and the triangle inequality
$$
\bigg|\E_{z\sim U_{T_0}}\bigg[\sum_{\alpha} c_{\alpha} \cdot \prod_{i=1}^{m}D_{i,\alpha_i}(z)\bigg] - \E_{T}\E_{x\sim \Dx}\E_{y' \sim U_{T_0 \setminus T}}
 \bigg[\sum_{\alpha} c_{\alpha} \cdot \prod_{i=1}^{m} D_{i,\alpha_i}(\Sel_{T}(x, y'))\bigg]\bigg|$$
$$\le \sum_{\alpha}{|c_{\alpha}|} \cdot  \eps_2 \;\le\; m \cdot \eps_2 \;\le\; \eps/2
$$
Overall, we get
$$\bigg|\E_{z\sim U_n}[B(z)]-
\E_{\substack{T_0,\\y\in U_{\bar{T_0}}}}\;
\E_{\substack{T, x\sim \Dx\\y' \sim U_{T_0 \setminus T}}}\;
[B(\Sel_{T_0}(\Sel_T(x,y'),y)]\bigg| =
$$
$$ 
\bigg|\E_{z\sim U_n}[B(z)]-
\E_{\substack{T_0,\\y\in U_{[n]\setminus T_0}}}\;
\E_{\substack{T, x\sim \Dx\\y' \sim U_{T_0 \setminus T}}}\;\
[B^{T_0}(\Sel_{T_0}(\Sel_T(x,y'),y)]\bigg| = 
$$
\begin{equation}\label{eq:aa}
	\bigg|\E_{\substack{T_0,\\y\in U_{[n]\setminus T_0}}}\;
\E_{\substack{T, z\sim U_{T} \\y' \sim U_{T_0 \setminus T}}}\;\
[B^{T_0}(\Sel_{T_0}(\Sel_T(z,y'),y)]-
\E_{\substack{T_0,\\y\in U_{[n]\setminus T_0}}}\;
\E_{\substack{T, x\sim \Dx\\y' \sim U_{T_0 \setminus T}}}\;\
[B^{T_0}(\Sel_{T_0}(\Sel_T(x,y'),y)]\bigg|
\end{equation}
where the last equality is due to the fact for any $T,T_0$ the distribution of $\Sel_{T_0}(\Sel_T(z,y'),y)$ is the uniform distribution over $\pmone^{n}$.
We bound Expression~\eqref{eq:aa} by 
\begin{align*}
	&\E_{T_0,y\in U_{[n]\setminus T_0}}\left[\left|\E_{T} \E_{y' \sim U_{T_0 \setminus T}} \left(\E_{z\sim U_{T}} [B^{T_0}_{T_0|y}(\Sel_T(z,y'))]- \E_{x\sim \Dx} [B^{T_0}_{T_0|y}(\Sel_T(x,y'))]\right)\right|\right]
	\\&\le \Pr\left[\neg {\cal G}\right]  + \E_{T_0,y\in U_{[n]\setminus T_0}}\left[\Big|\E_{T,y' \sim U_{T_0 \setminus T}} \Big(\E_{z\sim U_{T}} [B^{T_0}_{T_0|y}(\Sel_T(z,y'))]- \E_{x\sim \Dx} [B^{T_0}_{T_0|y}(\Sel_T(x,y'))]\Big)\Big| \; \bigg| \; {\cal G} \right]
	\\& \le \eps/2 + \eps/2
\end{align*}
where the second summand is bounded by $\eps/2$ according to the above discussion using Theorem~\ref{thm:BDVY} and Theorem~\ref{thm:pseudorandom-restriction-XOR-short}.
	\end{proof}

\newcommand{\Tvar}{\mathbf{{Tvar}}}

\section{Pseudorandom restrictions for the XOR of short ROBPs}\label{sec:XOR_of_short}
In this section, we prove Theorem~\ref{thm:pseudorandom-restriction-XOR-short}. 
Let $B_1, \ldots, B_m$ be pairwise disjoint subsets of $[n]$, each of size at most $b$. For $i=1, \ldots, m$ let $f_i : \pmone^{B_i} \to \pmone$ be a width $w$ ROBP. We construct a pseudorandom generator that $\eps$-fools $f = \prod_{i=1}^{m}{f_i}$.
We recall the statement of Theorem~\ref{thm:pseudorandom-restriction-XOR-short} and the construction.
\mainintro*

Recall that the pseudorandom restriction assigns $p$ fraction of the variables as follows: 
\begin{enumerate}
	\item Choose a set of coordinates $T \subseteq [n]$ according to a $\delta_T$-biased distribution with marginals $p$, for $\delta_T := p^{ O(\log(n/\eps))}$.
	\item Assign the variables in $T$ according to a $\delta_x$-biased distribution, for $\delta_x := (\eps/n)^{O(\log b)}$.
\end{enumerate}

\paragraph{Analysis.} We shall assume without loss of generality that for all $i=1, \ldots, m$  it holds that $\E[f_i] \ge 0$.
We shall also assume without loss of generality that for all $i=1, \ldots, m$ it holds that $\var[f_i] > 0$ (i.e., that the functions are non-constant). 
Since the functions $f_i$ are Boolean and depend on at most $b$ bits, 
we have $\var[f_i] =\Pr[f_i=1]\cdot \Pr[f_i=-1] \ge  2^{-b}\cdot(1-2^{-b}) \ge 2^{-1-b}$.

We partition the functions into $O(\log b)$ buckets according to their variance.
Let $\sigma_0 = 1$, 
for every $j \in \{1,\ldots, \log_{1.1}(b+1)\}$, let $\sigma_j = 2^{-1.1^j}$ and
$I_j = \{i\in [m]: \var[f_i] \in (\sigma_{j},\sigma_{j-1}]\}$.
Let $C>0$ be a sufficiently large constant.
We consider two cases in our analysis: \begin{description}
\item[Low-Variance Case:] 
For every $j \in \{1,\ldots, \log_{1.1}(b+1)\}$ we have
$$\sum_{i\in I_j} \var[f_i] \le C \cdot \log^2(n/\eps)/(\sigma_{j-1})^{0.1}\;.$$
\item[High-Variance Case:] 
There exists a $j \in \{1,\ldots, \log_{1.1}(b+1)\}$ with $$\sum_{i\in I_j} \var[f_i] > C\cdot \log^2(n/\eps)/(\sigma_{j-1})^{0.1}\;.$$
\end{description}

\paragraph{Setting Up Parameters:}
Let $C'>1$ be a sufficiently large constant.
Set 
\begin{align}
\label{eq:delta_T}
	&\delta_{T} \triangleq p^{2C' \cdot \log(n/\eps)},\\
\label{eq:delta}
&\delta \triangleq  (\eps/n)^{10 C'}, \\
\label{eq:delta'_x}
	&\delta'_x \triangleq (\eps/n)^{100 C'},\\
\label{eq:delta_x}
	&\delta_x \triangleq  (\delta'_x)^{\log_{1.1}(b+1)}.
\end{align}

\subsection{Low-Variance Case}
For $j = 1, \ldots, \log_{1.1}(b+1)$, 
let $F_j(x) = \prod_{i \in I_j} f_j(x)$. 
Thus, $f = \prod_{j} F_j$.
Let $\D_p$ be any $\delta_T$-biased distribution with marginals $p$. 
For $j\in \{1,\ldots,\log_{1.1}(b+1)\}$, we shall show that with probability at least $1-\eps/2n$ over the choice of $T\sim \D_p$, it holds that
\begin{equation}\label{eq:n/40}
\abs{\E_{x\sim \Dtagx}[\tilde{F_j}(x)] - 
\E_{z\sim U_T}[\tilde{F_j}(z)]}  \le \eps/n^{40}\;,\footnote{recall that we denote by $\tilde{g} = \Bias_T(g)$ for any function $g$.}
\end{equation}
for any $\delta'_x$-biased distribution $\Dtagx$ over $\pmone^{n}$.
Thus, by union bound Eq.~\eqref{eq:n/40} holds
for all $j \in \{1, \ldots, \log_{1.1}(b+1)\}$ simultaneously with probability at least $1-\eps/2$ over $T\sim\D_p$.
Using the following XOR lemma for small-biased distributions from \cite{GopalanMRTV12} we get that any 
 $(\delta'_x)^{\log_{1.1}(b+1)}$-biased distribution, fools $\tilde{f}(x) = \prod_{j=1}^{\log_{1.1}(b+1)}{\tilde{F_j}(x)}$ with error at most $16^{\log_{1.1}(b+1)}\cdot 2(\eps/n^{40})\le \eps/2$ (using $b\le n$).
\begin{lemma}[\protect{\cite[Thm.~4.1]{GopalanMRTV12}}, restated]\label{lemma:3.2}
	Let $0 < \eps< \delta\le 1$.
 	Let $F_1, \ldots, F_k : \pmone^n \to [-1,1]$ be functions on disjoint input variables such that each $F_i$ is $\delta$-fooled by any $\eps$-biased distribution.
 	Let $H:[-1,1]^k \to [-1,1]$ be a multilinear function in its inputs.
 	Then $H(F_1(x), \ldots, F_k(x))$ is $(16^k \cdot 2\delta)$-fooled by any $\eps^k$-biased distribution.
\end{lemma}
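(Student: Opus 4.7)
The plan is to expand $H$ in its multilinear Fourier representation on $\pmone^k$, reduce to fooling individual monomial products $\prod_{i \in S} F_i$, and then handle each monomial by induction on $|S|$. Since $H$ is multilinear on $[-1,1]^k$ with $\|H\|_\infty \le 1$, its Fourier expansion on the Boolean cube $\pmone^k$ gives $H(y) = \sum_{S \subseteq [k]} c_S \prod_{i \in S} y_i$ with $|c_S| = |\hat{H}(S)| \le 1$, hence $\sum_S |c_S| \le 2^k$. By linearity of expectation and the triangle inequality, it therefore suffices to establish the following \emph{monomial claim}: for every $S \subseteq [k]$, the product $M_S(x) := \prod_{i \in S} F_i(x)$ is $(8^k \cdot \delta)$-fooled by every $\eps^k$-biased distribution. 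Summing over the $2^k$ monomials then yields total error at most $2^k \cdot 8^k \cdot \delta \cdot 2 = 16^k \cdot 2\delta$, matching the claim.

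I would prove the monomial claim by induction on $t := |S|$. The base case $t = 1$ is immediate from the hypothesis, since every $\eps^k$-biased distribution is in particular $\eps$-biased. For the inductive step, pick any $j \in S$, let $G := M_{S \setminus \{j\}}$, and use the identity
\[
\E_{\mathcal{D}}[G F_j] - \E_U[G]\,\E_U[F_j] = \bigl(\E_{\mathcal{D}}[G] - \E_U[G]\bigr)\,\E_U[F_j] + \E_{\mathcal{D}}\bigl[G \cdot (F_j - \E_U[F_j])\bigr],
\]
which is valid because $G$ and $F_j$ depend on disjoint blocks of variables, so $\E_U[G F_j] = \E_U[G]\,\E_U[F_j]$. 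The first summand is controlled by the inductive hypothesis applied to $G = M_{S \setminus \{j\}}$ together with the trivial bound $|\E_U[F_j]| \le 1$.

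The second (``covariance'') summand is the main technical obstacle: it measures how well the $\eps^k$-biased distribution $\mathcal{D}$ decouples the bounded function $G$ (supported on $\bigcup_{i \ne j} B_i$) from the mean-zero function $F_j - \E_U[F_j]$ (supported on the disjoint block $B_j$). To control it, I would invoke a sandwich characterization of small-bias fooling: the hypothesis on $F_j$ produces polynomial approximators $f_j^\ell \le F_j \le f_j^u$ with $\E_U[f_j^u - f_j^\ell]$ small and with suitably bounded spectral complexity, and substituting these approximators for $F_j$ in the covariance expression reduces the bound to an error controlled by the $\eps^k$-bias of $\mathcal{D}$ -- which is substantially stronger than the $\eps$-bias under which each $F_i$ is individually assumed to be fooled. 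The gap between $\eps$ and $\eps^k$ provides exactly the slack needed to absorb the per-round error throughout the $k$ rounds of induction without exceeding the $8^k \delta$ bound. The delicate part will be making this sandwich substitution quantitatively tight, so that the constants collapse into the stated $16^k \cdot 2\delta$ final error rather than blowing up exponentially with additional factors.
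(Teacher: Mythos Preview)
Your reduction to monomials via the Fourier expansion of $H$ is fine, and the decomposition
\[
\E_{\mathcal D}[GF_j]-\E_U[G]\E_U[F_j]=(\E_{\mathcal D}[G]-\E_U[G])\E_U[F_j]+\E_{\mathcal D}\bigl[G\cdot(F_j-\E_U[F_j])\bigr]
\]
is correct. The gap is entirely in the ``covariance'' term, and it is not a matter of tightening constants: you have not given any mechanism to bound $\E_{\mathcal D}[G\cdot(F_j-\E_U[F_j])]$. The difficulty is that $G=\prod_{i\neq j}F_i$ takes values in $[-1,1]$, so a sandwich $f_j^\ell\le F_j\le f_j^u$ does \emph{not} yield $G f_j^\ell\le G F_j\le G f_j^u$. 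Splitting $G$ into positive and negative parts does not help either, since $G^\pm$ need not have controlled spectral norm or sandwich approximations. To make the substitution work you would have to build sandwich approximators for $G$ itself (and then for the product $G\cdot F_j$), i.e.\ construct sandwich approximations for products of sandwiched functions with signed values. That construction is precisely the nontrivial technical content of \cite[Thm.~4.1]{GopalanMRTV12}; your proposal re-discovers the need for that lemma without supplying its proof.

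The paper's argument is different in structure: it never reduces to monomials or inducts. Instead it uses the equivalence between ``fooled by every $\eps$-biased distribution'' and ``admits $O(\delta)$-sandwich approximations of spectral norm $O(\delta/\eps)$'' (Lemma~\ref{lem:DETT}) to convert the hypothesis on each $F_i$ into sandwich data, applies \cite[Thm.~4.1]{GopalanMRTV12} as a black box to obtain sandwich approximations for $H(F_1,\dots,F_k)$ with spectral norm roughly $(\delta/\eps)^k$, and converts back. The $\eps^k$ in the conclusion arises from the $k$-th power of the spectral-norm bound, not from $k$ inductive steps.
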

In Appendix~\ref{app:GMRTV}, we show how to derive Lemma~\ref{lemma:3.2} from \cite[Thm.~4.1]{GopalanMRTV12}.

\medskip
\noindent
In the remainder of this section, we focus on fooling a single $F_j$, that is, fooling the product (i.e., XOR) of functions $\{f_i\}_{i\in I_j}$ for which  $\var[f_i] \in (\sigma_{j}, \sigma_{j-1}]$.
We note that since we are in the ``Low-Variance Case'', then 
 \begin{equation} \label{eq:m}
 |I_j| \le C \cdot \sigma_{j}^{-1} \cdot \sigma_{j-1}^{-0.1}\cdot  \log^2(n/\eps)\;.
 \end{equation}
We handle two cases depending on whether $\sigma_{j-1}$ is big or not.

\paragraph{The case of $\sigma_{j-1} \ge 1/ (C \cdot \log(n/\eps))^{20}$ :}
In this case there are at most $O(\sigma_{j-1}^{-1.2}\cdot \log^2(n/\eps)) \le \poly\log(n/\eps)$ functions in $I_j$, each computed by a width-$w$ ROBP on at most $b$ bits. Thus, $F_j := \prod_{i\in I_j} f_i$ can be computed by a ROBP of length at most $n' = b\cdot \poly\log(n/\eps)$ and width at most $2w$. 
Using Theorem~\ref{thm:CHRTa} on $F_j$ (which has length $n'$ and width $2w$), with probability at least $1-\delta$ the spectral-norm of $\tilde{F_j}$ is at most $O((n' w)^3/\delta)$, thus any $\delta'_x$-biased distribution $O(\delta'_x \cdot (n' w)^3/\delta)$-fools  $\tilde{F_j} = \prod_{i \in I_j}{\tilde{f_i}(x)}$. For a large enough choice for $C'$, $O(\delta'_x \cdot (n' w)^3/\delta) \le \eps/n^{40}$ and we are done.

\paragraph{The case of $\sigma_{j-1} < 1/ (C \cdot \log(n/\eps))^{20}$ :}
In this case all variances in $I_j$ are certainly smaller than $0.5$, and hence for all $i\in I_j$, we have $\E[f_i]^2 = \E[f_i^2]-\var[f_i] = 1-\var[f_i] \in [0.5,1]$.
Let $$\mu_i = \E[f_i]\qquad\text{and}\qquad g_i(x) \triangleq \frac{f_i(x)}{\mu_i} - 1.$$
Then, $$\prod_{i}{f_i(x)} = \prod_{i}{\mu_i} \cdot (1+g_i(x)).$$
We have $\E[g_i] = 0$ and $\Var[g_i] = \Var[f_i]/\mu_i^2 \in [\var[f_i], \var[f_i] \cdot 2]$.
We will show that with high probability over $T$, any $\delta'_x$-biased distribution fools $\prod_{i}{\mu_i} \cdot \prod_{i}{(1+\tilde{g_i}(x))}$. 

For ease of notation, in this case we think of $I_j$ as $[m]$ and denote by $\sigma = \sigma_{j-1}$.
The proof strategy for this part follows the work of Gopalan and Yehudayoff \cite{GopalanY14}. 
We note that 
$$\prod_{i=1}^{m}{(1+\tilde{g_i}(x))} = 1+\sum_{k=1}^{m} S_k(\tilde{g_1}(x), \tilde{g_2}(x), \ldots, \tilde{g_{m}}(x)),$$ 
where $S_k$ is the $k$-symmetric polynomial given by $S_k(y_1, \ldots, y_m) = \prod_{R \subseteq [m], |R|=k}{\prod_{i \in R} y_i}$.
We show that $x$ and $T$ fool the low-degree symmetric polynomials. Then, the following theorem by Gopalan and Yehudayoff \cite{GopalanY14} bootstraps this to show that $x$ and $T$ also fool the sum of all high-degree symmetric polynomials.

\begin{theorem}[Gopalan-Yehudayoff Tail Inequalities \cite{GopalanY14}]\label{thm:GY}
Let $y_1, \ldots, y_{m} \in \R$.
Suppose 
$|S_{\ell}(y_1, \ldots, y_{m})| \le \frac{t^{\ell}}{\sqrt{\ell!}}$ 
and 
$|S_{\ell+1}(y_1, \ldots, y_m)| \le \frac{t^{\ell+1}}{\sqrt{(\ell+1)!}}$
for some $t$ and $\ell$. 
Then, for every 
$k \in \{\ell, \ldots, {m}\}$ 
it holds that 
$|S_{k}(y_1, \ldots, y_{m})| \le (6et)^{k} \cdot (\ell/k)^{k/2}$.
Furthermore, if $6et \le 1/2$, then
$$\sum_{k=\ell}^{{m}} |S_{k}(y_1, \ldots, y_{m})| \le 2\cdot (6et)^{\ell}.$$
\end{theorem}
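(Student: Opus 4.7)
}
The plan is to exploit the log-concavity structure of elementary symmetric polynomials of real numbers. The essential tool is Newton's inequalities, which state that for any real $y_1,\ldots,y_m$, the normalized polynomials $e_k := S_k/\binom{m}{k}$ satisfy
\[
e_k^2 \;\ge\; e_{k-1}\,e_{k+1},
\]
so $(e_k)_{k\ge 0}$ is a log-concave sequence. My strategy is to propagate the two hypothesized bounds at $k=\ell$ and $k=\ell+1$ forward to all $k \in \{\ell,\ldots,m\}$ by combining log-concavity with Stirling-type estimates on binomial coefficients and factorials.

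Concretely, I would proceed as follows. First, convert the hypotheses to bounds on $|e_\ell|$ and $|e_{\ell+1}|$ using $|S_k| = |e_k|\binom{m}{k}$. Second, use Newton's log-concavity to show that the sequence of ratios $r_k := |e_{k+1}/e_k|$ is non-increasing in $k$ (handling sign changes by noting that if $e_{k-1}$ and $e_{k+1}$ carry opposite signs then $e_{k-1}e_{k+1} \le 0$ only helps the bound on $e_k^2$, while isolated zeros can be handled by a limiting argument since the polynomial $P(x) = \prod_i (1+y_i x)$ has only real roots). Third, iterate this monotonicity starting from the base index $\ell$ to obtain a geometric decay $|e_k| \le |e_\ell|\cdot r^{k-\ell}$ for a ratio $r$ of order $t/\sqrt{\ell}$. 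Fourth, multiply back by $\binom{m}{k}$ and use Stirling's approximations (in particular $\binom{m}{k}/\binom{m}{\ell} \le (em/k)^{k-\ell}$ and $\sqrt{k!} \sim (k/e)^{k/2}$) to extract the bound $|S_k| \le (6et)^k(\ell/k)^{k/2}$, where the constant $6e$ absorbs the Stirling factors and the various slacks.

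For the ``Furthermore'' part, observe that $(\ell/k)^{k/2} \le 1$ for all $k \ge \ell$, so
\[
\sum_{k=\ell}^{m} |S_k| \;\le\; \sum_{k=\ell}^{m} (6et)^k
\;=\; (6et)^{\ell}\sum_{j=0}^{m-\ell}(6et)^j
\;\le\; \frac{(6et)^{\ell}}{1-6et}
\;\le\; 2(6et)^{\ell},
\]
where the last step uses the assumption $6et \le 1/2$ to sum the geometric series.

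\textbf{The main obstacle} I anticipate is Step~2 -- turning pure log-concavity into a usable ratio bound given that we have only \emph{upper} bounds on $|S_\ell|$ and $|S_{\ell+1}|$ (and no lower bound on $|S_\ell|$). A naive telescoping via ratios fails because $|e_{\ell+1}/e_\ell|$ could be large when $|e_\ell|$ happens to be tiny. To get around this, one likely needs to apply Newton's inequalities between non-adjacent indices (e.g., $e_a e_b \ge e_{a+1}e_{b-1}$ for $a<b$) and combine them with a ``peak'' argument: since $(e_k)$ is log-concave, its maximum is attained at a single index, and the two given bounds control which side of the peak we are on. An alternative is a direct estimate of $S_k$ via the Cauchy--Schwarz-type decomposition $\binom{k}{\ell}S_k = \sum_{|J|=\ell,\,|J'|=k-\ell,\,J\cap J'=\emptyset} y_J y_{J'}$, which mixes a factor resembling $S_\ell$ with a factor resembling $S_{k-\ell}$ and can be iterated to bootstrap from the two base cases. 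Either route will require careful bookkeeping to extract the specific constant $6e$ in the final bound.
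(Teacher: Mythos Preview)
The paper does not prove Theorem~\ref{thm:GY}; it is quoted as a black-box result from Gopalan--Yehudayoff~\cite{GopalanY14} and then applied in the analysis of the low-variance case. So there is no in-paper proof to compare your proposal against.

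That said, your proposal is aimed in the right direction: Newton's inequalities (log-concavity of $e_k = S_k/\binom{m}{k}$ for real $y_i$) are indeed the engine behind the Gopalan--Yehudayoff argument, and your derivation of the ``Furthermore'' part is correct and complete --- once the per-term bound $|S_k| \le (6et)^k(\ell/k)^{k/2}$ is in hand, the geometric-series computation you give is exactly the right finishing step.

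You have also correctly identified the real difficulty in the main bound: with only \emph{upper} bounds on $|S_\ell|$ and $|S_{\ell+1}|$ and no lower bound on $|S_\ell|$, a naive ratio-telescoping via $|e_{k+1}/e_k|$ can blow up. Your second suggested workaround (using non-adjacent Newton inequalities $e_a e_b \ge e_{a-1} e_{b+1}$, equivalently that $e_j e_{j'} \le e_\ell e_{\ell+1}$ whenever $j+j'=\ell+(\ell+1)$ and $j\le \ell$) is close to what is actually done: one leverages log-concavity to bound $|e_k|$ in terms of a product involving both $|e_\ell|$ and $|e_{\ell+1}|$ simultaneously rather than a single ratio, which sidesteps the division-by-a-tiny-number issue. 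The sign-change concern you raise is handled in \cite{GopalanY14} essentially as you suggest, via the real-rootedness of $\prod_i(1+y_i x)$ and a limiting/perturbation argument. So your plan is sound, with the caveat that the bookkeeping to extract the specific constant $6e$ is somewhat delicate and is where most of the work lies.
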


\subsubsection*{Analyzing the Symmetric Polynomials}
From Eq.~\eqref{eq:m} and our assumption that 
$\sigma < 1/(C\cdot \log(n/\eps))^{20}$ 
we get that $m \le \sigma^{-1.3}$.
Recall that $C'$ is a sufficiently large constant and recall the definition of $\delta, \delta'_x, \delta_T$ from Eqs.~\eqref{eq:delta_T}, \eqref{eq:delta} and \eqref{eq:delta'_x}.
We set 
\begin{equation}
\ell \triangleq C' \cdot \log(n/\eps)/ \log(1/\sigma)
\end{equation}
In the following, we shall use the facts that $\sigma^{-\ell}, m^{\ell} \ll 1/\delta$ and $\delta'_x \ll \delta$.

\begin{claim}\label{claim:good}
Let $T \sim \D_p$.
Let $R \subseteq [m]$ be a set of size at most $\ell$.
Then, with probability at least $1-O(b \ell w)^3 \cdot \delta$ over the choice of $T$,
$\prod_{i\in R} \tilde{f_i}(x)$ has spectral-norm at most $1/\delta$.
\end{claim}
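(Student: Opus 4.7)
The plan is to exploit the pairwise disjointness of the blocks $\{B_i\}_{i \in R}$ to reduce everything to a single application of Theorem~\ref{thm:CHRTa} on one concatenated program. Trying to control each $L_1(\tilde{f_i})$ separately and multiplying would give a bound that deteriorates badly with $\ell$, since the resulting failure probability after union bound becomes of the form $\ell \cdot \delta^{1/\ell}$, which is much worse than what is claimed. So I would not go that route.

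The first observation is that because the $B_i$'s are pairwise disjoint, the bias operator commutes with the product: for every fixed $T$,
\[
\prod_{i\in R}\tilde{f_i}(x)\;=\;\prod_{i\in R}\Bias_T(f_i)(x)\;=\;\Bias_T\!\Bigl(\prod_{i\in R}f_i\Bigr)(x),
\]
which follows by writing out the conditional expectation and using that the randomness on the distinct blocks $B_i\setminus T$ is independent. Writing $F_R := \prod_{i\in R}f_i$, it therefore suffices to upper bound $L_1(\widetilde{F_R})$.

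Next, I would build a single unordered read-once branching program $B_R$ on $[n]$ computing $F_R$, by concatenating the individual $f_i$'s in some order and letting all coordinates outside $\bigcup_{i\in R}B_i$ be read by trivial identity layers. At the boundary between two sub-programs, the state only needs to remember the running $\pm 1$ product together with the current node inside the active sub-program, so $B_R$ has width at most $2w$ and reads at most $\ell\cdot b$ variables.

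Finally, I would invoke Theorem~\ref{thm:CHRTa} on $B_R$ (as an unordered length-$\ell b$, width-$2w$ ROBP) with error parameter $\eps_R := C\cdot (b\ell w)^3\cdot \delta$ for a sufficiently large absolute constant $C$. The hypotheses are met: Theorem~\ref{thm:pseudorandom-restriction-XOR-short} fixes $p\le 1/O(\log(b\log(n/\eps)))^{2w}$, and since in the current case $\ell = O(\log(n/\eps))$, this is at most $1/O(\log(\ell b))^{2w}$ as required. Similarly the bias $\delta_T = p^{2C'\log(n/\eps)}$ from \eqref{eq:delta_T} is at most $p^{2k}$ for $k = O(\log(n_R/\eps_R)) = O(\log(n/\eps))$ once $C'$ is taken large enough. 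Theorem~\ref{thm:CHRTa} then yields, with probability at least $1-\eps_R = 1 - O((b\ell w)^3\delta)$ over $T\sim\D_p$,
\[
L_1\bigl(\widetilde{F_R}\bigr)\;\le\;O\!\bigl((\ell b\cdot 2w)^3/\eps_R\bigr)\;\le\;1/\delta,
\]
which combined with the commutation identity above gives the claim. The only real obstacle is routine bookkeeping: verifying that the slack in the global choices $\delta = (\eps/n)^{10C'}$ and $\delta_T = p^{2C'\log(n/\eps)}$ absorbs the $(b\ell w)^3$ blow-up from Theorem~\ref{thm:CHRTa}, which is immediate for $C'$ sufficiently large since $b,\ell\le\poly(n/\eps)$.
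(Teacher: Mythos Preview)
Your proposal is correct and follows essentially the same approach as the paper: concatenate the $f_i$'s into a single width-$2w$, length-$b\ell$ ROBP and apply Theorem~\ref{thm:CHRTa} with the appropriate error parameter. The paper's proof is a two-line sketch of exactly this argument; your version simply fills in the commutation identity $\prod_{i\in R}\tilde{f_i}=\Bias_T(\prod_{i\in R}f_i)$ and the parameter-checking details that the paper leaves implicit.
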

\begin{proof}
Note that $\prod_{i\in R} f_i(x)$ can be computed by a ROBP with length $b \cdot \ell \le O(b \cdot \log(n/\eps))$ and width $2w$ (as in the case where $\sigma_j$ is big).
Apply Theorem~\ref{thm:CHRTa} to $\prod_{i\in R} f_i(x)$. 
\end{proof}

We say that $T\subseteq[n]$ is a {\sf good} set if for all sets $R \subseteq [{m}]$ of  size at most $\ell$, 
the spectral-norm of $\prod_{i\in R}{\tilde{f}_i}$ is at most $1/\delta$.
We observe that by Claim~\ref{claim:good}, the probability that $T$ is good is at least 
$1-({m}+1)^{\ell}\cdot O(b\ell w)^3 \cdot \delta \ge 1-\eps/10n$ (using Eq.~\eqref{eq:m} and \eqref{eq:delta}).
\begin{claim}\label{claim:S_k small spectral}
If $T$ is good, then for any $k\le \ell+1$, 
$S_{k}(\tilde{g_1}, \tilde{g_2},\ldots, \tilde{g_{m}})$ has  spectral-norm at most 
$\delta^{-1}\cdot (4{m})^{k} \le \delta^{-2}$.
\end{claim}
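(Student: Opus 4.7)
The plan is to expand $S_k(\tilde{g_1},\ldots,\tilde{g_m})$ into a linear combination of products of the $\tilde{f_i}$'s, bound the spectral norm of each such product using the goodness of $T$, and then collect terms via elementary spectral-norm estimates.

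First I would use linearity of the bias operator: since $g_i = f_i/\mu_i - 1$, we have $\tilde{g_i} = \tilde{f_i}/\mu_i - 1$. Fix any $R \subseteq [m]$ with $|R| = k$. Expanding,
\[
\prod_{i\in R} \tilde{g_i}(x) \;=\; \sum_{R' \subseteq R} (-1)^{|R\setminus R'|} \prod_{i \in R'} \frac{\tilde{f_i}(x)}{\mu_i}\,.
\]
Since we are in the sub-case $\sigma_{j-1} < 1/(C\log(n/\eps))^{20} < 1/2$, all variances are at most $1/2$, so $\mu_i^2 = 1 - \var[f_i] \geq 1/2$ and thus $1/\mu_i \leq \sqrt{2}$. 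Using sub-multiplicativity $L_1(g\cdot h) \leq L_1(g) L_1(h)$ together with the goodness of $T$ (which says $L_1(\prod_{i\in R'} \tilde{f_i}) \leq 1/\delta$ for every $R' \subseteq R$, as $|R'| \leq k \leq \ell+1$), I get
\[
L_1\!\left(\prod_{i\in R'} \tilde{f_i}/\mu_i\right) \;\leq\; 2^{|R'|/2}/\delta \;\leq\; 2^{k/2}/\delta\,.
\]
Summing over the $2^k$ subsets $R' \subseteq R$ and using the triangle inequality $L_1(u+v) \leq L_1(u)+L_1(v)$, I obtain
\[
L_1\!\left(\prod_{i\in R} \tilde{g_i}\right) \;\leq\; 2^k \cdot 2^{k/2}/\delta \;=\; 2^{3k/2}/\delta\,.
\]

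Next, summing over all $\binom{m}{k} \leq m^k$ choices of $R$,
\[
L_1\bigl(S_k(\tilde{g_1},\ldots,\tilde{g_m})\bigr) \;\leq\; m^k \cdot 2^{3k/2}/\delta \;\leq\; (4m)^k/\delta\,,
\]
which is the first claimed bound. For the second bound $(4m)^k/\delta \leq 1/\delta^2$, I need $(4m)^k \leq 1/\delta = (n/\eps)^{10C'}$. Using $m \leq \sigma^{-1.3}$ (from Eq.~\eqref{eq:m} together with $\sigma < 1$) and $\ell = C'\log(n/\eps)/\log(1/\sigma)$, we get $m^\ell \leq \sigma^{-1.3\ell} = (n/\eps)^{1.3C'}$, so $m^k \leq m^{\ell+1} \leq m\cdot (n/\eps)^{1.3C'}$. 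Moreover, since $\log(1/\sigma) > 20 \log\log(n/\eps)$ we have $\ell \leq \log(n/\eps)$ and hence $4^k \leq 4^{\ell+1} \leq 4(n/\eps)^2$. Combining and taking $C'$ large enough yields $(4m)^k \leq (n/\eps)^{10C'} = 1/\delta$, as required.

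The argument is essentially a direct computation; the only subtle point is making sure the parameter choices (in particular the definitions of $\delta$, $\ell$, and the upper bound on $m$ from the low-variance hypothesis) line up so that both $(4m)^k \leq 1/\delta$ and the goodness bound $1/\delta$ remain strong enough. I do not expect a genuine obstacle here, only careful bookkeeping of the exponents.
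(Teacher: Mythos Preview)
Your proposal is correct and follows essentially the same approach as the paper: expand each $\prod_{i\in R}\tilde{g_i}$ into a signed sum over subsets $R'\subseteq R$ of $\prod_{i\in R'}\tilde{f_i}/\mu_i$, invoke goodness of $T$ on each product of $\tilde{f_i}$'s, and sum over the $\binom{m}{k}\le m^k$ choices of $R$. The paper uses the cruder bound $1/\mu_i\le 2$ (yielding $2^k\cdot\delta^{-1}\cdot 2^k$ per summand) instead of your $1/\mu_i\le\sqrt{2}$, and simply cites the already-noted fact $m^\ell\ll 1/\delta$ for the final inequality rather than re-deriving it; your more detailed parameter check is fine (though the step ``$\ell\le\log(n/\eps)$'' tacitly assumes $C'\le 20\log\log(n/\eps)$ --- it is cleaner to use $\log(1/\sigma)\ge 2$, giving $4^\ell\le (n/\eps)^{C'}$, which is all you need).
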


\begin{proof}
We expand the $k$-symmetric polynomial:
$S_{k}(\tilde{g_1}(x),\ldots, \tilde{g_m}(x)) = \sum_{R \subseteq [m], |R|=k} \prod_{r\in R} \widetilde{g_r}(x)$.
Since $T$ is good, each summand has spectral-norm
\begin{align*} 
L_1\bigg(\prod_{r\in R}  \widetilde{g_r}(x)\bigg)
&=L_1\Bigg(\prod_{r\in R}  \Big(\frac{\widetilde{f_r}(x)}{\E[f_r]} -1\Big)\Bigg) 
\le L_1\bigg(\sum_{Q \subseteq R}  (-1)^{|R|-|Q|} \prod_{r \in Q} \frac{\widetilde{f_r}(x)}{\E[f_r]}\bigg) 
\le 2^{k} \cdot \delta^{-1}\cdot 2^{k}\;,
\end{align*}
(using $\E[f_r] \ge 1/2$).
Summing over all $\binom{m}{k} \le m^k$ summands completes the proof.
\end{proof}

We wish to show that with high probability the total variance under restrictions $\sum_{i}\var[\tilde{f}_i]$ is small. Towards this goal, we prove a bound on the $\ell$-th moment of the total variance.
\begin{claim}\label{claim:Var_fi ell}
$
\E_{T\sim \D_p}[(\sum_{i=1}^m \var[\tilde{f}_i])^{\ell}] \le 
 2 \cdot (2\sigma^{0.2})^\ell
$
\end{claim}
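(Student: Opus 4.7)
The plan is a three-step reduction: (i) expand the $\ell$-th power by the multinomial theorem and apply Lemma~\ref{lemma:vars_prod} termwise to swap the small-biased distribution $\D_p$ for the truly independent restriction $\Rp$; (ii) exploit the fact that, since the blocks $B_i$ are pairwise disjoint, the random variables $X_i := \var[\tilde{f_i}]$ are mutually independent under $\Rp$; (iii) combine Theorem~\ref{thm:HC} with a Chernoff-type moment bound for sums of bounded independent non-negative random variables.

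For step (i), note that $X_i \in [0, \var[f_i]] \subseteq [0, \sigma]$ deterministically. Expanding and applying Lemma~\ref{lemma:vars_prod} termwise yields
$\E_{\D_p}[(\sum_i X_i)^\ell] \le \E_\Rp[(\sum_i X_i)^\ell] + \delta_T \cdot (\sum_i \var[f_i])^\ell$.
From Eq.~\eqref{eq:m} together with the running assumption $\sigma < 1/(C\log(n/\eps))^{20}$, one derives $m \le \sigma^{-1.3}$ and $\sum_i \var[f_i] \le \sigma^{-0.2}/C$. With $\delta_T = p^{2C'\log(n/\eps)}$ and $\ell = C'\log(n/\eps)/\log(1/\sigma)$, a short log comparison shows the additive error is much smaller than $(2\sigma^{0.2})^\ell$ whenever $C'$ is chosen large enough.

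For the main term under $\Rp$, combine the deterministic bound $X_i \le \var[f_i]$ with Theorem~\ref{thm:HC} (valid since $p \le 1/3$) to obtain $\E_\Rp[X_i^k] \le \var[f_i]^{k-1} \cdot \var[f_i]^{3/2}$. In particular,
$\mu := \E_\Rp[Z] = \sum_i \E_\Rp[X_i] \le \sigma^{1/2} \sum_i \var[f_i] \le \sigma^{0.3}/C$,
where $Z := \sum_i X_i$. Using independence, bound the moment generating function via the convexity estimate $e^{tx} \le 1 + \frac{x}{\sigma}(e^{t\sigma}-1)$ for $x \in [0,\sigma]$: as long as $t\sigma \le 1$,
$\E_\Rp[e^{tZ}] = \prod_i \E_\Rp[e^{tX_i}] \le \exp(\mu(e^{t\sigma}-1)/\sigma) \le \exp(2t\mu)$.
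Apply the Chernoff moment bound $\E_\Rp[Z^\ell] \le (\ell/t)^\ell \E_\Rp[e^{tZ}]$ and optimize $t = \ell/(2\mu)$ to get $\E_\Rp[Z^\ell] \le (2e\mu)^\ell \le (2e\sigma^{0.3}/C)^\ell$, which for $C$ sufficiently large is at most $\sigma^{0.3\ell} \le (2\sigma^{0.2})^\ell$.

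The main obstacle is the book-keeping: one must verify that the optimal $t = \ell/(2\mu)$ indeed satisfies the constraint $t\sigma \le 1$, which reduces to $\ell\sigma^{0.7} \ll 1$ and follows from $\sigma^{-0.7} \ge (C\log(n/\eps))^{14}$ combined with $\ell = O(\log(n/\eps))$. One also needs the constants $C$ and $C'$ to be chosen large enough to absorb the constant-factor losses at each step (including the $2e$ in the Chernoff bound, and the multiplicative prefactor in the Low-Variance hypothesis), which is possible because these constants are free parameters in the overall construction.
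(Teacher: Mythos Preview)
Your approach is essentially the same as the paper's: reduce from $\D_p$ to $\Rp$ via Lemma~\ref{lemma:vars_prod}, then exploit independence under $\Rp$ together with Theorem~\ref{thm:HC} and a Chernoff-type bound. The paper differs only cosmetically: it normalizes to $X_i := \var[\tilde{f_i}]/\var[f_i] \in [0,1]$, applies the Chernoff \emph{tail} bound (rather than a direct MGF moment bound) to get $\sum_i X_i \le 2\sigma^{-0.8}$ except with probability $\exp(-\Omega(\sigma^{-0.8}))$, and handles the exceptional event with the trivial bound $\sum_i \var[\tilde{f_i}] \le m\sigma$.

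There is one small logical slip in your write-up. To verify the constraint $t\sigma \le 1$ at $t=\ell/(2\mu)$ you need a \emph{lower} bound on $\mu$, but you only established the upper bound $\mu \le \sigma^{0.3}/C$; your reduction to ``$\ell\sigma^{0.7}\ll 1$'' implicitly treats that upper bound as a lower bound. The fix is routine: if $\mu < \ell\sigma/2$ (so the optimal $t$ violates the constraint), set $t=1/\sigma$ instead, giving $\E_{\Rp}[Z^\ell] \le (\ell\sigma)^\ell e^{2\mu/\sigma} \le (e\ell\sigma)^\ell$, which is also at most $(2\sigma^{0.2})^\ell$ since $\ell = O(\log(n/\eps)) \ll \sigma^{-0.8}$.
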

\begin{proof}
Fix $(i_1, \ldots, i_{\ell})\in [m]^{\ell}$, not necessarily distinct indices. By Lemma~\ref{lemma:vars_prod}
\begin{align*}
	\E_{T\sim \D_p}\left[\prod_{j=1}^{\ell} \Var[\tilde{f_{i_j}}]]\right] 
	&\le \E_{T\sim \Rp}\left[\prod_{j=1}^{\ell}\Var[\tilde{f_{i_j}}]\right] 
	+ \delta_T \cdot\prod_{j=1}^{\ell} \Var[{f_{i_j}}],
\end{align*}
from which we deduce
$$
\E_{T\sim \D_p}\left[\Big(\sum_{i=1}^m \var[\tilde{f}_i(z)]\Big)^{\ell}\right] \le 
 \E_{T\sim \Rp}\left[\Big(\sum_{i=1}^m \var[\tilde{f}_i(z)]\Big)^{\ell}\right] + \delta_T \cdot m^{\ell} \sigma^{\ell} \;.
$$
We are left to bound $\E_{T\sim \Rp}[(\sum_{i=1}^m \var[\tilde{f}_i])^{\ell}]$. 
By Fact~\ref{fact:bias-fnc-Fourier}, for any $i\in[m]$, the random variable $X_i = \Var[\tilde{f_i}]/\var[f_i]$ (whose value depends on the choice of $T\sim \Rp$) is bounded in $[0,1]$.
By Theorem~\ref{thm:HC}, its expected value is at most $\Var[f_i]^{0.5} \le \sigma^{0.5}$.
Taking $X = \sum_{i=1}^{m}X_i$, we get that $X$ is the sum of $m$ independent random variables bounded in $[0,1]$.
Using $m \le \sigma^{-1.3}$, we have that $\E[X] \le \sigma^{0.5} \cdot m \le \sigma^{-0.8}$.
Thus, by Chernoff's bounds, with probability at least $1-\exp(-\Omega(\sigma^{-0.8}))$
we have
$X \le 2\cdot \sigma^{-0.8}$.
In such a case 
$\sum_{i} \var[\tilde{f_i}] 
\le 2 \cdot \sigma^{-0.8} \cdot \sigma 
\le 2\sigma^{0.2}$.
We get $\E_{T\sim \Rp}[(\sum_{i=1}^m \var[\tilde{f}_i])^{\ell}] \le \exp(-\Omega(\sigma^{-0.8}))\cdot (\sigma m)^{\ell} + 
(2\sigma^{0.2})^{\ell}$, which gives
\[\E_{T\sim \D_p}\left[\Big(\sum_{i=1}^m \var[\tilde{f}_i]\Big)^{\ell}\right]  \le 
\delta_T \cdot m^{\ell} \sigma^{\ell}  + \exp(-\Omega(\sigma^{-0.8}))\cdot (\sigma m)^{\ell} + 
(2\sigma^{0.2})^{\ell} \le 2\cdot (2 \sigma^{0.2})^{\ell}.\qedhere
\]
\end{proof}

We say that a set $T\subseteq [n]$ is {\sf excellent} if $T$ is good and $\sum_{i}\var[\tilde{g}_i] \le \sigma^{0.1}$.

\begin{claim}
$\Pr_{T\sim \D_p}[\text{$T$ is not excellent}] \le \eps/10n + O(\sigma)^{0.1\ell} \le \eps/2n$	
\end{claim}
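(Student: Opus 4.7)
The plan is to union bound over the two failure modes: $T$ not being good, and $\sum_i \var[\tilde{g_i}]$ exceeding $\sigma^{0.1}$. The first failure mode is already controlled: immediately after Claim~\ref{claim:good} it was observed that a random $T\sim \D_p$ is good except with probability at most $({m}+1)^{\ell}\cdot O(b\ell w)^3 \cdot \delta \le \eps/10n$, using Eqs.~\eqref{eq:m} and \eqref{eq:delta} together with $C'$ large. So the task reduces to bounding the probability that the total variance is large.

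For the second failure mode, first I would translate from the $g_i$'s back to the $f_i$'s. Since we are in the regime $\mu_i^2 = 1-\var[f_i]\ge 1/2$, we have $\var[\tilde{g_i}] = \var[\tilde{f_i}]/\mu_i^2 \le 2\var[\tilde{f_i}]$, so $\sum_i \var[\tilde{g_i}] > \sigma^{0.1}$ implies $\sum_i \var[\tilde{f_i}] > \sigma^{0.1}/2$. Now I would apply Markov's inequality to the $\ell$-th moment bound of Claim~\ref{claim:Var_fi ell}:
\[
\Pr_{T\sim\D_p}\bigg[\sum_i \var[\tilde{f_i}] > \sigma^{0.1}/2\bigg] \le \frac{\E_{T\sim\D_p}[(\sum_i \var[\tilde{f_i}])^{\ell}]}{(\sigma^{0.1}/2)^{\ell}} \le \frac{2(2\sigma^{0.2})^{\ell}}{(\sigma^{0.1}/2)^{\ell}} = 2\cdot (4\sigma^{0.1})^{\ell} = O(\sigma)^{0.1\ell}.
\]

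Finally I would verify the second inequality $\eps/10n + O(\sigma)^{0.1\ell} \le \eps/2n$. Plugging in $\ell = C'\log(n/\eps)/\log(1/\sigma)$ gives $\sigma^{0.1\ell} = 2^{-0.1 C' \log(n/\eps)} = (\eps/n)^{0.1C'}$, so $O(\sigma)^{0.1\ell} \le (\eps/n)^{0.05 C'}$ for $C'$ large enough; since we are in the subcase $\sigma < 1/(C\log(n/\eps))^{20}$ (so the hidden constant in $O(\sigma)^{0.1\ell}$ is absorbed), choosing $C'$ sufficiently large compared to the absolute constants makes this at most $\eps/5n$, and combining with the first bound of $\eps/10n$ yields at most $\eps/2n$, as required. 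No step here is a serious obstacle: the only subtle point is keeping track that Markov is being applied to the $f_i$'s (not the $g_i$'s), which is why the factor-$2$ comparison between $\var[\tilde{g_i}]$ and $\var[\tilde{f_i}]$ must be invoked at the start.
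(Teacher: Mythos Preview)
Your proposal is correct and follows essentially the same approach as the paper: the paper's proof simply notes that $\sum_i \Var[\tilde{g_i}] \le 2\sum_i \Var[\tilde{f_i}]$ and applies Markov's inequality to $(2\sum_i \Var[\tilde{f_i}])^{\ell}$ using Claim~\ref{claim:Var_fi ell}, which is exactly what you do (with the good-set bound handled separately by union bound). Your write-up is more detailed, but the argument is the same.
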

\begin{proof}
Note that $\sum_{i}\Var[\tilde{g_i}] \le 2\sum_{i}\Var[\tilde{f_i}]$ and apply Markov's inequality on $(2\sum_{i}\Var[\tilde{f_i}])^{\ell}$ using Claim~\ref{claim:Var_fi ell}.
\end{proof}

\begin{claim}\label{claim:symmetric-excellent}
Let $T$ be an excellent set. Let $\Dtagx$ be any $\delta'_x$-biased distributions.
Then, for $k = 1, \ldots, \ell+1$ we have
$$\E_{x\sim \Dtagx}[ S_{k}^2(\tilde{g}_1(x), \ldots, \tilde{g}_m(x))] \le  \frac{2}{k!} \cdot \sigma^{0.1 k}$$
and $$\left|\E_{x \sim \Dtagx}[S_{k}(\tilde{g}_1(x), \ldots, \tilde{g}_m(x))]\right| \le (\eps/n)^{C'}.$$
\end{claim}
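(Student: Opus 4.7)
The plan is to bound both moments by first computing them under the uniform distribution using the independence of the $\tilde g_i$'s, and then transferring to $\Dtagx$ by combining the spectral-norm bound of Claim~\ref{claim:S_k small spectral} with the standard fact that a $\delta'_x$-biased distribution fools any function $h$ up to an additive error of $\delta'_x \cdot L_1(h)$.

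First I would handle the uniform computation. Since the $B_i$'s are pairwise disjoint, each $\tilde g_i$ depends only on coordinates in $B_i \cap T$, so the $\tilde g_i$'s are mutually independent under $U_n$, and each has mean $0$ (as $\E[g_i]=0$). This immediately gives $\E_{U_n}[S_k(\tilde g_1,\ldots,\tilde g_m)]=0$. Expanding $S_k^2 = \sum_{R,R'}\prod_{i\in R}\tilde g_i \prod_{j\in R'}\tilde g_j$ and discarding the $R\ne R'$ cross-terms (which vanish by independence together with $\E[\tilde g_i]=0$ applied to any index in the symmetric difference $R\triangle R'$), only the diagonal survives:
\[
\E_{U_n}[S_k^2(\tilde g_1,\ldots,\tilde g_m)] \;=\; \sum_{|R|=k}\prod_{i\in R}\Var[\tilde g_i] \;=\; S_k(\Var[\tilde g_1],\ldots,\Var[\tilde g_m]).
\]
The elementary inequality $S_k(y_1,\ldots,y_m) \le \tfrac{1}{k!}(y_1+\cdots+y_m)^k$, together with the excellence of $T$ (which gives $\sum_i \Var[\tilde g_i]\le \sigma^{0.1}$), then yields $\E_{U_n}[S_k^2] \le \sigma^{0.1k}/k!$.

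Next I would transfer to $\Dtagx$. By Claim~\ref{claim:S_k small spectral}, $L_1(S_k)\le \delta^{-2}$, and by submultiplicativity of the spectral norm, $L_1(S_k^2)\le L_1(S_k)^2 \le \delta^{-4}$. Hence
\[
\bigl|\E_{\Dtagx}[S_k]\bigr| \;\le\; \bigl|\E_{U_n}[S_k]\bigr| + \delta'_x \cdot \delta^{-2} \;=\; \delta'_x \delta^{-2},
\qquad
\E_{\Dtagx}[S_k^2] \;\le\; \frac{\sigma^{0.1k}}{k!} + \delta'_x \delta^{-4}.
\]
Substituting $\delta = (\eps/n)^{10C'}$ and $\delta'_x=(\eps/n)^{100C'}$ from~\eqref{eq:delta} and~\eqref{eq:delta'_x} gives $\delta'_x\delta^{-2} = (\eps/n)^{80C'}\le (\eps/n)^{C'}$, establishing the second claimed bound, and $\delta'_x\delta^{-4} = (\eps/n)^{60C'}$ as the error term in the first.

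The only remaining step, and the place where I expect to have to be a little careful, is showing that the additive slack $(\eps/n)^{60C'}$ can be absorbed into a factor of $2$ against $\sigma^{0.1k}/k!$ uniformly for $k\in\{1,\ldots,\ell+1\}$. Using $\sigma^{\ell}=(\eps/n)^{C'}$ (the defining equation of $\ell$), the lower bound $\sigma\ge 2^{-b-1}$ with $b$ effectively logarithmic in $n/\eps$ in this section, and $(\ell+1)! \le (\ell+1)^{\ell+1} \le \poly(n/\eps)$, one checks that $\sigma^{0.1k}/k!$ is at least $(\eps/n)^{O(C')}$ with a much smaller exponent than $60C'$, so choosing the constant $C'$ large enough finishes the proof. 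The main obstacle is therefore bookkeeping rather than a new idea: independence on disjoint blocks, a Maclaurin-style bound on $S_k$, the spectral-norm estimate of Claim~\ref{claim:S_k small spectral}, and small-bias transfer are all one needs.
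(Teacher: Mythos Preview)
Your proposal is correct and follows essentially the same argument as the paper: compute $\E_{U_n}[S_k]=0$ and $\E_{U_n}[S_k^2]=\sum_{|R|=k}\prod_{i\in R}\Var[\tilde g_i]\le \tfrac{1}{k!}\sigma^{0.1k}$ via independence on disjoint blocks and the Maclaurin-type bound, then transfer to $\Dtagx$ using the spectral-norm estimate $L_1(S_k)\le\delta^{-2}$ from Claim~\ref{claim:S_k small spectral} together with the small-bias fooling bound. The paper handles your final bookkeeping step by simply asserting $\delta'_x\delta^{-4}\le\delta\ll\tfrac{1}{k!}\sigma^{0.1k}$; your more explicit verification of this (using $\sigma^\ell=(\eps/n)^{C'}$ and $\ell^\ell\le\poly(n/\eps)$, which holds since $\log(1/\sigma)\gtrsim\log\log(n/\eps)$ in this case) is a welcome addition.
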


\begin{proof}
	Recall that $\delta = (\eps/n)^{10C'}$ and $\delta'_x = (\eps/n)^{100C'}$. The first claim relies on the following:
	 \begin{enumerate}
		\item $S_k^2$ has small spectral-norm (using Claim~\ref{claim:S_k small spectral}, since $T$ is good) and hence is fooled by $\Dtagx$. In details, its spectral-norm is at most $L_1(S_k)^2 \le \delta^{-4}$ and $\Dtagx$ is $\delta'_x$-biased. Thus $$\Big|\E_{x\sim U_n}[S_k^2(\tilde{g}_1(x), \ldots, \tilde{g}_m(x))] - \E_{x\sim \Dtagx}[S_k^2(\tilde{g}_1(x), \ldots, \tilde{g}_m(x))]\Big| \le \delta^{-4} \cdot \delta'_x \le \delta \ll \frac{1}{k!} \cdot \sigma^{0.1 k}.$$
		\item 	The expectation of $S_k^2(\tilde{g}_1(x), \ldots, \tilde{g}_m(x))$ on a uniformly chosen $x$ is at most 
\begin{align*}\E_{x\sim U_n}[S_k^2(\tilde{g}_1(x), \ldots, \tilde{g}_m(x))] &= \sum_{T, T' \subseteq [m], |T|=|T'|=k} \E_{x\sim U_n} \bigg[\prod_{i\in T} \tilde{g}_i(x)\prod_{i'\in T'} \tilde{g}_{i'}(x)\bigg]\\
&=	\sum_{T\subseteq [m], |T|=k} \E_{x\sim U_n} \bigg[\prod_{i\in T} (\tilde{g}_i(x))^2\bigg] \tag{Since $\E[\tilde{g_i}] = 0$}\\
&
=	\sum_{T\subseteq [m], |T|=k} \prod_{i\in T} \var[{g}_i] \le  \frac{1}{k!} \cdot \Big(\sum_{i=1}^{m}{\var[\tilde{g_i}]}\Big)^{k} \le \frac{1}{k!} \cdot \sigma^{0.1k} \tag{Maclaurin's inequality}
\end{align*}

	\end{enumerate} 
	The second claim relies on the following:	
	\begin{enumerate}
		\item $S_k$ has small spectral-norm (using Claim~\ref{claim:S_k small spectral}, since $T$ is good) and hence is fooled by $\Dtagx$. In details, its spectral-norm is at most $\delta^{-2}$ and $\Dtagx$ is $\delta'_x$-biased. Thus $$\Big|\E_{x\sim U_n}[S_k(\tilde{g}_1(x), \ldots, \tilde{g}_m(x))] - \E_{x\sim \Dtagx}[S_k(\tilde{g}_1(x), \ldots, \tilde{g}_m(x))]\Big| \le \delta^{-2} \cdot \delta'_x \le \delta \le (\eps/n)^{C'}.$$

		\item The expectation of $S_k(\tilde{g}_1(x), \ldots, \tilde{g}_m(x))$ on a uniformly chosen $x$ is $0$.	
		\qedhere
		\end{enumerate} 
\end{proof}

The next lemma combined with Claim~\ref{claim:symmetric-excellent} concludes the low-variance case, since it shows that with high probability, $T$ is excellent, and then $\Dtagx$ is an $(\eps/n^{40})$-PRG for $\prod_{i=1}^{m}{\tilde{f_i}}$ (for a sufficiently large choice of $C'$).

\begin{lemma}
If $T$ is excellent, then $\E_{x\sim \Dtagx}[	\prod_{i=1}^{m}{\tilde{f_i}}] = (\prod_{i=1}^{m}\mu_i) \pm (\eps/n)^{\Omega(C')}$.
\end{lemma}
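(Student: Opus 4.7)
Using $f_i = \mu_i(1+g_i)$, I factor $\prod_i \tilde{f_i}(x) = (\prod_i \mu_i) \cdot \prod_i(1+\tilde{g_i}(x))$ and expand the second factor via elementary symmetric polynomials: $\prod_i(1+\tilde{g_i}) = 1 + \sum_{k=1}^{m} S_k(\tilde{g_1},\ldots,\tilde{g_m})$. The lemma therefore reduces to showing $|\E_{\Dtagx}[\sum_{k=1}^{m} S_k(\tilde{g_1}, \ldots, \tilde{g_m})]| \le (\eps/n)^{\Omega(C')}$, which I plan to handle by splitting the sum at degree $\ell+1$ into a low-degree head and a high-degree tail.

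The low-degree head $1 \le k \le \ell+1$ is immediate from the second bullet of Claim~\ref{claim:symmetric-excellent}, which gives $|\E_{\Dtagx}[S_k]| \le (\eps/n)^{C'}$ term-by-term; summing over at most $\ell+1$ indices yields an aggregate contribution of $(\ell+1)(\eps/n)^{C'} \le (\eps/n)^{\Omega(C')}$.

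The high-degree tail $k > \ell+1$ is the heart of the argument. I will invoke the Gopalan--Yehudayoff inequality (Theorem~\ref{thm:GY}) pointwise and then integrate. For each $x$, let $t(x)$ denote the smallest value with $|S_\ell(\tilde{g}(x))| \sqrt{\ell!} \le t(x)^\ell$ and $|S_{\ell+1}(\tilde{g}(x))| \sqrt{(\ell+1)!} \le t(x)^{\ell+1}$. On the event $\mathcal{G} = \{6e\,t(x) \le 1/2\}$, Theorem~\ref{thm:GY} yields $\sum_{k \ge \ell} |S_k(\tilde{g}(x))| \le 2(6e\,t(x))^\ell$. A Cauchy--Schwarz bound on $\E_{\Dtagx}[t(x)^\ell \cdot \one_{\mathcal{G}}]$ using the first bullet of Claim~\ref{claim:symmetric-excellent} gives $O(\sigma^{0.05\ell})$, so the good-event contribution is $O((6e \cdot \sigma^{0.05})^\ell)$. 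Plugging in $\ell = C'\log(n/\eps)/\log(1/\sigma)$ and using the hypothesis $\sigma < 1/(C\log(n/\eps))^{20}$ with $C$ large enough that $\log(6e)/\log(1/\sigma) \ll 0.05$, this is $(\eps/n)^{\Omega(C')}$.

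The main obstacle will be bounding the expectation on the rare bad event $\neg \mathcal{G}$. Markov's inequality applied to the first bullet of Claim~\ref{claim:symmetric-excellent} gives $\Pr_{\Dtagx}[\neg \mathcal{G}] \le O((144 e^2 \sigma^{0.1})^\ell)$, which is again $(\eps/n)^{\Omega(C')}$-small by the same exponent-balancing once $C$ is large. To let this Markov decay dominate the integrand there, I will bound $|\sum_{k > \ell+1} S_k|$ pointwise using two ingredients: the spectral-norm estimate $|S_k| \le L_1(S_k) \le \delta^{-1}(4m)^k$ from Claim~\ref{claim:S_k small spectral} (available for $k \le \ell+1$ since $T$ is good) together with the individual tail bound $|S_k| \le (6e\,t)^k(\ell/k)^{k/2}$ from Theorem~\ref{thm:GY} for $k > \ell+1$. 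The quantitative point that makes this work is that in the regime $\sigma < 1/(C\log(n/\eps))^{20}$ the growth $m \le \sigma^{-1.3}$ is only polynomial in $\log(n/\eps)$, while the Markov factor $\sigma^{0.1 \ell} = (\eps/n)^{0.1 C'}$ is polynomially small in $n/\eps$, so the bad-event contribution also remains $(\eps/n)^{\Omega(C')}$. Combining the good- and bad-event estimates and reinserting the prefactor $\prod_i \mu_i$ delivers the claim.
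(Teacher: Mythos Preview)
Your overall plan mirrors the paper's: factor out $\prod_i \mu_i$, expand via symmetric polynomials, and split at degree $\ell$. The low-degree head and the good-event contribution to the tail are handled essentially as in the paper. The genuine gap is in your bad-event analysis.

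You assert that in the regime $\sigma < 1/(C\log(n/\eps))^{20}$ the bound $m \le \sigma^{-1.3}$ is ``only polynomial in $\log(n/\eps)$''. This is a misreading of the case split: $\sigma < 1/(C\log(n/\eps))^{20}$ is an \emph{upper} bound on $\sigma$, not a lower bound. In fact $\sigma = \sigma_{j-1}$ can be as small as roughly $2^{-b}$, so $m \le \sigma^{-1.3}$ can be exponential in $b$, not polylogarithmic. Consequently, the only pointwise control you have on the tail over $\neg\mathcal{G}$ is of the type $\bigl|\sum_{k>\ell+1} S_k\bigr| \le \prod_i |1+\tilde g_i| + O(1) \le (\prod_i \mu_i)^{-1}$, which can be as large as $e^{\Theta(\sigma^{-0.3})}$; and the Gopalan--Yehudayoff per-$k$ bound $|S_k|\le (6e\,t(x))^k(\ell/k)^{k/2}$ is useless here since $t(x)$ can be as large as $\sigma^{-20}\sqrt{\ell}$ on $\neg\mathcal{G}$ (via the spectral-norm bound on $S_\ell$). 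A Markov factor of $(\eps/n)^{\Omega(C')}$ cannot absorb an integrand that is doubly exponential in $\log(1/\sigma)$.

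The paper sidesteps this entirely: rather than bounding the tail sum on the bad event, it returns to the bounded quantity $\prod_i \tilde f_i$. Fix a single threshold $t=\sigma^{0.01}$, let $E=\{|S_\ell|\le t^\ell/\sqrt{\ell!}\ \text{and}\ |S_{\ell+1}|\le t^{\ell+1}/\sqrt{(\ell+1)!}\}$, and write $\E[\prod_i \tilde f_i]=\E[\prod_i \tilde f_i\cdot \one_E]+\E[\prod_i \tilde f_i\cdot \one_{\neg E}]$. The second term is at most $\Pr[\neg E]\le (\eps/n)^{\Omega(C')}$ simply because $|\prod_i \tilde f_i|\le 1$. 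The first term equals $(\prod_i\mu_i)\,\E[\prod_i(1+\tilde g_i)\cdot \one_E]$, and now one only needs $\E[S_k\cdot\one_E]$ for $k<\ell$ together with the Gopalan--Yehudayoff tail on $E$. The key extra step, which your sketch does not contain, is to control $\E[S_k\cdot\one_E]$ for small $k$: the paper uses Cauchy--Schwarz, $|\E[S_k\cdot\one_{\neg E}]|\le \sqrt{\E[S_k^2]\,\Pr[\neg E]}\le \sqrt{\Pr[\neg E]}$ (the second moment bound $\E_{\Dtagx}[S_k^2]\le 1$ coming from Claim~\ref{claim:symmetric-excellent}), so that $\E[S_k\cdot\one_E]=\E[S_k]\pm(\eps/n)^{\Omega(C')}$. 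Replacing your bad-event plan with this boundedness-plus-Cauchy--Schwarz maneuver fixes the argument.
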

\begin{proof}
Let $x\sim \Dtagx$, 
and let $E$ be the event that
$|S_{\ell}(\tilde{g_1}(x), \ldots, \tilde{g_m}(x))| \le \frac{t^\ell}{\sqrt{\ell!}}$ 
and 
$|S_{\ell+1}(\tilde{g_1}(x), \ldots, \tilde{g_m}(x))| \le \frac{t^{\ell+1}}{\sqrt{(\ell+1)!}}$.
Picking $t = \sigma^{0.01}$, and using Claim~\ref{claim:symmetric-excellent} the event $E$ happens with probability at least $1-\sigma^{\Omega(\ell)} \ge 1-(\eps/n)^{\Omega(C')}$.
Assuming $E$ occurs, 
 Theorem~\ref{thm:GY} gives
$$\sum_{k=\ell}^{m} |S_{k}(\tilde{g_1}(x), \ldots, \tilde{g_m}(x))| \le 2\cdot (6et)^{\ell} \le \sigma^{\Omega(\ell)} \le (\eps/n)^{\Omega(C')}.$$
Furthermore, for sets of smaller cardinality, i.e., for $k \in \{1,\ldots, \ell-1\}$, Claim~\ref{claim:symmetric-excellent} gives
$$\Big|\E_{x\sim \Dtagx}[S_{k}(\tilde{g_1}(x), \ldots, \tilde{g_m}(x))]\Big| \le (\eps/n)^{C'} \qquad\text{and}\qquad\Big|\E_{x\sim \Dtagx}[S_{k}^2(\tilde{g_1}(x), \ldots, \tilde{g_m}(x))]\Big| \le 1\;.$$
We would like to bound $|\E_{x\sim \Dtagx}[S_{k}(\tilde{g_1}(x), \ldots, \tilde{g_m}(x)) \cdot \one_{E}]|$ for $k\in \{1,\ldots, \ell-1\}$. Towards this end,
we consider the expectation of $S_{k}(\tilde{g_1}(x), \ldots, \tilde{g_m}(x))$ by partitioning into the two cases depending on whether the event $E$ occurred or not.
\begin{align*} \E_{x\sim \Dtagx}&[S_{k}(\tilde{g_1}(x), \ldots, \tilde{g_m}(x))] \\&= \E_{x\sim \Dtagx}[S_{k}(\tilde{g_1}(x), \ldots, \tilde{g_m}(x)) \cdot \one_{E}] + \E_{x\sim \Dtagx}[S_{k}(\tilde{g_1}(x), \ldots, \tilde{g_m}(x)) \cdot \one_{\bar{E}}]\\
&= \E_{x\sim \Dtagx}[S_{k}(\tilde{g_1}(x), \ldots, \tilde{g_m}(x)) \cdot \one_{E}] \pm \sqrt{\E_{x\sim \Dtagx}[S_{k}^2(\tilde{g_1}(x), \ldots, \tilde{g_m}(x))] \cdot \Pr[\bar{E}]	\tag{Cauchy-Schwarz}}\\
&= \E_{x\sim \Dtagx}[S_{k}(\tilde{g_1}(x), \ldots, \tilde{g_m}(x)) \cdot \one_{E}] \pm \sqrt{\Pr[\bar{E}]}\\
&=\E_{x\sim \Dtagx}[S_{k}(\tilde{g_1}(x), \ldots, \tilde{g_m}(x)) \cdot \one_{E}] \pm (\eps/n)^{\Omega(C')}
\end{align*}
Thus, $|\E_{x\sim \Dtagx}[S_{k}(\tilde{g_1}(x), \ldots, \tilde{g_m}(x)) \cdot \one_{E}]| \le  (\eps/n)^{\Omega(C')}$ and we get 
$$
\E_{x\sim \Dtagx}\left[\prod_{i=1}^{m}\tilde{g_i}(x) \cdot \one_{E}\right] = \E_{x\sim \Dtagx}[\one_{E}] \pm  \sum_{k=1}^{m} |\E_{x\sim \Dtagx}[S_{k}(\tilde{g_1}(x), \ldots, \tilde{g_m}(x)) \cdot \one_{E}]| = 1 \pm (\eps/n)^{\Omega(C')}\;.$$
Since the $\tilde{f_i}$'s and $\mu_i$'s are bounded in $[-1,1]$, we get 
\begin{align*}
\E_{x}\left[ \prod_{i}\tilde{f_i}\right] &= 
\E_{x}\left[  \prod_{i}\tilde{f_i} \cdot \one_{E}\right] + \E_{x}\left[ \prod_{i}\tilde{f_i} \cdot \one_{ \neg E}\right]\\
&= \left(\prod_{i}{\mu_i} \cdot \E_{x\sim \Dtagx}\left[ \prod_{i=1}^{m} \tilde{g_i}(x) \cdot \one_{E}\right]\right)  \pm \Pr[\neg E] \\
&= \prod_{i}{\mu_i} \cdot \left(1\pm (\eps/n)^{\Omega(C')}\right)  \pm (\eps/n)^{\Omega(C')} =  \Big(\prod_{i}{\mu_i}\Big)   \pm (\eps/n)^{\Omega(C')}.\qedhere	
\end{align*}
\end{proof}

\subsection{High-Variance Case}

In the high-variance case, there exists a $\sigma \in (0,1]$ and an interval $I_\sigma = \{i: \Var[f_i]  \in (0.4\cdot\sigma^{1.1},\sigma]\}$ (the constant $0.4$ handles the case $\sigma=1$) satisfying:
$$
\sum_{i \in I_\sigma} \Var[f_i] > C\cdot \sigma^{-0.1}\cdot  \log^2(n/\eps)\;.
$$
In this case, the expected value of $\prod_{i=1}^m{f_i}$ under the uniform distribution is rather small:
\begin{align*}
\abs{\E\left[\prod_{i=1}^{m} f_i\right]}
 = \prod_{i=1}^{m} |\E[f_i]|
=  \prod_{i=1}^{m}{\sqrt{1-\var[f_i]}}
\le e^{-\sum_{i=1}^{m}{\var[f_i]/2}} \le e^{-C\cdot \log^2(n/\eps)/2} \le \eps/2.
\end{align*}
Recall that the pseudorandom restriction samples a set $T$ according to some $\delta_T$-biased distribution $\D_p$ with marginals $p$, and a partial assignment to the bits in $T$ according to some $\delta_x$-biased distribution $\Dx$. 
In the high variance case, it suffices to show that $\abs{\E_{T\sim \D_p, x\sim \Dx}\left[\prod_{i=1}^{m}{\tilde{f_i}(x)}\right]} \le \eps/2$.
Fix $T, x$.  Denote by $f^T_{i,x}(y) = (f_{i})_{T|y}(x)$.
Similarly to the calculation in the  case of the uniform distribution, we have
\begin{align*}
\abs{\prod_{i=1}^{m}{\tilde{f_i}(x)}} = \abs{\prod_{i=1}^{m}{\E_{y\sim U_{[n]\setminus T}}[f^T_{i,x}(y)]}}\le e^{-\sum_{i=1}^{m} \var[f^T_{i,x}]/2}
\end{align*}
Thus, it suffices to show that for most $T\sim \D_p$, $x\sim \Dx$ we have $\sum_{i=1}^{m} \var[f^T_{i,x}] \ge 10 \cdot \log(1/\eps)$.

\begin{theorem}[Theorem~\ref{thm:pseudorandom-restriction-XOR-short} - High Variance Case]\label{thm:main high var}
	With probability $1-\eps/4$ over $T\sim \D_p$ and $x\sim \Dx$, it holds that  $\sum_{i\in I_{\sigma}}{\Var[f^T_{i,x}]} \ge 10 \cdot \log(1/\eps)$ .
\end{theorem}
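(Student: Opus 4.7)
The plan is to reduce the statement to a first-moment calculation plus a higher-moment concentration bound, exploiting the fact that each summand $X_i := \Var[f^T_{i,x}]$ depends only on the $O(b)$ coordinates of $(T,x)$ lying in block $B_i$, and that the blocks $\{B_i\}_{i\in I_\sigma}$ are pairwise disjoint.

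First I would compute the expectation of $\sum_{i\in I_\sigma} X_i$. Since $f_i\in\{\pm 1\}$, we have the identity $\Var_y[f^T_{i,x}(y)]=1-\tilde f_i(x)^2$. Averaging over a uniform $x$ gives $\E_x[X_i]=\Var[f_i]-\Var[\tilde f_i]$, and Theorem~\ref{thm:HC} bounds $\E_{T\sim\Rp}[\Var[\tilde f_i]]\le p\,\Var[f_i]$. Therefore under truly uniform $(T,x)$,
\[
\E\Big[\sum_{i\in I_\sigma} X_i\Big]\;\ge\;(1-p)\!\!\sum_{i\in I_\sigma}\!\!\Var[f_i]\;\ge\;\tfrac{C}{2}\,\sigma^{-0.1}\log^2(n/\eps),
\]
which is much larger than the target $10\log(1/\eps)$ for $C$ sufficiently large. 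Moving from uniform $(T,x)$ to the $\delta_T$-biased $\D_p$ and $\delta_x$-biased $\Dx$ changes each $\E[X_i]$ by at most $O(2^b(\delta_T+\delta_x))$ via a Fourier expansion of $X_i$ (in the spirit of Lemma~\ref{lemma:vars_prod}); summed over $I_\sigma$ this is negligible by our parameter choices \eqref{eq:delta_T}--\eqref{eq:delta_x}.

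Next I would prove concentration via an $\ell$-th moment bound with $\ell=O(\log(1/\eps))$ even. Set $Y_i=X_i-\E[X_i]$. Each $Y_i\in[-1,1]$, and because $X_i\in[0,1]$ we have $\Var[Y_i]\le\E[X_i^2]\le\E[X_i]$, hence $V:=\sum\Var[Y_i]\le\E[\sum X_i]$. Under truly uniform $(T,x)$, the $Y_i$ are independent (disjoint blocks), so Lemma~\ref{lemma:tail_bounds} gives
\[
\E_U\Big[\Big(\textstyle\sum_i Y_i\Big)^{\!\ell}\Big]\le\max\!\big(\ell^\ell,(\ell V)^{\ell/2}\big).
\]
To transfer this moment bound to $(\D_p,\Dx)$, I expand $\big(\sum Y_i\big)^\ell=\sum_{i_1,\dots,i_\ell} Y_{i_1}\cdots Y_{i_\ell}$; each monomial is a function of at most $\ell b$ bits bounded by $1$, so its Fourier $L_1$-norm is at most $2^{\ell b/2}$ by Cauchy--Schwarz. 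Hence the pseudorandom expectation differs from the uniform one by at most $m^\ell\cdot 2^{\ell b/2}\cdot(\delta_T+\delta_x)$, which is negligible since $\log(1/\delta_T)$ and $\log(1/\delta_x)$ dominate $\ell b+\ell\log m$ by our parameter setting. Applying Markov at threshold $\tfrac12\E_{\D_p\times\Dx}[\sum X_i]$ and using $V\le\E[\sum X_i]$ yields tail probability at most $\max(2\ell,\sqrt{2\ell/\E[\sum X_i]})^{\ell}$, which is at most $\eps/4$ for $\ell=\Theta(\log(1/\eps))$ and $C$ large. Since $\tfrac12\E[\sum X_i]\gg 10\log(1/\eps)$, this concludes the proof.

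The main obstacle is the clean pseudorandom-to-uniform transfer inside the higher-moment computation: naively bounding $L_1$ norms of monomials costs a factor $2^{O(\ell b)}\cdot m^\ell$, so we need $\delta_T,\delta_x$ to be doubly small in $\log(n/\eps)$ and in $\ell b$. The choices in \eqref{eq:delta_T}--\eqref{eq:delta_x} were calibrated precisely for this; the bookkeeping to confirm the exponents match is the most delicate part, but straightforward. An alternative cleaner path, should the $L_1$ blow-up prove annoying, is to first condition on $T\sim\D_p$ (using approximate $\ell$-wise independence of the $T$-marginals) and then handle $x\sim\Dx$ block-by-block, keeping the bias transfers local to one block at a time.
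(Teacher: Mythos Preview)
Your high-level structure (expectation plus higher-moment concentration, then Markov) matches the paper's, but the transfer from uniform $(T,x)$ to $(\D_p,\Dx)$ has a real gap.

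The problem is your naive $L_1$ bound: you claim each monomial $Y_{i_1}\cdots Y_{i_\ell}$ has spectral norm at most $2^{\ell b/2}$, giving total transfer cost $m^\ell\cdot 2^{\ell b/2}\cdot(\delta_T+\delta_x)$. But the paper's parameters are $\log(1/\delta_x)=\Theta(\log(n/\eps)\cdot\log b)$ and $\log(1/\delta_T)=\Theta(w\log(n/\eps)\cdot\log\log(n/\eps))$, while your cost requires $\log(1/\delta_x)\gtrsim \ell b$. With $\ell=\Theta(\log(1/\eps))$ and $b$ as large as $\Theta(\log(n/\eps))$ (the application in Section~\ref{sec:pseudorestriction}), you need $\log(n/\eps)\cdot\log b\gtrsim \log(1/\eps)\cdot\log(n/\eps)$, i.e.\ $\log b\gtrsim\log(1/\eps)$, which fails for any $\eps$ that is even inverse-polylogarithmic. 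Decreasing $\delta_x$ to compensate would blow up the seed length to $\tilde O(b\cdot\log(n/\eps))$, defeating the theorem. Notice your bound never uses that the $f_i$ are ROBPs rather than arbitrary functions on $b$ bits; this is exactly where the argument must exploit the ROBP structure.

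The paper's fix is to decouple $T$ and $x$ and invoke the CHRT Fourier bound (Theorem~\ref{thm:CHRTa}). First it shows that with probability $1-\eps/16$ over $T\sim\D_p$, the expectation $\E_{z\sim U}\big[\sum_i\Var[f^T_{i,z}]\big]$ stays at least $\Tvar/2$ (Claim~\ref{claim:preserve_var}, via a moment argument on $\sum_i\Var[\tilde f_i]$ using Lemma~\ref{lemma:vars_prod}). Second, it shows that whp over $T$, every product $\prod_{i\in R}\tilde f_i$ with $|R|\le\ell$ has $L_1\le 1/\delta$ (Claim~\ref{claim:good}), hence all but $\ell$ indices $i$ satisfy $L_1(\tilde f_i)\le\delta^{-1/\ell}$; restricting to this ``good'' set $G$ gives $L_1(\error_i)\le L_1(\tilde f_i)^2\le\delta^{-2/\ell}$ and therefore the $\ell$-th moment of $\sum_{i\in G}\error_i$ has spectral norm at most $(|G|\cdot\delta^{-2/\ell})^\ell=|G|^\ell\cdot\delta^{-2}$. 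With the paper's choice $\ell=C'\log(n/\eps)/\log|I_\sigma|$ this is only $\poly(n/\eps)$, so $\delta_x=(\eps/n)^{O(\log b)}$ suffices for the transfer. This CHRT-based control of $L_1(\tilde f_i)$ is the missing idea; your alternative ``condition on $T$ then go block-by-block'' still needs some bound on $L_1(\tilde f_i)$ better than $2^{b/2}$, and that is precisely what Theorem~\ref{thm:CHRTa} provides.
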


\begin{proof}
Denote by $\Tvar := \sum_{i\in I_{\sigma}}\var[f_i]$.
By our assumption, $\Tvar \ge C \cdot \log^2(n/\eps) \cdot \sigma^{-0.1} \ge 5 \sigma^{-0.1}$.
Since all functions in $I_{\sigma}$ have variance at least $0.4 \cdot \sigma^{1.1}$ we have 
\begin{equation}\label{eq:Tvar vs m}
|I_{\sigma}| \le  \Tvar \cdot \tfrac{1}{0.4} \cdot \sigma^{-1.1} \le \Tvar^{12}
\end{equation}
We remark that in this case, unlike the low-variance case, we do not know how to handle large $\sigma$ easily, so for the rest of the proof $\sigma$ can be anything between $2^{-1-b}$ and $1$.

Fix $T$ and $x$. We expand $\var[f^T_{i,x}]$ 
$$
\var[f^T_{i,x}] = \E_{y\sim U_{[n]\setminus T}}[f^T_{i,x}(y)^2] - \E_{y\sim U_{[n]\setminus T}}[f^T_{i,x}(y)]^2 = 1-\tilde{f_i}(x)^2 \;.
$$
For any fixed $T$, using $\E[f_i] =\E[\tilde{f_i}]$ gives
\begin{align*}
	\E_{z\sim U_T} [\var[f^T_{i,z}]]   
&= 1- \E[(\tilde{f_i})^2] 
= (1-\E[f_i]^2) - (\E[(\tilde{f_i})^2] - \E[\tilde{f_i}]^2)
= \var[f_i] - \var[\tilde{f_i}]
\end{align*}

\begin{claim}[Most $T$'s preserve variance in expectation]\label{claim:preserve_var}
With probability at least $1-\eps/16$ over the choice of  $T \sim \D_p$, it holds that $\E_{z\sim U_T} \left[\sum_{i \in I_{\sigma}}{\Var[f^T_{i,z}]}\right] \ge \Tvar/2.$
\end{claim}
\begin{proof}
Since 
$\E_{z\sim U_T}[\var[f^T_{i,z}]] = \var[f_i] - \var[\tilde{f_i}]$, it suffices to show that with probability $1-\eps/16$ over the choice of $T\sim \D_p$ we have $\sum_{i} \var[\tilde{f_i}] \le \sum_{i} \var[f_i]/2$.
To show that 
$\sum_{i} \var[\tilde{f_i}]$ 
is well-concentrated we analyze its $k$-th moment for $k = C'  \log(1/\eps)$ where $C'$ is a sufficiently large constant.
\begin{align*}
\E_{T\sim \D_p}\left[\Big(\sum_{i \in I_{\sigma}} \var[\tilde{f_i}]\Big)^{k}\right]
= \sum_{i_1, i_2, \ldots, i_{k}\in I_{\sigma}} \E_{T}\left[\prod_{j=1}^{k} 	\var[\tilde{f_{i_j}}]\right]\;.
\end{align*}
Fix $i_1, \ldots, i_k \in I_{\sigma}$, (not necessarily distinct), then by Lemma~\ref{lemma:vars_prod}
\begin{align*}
\E_{T\sim \D_p}\left[\prod_{j=1}^{k} 	\var[\tilde{f_{i_j}}]\right] &\le   \E_{T\sim \Rp}\left[\prod_{j=1}^{k} 	\var[\tilde{f_{i_j}}]\right] + \delta_T	\cdot \prod_{j=1}^{k} 	\var[f_{i_j}]
\end{align*}
Overall, we get
\begin{align*}\E_{T\sim \D_p}\left[\Big(\sum_{i\in I_{\sigma}} \var[\tilde{f_i}]\Big)^{k}\right] \le  \E_{T\sim \Rp}\left[\Big(\sum_{i\in I_{\sigma}} \var[\tilde{f_i}]\Big)^{k}\right] + \delta_T \cdot \Tvar^{k}\;.
\end{align*}
To bound $\E_{T\sim \Rp}[(\sum_{i=1}^{m} \var[\tilde{f_i}])^{k}]$
we use the fact that by Theorem~\ref{thm:HC} 
$$\E_{T \sim \Rp}[\var[\tilde{f_i}]] \le p \cdot \var[f_i] \le 0.1 \cdot \var[f_i]$$
and then by Chernoff's bound 
$\sum_{i\in I_{\sigma}} \var[\tilde{f_i}] \le 0.2 \cdot \Tvar$ 
with probability at least $1-\exp(-\Omega(\Tvar))$.
Since $\sum_{i} \var[\tilde{f_i}]$ is always upper bounded by $\Tvar$, the $k$-moment of the sum is at most
$$(0.2 \cdot \Tvar)^{k} + (\Tvar)^{k} \cdot \exp(-\Omega(\Tvar)) \le 2(0.2 \cdot \Tvar)^{k}$$
We get that 
$\E_{T\sim \D_p}[(\sum_{i\in I_{\sigma}} \var[\tilde{f_i}])^{k}] \le 2(0.2 \cdot \Tvar)^{k} + \delta_T \cdot \Tvar^{k}$.
Since $\delta_T \ll 2^{-4 k}$ this is at most 
$3\cdot (0.2 \cdot \Tvar)^{k}$.
Thus, using Markov's inequality, the probability that $ \sum_{i\in I_{\sigma}} \var[\tilde{f_i}]  \ge  0.5 \cdot \Tvar$ is at most $3\cdot (0.2/0.5)^{k} \le \eps/16$ which completes the proof.
\end{proof}

Let \begin{equation}\label{eq:def ell}
	\ell \triangleq C'\cdot \log(n/\eps)/\log(|I_{\sigma}|)
\end{equation}
where $C'$ is a sufficiently large constant declared before Eq.~\eqref{eq:delta_T}. 
Assume that $\ell$ is an even integer.
Recall that $\delta = (\eps/n)^{-10C'} = |I_{\sigma}|^{-10 \ell}$.
We again define $T$ to be a {\sf good} set if $\prod_{i\in R}\tilde{f_i}$ has spectral-norm at most $1/\delta$ for all sets $R\subseteq I_{\sigma}$ of size at most $\ell$. As in Claim~\ref{claim:good} the probability that $T$ is good is at least $1-(|I_{\sigma}|+1)^{\ell} \cdot O(\ell b w)^3 \cdot \delta \ge 1-\eps/16$.
We define $T$ to be an {\sf excellent} set if $T$ is good and Claim~\ref{claim:preserve_var} holds for $T$. Then, $\Pr[T\text{~is excellent}] \ge 1-\eps/8$.
\begin{claim}
 If $T$ is a good set, then at most $\ell$ of the $\tilde{f_i}$'s have $L_1(\tilde{f_i}) \ge \delta^{-1/\ell}$.
\end{claim}
\begin{proof}
	If $\tilde{f_{i_1}}, \ldots, \tilde{f_{i_\ell}}$ have
	$L_1(\tilde{f_{i_j}})\ge \delta^{-1/\ell}$, then their product  has spectral-norm at least $\delta^{-1}$, since $L_1(\prod_{j=1}^{\ell}\tilde{f_{i_j}}) = \prod_{j=1}^{\ell} L_1(\tilde{f_{i_j}})$ for functions defined on disjoint variables.
\end{proof}

Fix an excellent set $T$. Let $G$ be the of indices $i\in I_{\sigma}$ with $L_1(\tilde{f_i})\le \delta^{-1/\ell}$. We show that with high probability over $x$, $\sum_{i\in G}{\var[f^T_{i,x}]} \ge 0.1 \cdot \Tvar$. We denote by 
$$\error_i(x) := \var[f^T_{i,x}] - \E_{z \sim U}[\var[f^T_{i,z}]] = \var[f^T_{i,x}] - (\var[f_i] - \var[\tilde{f_i}]).$$
Obviously $\E_{z\sim U}[\error_i(z)] = 0$ and $\error_i$ is bounded in $[-\sigma,1]$.
Furthermore, we have that $$\error_i(x) = (1-\tilde{f_i}(x)^2) -(1 - \E_{z\sim U}[\tilde{f_i}(z)^2]) = \E_{z\sim U}[\tilde{f_i}(z)^2] - \tilde{f_i}(x)^2$$
Thus, the error term have small spectral-norm since $L_1(\error_i) \le L_1(\tilde{f_i})^2$. We use this fact to bound 
$\E_{x\sim \Dx}[ (\sum_{i\in G} \error_i(x))^{\ell}]$. (recall that $\ell$ is an even integer.)
\begin{claim}\label{claim:small l-moment err}
	$$\E_{x\sim \Dx}\Big[ \big(\sum_{i\in G} \error_i(x)\big)^{\ell}\Big] \le 2 \cdot (\ell \cdot \Tvar)^{\ell/2}.$$
\end{claim}
\begin{proof}
	The spectral-norm of $(\sum_{i\in G} \error_i(x))^{\ell}$ is at most $(|G| \cdot \delta^{-2/\ell})^{\ell}  = |G|^{\ell} \cdot \delta^{-2}$. Thus, any $\delta_x$-biased distribution fools $(\sum_{i\in G} \error_i(x))^{\ell}$ with error at most $\delta_x \cdot |G|^{\ell} \cdot \delta^{-2}$ and we get
	$$
	\E_{x\sim \Dx}\Big[ \big(\sum_{i\in G} \error_i(x)\big)^{\ell}\Big] \le \E_{z\sim U}\Big[ \big(\sum_{i\in G} \error_i(x))\big)^{\ell}\Big]+ \delta_x \cdot |G|^{\ell} \cdot \delta^{-2}.
	$$
	
	To bound
	$\E_{z\sim U}[ (\sum_{i\in G} \error_i(z))^{\ell}]$ we use Lemma~\ref{lemma:tail_bounds}.
	We observe that $\{\error_i(z)\}_{i\in G}$ are independent random variables, where each $\error_i(z)$ is bounded in $[-\var[f_i],1]$ with mean zero, and hence $\var[\error_i] \le \var[f_i]$ (See Lemma~\ref{lemma:var}). Applying Lemma~\ref{lemma:tail_bounds} gives
	$$
	\E_{z\sim U}[ (\sum_{i\in G} \error_i(z))^{\ell}] \le  
	\max\{\ell^{\ell}, (\ell \cdot \Tvar)^{\ell/2}\}.$$

Since $\ell \le \sqrt{\ell \cdot \Tvar}$, the upper bound on $\E_{z\sim U}[ (\sum_{i\in G} \error_i(z))^{\ell}]$ is at most $\left( \ell \cdot \Tvar\right)^{\ell/2}$.
Finally, the upper bound with respect to $x\sim \Dx$ is at most \[
\E_{x\sim \Dx}\Big[ \big(\sum_{i\in G} \error_i(x)\big)^{\ell}\Big] 
\;\le\;  \left(\ell  \cdot  \Tvar\right)^{\ell/2} 
	+ \delta_x \cdot |G|^{\ell} \cdot \delta^{-2}
\;\le\; 2\cdot \left(\ell  \cdot  \Tvar\right)^{\ell/2}\;.\qedhere
\]
\end{proof}
Using Markov's Inequality and Claim~\ref{claim:small l-moment err} gives \begin{align*}
 \Pr_{x\sim\Dx} \left[\Big|\sum_{i\in G} \error_i(x)\Big| \ge \Tvar / 4\right]
 \le 2\cdot \left(\frac{\sqrt{\ell \cdot \Tvar}}{\Tvar/4}\right)^{\ell} \le O(\sqrt{\ell/\Tvar})^{\ell}\le O(1/\Tvar)^{\ell/4}\;.
 \end{align*}
using $\Tvar \ge \Omega(\log^2(n/\eps))$ and $\ell \le O(\log(n/\eps))$ in the last inequality.
Furthermore, using Eqs.~\eqref{eq:Tvar vs m} and \eqref{eq:def ell}: $O(1/\Tvar)^{\ell/4} \le \abs{I_{\sigma}}^{-\Omega(\ell)} \le (\eps/n)^{\Omega(C')} \le \eps/8$.
In the complement event,
$$
\sum_{i\in G} \var[f^T_{i,x}] =  \sum_{i\in G}(\E_{z}[\var[f^T_{i,z}]] + \error_i(x)) \ge \Tvar/2 - \ell  - \Tvar/4 \ge 0.1 \cdot \Tvar.
$$
Since $\Tvar \ge \Omega(\log^2(n/\eps))$, we get that with  probability at least $1-\eps/4$ over $T\sim \D_p$ and $x\sim \Dx$,  $\sum_{i} \var[f^T_{i,x}]  \ge 10 \log(1/\eps)$. (End of Proof of Theorem~\ref{thm:main high var})
\end{proof}
\newcommand{\VBad}{\mathsf{VarBad}}
\newcommand{\V}{\mathsf{Var}}
\newcommand{\Bad}{\mathsf{Bad}}
\newcommand{\Good}{\mathsf{Good}}

\newcommand{\GMany}{\mathbf{G_{\oplus Many}}}
\newcommand{\GXOR}{\mathbf{GXOR}}

\section{Assigning all the variables: a pseudorandom generator for the XOR of short ROBPs}\label{sec:assign_all}

In Theorem~\ref{thm:pseudorandom-restriction-XOR-short}, we proved that we can pseudorandomly assign $p$-fraction of the coordinates of $f(x) = \prod_{i=1}^{m}{f_i(x)}$, while maintaining its acceptance probability up to an additive error  of $\eps$, using $\tilde{O}(\log(n/\eps)\cdot \log(b))$ random bits.
In this section, we will construct a pseudorandom generator $\eps$-fooling $f$, by applying Theorem~\ref{thm:pseudorandom-restriction-XOR-short} $\poly\log\log(n/\eps)$ times, combined with Lovett's \cite{Lovett08} or Viola's \cite{Viola08} pseudorandom generator for low-degree polynomials, and CHRT's pseudorandom generator for constant-width ROBPs~\cite{CHRT17}. Our main result is:
\assignall*

\paragraph{Assigning $0.9999$-fraction of the variables.}
The first step is rather standard. 
By making $t$ recursive calls to Theorem~\ref{thm:pseudorandom-restriction-XOR-short} we can assign all but $(1-p)^t \le e^{-pt}$ fraction of the coordinates while maintaining the acceptance probability.
Note that we rely on the fact that under restrictions, the restricted function is still of the form $\prod_{i=1}^{m}{g_i(x)}$ where each $g_i$ is a ROBP of width $w$ that depends on at most $b$ bits (in other words, the class of functions we are trying to fool is closed under restrictions).
Setting $t = O(1/p)$, we can assign $0.9999$ fraction of the inputs bits while changing the acceptance probability by at most $\eps/n$.

\begin{claim}\label{claim:assigning-0.9999}
Let $f = \prod_{i=1}^{m} f_i(x)$, with block-length $b$.
Then, there is a pseudorandom restriction $\rho = (J,y)$ 
using at most $
 O\big(\log(b) +\log\log(n/\eps)\big)^{2w+1} \cdot \log(n/\eps)$ random bits, and changing the acceptance probability by at most $(\eps/n)$.
Furthermore, $J$ is $(\eps/n)^{\omega(1)}$ biased with marginals $0.0001$, and
$y$ is $(\eps/n)^{\omega(1)}$ biased.
\end{claim}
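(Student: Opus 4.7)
The plan is to iterate Theorem~\ref{thm:pseudorandom-restriction-XOR-short} for $t$ independent rounds, as already sketched in the text preceding the claim. First I would note that the function class---XORs of width-$w$ length-$b$ ROBPs on pairwise disjoint blocks---is closed under partial assignments, since restricting a single block's ROBP yields another width-$w$ ROBP on a subset of its variables. Consequently, after any pseudorandom restriction the residual function is still in the class with the same $w$ and $b$ parameters, so Theorem~\ref{thm:pseudorandom-restriction-XOR-short} may be reinvoked on the live coordinates with unchanged parameters.

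I would then set $t := \lceil \ln(10000)/p \rceil = O(\log(b) + \log\log(n/\eps))^{2w}$ so that $(1-p)^t \leq 0.0001$, and in each of the $t$ rounds apply Theorem~\ref{thm:pseudorandom-restriction-XOR-short} with fresh independent randomness and per-round error $\eps'' := \eps/(nt)$. Since $\log(n/\eps'') = \Theta(\log(n/\eps))$, each round consumes $O(w \log(n/\eps)(\log\log(n/\eps) + \log b))$ random bits; multiplying by $t$ and absorbing the factor of $w$ yields the claimed total seed length $O(\log(b) + \log\log(n/\eps))^{2w+1} \cdot \log(n/\eps)$. A standard hybrid argument over the $t$ rounds, using the triangle inequality, bounds the accumulated acceptance-probability deviation by $t \cdot \eps'' = \eps/n$.

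For the small-bias properties of $J$ and $y$, I would view the construction more concretely: in round $i$ sample $T_i^{\mathrm{full}} \subseteq [n]$ independently from the underlying $\delta_T$-biased distribution over $[n]$ with marginals $p$, and set the actual assigned set to $T_i := T_i^{\mathrm{full}} \cap L_i$, where $L_i$ is the live set entering round $i$. Then $J = [n] \setminus \bigcup_{i=1}^{t} T_i^{\mathrm{full}}$, and tuning $\delta_T, \delta_x = (\eps/n)^{\omega(1)}$ costs only a constant factor in seed length. For any non-empty $S \subseteq [n]$, independence across rounds combined with Claim~\ref{claim:inclusion-exclusion} yields
\[
\Pr[S \subseteq J] = \prod_{i=1}^{t} \Pr[S \cap T_i^{\mathrm{full}} = \emptyset] = \prod_{i=1}^{t}\bigl((1-p)^{|S|} \pm 2^{|S|}\delta_T\bigr) = 0.0001^{|S|} \pm (\eps/n)^{\omega(1)},
\]
so $J$ is $(\eps/n)^{\omega(1)}$-biased with marginals $0.0001$. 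Likewise, $y = (y_1, \ldots, y_t)$ is the concatenation of independent $\delta_x$-biased assignments across rounds, so any non-empty Fourier character of $y$ factors through at least one round with absolute value at most $\delta_x = (\eps/n)^{\omega(1)}$.

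The main obstacle is purely bookkeeping: verifying that the inner logarithmic parameters ($\log(1/\delta_T)$, $\log(1/\delta_x)$, $\log(1/\eps'')$, and $\log t$) are all $O(\log(n/\eps))$, so that the $t$-fold multiplication collapses into the stated expression without introducing extra $\log n$ factors. This is routine once one plugs in $t, 1/p \leq \poly\log(n/\eps)$.
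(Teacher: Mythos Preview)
Your proposal is correct and follows essentially the same approach as the paper's proof: iterate Theorem~\ref{thm:pseudorandom-restriction-XOR-short} for $t = O(1/p)$ independent rounds, bound the total error via a hybrid argument, and verify the small-bias properties of $J$ and $y$ directly from the small-bias guarantees of each round together with Claim~\ref{claim:inclusion-exclusion}. The only point where the paper is slightly more careful is in the displayed calculation for $\Pr[S\subseteq J]$: the bound $(1-p)^{|S|}\pm 2^{|S|}\delta_T$ from Claim~\ref{claim:inclusion-exclusion} is only useful when $2^{|S|}\delta_T$ is small, so the paper first establishes the estimate for $|S|$ up to roughly $\log(1/\delta_T)$ and then handles larger $S$ by monotonicity (both $\Pr[S\subseteq J]$ and $0.0001^{|S|}$ are already $(\eps/n)^{\omega(1)}$ at that threshold). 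This is a routine completion of your argument.
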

\begin{proof}[Proof Sketch]
Apply Theorem~\ref{thm:pseudorandom-restriction-XOR-short} with error $\eps/n^2$ for $t = \log(0.0001)/\log(1-p) = O(1/p)$ times recursively, with independent random bits per each iteration.
Denoting by $J_0 = [n]$, this generates $t$-pseudorandom restrictions
$(J_1, y_1), (J_2, y_2), \ldots, (J_t,y_t)$ where $J_i \subseteq J_{i-1}$  and $y_i \in \pmone^{J_{i-1} \setminus J_i}$ for all $i\in [t]$.
To be more precise, for each $i$, $T_i = (J_{i-1} \setminus J_{i})$ is the $\delta_T$-biased subset with marginals $p$ in the description of the generator in Theorem~\ref{thm:pseudorandom-restriction-XOR-short} and $y_i\in \pmone^{T_i}$ is its  assignment sampled from a  $\delta_x$-biased distribution.
We take $J = J_t$ and $y\in \pmone^{[n]\setminus J}$ to be the concatenation of $y_1, \ldots, y_t$.
 By the hybrid argument, $$\Big|\E_{z\sim U_n}[f] - \E_{J,y}\E_{x\sim U_J}[f(\Sel_J(x,y))]\Big|\le (\eps/n^2)\cdot t \le \eps/n.$$
The amount of random bits used to sample the restriction is 
$$O(p^{-1}
 \cdot w  \log (n/\eps)  (\log\log(n/\eps) +  \log(b))
 \le 
 O\big(\log(b) +\log\log(n/\eps)\big)^{2w+1} \cdot \log(n/\eps).$$

Next, we claim that $J$ is a $(\eps/n)^{\omega(1)}$-biased with marginals $0.0001$.
Recall that  $J_{i-1}\setminus J_i$ are $\delta_T$-biased with marginals $p$, and that $\delta_T = (\eps/n)^{\omega(1)}$. 
By Claim~\ref{claim:inclusion-exclusion}, for any subset $S\subseteq [n]$ of size at most $\log(\delta_T)/10$ we have 
\begin{align*}\Pr[S\subseteq J] &= \Pr[S\subseteq J_0] \cdot \Pr[S\subseteq J_1|S\subseteq J_0]\cdots \Pr[S\subseteq J_t|S\subseteq J_{t-1}] \\
	&= ((1-p)^{|S|} \pm 2^{|S|} \delta_T)^{t} =(0.0001)^{|S|} \pm (\eps/n)^{\omega(1)}
\end{align*}
which implies by monotonicity that for larger subsets $S$, we have 
$\Pr[S\subseteq J]  \le (\eps/n)^{\omega(1)}$.

Finally, we claim that conditioned on $J$ and in fact for any choice of $J_1, \ldots, J_t$, $y$ is $(\eps/n)^{\omega(1)}$-biased.
This is due to the fact that $y$ is the concatenation of $y_1, \ldots, y_t$ where each $y_i$ is $\delta_x = (\eps/n)^{\omega(1)}$-biased.
\end{proof}

We would like to claim that $f = \prod_{i=1}^{m}f_i$ simplifies after assigning $0.9999$ of the coordinates. 
For a particular function $f_i$, with high probability, at least $1-1/32^b$, the block length decreases under a random restriction by a factor of $2$.
This is due to the fact that on expectation  at most $0.0001\cdot b$ of the variables will survive, and we can apply Chernoff's bound.
Now, if $m\le 16^b$, we can apply a union bound and get that with high probability the block-length decreases by a factor of $2$ in all functions $f_1, \ldots, f_m$ simultaneously.
We seem to have been making progress, going from block-length $b$ to block-length $b/2$, and we might hope that $\log(b)$ iterations of Claim~\ref{claim:assigning-0.9999} are enough to get a function that depends on $O(1)$ many variables (which is easy to fool). 
But, in order to carry the argument, even in the second step, we need to be able to afford the union bound on all functions. 
Ideally, the number of functions that are still alive also decreases from at most $16^b$ to at most $16^{b/2}$, and a similar union bound works replacing $b$ by $b/2$. We can continue similarly as long as in each iteration the block-length decreases by half and the number of functions by a square root.

We run into trouble if at some iteration we have more than $16^{b'}$ functions of block-length $b'$.
The first observation is that in this case the total variance of the functions is extremely high, exponential in $b'$. Recall that the expected value of the product is exponentially small in the total variance. This means that the expected value of the product is doubly-exponentially small in $b'$.
The second observation is that under $(1-\alpha)$-random restrictions, on average, the total variance decreases by a factor of $\alpha$.
Hence, we aim to apply a pseudorandom restriction assigning $(1-\exp(-b'))$ fraction of the variables alive, while keeping the total variance higher than $\log(n/\eps)$. This restriction is extremely aggressive, keeping only a polynomial fraction of the remaining variables alive (compared to say a constant fraction in Claim~\ref{claim:assigning-0.9999}).
However, we claim that in this case, such a restriction maintains the total variance high and thus the expected value of $\prod_{i=1}^{m}{f_i(x)}$ small (at most $\poly(\eps/n)$) in absolute value.

The nice thing about these ``aggressive pseudorandom restrictions'' is that they keep variables alive with such small probability that with high probability each function $f_i$ will depend on at most $O(1)$ variables after the restriction, except for a small number of functions covering at most $O(\log(n/\eps))$ ``bad variables''. This will allow us to fool the restricted function using Lovett's \cite{Lovett08} or Viola's \cite{Viola08} pseudorandom generator for low-degree polynomials.
In the next section, we explain how to handle this case in more details. 
Then, in Section~\ref{sec:4.2} we describe as a thought experiment a ``fake PRG'': an adaptive process that fools the XOR of short ROBPs, but depends on the function being fooled. In Section~\ref{sec:4.3} we show how to eliminate the adaptiveness and construct a true PRG for this class of functions. 

\subsection{PRG for the XOR of many functions with block-length $b$}\label{sec:4.1}

Let $\mathcal{F}_{b,n,t}$ be the class of functions of the form
$f(x) = f_0(x) \cdot \prod_{i=1}^{m}{f_i(x)}$
where $f_0, \ldots, f_m$ are Boolean functions on disjoint sets of variables, $f_0$ (the `junta') depends on at most $t$ variables, $f_1, \ldots, f_m$ are {\bf non-constant} and depend on at most $b$ variables and 
$16^b \le m \le 2\cdot 16^{2b}$.

\begin{lemma}\label{lemma:Gb}
There exists a constant $C>0$ such that the following holds.
For all $n,b,t$ such that  $C\cdot \log\log(n/\eps) \le b \le \log(n)$, there exists a log-space explicit pseudorandom generator $\GMany(b,n,t,\eps):\pmone^{O(t + \log n/\eps)} \to \pmone^n$ that $\eps$-fools $\mathcal{F}_{b,n,t}$.
\end{lemma}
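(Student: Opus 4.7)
Since each $f_i$ is a non-constant Boolean function on at most $b$ bits, $\var[f_i] \ge 2^{-b-1}$, so $\sum_i \var[f_i] \ge 16^b \cdot 2^{-b-1} = 8^b/2$. Using disjointness of variables and $|\E[f_i]| \le \exp(-\var[f_i]/2)$,
$$|\E[f]| \;=\; |\E[f_0]|\prod_i |\E[f_i]| \;\le\; \exp\!\Bigl(-\tfrac{1}{4}\cdot 8^b\Bigr) \;\le\; \eps/n^{10},$$
for $b \ge C\log\log(n/\eps)$ with $C$ large. It therefore suffices to produce a distribution $G$ with $|\E[f(G)]| \le \eps/2$; combined with the above, this gives $\eps$-fooling.

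\textbf{Construction.} Let $c,d$ be sufficiently large constants, set $\alpha = 2^{-cb}$, and let $\eps' = \eps \cdot 2^{-t-3}$. The generator $\GMany(b,n,t,\eps) = \Sel_T(x, z|_{[n]\setminus T})$ samples three independent pieces of randomness:
\begin{enumerate}
\item $T \subseteq [n]$ is drawn from an $(\eps/n)^{10C}$-biased distribution with marginal $1-\alpha$ (so that $[n]\setminus T$ has marginal $\alpha$; this can be sampled by Claim~\ref{claim:sampling-t} applied to $[n]\setminus T$);
\item $x \in \pmone^T$ is drawn from an $(\eps/n)^{10C}$-biased distribution;
\item $z \in \pmone^n$ is drawn from Viola's generator \cite{Viola08} for degree-$d$ $\F_2$-polynomials with error $\eps'$.
\end{enumerate}
Standard constructions use $O(\log(n/\eps))$ bits each for items~1--2, and $O(d\log(n/\eps'))=O(t+\log(n/\eps))$ bits for item~3, for total seed length $O(t+\log(n/\eps))$.

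\textbf{Analysis.} I will show that with probability at least $1-\eps/4$ over the pseudorandom restriction $\rho=(T,x)$, two events occur simultaneously:

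\emph{(E1) Bias under uniform is small:} $\bigl|\E_{y\sim U_{[n]\setminus T}}[f|_{T|x}(y)]\bigr| \le \eps/4$. Factoring over disjoint variables and using $|\E[f_i|_\rho]|\le \exp(-\tfrac12\var[f_i|_\rho])$, this reduces to $\sum_i \var[f_i|_\rho]\ge 10\log(n/\eps)$. The expected restricted total variance is $\approx \alpha\cdot \sum_i \var[f_i]\ge 2^{(3-c)b}/2$, which for $c\le 2$ is at least $\log^C(n/\eps)/2\gg \log(n/\eps)$. Concentration is obtained by the $\ell$-th moment argument from the high-variance case in the proof of Theorem~\ref{thm:main high var} (with $I_\sigma=[m]$), using Lemma~\ref{lemma:vars_prod} and Theorem~\ref{thm:HC} to pass from $\D_p$ to $\Rp$ and to bound the uniform-restriction moment, and Lemma~\ref{lemma:tail_bounds} for the fluctuation of the restricted variance under the pseudorandom $x$.

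\emph{(E2) Structural simplification:} No factor $f_i$ has more than $d$ alive variables. Almost $(d+1)$-wise independence of $T$ (from the bias $\delta_T$) gives $\Pr[f_i\text{ has }>d\text{ alive vars}]\le \binom{b}{d+1}\alpha^{d+1}+O(\delta_T) \le (b\alpha)^{d+1}$, and a union bound over $m\le 2\cdot 16^{2b}$ factors yields $\le 2^{(8-c(d+1))b}b^{d+1}$, which is at most $\eps/8$ for $c,d$ sufficiently large (compatible with (E1)).

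Conditioned on (E1)$\,\cap\,$(E2), writing $V$ for the (at most $t$) alive variables of $f_0$,
$$f|_{T|x}(y) \;=\; f_0|_\rho(y_V)\cdot (-1)^{q(y)},$$
where $q$ is an $\F_2$-polynomial of degree at most $d$: each $f_i|_\rho$ depends on at most $d$ alive variables, so $f_i|_\rho = (-1)^{q_i}$ with $\deg q_i\le d$, and the product corresponds to XOR of exponents. Expanding $f_0|_\rho(y_V)=\sum_{S\subseteq V}\widehat{f_0|_\rho}(S)\chi_S(y)$ with $L_1$-norm $\le 2^{|V|}\le 2^t$, each summand $\chi_S(y)(-1)^{q(y)} = (-1)^{q(y)+\ell_S(y)}$ is itself a degree-$d$ $\F_2$-polynomial. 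Viola's guarantee and the triangle inequality therefore yield
$$\bigl|\E_{z\sim\mathrm{Viola}}[f|_\rho(z)]-\E_{y\sim U_n}[f|_\rho(y)]\bigr| \;\le\; 2^t\cdot \eps' \;\le\; \eps/8.$$
Combined with (E1), $|\E_z[f|_\rho(z)]|\le \eps/2$ on good $\rho$. Averaging over $\rho$ and paying $\eps/4$ for the bad restrictions gives $|\E[f(\GMany)]|\le 3\eps/4$, which combined with $|\E[f]|\le \eps/n^{10}$ proves the $\eps$-fooling claim.

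\textbf{Main obstacle.} The technical heart is the moment concentration in (E1): the restricted total variance $\sum_i \var[f_i|_{T|x}]$ must concentrate around $\alpha \sum_i \var[f_i]$ when both $T$ and $x$ are pseudorandom and $\alpha$ is exponentially small in $b$ (much smaller than the $p$ appearing in Theorem~\ref{thm:pseudorandom-restriction-XOR-short}). The template of Theorem~\ref{thm:main high var} applies, but one must verify that the small-bias error terms (of order $\delta_T$ times powers of $\sum_i\var[f_i]$) remain dominated despite $\sum_i\var[f_i]$ being exponentially large in $b$; choosing $\delta_T,\delta_x = (\eps/n)^{\omega(1)}$ polynomially in $\eps/n$, as above, suffices.
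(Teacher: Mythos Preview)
Your overall strategy---apply an aggressive restriction keeping each variable alive with probability $\alpha\approx 2^{-\Theta(b)}$, argue that (E1) the restricted function still has tiny bias and (E2) factors become low-degree, then finish with Viola's generator---is exactly the paper's. However, both (E1) and (E2) as you argue them have genuine gaps.

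\textbf{(E1).} The moment machinery of Theorem~\ref{thm:main high var} does not transfer to this regime. In that theorem the \emph{assigned} set $T$ has marginal $p\le 1/\poly\log(n)$, so Theorem~\ref{thm:CHRTa} gives $L_1(\tilde{f_i})$ small for most $i$, and this is what allows Claim~5.9 to pass from uniform $z$ to small-biased $x$. In your setting $T$ has marginal $1-\alpha\approx 1$; there is no reason for $L_1(\tilde{f_i})$ to be small (the only bound is $L_1(\tilde{f_i})\le L_1(f_i)\le 2^{b/2}$), so $L_1\bigl((\sum_i\error_i)^{\ell}\bigr)$ can be $2^{\Omega(b\ell)}$, forcing $\delta_x\le 2^{-\Omega(b\ell)}$ with $\ell=\Theta(\log(1/\eps))$. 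That is seed length $\Omega(b\log(1/\eps))$, which exceeds the claimed $O(t+\log(n/\eps))$ when, e.g., $b\approx\log n$ and $\eps=1/\poly(n)$. (Similarly, Theorem~\ref{thm:HC} here gives only $\E_T[\var[\tilde{f_i}]]\le(1-\alpha)\var[f_i]$, which is vacuous.) The paper sidesteps variance entirely: for each non-constant $f_i$ it fixes one sensitive pair $(\alpha^{(i)},\beta^{(i)})$ and lets $\cE_i$ be the event that this pair survives. Each $\cE_i$ depends on at most $b$ coordinates of $(T,x)$, so almost-$k$-wise independence of $(T,x)$ with $k=O(\log(n/\eps)/b)$ and bias $(\eps/n)^{O(1)}$ already suffices for a $k$-th moment bound showing $\sum_i \one_{\cE_i}\ge 4^b$ with probability $1-(\eps/n)^{\Omega(1)}$ (this is Lemma~\ref{lemma:aggressive-real}).

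\textbf{(E2).} With constant $c,d$ your union bound is $2^{(8-c(d+1))b}\cdot b^{d+1}=2^{-\Omega(b)}$, which at the lower end $b=C\log\log(n/\eps)$ is only $(\log(n/\eps))^{-O(1)}$, not $\eps$. Pushing $c$ larger kills (E1) (since $\alpha\,\Tvar$ drops below $\log(1/\eps)$), and pushing $d$ larger blows up Viola's seed. The paper instead allows some $f_i$'s to retain more than $16$ alive variables but shows that the \emph{total} number of such ``bad'' variables is at most $C\log(n/\eps)$; these are absorbed into the junta, so the final junta has $t+C\log(n/\eps)$ variables (hence the error parameter $\tfrac{\eps}{4}(\eps/n)^C 2^{-t}$ in Step~4 of Algorithm~1).
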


\begin{algorithm}[H]
\caption{The Pseudorandom Generator $\GMany(b,n,t,\eps)$} 
\begin{algorithmic}[1]
\Require{A block-length $b$, the output length $n$, a junta-size $t$, an error parameter $\eps\in (0,1)$}
\State Set $x:= 1^n$
\State Pick $T\subseteq [n]$ using a $(\eps/n)^{10C}$-biased distribution with marginals $2^{-b}$.
\State Assign coordinates of $x$ in $[n]\setminus T$ using a $(\eps/n)^{10C}$-biased distribution.
\State Assign coordinates of $x$ in $T$ using Viola's generator with error $\frac{\eps}{4} \cdot (\frac{\eps}{n})^{C} \cdot 2^{-t}$ and degree~$16$.
\State \Return $x$.
\end{algorithmic}\label{test:1}
\end{algorithm}

\begin{lemma}\label{lemma:aggressive-real}
Let $C>0$ be a sufficiently large constant.
Let $C\cdot \log\log(n/\eps) \le b \le \log(n)$ be some integer.
Let $f_1, \ldots, f_m$ be non-constant Boolean functions that depend on disjoint sets of at most $b$ variables each.
Assume $m \ge 16^b$.
Suppose $T$ is $(\eps/n)^{10C}$-biased distribution with marginals $2^{-b}$.
Suppose $x$ is sampled from a $(\eps/n)^{10C}$-biased distribution.
Then, with  probability at least $1-(\eps/n)^{C/4}$, at least $4^b$ of the functions $(f_i)_{T|x}$ will be non-constant. 
\end{lemma}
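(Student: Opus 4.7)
Define the indicator $Y_i := \mathbf{1}[(f_i)_{T|x}\text{ is non-constant}]$ and $N := \sum_{i=1}^{m}Y_i$; the goal is to show $\Pr_{T,x}[N < 4^b] \le (\eps/n)^{C/4}$. My plan is first to lower bound $\mu_\mathrm{U} := \mathbb{E}_\mathrm{U}[N]$ under the \emph{ideal} product distribution $\mathrm{U}$ (each coordinate of $T$ included independently with probability $p := 2^{-b}$, and $x$ uniform on $\pmone^n$). Conditioning on $T$ and applying the law of total variance to $z := \mathrm{Sel}_T(y,x)$ uniform on $\pmone^n$, and then averaging over $T$,
\[
\mathbb{E}_\mathrm{U}\bigl[\mathrm{Var}_y[(f_i)_{T|x}(y)]\bigr] \;=\; \sum_{\emptyset \ne S \subseteq V_i} \bigl(1-(1-p)^{|S|}\bigr)\,\hat{f_i}(S)^2 \;\ge\; p\cdot \mathrm{Var}[f_i].
\]
Since $f_i$ is non-constant and $\pmone$-valued on at most $b$ variables, $\mathrm{Var}[f_i] \ge 2^{1-b}$; since $\mathrm{Var}_y \in [0,1]$ and vanishes precisely when $(f_i)_{T|x}$ is constant, $\mathrm{Var}_y \le Y_i$ pointwise. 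Combining, $\mathbb{E}_\mathrm{U}[Y_i] \ge 2^{1-2b}$, and the hypothesis $m \ge 16^b$ gives $\mu_\mathrm{U} \ge 2\cdot 4^b$.

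Because the blocks $V_i$ are pairwise disjoint and each $Y_i$ depends only on $(T\cap V_i, x|_{V_i})$, under $\mathrm{U}$ the centered variables $X_i := Y_i - \mathbb{E}_\mathrm{U}[Y_i]$ are mutually independent, zero-mean, and bounded in $[-1,1]$, with $V := \sum_i \mathrm{Var}[X_i] \le \mu_\mathrm{U}$. Hence Lemma~\ref{lemma:tail_bounds} gives $\mathbb{E}_\mathrm{U}\bigl[(\sum_i X_i)^\ell\bigr] \le (\ell\mu_\mathrm{U})^{\ell/2}$ for every even $\ell$ with $V \ge \ell$. Each $\ell$-fold product $\prod_j X_{i_j}$ is bounded by $1$ and depends on at most $b\ell$ coordinates of $T$ and $b\ell$ coordinates of $x$; Claim~\ref{claim:inclusion-exclusion} for $T$ combined with the small-bias property of $x$ (both of bias $\delta := (\eps/n)^{10C}$) bounds the per-tuple transfer error by $2^{O(b\ell)}\delta$, and summing over the $m^\ell \le 2^{O(b\ell)}$ tuples,
\[
\bigl|\mathbb{E}_\mathrm{bi}[(\textstyle\sum_i X_i)^\ell] - \mathbb{E}_\mathrm{U}[(\sum_i X_i)^\ell]\bigr| \;\le\; 2^{O(b\ell)}\cdot \delta.
\]

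Markov's inequality at level $\mu_\mathrm{U}/2 \ge 4^b$ then yields
\[
\Pr_\mathrm{bi}[N < 4^b] \;\le\; \Bigl(\tfrac{4\ell}{\mu_\mathrm{U}}\Bigr)^{\ell/2} + \frac{2^{O(b\ell)}\delta}{(\mu_\mathrm{U}/2)^\ell}.
\]
Choosing $\ell := C\log(n/\eps)/b$ (rounded down to an even integer), the hypothesis $b \ge C\log\log(n/\eps)$ forces $\mu_\mathrm{U} \ge 2(\log(n/\eps))^{2C}$, making the first summand at most $(\eps/n)^{\Omega(C)}$; meanwhile $(\mu_\mathrm{U}/2)^\ell \ge 2^{2b\ell}$ and $b \le \log n$ give $2^{O(b\ell)}/2^{2b\ell} \le (n/\eps)^{O(C)}$, so the second summand is at most $(n/\eps)^{O(C)}\cdot(\eps/n)^{10C} \le (\eps/n)^{C}$ provided the implicit constant in $O(b\ell)$ stays below $10C$. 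For a large enough choice of $C$, both are $\le \tfrac{1}{2}(\eps/n)^{C/4}$.

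The main obstacle is the calibration of $\ell$: a larger $\ell$ sharpens the moment bound but amplifies the transfer error $2^{O(b\ell)}\delta$. The choice $\ell = \Theta(C\log(n/\eps)/b)$ succeeds precisely in the regime $C\log\log(n/\eps) \le b \le \log n$: the lower bound on $b$ makes $\mu_\mathrm{U}$ super-polylogarithmic in $n/\eps$, so a moderate $\ell$ already produces strong concentration; the upper bound on $b$ keeps $2^{O(b\ell)}$ only polynomial in $n/\eps$, which the ultra-small $\delta = (\eps/n)^{10C}$ comfortably dominates.
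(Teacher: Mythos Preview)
Your argument is correct and mirrors the paper's proof: an $\ell$-th moment bound on $\sum_i X_i$ under the ideal product distribution, a coordinate-wise transfer to the pseudorandom $(T,x)$ using that each $\ell$-fold product depends on $O(b\ell)$ coordinates, and Markov with $\ell = \Theta(\log(n/\eps)/b)$. The only substantive difference is the choice of indicator: the paper fixes for each $f_i$ a single sensitive edge $(\alpha^{(i)},\beta^{(i)})$ of the hypercube and tracks the event that this edge survives the restriction (a concrete sub-event of non-constancy whose probability is directly seen to be at least $2^{1-2b}$), whereas you bound $\E_U[Y_i]$ analytically via the Fourier identity $\E_U[\Var_y[(f_i)_{T|x}]] = \sum_{\emptyset\ne S}(1-(1-p)^{|S|})\hat{f_i}(S)^2 \ge p\,\Var[f_i]$. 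Both routes give the same $2^{1-2b}$ lower bound, and from there the moment computation, transfer, and Markov step are identical in structure and parameters.

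One omission worth fixing: your transfer estimate relies on $m^\ell \le 2^{O(b\ell)}$, i.e.\ $m = 2^{O(b)}$, but the hypothesis gives only $m \ge 16^b$. The paper handles this at the outset by taking $m = 16^b$ without loss of generality---discarding surplus $f_i$'s can only decrease the number of non-constant survivors---and you should state this reduction explicitly. Also, your parenthetical ``for every even $\ell$ with $V \ge \ell$'' is unnecessary: since $\ell \le \mu_U$ always holds here, Lemma~\ref{lemma:tail_bounds} gives $\max\{\ell^\ell,(\ell V)^{\ell/2}\}\le (\ell\mu_U)^{\ell/2}$ regardless of whether $V\ge\ell$.
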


\begin{proof}
Without loss of generality $m = 16^b$.
Let $k = C \log(n/\eps)/b$, and note that $k \le 2^b$ since $b\ge C \cdot \log \log(n/\eps)$ for a sufficiently large constant $C>0$.

Let $B_1, \ldots, B_m\subseteq [n]$ be the disjoint sets of variables on which $f_1, \ldots, f_m$ depend respectively.
For any function $f_i: \pmone^{B_i} \to \pmone$, there exists a sensitive pair of inputs $(\alpha^{(i)},\beta^{(i)}) \in \pmone^{B_i}$ such that $\alpha^{(i)}$ and $\beta^{(i)}$ differ in exactly one coordinate $j_i$ and such that $f_i(\alpha^{(i)}) \neq f_i(\beta^{(i)})$.
We say that the sensitive pair ``survives'' the random restriction defined by $(T,x)$ if both $\alpha^{(i)}$ and $\beta^{(i)}$ are consistent with the partial assignment defined by the restriction (i.e., if they agree with $x$ on $B_i \setminus T$).
For each function $f_1,\ldots, f_m$ fix one sensitive pair $(\alpha^{(1)}, \beta^{(1)}), \ldots, (\alpha^{(m)}, \beta^{(m)})$ and denote by $\cE_1, \ldots, \cE_m$ the events that these sensitive pairs survive.
Next, we claim that $\cE_1, \ldots, \cE_m$ are almost $k$-wise independent.
We compare them to the events $\cE'_1, \ldots, \cE'_m$ that indicate whether the sensitive pairs survive under a truly random restriction sampled from $\mathcal{R}_{2^{-b}}$.
Denote by $p_i = \Pr(\cE'_i)$.
Observe that $p_i \ge 2^{1-2b}$ since in order for the pair to survive it is enough that the sensitive coordinate remains alive (happens with probability $2^{-b}$) and that the partial assignment on the remaining coordinates agrees with $\alpha^{(i)}$ (happens with probability at least $2^{1-b}$). 
Then,
	$$
	\E\left[\Big(\sum_{i=1}^{m}{(\one_{\cE_i}-p_i)}\Big)^k\right] \le 
		\E\left[\Big(\sum_{i=1}^{m}(\one_{\cE'_i}-p_i)\Big)^k\right] + (2m)^k \cdot \max_{K\subseteq [m]: |K|\le k} \left|\E[\prod_{i\in K} \one_{\cE_i}] - \E[\prod_{i\in K} \one_{\cE'_i}]\right|.$$ 
	We upper bound the first and second summands separately. 
	By Lemma~\ref{lemma:tail_bounds} and Lemma~\ref{lemma:var}, the first summand is upper bounded by $\max\{k^k, (Vk)^{k/2}\}$ where 
	$$
	V := \sum_{i=1}^{m}{p_i}.
	$$ 
	Since $V \ge m\cdot 2^{1-2b} = 2\cdot 4^b  \ge k$, the first summand is upper bounded by $(Vk)^{k/2}$.
		
	Next, we upper bound the second summand. By Vazirani's XOR lemma, since $x$ is $(\eps/n)^{10C}$-biased, we have that the marginal distribution of
	 any set of at most $k\cdot b$ bits in $x$ is $(\eps/n)^{10C} \cdot 2^{kb/2}$-close to uniform in statistical distance.
	Since $T$ is $(\eps/n)^{10C}$-biased with marginals $2^{-b}$, using Claim~\ref{claim:inclusion-exclusion} we have that the marginal distribution on any set of at most $k$ coordinates in $T$ is $(\eps/n)^{10C} \cdot 4^{k}$-close in statistical distance to the distribution sampled according to $\mathcal{R}_{2^{-b}}$.
	Thus, $|\E[\prod_{i\in K} {\one_{\cE_i}}] - \E[\prod_{i\in K} {\one_{\cE'_i}}]| \le (2^{kb/2} +4^{k}) \cdot (\eps/n)^{10C} \le 2^{kb/2+1} \cdot (\eps/n)^{10C}$ and we get 
	$(2m)^{k} \cdot 2^{kb/2+1} \cdot (\eps/n)^{10C}\le 1$.
	
	Combining the bounds on both summands we get
	$$
	\E\left[\Big(\sum_{i=1}^{m}{(\one_{\cE_i}-p_i)}\Big)^k\right] 
	\;\le\; 
	(Vk)^{k/2} + 1 
	\;\le\; 
	2\cdot (Vk)^{k/2}.
	$$
	Using $V = \sum_{i=1}^{m} p_i$ we get 
	$$\Pr\left[\sum_{i=1}^{m}\one_{\cE_i} \le V/2 \right] \le 
	 \Pr\left[\Big(\sum_{i=1}^{m}{(\one_{\cE_i}-p_i)}\Big)^k  \ge (V/2)^k \right]  \le 2(Vk)^{k/2} \cdot (V/2)^{-k}$$
Using $V \ge 2\cdot 4^b$ and $k\le 2^b$ we get $\Pr\left[\sum_{i=1}^{m}\one_{\cE_i} \le V/2 \right] \le 2(4k/V)^{k/2} \le 2\cdot(2/2^b)^{k/2}\le  (\eps/n)^{C/4}$.
In the complement event, at least $V/2 \ge 4^b$ of the functions $(f_1)_{T|x}, \ldots, (f_m)_{T|x}$ are non-constant.
\end{proof}

\begin{lemma}\label{lemma:low-deg}
Let $f: \F_2^n \to \pmone$.
Suppose $f(x) = h(x) \cdot (-1)^{g(x)}$ where $h$ is a $k$-junta 
and $g$ is a polynomial of degree-$d$ over $\F_2$.
	If $\D$ fools degree-$d$ polynomials over $\F_2$ with error $\eps$, then $\D$ fools $f$ with error $\eps \cdot 2^{k/2}.$
\end{lemma}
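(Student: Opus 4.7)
The plan is to expand the junta $h$ in its Fourier basis, which turns $f$ into a small signed combination of exponentials of degree-$d$ polynomials, each of which is fooled by $\D$ by hypothesis.

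First, let $J \subseteq [n]$ be the set of at most $k$ variables on which $h$ depends, and write
\[
h(x) \;=\; \sum_{S \subseteq J} \hat{h}(S)\,\chi_S(x),
\]
where $\chi_S(x) = (-1)^{\sum_{i \in S} x_i}$. Each character $\chi_S$ is itself of the form $(-1)^{\ell_S(x)}$ for an $\F_2$-linear (hence degree-$\le d$) polynomial $\ell_S$. Substituting into the assumed form of $f$, we obtain
\[
f(x) \;=\; \sum_{S \subseteq J} \hat{h}(S)\,(-1)^{g(x) + \ell_S(x)} \;=\; \sum_{S \subseteq J} \hat{h}(S)\,(-1)^{p_S(x)},
\]
where $p_S := g + \ell_S$ is a degree-$d$ polynomial over $\F_2$ (using $d \ge 1$; the $d=0$ case is trivial since then $f$ is a $k$-junta and the claim follows from the $d=0$ fooling statement applied to each character).

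Next I would invoke the hypothesis on $\D$ pointwise: for each $S \subseteq J$,
\[
\bigl|\E_{x \sim \D}[(-1)^{p_S(x)}] - \E_{x \sim U}[(-1)^{p_S(x)}]\bigr| \;\le\; \eps.
\]
Applying the triangle inequality to the Fourier expansion then yields
\[
\bigl|\E_{\D}[f] - \E_U[f]\bigr| \;\le\; \eps \cdot \sum_{S \subseteq J} |\hat{h}(S)| \;=\; \eps \cdot L_1(h).
\]

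The last step bounds $L_1(h)$. Since $h$ takes values in $\pmone$, Parseval gives $\sum_{S \subseteq J} \hat{h}(S)^2 = \E[h^2] = 1$, and Cauchy--Schwarz over the $2^k$ subsets of $J$ gives $L_1(h) \le \sqrt{2^k} \cdot \sqrt{\sum_S \hat{h}(S)^2} = 2^{k/2}$. Combining yields the claimed bound $\eps \cdot 2^{k/2}$. There is no real obstacle here; the only mild subtlety is making sure that adding an $\F_2$-linear form to $g$ preserves the degree bound, which it does for any $d \ge 1$.
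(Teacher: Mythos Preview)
Your proof is correct and follows exactly the same approach as the paper: Fourier-expand the junta $h$, absorb each character into $g$ to get a degree-$d$ polynomial, apply the fooling hypothesis termwise, and bound $L_1(h)\le 2^{k/2}$. You even spell out the Parseval/Cauchy--Schwarz step that the paper leaves implicit. (One tiny nit: your aside on $d=0$ is not quite right---characters are degree~$1$, not~$0$---but this edge case is irrelevant to the lemma's use and the paper ignores it too.)
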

\begin{proof}
Let $J$ be the set of variables on which $h$ depends.
Using the Fourier transform of $h$:
$h(x) = \sum_{S \subseteq J} \hat{h}(S) \cdot (-1)^{\sum_{i\in S} x_i}$
we write $f$ as
$f(x) = \sum_{S \subseteq J} \hat{h}(S) \cdot (-1)^{\sum_{i\in S} x_i + g(x)}$.
Note that $\sum_{i\in S} x_i + g(x)$ is a polynomial of degree-$d$ over $\F_2$ as well, thus we get
\begin{align*} \left|\E_{x\sim U}[f(x)]-\E_{x\sim \D}[f(x)]\right| 
&\le \sum_{S}|\hat{h}(S)| \cdot 
\Big|\E_{x\sim U}[(-1)^{\sum_{i\in S} x_i + g(x)}] - \E_{x\sim \D}[(-1)^{\sum_{i\in S} x_i + g(x)}]\Big| \\
&\le L_1(h) \cdot \eps \le 2^{k/2} \cdot \eps\;.\qedhere\end{align*}
\end{proof}

\begin{proof}[Proof of Lemma~\ref{lemma:Gb}]
First note that $f$ has very small expectation under the uniform distribution
$$
\left|\E_{z\sim U}\Big[f_0(z) \cdot \prod_{j=1}^{m} f_j(x)\Big]
\right| \le (1-2^{-b})^{16^{b}} \ll \frac{\eps}{4}.
$$
using the assumption $b\ge C \log \log(n/\eps)$.
Thus, we need to maintain  low-expectancy under the pseudorandom assignment.
By Lemma~\ref{lemma:aggressive-real}, with probability at least $1-(\eps/n)^{C/4}\ge 1-\frac{\eps}{100}$ after the aggressive random restriction at least $4^{b}$ of the functions $f_1, \ldots, f_m$ remain non-constant. Since $b\ge C \log\log(n/\eps)$ we maintained the low-expectancy under aggressive random restrictions. That is, whenever $4^{b}$ of the functions $f_1, \ldots, f_m$ remain non-constant under restriction, the expected value of the restricted function under the uniform distribution is at most $(1-2^{-b})^{4^b} \ll \eps/4$ in absolute value.

Furthermore, we wish to show that with high probability, except for a set of at most $C\log (n/\eps)$ ``bad variables'' all functions have block-length at most $16$.
Recall that there are at most $2\cdot 16^{2b}$ functions.
The probability that any particular $k$ variables survive is at most $2^{-b k} + (\eps/n)^{10C}$.
Pick $k = C \log(n/\eps)/b \le C \log(n/\eps)$.
The probability that at least $k$ variables in at most $\ell$ functions survive is 
$$
\binom{2\cdot 16^{2b}}{\ell} \cdot \binom{\ell \cdot b}{k} \cdot (2^{-b k} + (\eps/n)^{10C})
\le 2 \cdot 2^{\ell + 9b\ell - bk} \le 2^{10b\ell-bk}
$$
If $\ell \le k/16$, then this probability is at most $2^{-6bk/16} = (\eps/n)^{6C/16} \ll \frac{\eps}{100}$.
This means that, with high probability, there are less than $k$ variables from all functions with more than $16$ effective variables remaining.
Otherwise, there would have been $\ell \le k/16$ functions accountable to a total number of more than $k$ variables that remained alive and effective, under the restriction.

Overall, with probability at least $1-\eps/50$ we are left with the XOR of a small-junta, on at most $t + C\log(n/\eps)$ variables, and an XOR of at least $4^b$ non-constant functions on at most $16$ variables (i.e., a degree $16$ polynomial). Moreover, the restricted function has expected value at most $\eps/4$ in absolute value under the uniform distribution.
Using Claim~\ref{lemma:low-deg} we get that Viola's~\cite{Viola08} or Lovett's~\cite{Lovett08} PRG for low-degree polynomials $\eps/4$-fools the remaining function. Combining all estimates we get that the expected value of the restricted function under our distribution is at most $3\eps/4$ in absolute value which completes the proof.
\end{proof}

\subsection{A thought experiment}\label{sec:4.2}
We are ready to describe the pseudo-random restriction process in full detail.
We start by describing a process that iteratively ``looks'' at the restricted functions in order to decide which pseudorandom restriction to apply next: the one described in Lemma~\ref{claim:assigning-0.9999} or the one from Lemma~\ref{lemma:aggressive-real}.
This ultimately defines a decision tree of random restrictions.
We then show in Section~\ref{sec:4.3} how to transform the adaptive process into a non-adaptive pseudorandom generator that (by definition) does not depend on the function it tries to fool.
Namely, we would generate a pseudorandom string that  fools the function no matter what path was taken in the decision tree.

We start with $m \le n$ blocks of length $b$.
We assume that $m \le 16^b$ (if not set $b= \log_{16}(m)$).
\begin{algorithm}[H]
\caption{an ``adaptive pseudorandom generator''} 
\begin{algorithmic}[1]
\For{$i=0,1,\ldots$} 
\State Let $b_i = b/2^i$.
\If{$b_i \le C\log \log(n/\eps)$} 
apply CHRT's PRG on the remaining coordinates, and Halt!
\EndIf
\If{more than $16^{b_i}$ of the restricted functions are non-constant and depend on at most $b_i$ variables} 
apply $\GMany(b_i,10\log(n/\eps),n,\eps/2)$ from Lemma~\ref{lemma:Gb} on the remaining variables, and Halt!
\Else{ apply the pseudorandom restriction from Lemma~\ref{claim:assigning-0.9999} on the remaining variables.}
\EndIf
\EndFor
\end{algorithmic}\label{alg:adaptive}
\end{algorithm}

Next, we show that the process yields a pseudorandom string fooling $f = \prod_{i=1}^m{f_i(x)}$.
First, note that the process either stops at Step 3 or at Step 4. In both cases we assign all the variables according to some pseudorandom generator, hence all the variables will be assigned by the end of the process.

For $i=0,1, \ldots, $.
Let $T_i$ be the set of coordinates that remain alive at the beginning of the $i$-th iteration.
Denote by $f_j^{(i)}$ the $j$-th function under the restriction at the beginning of the $i$-th iteration. 
Define $\V[f_j^{(i)}]$ to be the set of variables that affect the output of $f_j^{(i)}$.
For example if $f_j^{(i)}$ is a constant function, then $\V[f_j^{(i)}] = \emptyset$.

Let $\Good_i = \{j:  1 \le |\V[f_j^{(i)}]| \le b_i\}$ be the set of functions that depend on some but not more than $b_i$ variables, $\Bad_i = \{j:  |\V[f_j^{(i)}]| > b_i\}$ be the set of functions that depend on more than $b_i$ variables and  $\VBad_{i} = \bigcup_{j\in \Bad_{i} }{\V[f_j^{(i)}]}$.

\begin{claim}\label{claim:VBad}
Let $b_i > C\log \log(n/\eps)$.
Suppose $|\VBad_{i}| \le 10 \log(n/\eps)$ and $|\Good_{i}| \le 16^{b_{i}}$.
Then, with probability at least $1-(\eps/n)$ 
we have $|\VBad_{(i+1)}| \le 10 \log(n/\eps)$.
\end{claim}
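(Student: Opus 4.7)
The plan is to upper bound $|\VBad_{i+1}|$ by decomposing it according to where each bad variable came from at step $i$: either a function that was already in $\Bad_i$ (type (a)) or one in $\Good_i$ (type (b)). Writing $X_j := |V[f_j^{(i)}]\cap J|$ for the number of relevant variables of $f_j^{(i)}$ that survive the restriction, the inclusion $V[f_j^{(i+1)}]\subseteq V[f_j^{(i)}]\cap J$ and the definition of $\Bad_{i+1}$ yield
\[
|\VBad_{i+1}| \;\le\; \underbrace{\sum_{j\in\Bad_i} X_j}_{\mathrm{(a)}} \;+\; \underbrace{\sum_{j\in\Good_i} X_j\cdot \one[X_j > b_{i+1}]}_{\mathrm{(b)}}.
\]
We will bound each of (a) and (b) by $5\log(n/\eps)$ except with probability $\eps/(2n)$, and conclude by a union bound.

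For (a) we use that $\sum_{j\in\Bad_i} X_j = |\VBad_i\cap J|$ since the blocks $V[f_j^{(i)}]$ are pairwise disjoint. Using $|\VBad_i|\le 10\log(n/\eps)$, the marginal $p=10^{-4}$ of $J$, and the $(\eps/n)^{\omega(1)}$-biasedness of $J$, a union bound over size-$5\log(n/\eps)$ subsets of $\VBad_i$ combined with the biasedness gives
\[
\Pr\big[|\VBad_i\cap J|\ge 5\log(n/\eps)\big] \;\le\; \binom{10\log(n/\eps)}{5\log(n/\eps)}\big(p^{5\log(n/\eps)}+(\eps/n)^{\omega(1)}\big) \;\le\; \eps/(2n).
\]

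For (b) we first establish a per-function tail bound. Since $|V[f_j^{(i)}]|\le b_i$ for $j\in\Good_i$, union-bounding over size-$b_i/2$ subsets and using biasedness gives $\Pr[X_j>b_{i+1}]\le 2^{b_i}(p^{b_i/2}+(\eps/n)^{\omega(1)}) \le 2^{-\gamma b_i}$ for some constant $\gamma>5$ (since $4p<2^{-10}$). Let $N:=|\{j\in\Good_i : X_j > b_{i+1}\}|$. Because the $V[f_j^{(i)}]$ are pairwise disjoint, the biasedness of $J$ makes the events $\{X_j>b_{i+1}\}_{j\in\Good_i}$ $\ell$-wise almost-independent for $\ell$ up to $\poly\log(n/\eps)$; consequently an $\ell$-th moment calculation in the spirit of \Cref{lemma:tail_bounds} yields a Chernoff-type tail for $N$. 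With $|\Good_i|\le 16^{b_i}$ we obtain $\E[N]\le 2^{-(\gamma-4)b_i}$, and
\[
\Pr\Big[N\ge \tfrac{5\log(n/\eps)}{b_i}\Big] \;\le\; \Big(\tfrac{e\,b_i\, 2^{-(\gamma-4)b_i}}{5\log(n/\eps)}\Big)^{5\log(n/\eps)/b_i} \;\le\; (n/\eps)^{-5(\gamma-4)+o(1)} \;\le\; \eps/(2n).
\]
On this event, contribution (b) is at most $b_i\cdot N \le 5\log(n/\eps)$, and combining with (a) via a union bound gives $|\VBad_{i+1}|\le 10\log(n/\eps)$ except with probability $\eps/n$.

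The main technical hurdle we anticipate is making the Chernoff-style concentration of $N$ in step (b) rigorous: one has to translate the $(\eps/n)^{\omega(1)}$-biasedness of $J$ into an approximate $\ell$-wise independence statement for the events $\{X_j>b_{i+1}\}_{j\in\Good_i}$ (each supported on a disjoint block of coordinates), at a value of $\ell$ large enough that the exponential-in-$b_i$ suppression per function overcomes the $16^{b_i}$ factor from $|\Good_i|$, yet small enough that the bias error $2^{O(b_i\ell)}\cdot(\eps/n)^{\omega(1)}$ remains negligible. Once this bookkeeping is set up—exactly the kind of argument used in the moment bounds earlier in the paper—the calculation above goes through cleanly for $C$ taken sufficiently large.
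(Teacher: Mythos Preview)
Your decomposition into (a) and (b) and your treatment of (a) match the paper exactly. For (b), however, the paper takes a simpler route than your Chernoff-on-$N$ approach, and your approach has a genuine gap in one regime.

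The paper splits (b) into two cases. When $b_i \le 2\log(n/\eps)$, it argues that if $5\log(n/\eps)$ new bad variables appear, there exist $k=\lceil 10\log(n/\eps)/b_i\rceil$ good functions and $5\log(n/\eps)$ of their variables that all survived; a direct union bound over these witnesses gives
\[
\binom{16^{b_i}}{k}\binom{kb_i}{5\log(n/\eps)}\cdot 2\cdot 0.0001^{5\log(n/\eps)} \;\le\; 32^{kb_i}\cdot 2\cdot 0.0001^{5\log(n/\eps)} \;\le\; \tfrac{1}{2}(\eps/n),
\]
using $kb_i \le 12\log(n/\eps)$. When $b_i > 2\log(n/\eps)$, it bounds $\Pr[X_j \ge b_i/2]$ for each $j$ via Markov's inequality applied to $\binom{X_j}{\log(n/\eps)}$, obtaining
\[
\Pr[X_j\ge b_i/2] \;\le\; 2\cdot 0.0001^{\log(n/\eps)}\cdot\frac{\binom{b_i}{\log(n/\eps)}}{\binom{b_i/2}{\log(n/\eps)}} \;\le\; \tfrac{1}{2}\eps/n^2,
\]
and then union-bounds over the at most $n$ functions.

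Your per-function estimate $\Pr[X_j > b_{i+1}] \le 2^{b_i}\big(p^{b_i/2}+(\eps/n)^{\omega(1)}\big) \le 2^{-\gamma b_i}$ breaks down when $b_i$ is large: the bias guarantee on $J$ is only $(\eps/n)^{c}$ for some growing but finite $c$, so once $b_i$ exceeds roughly $\tfrac{c}{\gamma+1}\log(n/\eps)$ the term $2^{b_i}(\eps/n)^{c}$ swamps $2^{-\gamma b_i}$. This is exactly why the paper switches to the Markov-on-$\binom{X_j}{\log(n/\eps)}$ trick for large $b_i$: it only ever asks that a \emph{fixed} $\log(n/\eps)$-sized subset lie in $J$, so the bias error stays negligible regardless of how large $b_i$ is, and the combinatorial ratio $\binom{b_i}{\log(n/\eps)}/\binom{b_i/2}{\log(n/\eps)} \le (2e)^{\log(n/\eps)}$ stays tame. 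Your uniform approach cannot cover this regime without importing that idea.

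Regarding your acknowledged hurdle: deriving Chernoff-type tails for $N$ from $(\eps/n)^{\omega(1)}$-biasedness via moments is feasible (e.g.\ via factorial moments $\E\binom{N}{t}$, expanding each $\one[X_j>b_{i+1}]$ as a union over $(b_{i+1}{+}1)$-subsets of $V[f_j^{(i)}]$), but once you carry this out you will have essentially reproduced the paper's direct witness-enumeration bound. The Chernoff detour therefore adds bookkeeping without buying anything; the paper's union-bound-over-witnesses argument is both shorter and complete.
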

\begin{proof}
Under the assumptions we reach Step 5 in Algorithm~\ref{alg:adaptive}. We show that:
\begin{enumerate}
	\item With probability at least $1-\frac{1}{2}(\eps/n)$, at most $5\log(n/\eps)$ of the variables in $\VBad_{i}$ remain alive in Step 5.
	\item With probability at least $1-\frac{1}{2}(\eps/n)$, at most $5 \log(n/\eps)$ new variables are added to $\VBad_{(i+1)}$.
\end{enumerate}
Both claims rely on the fact that  any set of $k \le 5 \log(n/\eps)$ variables remain alive under the pseudorandom restriction in Lemma~\ref{claim:assigning-0.9999} with probability at most $2\cdot 0.0001^k$.

This  first item follows since the probability that more than $5\log(n/\eps)$ variables in $\VBad_i$ survive is at most 
$$\binom{10\log(n/\eps)}{5\log(n/\eps)} \cdot 2\cdot0.0001^{5\log(n/\eps)} \le \tfrac{1}{2} (\eps/n).$$

As for the second item, we start with the case where $b_i \le 2\log(n/\eps)$.
Assume that more than $5\log(n/\eps)$ new variables were added to $\VBad_{(i+1)}$. 
This implies that there is a set of $k = \lceil{5\log(n/\eps)/(b_i/2)\rceil}$ good functions in step $i$ that are accountable to at least $5\log(n/\eps)$ bad variables in step $i+1$.
The latter event happens with probability at most 
$$\binom{16^{b_i}}{k} \cdot \binom{k b_i}{5\log(n/\eps)} \cdot 2\cdot 0.0001^{5\log(n/\eps)} \le 32^{b_i k} \cdot 2 \cdot 0.0001^{5\log(n/\eps)} \le \tfrac{1}{2} (\eps/n)$$
(where we used $k b_i \le b_i+10\log(n/\eps) \le 12\log(n/\eps)$)
which finishes the case $b_i \le 2\log(n/\eps)$.

In the case where $b_i > 2\log(n/\eps)$, we show that with high probability all good functions remain good.
For each individual function, using Markov's inequality
\begin{align*}\Pr[ |\V[f^{(i+1)}_j]| \ge b_i/2] 
&\le \frac{\E\left[\binom{|\V[f^{(i+1)}_j]|}{ \log (n/\eps)}\right]}{\binom{b_i/2}{\log(n/\eps)}}
\le 2\cdot 0.0001^{\log n/\eps} \cdot \frac{\binom{b_i}{\log(n/\eps)}}{\binom{b_i/2}{\log(n/\eps)}}\\
&\le 2\cdot  0.0001^{\log n/\eps} \cdot \frac{(e \cdot b_i/ \log(n/\eps))^{\log(n/\eps)}}{((b_i/2)/ \log(n/\eps))^{\log(n/\eps)}} \\
&= 2\cdot(0.0001 \cdot e\cdot 2)^{\log(n/\eps)} \le \tfrac{1}{2} (\eps/n^2).\end{align*}
Thus, we can apply a union bound and show that all good functions remain good with  probability at least $1-\frac{1}{2}(\eps/n)$.
\end{proof}

Say the process finished. 
We shall assume that $|\VBad_i| \le 10 \log(n/\eps)$ for every iteration $i$ until the process stopped. By Claim~\ref{claim:VBad} this happens with probability at least $1-\log(b)\cdot(\eps/n) \ge 1-\eps/2$ by applying a union bound on the at most $\log(b)$ iterations.
We wish to show that we constructed a pseudorandom string fooling $f$. We consider two cases:
\begin{enumerate}
	\item 
	We stopped on Step~3 at some iteration $i$. If $i=0$ then $m \le 16^{b_0} \le \poly\log(n/\eps)$ and at most $\poly\log(n/\eps)$ variables remain that affect the functions $f_{j}^{(i)}$.
	 Otherwise, since $|\Good_{(i-1)}| \le 16^{2b_i} \le \poly\log(n/\eps)$ and $|\VBad_{(i-1)}| \le 10 \log(n/\eps)$, at most $\poly\log(n/\eps)$ variables remain that affect the functions $f_{j}^{(i-1)}$, and thus at most $\poly\log(n/\eps)$ variables remain that affect the functions $f_{j}^{(i)}$.
Thus, we can write $\prod_{j=1}^{m} f_j^{(i)}$ as a ROBP of width $2w$ and length $\poly\log(n/\eps)$, which is $(\eps/2)$-fooled by the pseudorandom generator from Theorem~\ref{thm:CHRT} using  $\tildeO(\log(n/\eps))$ random bits.
	 
\item We stopped at Step~4 at some iteration $i$.
Certainly, $|\Good_{i}|\le  |\Good_{(i-1)}|+|\Bad_{(i-1)}| \le 16^{2b_i} + 10 \log(n/\eps) \le 2\cdot 16^{2b_i}$.
Thus, we are in the case that was handled in Section~\ref{sec:4.1}, with $t \le 10 \log(n/\eps)$. Indeed, Lemma~\ref{lemma:Gb} guarantees that $\GMany(b_i,10\log(n/\eps), n, \eps/2)$ fools the remaining function with error at most $\eps/2$ using $O(\log(n/\eps))$ random bits.
\end{enumerate}

\subsection{The actual generator}\label{sec:4.3}
Algorithm~\ref{alg:adaptive} described the pseudo-random generator as if we knew whether or not the condition in step 3 holds. 
However, a pseudorandom generator cannot depend on the function it tries to fool.
To overcome this issue, we use the following general observation regarding pseudorandom generators.

\begin{claim}\label{claim:XOR-of-PRGs}
Say there are two families of functions $\mathcal{F}_1$ and $\mathcal{F}_2$ that are both closed under shifts (i.e., closed under XORing a constant string to the input). Say that $\D_1$ is an $\eps$-PRG for $\mathcal{F}_1$ and $\D_2$ is an $\eps$-PRG for $\mathcal{F}_2$  then $\D_1 \oplus \D_2$ is an $\eps$-PRG for $\mathcal{F}_1 \cup \mathcal{F}_2$.\end{claim}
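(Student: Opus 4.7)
The plan is to fix an arbitrary $f \in \mathcal{F}_1 \cup \mathcal{F}_2$ and, by symmetry, handle the case $f \in \mathcal{F}_1$ (the case $f \in \mathcal{F}_2$ will be identical after swapping the roles of $\D_1$ and $\D_2$, which one is allowed to do since $\D_1 \oplus \D_2$ and $\D_2 \oplus \D_1$ are the same distribution). The key idea is to condition on the sample from $\D_2$ and exploit the shift-invariance of $\mathcal{F}_1$, reducing everything to the fact that $\D_1$ fools $\mathcal{F}_1$.

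Concretely, I would first condition on $y \sim \D_2$ and consider the shifted function $f_y(x) \triangleq f(x \oplus y)$. By the assumption that $\mathcal{F}_1$ is closed under shifts, $f_y \in \mathcal{F}_1$ for every fixed $y$, so the assumption that $\D_1$ is an $\eps$-PRG for $\mathcal{F}_1$ yields
\[
\bigl| \E_{x \sim \D_1}[f(x \oplus y)] - \E_{x \sim U}[f(x \oplus y)] \bigr| \le \eps.
\]
Next, I would use the fact that the uniform distribution is invariant under shifts, so $\E_{x \sim U}[f(x \oplus y)] = \E_{x \sim U}[f(x)]$ for every fixed $y$. Taking expectation over $y \sim \D_2$ and applying the triangle inequality (or Jensen's inequality on the absolute value) gives
\[
\bigl| \E_{x \sim \D_1, y \sim \D_2}[f(x \oplus y)] - \E_{x \sim U}[f(x)] \bigr| \le \E_{y \sim \D_2} \bigl| \E_{x \sim \D_1}[f(x \oplus y)] - \E_{x \sim U}[f(x)] \bigr| \le \eps,
\]
which is exactly the statement that $\D_1 \oplus \D_2$ $\eps$-fools $f$.

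For $f \in \mathcal{F}_2$, I would run the same argument with the roles of $\D_1$ and $\D_2$ swapped: condition on $x \sim \D_1$ first, use closure of $\mathcal{F}_2$ under shifts to see that $y \mapsto f(x \oplus y)$ lies in $\mathcal{F}_2$, and conclude using that $\D_2$ is an $\eps$-PRG for $\mathcal{F}_2$. There is no real obstacle here; the only subtlety worth noting is that $\D_1$ and $\D_2$ should be sampled independently in $\D_1 \oplus \D_2$ (otherwise the conditioning step above is not valid), and that both closure-under-shifts assumptions are essential since we need to absorb the sample from the ``other'' distribution into the input of $f$ without leaving the relevant function class.
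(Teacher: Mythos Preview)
Your proposal is correct and essentially identical to the paper's proof: both fix $f\in\mathcal{F}_1$ by symmetry, condition on the sample from $\D_2$, use closure of $\mathcal{F}_1$ under shifts to apply the $\D_1$-fooling guarantee to the shifted function, and use shift-invariance of the uniform distribution together with the triangle inequality to conclude. The only difference is cosmetic ordering of the steps.
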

\begin{proof}
Let $f \in \mathcal{F}_1 \cup \mathcal{F}_2$, we show that $\D_1 \oplus \D_2$ fools $f$.
By symmetry assume $f\in \mathcal{F}_1$.
\begin{align*}
\abs{\E_{\substack{x_1\sim \D_1\\ x_2 \sim \D_2}}[f(x_1 \oplus x_2)] -\E_{\substack{z\sim U}}[f(z)]}
&=\abs{\E_{\substack{x_1\sim \D_1\\ x_2 \sim \D_2}}[f(x_1 \oplus x_2)] -\E_{\substack{x_1\sim U\\ x_2 \sim \D_2}}[f(x_1 \oplus x_2)]}\\ 
&\le  \E_{x_2 \sim \D_2} 
\abs{\E_{x_1 \sim \D_1}[f(x_1 \oplus x_2)] -\E_{x_1\sim U}[f(x_1 \oplus x_2)]}\\
&=  \E_{x_2 \sim \D_2} \abs{\E_{x_1 \sim \D_1}[f_{x_2}(x_1)] -\E_{x_1\sim U}[f_{x_2}(x_1)]}\end{align*}
where $f_{y}(x):= f(x\oplus y)$. Since $\mathcal{F}_1$ is closed under shifts, we have that $f_{x_2} \in \mathcal{F}_1$ thus $\D_1$ $\eps$-fools $f_{x_2}$ and we get
$
\E_{x_2 \sim \D_2} \abs{\E_{x_1 \sim \D_1}[f_{x_2}(x_1)] -\E_{x_1\sim U}[f_{x_2}(x_1)]} \le \eps\;.$
\end{proof}

The actual generator would proceed as follows.
\begin{algorithm}[H]
\caption{The Pseudorandom Generator $\GXOR(T, w, b,\eps)$}
\begin{algorithmic}[1]
\Require a set $T\subseteq [n]$ of the ``live'' coordinates, a width $w$,  an integer $b$, a parameter $\eps\in (0,1)$.
\If{$b \le C\log\log(n/\eps)$} \Return $\mathbf{CHRT}(n, n', 2w,  \eps)|_{T}$ for $n' = 2\cdot 16^{2b}\cdot b +10\log(n/\eps)$
\EndIf
\State Let $x := \GMany(b,t,n,\eps)|_{T}$ for $t = 10 \log(n/\eps)$.
\State Pick $T' \subseteq T$, $y\in \pmone^{T\setminus T'}$ according to Claim~\ref{claim:assigning-0.9999}
\State Let $z  := \GXOR(T', w, b/2,\eps/2)$.
\State \Return  $x \oplus \Sel_{T'}(z,y)$.
\end{algorithmic}
\end{algorithm}

\begin{claim}[Proof of Correctness]\label{claim:correctness}
Let $T \subseteq [n]$.
Suppose $f_1, \ldots, f_m$ are functions on disjoint sets of $T$.
Suppose each function depends on at most $b$ variables except for a total of at most $10 \log(n/\eps)$ variables, and the number of non-constant functions is at most $2 \cdot 16^{2b}$.
Then, $\GXOR(T,w, b,\eps)$ fools $f = \prod_{i=1}^{m} f_i$ with error $\eps$.
\end{claim}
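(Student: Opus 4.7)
The plan is to prove Claim~\ref{claim:correctness} by induction on $b$, mirroring the recursion of $\GXOR$. For the base case $b\le C\log\log(n/\eps)$, the good $f_i$'s contribute at most $2\cdot 16^{2b}\cdot b$ effective variables and the bad variables contribute at most $10\log(n/\eps)$, so $f$ depends on at most $n'=2\cdot 16^{2b}\cdot b+10\log(n/\eps)$ variables. Taking the product of the width-$w$ ROBPs computing each $f_i$, with one extra bit tracking the accumulated $\pm 1$ product, yields an unordered ROBP of length $\le n'$ and width $\le 2w$ on $[n]$ that computes $f$, and Theorem~\ref{thm:CHRT} gives that $\mathbf{CHRT}(n,n',2w,\eps)|_T$ is an $\eps$-PRG for $f$.

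For the inductive step ($b>C\log\log(n/\eps)$), let $m_g$ denote the number of non-constant $f_i$'s whose block has size at most $b$. Partition the admissible functions into $\mathcal{F}_A=\{f:m_g\ge 16^b\}$ and $\mathcal{F}_B=\{f:m_g<16^b\}$; both classes are closed under shifts, since XORing the input by any $a\in\pmone^T$ merely replaces each $f_i$ by $x\mapsto f_i(x\oplus a|_{B_i})$, which preserves the block partition, block lengths, bad-variable set, and $m_g$. By Claim~\ref{claim:XOR-of-PRGs}, it suffices to show that the $T$-projection of $x$ is an $\eps$-PRG for $\mathcal{F}_A$ and that the distribution of $\Sel_{T'}(z,y)$ is an $\eps$-PRG for $\mathcal{F}_B$. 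The $\mathcal{F}_A$ claim is immediate from Lemma~\ref{lemma:Gb}: setting $t=10\log(n/\eps)$ and taking the junta $f_0$ to be the product of all ``bad'' (block-length $>b$) factors, which together touch at most $t$ variables, every $f\in\mathcal{F}_A$ belongs to $\mathcal{F}_{b,n,t}$, so $\GMany(b,t,n,\eps)$ $\eps$-fools it.

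For $\mathcal{F}_B$, Claim~\ref{claim:assigning-0.9999} shows that assigning the coordinates of $T\setminus T'$ via $y$ perturbs the acceptance probability by at most $\eps/n$. Because $|\Good|=m_g<16^b$ and $|\VBad|\le 10\log(n/\eps)$, Claim~\ref{claim:VBad} applies and guarantees that, with probability at least $1-\eps/n$ over $(T',y)$, the restricted function $f_{T\setminus T'|y}$ has bad-variable budget $\le 10\log(n/\eps)$, every good restricted block has length $\le b/2$, and the number of non-constant restricted blocks is still at most $m_g+|\Bad|<2\cdot 16^b$, since restriction cannot create non-constant blocks. Hence the restricted function satisfies the hypothesis of $\GXOR$ at level $b/2$, and the inductive hypothesis yields that $\GXOR(T',w,b/2,\eps/2)$ fools it with error $\eps/2$. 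Summing the three error contributions (pseudorandom restriction of $y$, the bad-restriction event, and the recursive call) gives total error at most $\eps/n+2(\eps/n)+\eps/2\le\eps$.

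The main obstacle is verifying, under a single Case-B restriction, the three structural conditions required by the next level of recursion: halving of good block lengths, preservation of the bad-variable budget, and the non-constant count bound. The first two are exactly the content of Claim~\ref{claim:VBad} under the hypothesis $|\Good|\le 16^b$, which is precisely why the partition threshold $16^b$ is chosen; the third holds automatically because restriction can only turn non-constant functions into constant ones.
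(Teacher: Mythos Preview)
Your proposal is correct and follows essentially the same approach as the paper's own proof: induction on $b$, with the base case handled by Theorem~\ref{thm:CHRT}, the two-case split at threshold $16^b$, Lemma~\ref{lemma:Gb} for the many-functions case, Claims~\ref{claim:assigning-0.9999} and~\ref{claim:VBad} plus induction for the few-functions case, and Claim~\ref{claim:XOR-of-PRGs} to glue the two branches together. Your version is slightly more explicit than the paper's in verifying closure under shifts and in the accounting for the non-constant-block count, but the logic is the same.
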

\begin{proof}
We prove the claim by induction on $b$.
	If $b \le C \log \log(n/\eps)$ then Theorem~\ref{thm:CHRT} implies correctness.
	If $b > C \log \log(n/\eps)$ then we
	consider the following two cases:
	\begin{enumerate}
		\item If there are more than $16^b$ good functions, then $x = \GMany(b,t,n,\eps)|_{T}$ fools $\prod_{i=1}^{m}f_i$ with error $\eps$.
		\item Otherwise, there are at most $16^b$ good functions and we apply Step~3. According to Claim~\ref{claim:assigning-0.9999}, the average acceptance probability of $f_{T'|y}$ is $\eps/4$ close to that of $f$. 
	Furthermore, with  probability at least $1-\eps/4$ all functions $(f_1)_{T'|y}, \ldots, (f_m)_{T'|y}$ depend on at most $b/2$ variables except for at most $10 \log(n/\eps)$ variables (by Claim~\ref{claim:VBad}).
	In such a case, the number of non-constant functions among $(f_1)_{T'|y}, \ldots, (f_m)_{T'|y}$ is at most $16^{b} + 10\log(n/\eps) \le 2\cdot 16^{b}$.
	Using induction, $z = \GXOR(T',w,b/2,\eps/2)$ fools $f_{T'|y}$ with error $\eps/2$, and we get that $\Sel_{T'}(z,y)$ fools $f$ with error $\eps$.

	\end{enumerate}
	Since we have a pseudorandom generator fooling the function in each case,
	 Claim~\ref{claim:XOR-of-PRGs} shows that $x \oplus \Sel_{T'}(z,y)$ fools $f$ with error $\eps$.
	\end{proof}

\begin{claim}[Seed Length]\label{claim:seed-length}
	The amount of random bits used to calculate $\GXOR([n],w, b,\eps)$ is at most $O(\log(b) + \log\log(n/\eps))^{2w+2} \cdot \log(n/\eps)$.
\end{claim}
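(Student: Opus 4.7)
The plan is to unroll the recursive definition of $\GXOR$ and bound the seed length at each level, then sum over the depth of the recursion. Let $S(b,\eps)$ denote the seed length of $\GXOR(T, w, b, \eps)$ (which is independent of $T$, up to its effect through $n$). The recursion bottoms out when $b \le C \log \log(n/\eps)$, and otherwise invokes: (i) $\GMany(b, 10 \log(n/\eps), n, \eps)$, which by Lemma~\ref{lemma:Gb} costs $O(\log(n/\eps))$ random bits; (ii) the sampling of $(T', y)$ from Claim~\ref{claim:assigning-0.9999}, which costs $O\big(\log(b) + \log\log(n/\eps)\big)^{2w+1} \cdot \log(n/\eps)$ random bits; and (iii) the recursive call $\GXOR(T', w, b/2, \eps/2)$. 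Thus
\begin{equation*}
S(b,\eps) \;\le\; O\big(\log(b) + \log\log(n/\eps)\big)^{2w+1} \cdot \log(n/\eps) \;+\; S(b/2, \eps/2).
\end{equation*}

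Next, I will bound the base case. When $b \le C \log \log(n/\eps)$, we invoke $\mathbf{CHRT}(n, n', 2w, \eps)$ with $n' = 2 \cdot 16^{2b} \cdot b + 10 \log(n/\eps) \le \poly\log(n/\eps)$. By Theorem~\ref{thm:CHRT} this uses $O(\log(n')^{2w+1} \log\log(n') \log(n/\eps)) \le O(\log\log(n/\eps))^{2w+2} \cdot \log(n/\eps)$ random bits, which is within the claimed bound.

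Now I will unroll the recursion. The recursion has depth $d = \lceil \log(b) \rceil$ (after which $b$ drops below the threshold). At level $i$, the parameters are $b_i = b / 2^i$ and $\eps_i = \eps / 2^i$, so $\log(n/\eps_i) = \log(n/\eps) + i \le \log(n/\eps) + \log(b)$. Assuming $b \le n$ we get $\log(n/\eps_i) \le O(\log(n/\eps))$, and similarly $\log(b_i) + \log\log(n/\eps_i) \le O(\log(b) + \log\log(n/\eps))$. Therefore every level contributes at most $O(\log(b) + \log\log(n/\eps))^{2w+1} \cdot \log(n/\eps)$ random bits. Summing over the $\log(b)$ levels (plus the base case),
\begin{equation*}
S(b,\eps) \;\le\; O(\log(b)) \cdot O\big(\log(b) + \log\log(n/\eps)\big)^{2w+1} \cdot \log(n/\eps) \;\le\; O\big(\log(b) + \log\log(n/\eps)\big)^{2w+2} \cdot \log(n/\eps),
\end{equation*}
which yields the claimed bound.

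The step that requires the most care is verifying that the parameters at each level of the recursion, namely $b_i = b/2^i$ and $\eps_i = \eps/2^i$, do not blow up the per-level seed length; this is where one uses the fact that $\log(\eps/2^i) = \log(\eps) - i$ only increases $\log(n/\eps_i)$ additively by $\log(b) \le \log(n)$, and that $\log\log(n/\eps_i)$ is absorbed into the existing $\log(b) + \log\log(n/\eps)$ term. Beyond this bookkeeping, the argument is a routine geometric-series calculation.
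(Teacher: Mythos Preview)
Your proof is correct and follows essentially the same approach as the paper: unroll the recursion to depth at most $\log(b)$, observe that all error parameters remain at least $\eps/n$, bound the base-case $\mathbf{CHRT}$ call and each per-level cost (from $\GMany$ and Claim~\ref{claim:assigning-0.9999}), and sum. Your treatment is in fact slightly more careful than the paper's in tracking how $\eps_i = \eps/2^i$ affects the per-level bounds.
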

\begin{proof}
Unwrapping the recursive calls in the evaluation of $\GXOR([n],w, b,\eps)$ we see that there are at most $\log(b)$ recursive calls to the procedure and that the error parameters are at least $\eps/2^{\log(b)} \ge \eps/n$ in all of them.

We apply the generator from Theorem~\ref{thm:CHRT} only once during these recursive calls, on a ROBP of width-$w$ and length $\poly\log(n/\eps)$. Thus, the application of Theorem~\ref{thm:CHRT} uses at most  $O(\log\log(n/\eps)^{w+2} \log(n/\eps))$ random bits.

The partial assignment from Claim~\ref{claim:assigning-0.9999} uses  at most $O\big(\log(b) + \log\log(n/\eps)\big)^{2w+1} \cdot \log(n/\eps)$ each time we invoke it, and we invoke it at most $\log(b)$ times.

The generator $\GMany$ uses $O(\log (n/\eps))$ random bits each time we invoke it, and we invoke it at most $\log(b)$ times.
\end{proof}

Claims~\ref{claim:correctness} and~\ref{claim:seed-length} complete the proof Theorem~\ref{thm:GXOR} with $\GXOR([n], w, b, \eps)$ as the generator.

\subsection{Pseudorandom generator for read-once polynomials}
Next, we restate and prove Theorem~\ref{thm:read-once poly}.
\thmReadOncePoly*
\begin{proof}
We show that $\GXOR([n],2, \log(8n/\eps),\eps/8n)$ fools any read-once polynomial with error at most $\eps$. Its seed length is $O((\log \log(n/\eps))^6 \cdot \log(n/\eps))$.

	A read-once polynomial can be written as the XOR of AND functions on disjoint variables, i.e., as the XOR of width-$2$ ROBPs on disjoint variables.
	It remains to show that these ROBPs are short.
	Rather, we show that any PRG that $(\eps/8n)$-fools read-once polynomials of degree at most $b = \log(8n/\eps)$ also $\eps$-fools all read-once polynomials.
Let	$$f(x) = \sum_{i=1}^{m} \prod_{j\in B_i} x_j$$ be a read-once polynomial over $\F_2$, where $B_1, \ldots, B_m$ are disjoint subsets of $[n]$.
Without loss of generality let $B_1, \ldots, B_\ell$ be the blocks of length bigger than $b$. 
Let $$f'(x) = \sum_{i=\ell+1}^{m} \prod_{j\in B_i} x_j,$$ be the sum over monomials of degree at most $b$ of $f$.
Let $\D = \GXOR([n],2, \log(8n/\eps),\eps/8n)$. By triangle inequality
\begin{align}\nonumber\abs{\E_{x\sim \D}[f(x)] -\E_{x\sim U_n}[f(x)]} &\le \Pr_{x\sim \D}[f(x) \neq f'(x)] + \Pr_{x\sim U_n} [f(x) \neq f'(x)] + 
\abs{\E_{x\sim \D}[f'(x)] -
\E_{x\sim U_n}[f'(x)]}\\
&\le \sum_{i=1}^{\ell} \Pr_{x\sim \D}[\wedge_{j\in B_i}(x_j=1)] + \sum_{i=1}^{\ell} \Pr_{x\sim U_n}[\wedge_{j\in B_i}(x_j=1)] + \eps/8n\label{eq:read-once polys}
\end{align}
For $i\in \{1,\ldots, \ell\}$, since $|B_i|\ge b$, we have
$\Pr_{x\sim U_n}[\wedge_{j\in B_i}(x_j=1)] \le 2^{-b} \le \eps/8n$. 
As for the distribution $\D$, by monotonicity 
$$\Pr_{x\sim \D}[\wedge_{j\in B_i}(x_j=1)] \le
\Pr_{x\sim \D}[\wedge_{j\in B'_i}(x_j=1)]$$ where $B'_i$ is any arbitrary subset of exactly $b$ variables from $B_i$. Since $\D$ fools degree-$b$ read-once polynomials with error at most $\eps/8n$, and $\wedge_{j\in B'_i}(x_j=1)$ is such a polynomial, we get that $\Pr_{x\sim \D}[\wedge_{j\in B'_i}(x_j=1)]$  is at most $2^{-b} + \eps/8n \le \eps/4n$.
Plugging both bounds into Eq.~\eqref{eq:read-once polys} we get $\abs{\E_{x\sim \D}[f(x)] -\E_{x\sim U_n}[f(x)]}\le (\eps/4n)\cdot \ell + (\eps/8n)\cdot \ell + \eps/8n \le \eps$.
\end{proof}


\newcommand{\mE}{{\mathcal{E}}}
\newcommand{\BRRY}{{\mathbf{BRRY}}}
\newcommand{\Col}{{\mathsf{Col}}}
\newcommand{\FCol}{{\mathsf{FCol}}}
\newcommand{\rest}{|}

\section{Pseudorandom generators for width-3 ROBPs}\label{sec:PRG_3ROBPs}

In this section, we construct pseudorandom generators fooling width-3 ROBPs ({\sf 3ROBPs}, in short) with seed-length $\tilde{O}(\log n)$. For ordered width-3 ROBPs we can guarantee error $1/\poly\log(n)$ using seed-length $\tilde{O}(\log n)$:
\thmMainOrdered*

Note that in comparison, even for constant $\eps > 0$, the best previous generators had seed-length $O(\log^2 n)$ for ordered 3ROBPs. We also get similar improvements for unordered 3ROBPs but with worse dependence on the error $\eps$. 
\thmMainUnordered*

\subsection{Proof overview}

We heavily rely on the pseudorandom restriction from Theorem~\ref{thm:main_two_steps} that assigns $p = 1/\poly \log \log (n)$ of the variables while changing the acceptance probability by at most $1/\poly(n)$. As a first step we assign a constant fraction of the coordinates.

\paragraph{Assigning most of the coordinates.} The first step is rather simple: we apply iteratively $O(1/p)$ times the pseudorandom restriction from Theorem~\ref{thm:main_two_steps} to get the following analog result to Claim~\ref{claim:assigning-0.9999}.
The proof is the same as that of Claim~\ref{claim:assigning-0.9999} and is omitted.

\begin{claim}\label{claim:assigning}
Let $\delta>0$. For all constants $\alpha \in (0,1)$, there is a pseudorandom restriction 
$\rho = (T,y)$ 
 using  $\tilde{O}(\log(n/\delta))$ random bits, changing the acceptance probability of 3ROBPs by at most $\delta$.
Furthermore, $T$ is $(\eps/n)^{\omega(1)}$ biased with marginals $\alpha$ and
$y$ is $(\eps/n)^{\omega(1)}$ biased. 
\end{claim}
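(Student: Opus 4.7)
The plan is to follow almost verbatim the template of Claim~\ref{claim:assigning-0.9999}, replacing the inner pseudorandom restriction for XOR-of-short ROBPs with Theorem~\ref{thm:main_two_steps}, which targets width-$3$ ROBPs. Concretely, set $t = \lceil \log(1-\alpha)/\log(1-p) \rceil = O(1/p) = O(\log\log(n/\delta))^{6}$, and apply Theorem~\ref{thm:main_two_steps} recursively $t$ times with error parameter $\delta/t$ per iteration, using independent random bits in each iteration. Denoting $J_0 = [n]$, each iteration $i$ produces a set $T_i \subseteq J_{i-1}$ of newly-fixed coordinates (with marginals $p$, sampled from a $\delta_T$-biased distribution) and an assignment $y_i \in \pmone^{T_i}$ sampled from a $\delta_x$-biased distribution; set $T = [n] \setminus \bigcup_{i=1}^{t} T_i$ and let $y$ be the concatenation $y_1 \circ y_2 \circ \cdots \circ y_t$.

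The correctness step is a standard hybrid argument: since each iteration maintains the acceptance probability of any width-$3$ length-$n$ ROBP up to $\delta/t$, the composition over $t$ iterations incurs error at most $\delta$. The class of width-$3$ ROBPs is closed under restrictions (after removing unreachable vertices the width only shrinks), so Theorem~\ref{thm:main_two_steps} applies at every level of the recursion. The seed length is $t$ times the per-iteration cost, which is $O(\log\log(n/\delta))^{6} \cdot O(\log (n/\delta) \log\log(n/\delta)) = \tilde{O}(\log(n/\delta))$, as required.

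For the structural claims about $T$ and $y$: since each $T_i$ is $\delta_T$-biased with marginals $p$, Claim~\ref{claim:inclusion-exclusion} gives, for any set $S$ of size at most $k = \log(1/\delta_T)/10$,
\[
\Pr[S \subseteq T] = \prod_{i=1}^{t} \Pr[S \cap T_i = \emptyset \mid S \subseteq J_{i-1}] = \prod_{i=1}^{t} \left((1-p)^{|S|} \pm 2^{|S|}\delta_T\right) = (1-p)^{t|S|} \pm (\eps/n)^{\omega(1)},
\]
which, since $(1-p)^{t} = \alpha$, yields a $(\eps/n)^{\omega(1)}$-biased distribution with marginals $\alpha$; for larger $|S|$ the marginal is trivially $(\eps/n)^{\omega(1)}$ by monotonicity. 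Similarly, $y$ is the concatenation of $t$ independently sampled $\delta_x = (\eps/n)^{\omega(1)}$-biased strings, hence itself $(\eps/n)^{\omega(1)}$-biased for any conditioning on the supports $T_1,\ldots,T_t$.

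The main obstacle is essentially non-existent here because Theorem~\ref{thm:main_two_steps} is the heavy lifting and the argument is entirely parallel to Claim~\ref{claim:assigning-0.9999}; the only care needed is bookkeeping in the hybrid so that the composition of $t = \poly\log\log(n/\delta)$ mild restrictions does not cumulatively blow up the bias parameters, which is handled by choosing $\delta_T, \delta_x$ sufficiently small (polynomially small in $\eps/n$) inside Theorem~\ref{thm:main_two_steps} so that $t \cdot 2^{k}\delta_T$ remains $(\eps/n)^{\omega(1)}$ for the relevant range of $k$.
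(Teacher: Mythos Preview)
Your proposal is correct and takes essentially the same approach as the paper, which simply states that the proof is identical to that of Claim~\ref{claim:assigning-0.9999} (iterating the one-step pseudorandom restriction, here Theorem~\ref{thm:main_two_steps}, $O(1/p)$ times and combining via a hybrid argument). One small slip: you write $t = \lceil \log(1-\alpha)/\log(1-p)\rceil$ but later (correctly) use $(1-p)^t = \alpha$; the right formula is $t = \lceil \log \alpha / \log(1-p)\rceil$, matching the $t = \log(0.0001)/\log(1-p)$ in Claim~\ref{claim:assigning-0.9999}.
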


Let $B$ be a 3ROBP of length-$n$. First, we claim that after applying the pseudorandom restriction $\rho$ in Claim~\ref{claim:assigning}, with high probability (at least $1-\poly(\eps/n)$), $B\rest_{\rho}$ has a simpler structure in that between any two width-$2$ layers the subprogram has at most $O(\log(n/\eps))$ {\sf colliding layers}. Concretely, we use the following definitions. 

\begin{definition}
Given a ROBP $B$, we call a layer of edges {\sf colliding} if either the edges marked by $-1$ and the edges marked by $1$ collide.
\end{definition}
\begin{definition}
We call a ROBP $B$ a $(w, \ell,m)$-ROBP if $B$ can be written as $D_1 \circ \ldots \circ D_m$, with each $D_i$ being a width $w$ ROBP with the first and last layers having at most two vertices and each $D_i$ having at most $\ell$ colliding layers.
\end{definition}

We show that after applying the pseudorandom restriction $\rho$ in Claim~\ref{claim:assigning}, with high probability the restricting ROBP $B\rest_\rho$ is a $(3,O(\log(n/\eps)),m)$-ROBPs. Now, similar to Section~\ref{sec:assign_all}, we wish to iteratively apply Claim~\ref{claim:assigning}, making the ROBP simpler in each step. We will have one progress measures on the restricted ROBP: the maximal number of colliding layers in a subprogram (denoted $\ell$). We show that the number of colliding layers reduces by a constant-factor in each iteration. To do so, we prove a structural result on $(3,\ell,m)$-ROBPs, showing that such ROBPs can be well-approximated by $(3,\ell, C^\ell)$-ROBPs for some constant $C$. This allows us to not worry about the number of sub-programs and use the number of colliding layers as a progress measure. Applying the restriction and the structure result $O(\log\log n)$ times, we end up with a ROBP where $\ell = O(\log(1/\eps))$. We also show that ROBPs with few colliding layers are fooled by the INW generator. This follows from the results of \cite{BravermanRRY14}.

\subsection{Reducing the length of $(3,\ell,m)$-ROBPs}

Here, we show that $(3,\ell,m)$-ROBPs can be approximated by $(3,\ell,C^\ell)$-ROBPs for some constant $C$. 
A crucial point in the analysis is that we need the approximation to hold not just under the uniform distribution but also under the pseudo-random distribution.
Fortunately, we are able to do so by arguing that the \emph{error} function detecting when our approximation is wrong is itself computable by a conjunction of negations of width $3$-ROBPs with few colliding layers. 

\begin{lemma}[Main Structural Result]\label{lem:structcolliding}
For any $C \geq 1$ the following holds. Any $(3,\ell,m)$-ROBP $B$ can be written as $B' + E$ where $B'$ is a $(3,\ell,C^\ell)$-ROBP 
and either $E\equiv 0$ or for any $x$, $|E(x)| \leq F(x) = \wedge_{i=1}^{C^\ell} (\neg F_i(x))$ where $F_i$ are non-zero events that can be computed by $(3,\ell,1)$-ROBPs on disjoint variables. 
\end{lemma}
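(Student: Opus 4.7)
My plan is to prove the structural lemma by analyzing $B = D_1 \circ \cdots \circ D_m$ at the $2$-state boundary between consecutive subprograms, and then greedily selecting $C^\ell$ ``tracked'' subprograms. First, I would view each $D_i$ as inducing a random transition on the boundary: for input $x$, $D_i$ computes one of four $2$-state transitions, namely the identity, the swap, collapse-to-$0$, or collapse-to-$1$. The set of inputs causing $D_i$ to collapse can be expressed as a $(3,\ell,1)$-ROBP event derived from $D_i$ itself. The crucial memoryless property is the following: once a collapse occurs at subprogram $i$, the initial state is forgotten, so the remaining computation $D_{i+1} \circ \cdots \circ D_m$ outputs the same bit regardless of where the computation started.

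Next, I would select $C^\ell$ tracked indices $1 \le i_1 < i_2 < \cdots < i_{C^\ell}$ greedily: set $i_0 = 0$, and let $i_{k+1}$ be the smallest index greater than $i_k$ such that the merged chunk $D_{i_k+1} \circ \cdots \circ D_{i_{k+1}}$ has collapse probability at least $1/C$. Since each chunk then contributes a factor of at most $1 - 1/C$ to the probability of ``no collapse yet'', after $C^\ell$ chunks the probability of never collapsing anywhere is at most $(1 - 1/C)^{C^\ell}$, doubly exponentially small in $\ell$, which matches the requirement spelled out in the proof overview. I would then let $F_k$ be the event that a collapse occurs within the $k$-th chunk, which is defined on a disjoint set of variables from the other chunks, and which I would argue can itself be computed by a single $(3,\ell,1)$-ROBP.

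To construct $B'$, I would replace each chunk by a single subprogram $D'_k$ that agrees with the original chunk whenever $F_k$ holds, and otherwise implements the fixed ``no-collapse'' transition, which is just a permutation of the $2$ boundary states determined by an XOR of input bits from the non-colliding permutation layers inside the chunk. Then whenever any $F_k$ holds, the chunks agree on their respective inputs, and the memoryless property forces $B(x) = B'(x)$; so the error $E = B - B'$ is supported on the event $F(x) = \bigwedge_k \neg F_k(x) = 1$, which is exactly the form required by the lemma.

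The main obstacle is the careful construction of each $D'_k$ as a valid $(3,\ell,1)$-ROBP: an arbitrary chunk may contain many subprograms with many colliding layers in total, yet the merged approximation must use only $\ell$ colliding layers and have width-$2$ boundaries. The key insight is that in the ``no collapse'' branch the chunk acts as a fixed $2$-state permutation requiring no colliding layers, while in the ``collapse'' branch we need only the colliding structure up to and including the first collapse in the chunk; by the greedy choice of $i_{k+1}$, the colliding layers contributed to this first-collapse witness are essentially those of the last subprogram $D_{i_{k+1}}$ that pushed the accumulated collapse probability over the threshold, and there are at most $\ell$ of them. I expect the combinatorial verification that these two branches can be realized by a single width-$3$ ROBP with width-$2$ boundaries and at most $\ell$ colliding layers to be the most delicate technical step of the proof.
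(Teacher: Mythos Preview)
Your high-level intuition is correct --- once the two boundary states collide, the prefix is forgotten --- but your greedy-chunking construction has a genuine gap, and the paper's argument is far simpler.

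\textbf{The gap.} Your chunks $D_{i_k+1}\circ\cdots\circ D_{i_{k+1}}$ may contain arbitrarily many subprograms and hence arbitrarily many colliding layers in total. Your claim that the ``first-collapse witness'' for a chunk only involves the colliding layers of the \emph{last} subprogram $D_{i_{k+1}}$ is incorrect: the first collapse within the chunk can occur in \emph{any} of its constituent subprograms, and to detect it you must track the unordered pair of boundary states through every colliding layer up to that point. Thus neither the event $F_k$ (``a collapse occurs in chunk $k$'') nor the approximating subprogram $D'_k$ (which must reproduce the post-collapse behavior on the ``collapse'' branch) can in general be realized by a width-$3$ ROBP with only $\ell$ colliding layers. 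The ``most delicate technical step'' you anticipate is not merely delicate; as stated it does not go through.

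\textbf{What the paper does instead.} The paper avoids chunking entirely. After arranging (by merging) that every $D_i$ has at least one colliding layer, it simply sets $j=m-C^\ell+1$ and takes $B'=D_j\circ\cdots\circ D_m$, i.e.\ the \emph{last} $C^\ell$ subprograms unchanged, viewed as a program with two start nodes whose value is the average over the two paths. The error $E=B-B'$ is nonzero only when the two paths starting at the boundary of $D_j$ never collide throughout $D_j,\ldots,D_m$; in particular they fail to collide at the \emph{first possible collision vertex} inside each individual $D_i$. Letting $\FCol_i$ be this ``first-collision'' event for $D_i$, one gets $|E|\le \bigwedge_{i=j}^m \neg\FCol_i$. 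Two short claims finish the proof: each $\FCol_i$ is nonzero (because $D_i$ has a colliding layer, so some input forces the two paths to meet), and each $\FCol_i$ is computable by a width-$3$ ROBP with at most $\ell$ colliding layers (track the \emph{unordered} pair of states through $D_i$; non-colliding layers of $D_i$ act as permutations on pairs and remain non-colliding). No probability threshold is needed in the structural lemma itself --- the lower bound $\Pr[\FCol_i]\ge 4^{-(\ell+1)}$ comes later from a separate reachability claim and is what makes $C\ge 20$ useful downstream.
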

We shall also show (in the next claim) that any non-zero event $F_i$ that can be computed by $(3,\ell,1)$-ROBP, happens with probability at least $4^{-(\ell+1)}$ under the uniform distribution. Thus, $\Pr_{x\sim U_n}[\wedge_{i=1}^{C^\ell} (\neg F_i(x))] \le (1-4^{-(\ell+1)})^{C^{\ell}} \le \exp(4^{-(\ell+1)} \cdot C^{\ell})$ which is doubly-exponentially small in $\ell$ provided that $C$ is a large enough constant.

For any vertex $v$ in a ROBP, we denote by $p_v$ the probability to reach $v$ under a uniform random assignment to the inputs.

\begin{claim}\label{claim:pv-large}
In a ROBP with width $w$ and at most $\ell$ colliding layers, every vertex whose $p_v>0$ has $p_v \ge 2^{-(\ell+1)\cdot(w-1)}$.
\end{claim}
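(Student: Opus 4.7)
The plan is to prove the claim by induction on the number $\ell$ of colliding edge-layers in $B$, tracking two quantities across layers: the minimum reaching probability $m_j := \min\{p_v : v \in V_j,\, p_v > 0\}$ and the size of the reachable set $r_j := |\{v \in V_j : p_v > 0\}|$. The crucial observation concerns non-colliding layers, i.e., those in which both $E_{j,1}$ and $E_{j,-1}$ are matchings. In such a layer the two label-matchings give injections $V_j \to V_{j+1}$, so $r_{j+1} \ge r_j$; moreover each reachable $v \in V_{j+1}$ has some number $c_v \in \{1,2\}$ of incoming edges from reachable predecessors, and double-counting gives $\sum_{v} c_v = 2 r_j$, whence the count $N_1$ of vertices with $c_v = 1$ equals $2(r_{j+1} - r_j)$. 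Thus $r_{j+1} = r_j$ forces every $v$ to have $c_v = 2$ and therefore $p_v \ge m_j$, while $r_{j+1} > r_j$ allows in the worst case $p_v = \tfrac{1}{2} p_u$, yielding $m_{j+1} \ge \tfrac{1}{2} m_j$. For a colliding layer I will only use the weaker bound $m_{j+1} \ge \tfrac{1}{2} m_j$, which holds because every reachable vertex has at least one reachable-predecessor edge.

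With this in hand the base case $\ell = 0$ is immediate: all layers are non-colliding, so $m$ drops by at most a factor $2$ per layer in which $r$ strictly increases, and the total number of such increases is bounded by $r_n - 1 \le w - 1$, giving $m_n \ge 2^{-(w-1)}$. For the inductive step, let $j^*$ be the index of the last colliding edge-layer. The prefix of $B$ ending at $V_{j^*}$ contains at most $\ell - 1$ colliding layers, so by the inductive hypothesis $m_{j^*} \ge 2^{-\ell(w-1)}$. Crossing the single colliding layer $E_{j^*}$ costs at most a factor $2$, so $m_{j^*+1} \ge 2^{-\ell(w-1) - 1}$. The suffix after $V_{j^*+1}$ is entirely non-colliding, and applying the base-case argument to it gives $m_n \ge m_{j^*+1} \cdot 2^{-(r_n - r_{j^*+1})} \ge m_{j^*+1} \cdot 2^{-(w - r_{j^*+1})}$.

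A case-split on $r_{j^*+1}$ finishes the proof. If $r_{j^*+1} \ge 2$ then $w - r_{j^*+1} \le w - 2$ and combining the bounds yields $m_n \ge 2^{-\ell(w-1) - 1 - (w-2)} = 2^{-(\ell+1)(w-1)}$. If instead $r_{j^*+1} = 1$ then the unique reachable vertex carries all the probability mass, so $m_{j^*+1} = 1$ and the suffix bound gives $m_n \ge 2^{-(w-1)} \ge 2^{-(\ell+1)(w-1)}$. The subtle point---and the reason a naive accounting (one factor of $2$ per colliding layer plus a separate $2^{w-1}$ penalty for the permutation-block following it) would lose an extra $2^\ell$---is precisely this case split: whenever a colliding layer collapses the reachable set all the way down to a single vertex, the probability mass re-concentrates to $1$ and we pay nothing in the $m$-budget for that layer, so the worst case caps out at the claimed exponent $(\ell+1)(w-1)$.
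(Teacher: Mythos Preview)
Your proof is correct. The core combinatorial insight is the same as the paper's: the minimum reaching probability can halve only when either a colliding layer is crossed or the reachable set strictly grows, and it resets to~$1$ whenever the reachable set collapses to a single vertex. The difference is organizational. The paper inducts on the program length one layer at a time, carrying the strengthened invariant $p_v \ge 2^{-\ell(w-1)-(t-1)}$ (where $t$ is the current number of reachable states), whereas you induct on $\ell$ and process each maximal non-colliding block in one shot via the counting argument on $c_v$. Your packaging makes the two sources of probability loss (colliding layers versus growth of the reachable set) and the ``reset'' at $r=1$ more visibly separate; the paper's layer-by-layer induction is terser but hides the same case split inside the $t=1$ subcase of its $\ell'<\ell$ branch. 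Either way the sharp exponent $(\ell+1)(w-1)$ falls out identically.
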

We remark that this bound is sharp.
\begin{proof}
	We prove by induction (on the length of the program) that any program with width at most $w$,  exactly $\ell$ colliding layers and exactly $t$ reachable states in the last layer, has $p_v \ge 2^{-\ell\cdot(w-1) -(t-1)}$ for any reachable vertex $v$.
	Without loss of generality all nodes in the program are reachable (otherwise, we remove vertices that aren't reachable).
	
	Consider a program $B$ of length $n$ with  parameters $(t, \ell,w)$.
	Removing the last layer gives a program $B'$ of length $n-1$ with parameters $(t',\ell',w)$. By the induction hypothesis for any $v'$ in the last layer of $B'$ we have $p_{v'}\ge \delta$ for $\delta := 2^{-\ell'\cdot(w-1) -(t'-1)}$.
	
	We perform a case analysis. The following simple bound will be used in all cases. Let $v$ be a vertex in the last layer of $B$. Assume that $e$ edges enter $v$ from vertices in the second to last layer. Then, $p_v \ge \frac{1}{2} \cdot \delta \cdot e$. In particular, since we assumed all vertices are reachable, any vertex in the last layer have $p_v \ge \delta/2$. 

	If $\ell'=\ell$ and $t' = t$, then the last layer of edges in $B$ is regular, i.e., any node in the last layer in $B$ has exactly two ingoing edges. In this case any vertex $v$ in the last layer has  $p_v \ge \frac{1}{2} \cdot \delta\cdot 2= \delta = 2^{-\ell\cdot(w-1) - (t-1)}$.
	
	If $\ell'=\ell$, then $t' \le t$, since there are no collisions in the last layer of edges. Since we already handled the case $t'=t$, we may assume $t'\le t-1$.
	For any vertex $v$ in the last layer we have 
	$p_v \ge \delta/2 
	 \ge \frac{1}{2} \cdot 2^{-\ell'(w-1)-(t'-1)}
	 \ge \frac{1}{2} \cdot 2^{-\ell(w-1)-(t-2)} 
	 = 2^{-\ell(w-1)-(t-1)}$.
	
	If $\ell'<\ell$, then we consider two sub-cases:
	if $t = 1$ then only one vertex is reachable in the last layer and its $p_v$ equals $1$.
	Otherwise, $t\ge 2$ and $t'\le w$ thus $t' \le t+(w-2)$ and 
	 for any  vertex $v$ in the last layer we have  $p_v \ge \delta/2
	 \ge \frac{1}{2} \cdot 2^{-\ell'(w-1)-(t'-1)}
	 \ge \frac{1}{2} \cdot 2^{-(\ell-1)(w-1)-(t + (w-2)-1)} 
	 = 2^{-\ell(w-1)-(t-1)}$.
\end{proof}

We say that two vertices $v$ and $v'$ in a ROBP are {\sf locally-equivalent} if the $1$-edges exiting  $v$ and $v'$ reach the same vertex and the $(-1)$-edges exiting  $v$ and $v'$ reach the same vertex.
We say that a ROBP has {\sf no-redundant vertices} if any vertex in the program is reachable, and there are no locally-equivalent vertices. In the following, without loss of generality we can assume that ROBPs have no-redundant vertices, because we can eliminate unreachable vertices and merge locally-equivalent vertices.

\begin{claim}[Colliding Layers $\implies$ Colliding]\label{claim:XOR or Colliding}
Let $B$ be a 3ROBP with width-2 at the start and finish, at least one colliding layer and no-redundant vertices.
Let $v_{1,1}$ and $v_{1,2}$ be the two start nodes.
Then, there exists a string on which the two paths from $v_{1,1}$ and $v_{1,2}$ collide. 
\end{claim}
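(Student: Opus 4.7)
The plan is to argue by contradiction: suppose no input causes the paths from $v_{1,1}$ and $v_{1,2}$ to ever meet at a common vertex. I will split into two cases depending on whether any intermediate layer has width $3$, and derive a contradiction in each.

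In the first case every layer satisfies $|V_i|\le 2$. Let $k$ be the first colliding layer, which exists by hypothesis. Every $E_i$ with $i<k$ is non-colliding, so both $E_{i,1}$ and $E_{i,-1}$ are injective and therefore preserve distinctness of paths; together with no-redundancy this forces $|V_i|=2$ for all $i\le k$. Hence the two paths at $V_k$ occupy the two distinct vertices of $V_k$, and the collision in $E_k$ is between two vertices of $V_k$, which must be exactly these two. The collision label then merges the paths at $V_{k+1}$, a contradiction.

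In the second case some layer has $|V_j|=3$. For $v\in V_i$ let $g_v$ be the function sending an input continuation to the vertex of $V_{n+1}$ reached from $v$. First, using no-redundancy, I show by backward induction from $V_{n+1}$ (where $g_v=v$ makes distinct vertices trivially distinct) that any two distinct vertices in the same layer compute distinct functions: otherwise the inductive hypothesis applied at $V_{j+1}$ would force $\mathrm{next}(A,b)=\mathrm{next}(B,b)$ for both $b\in\{\pm 1\}$, making $A$ and $B$ locally equivalent. Next, under the no-merge assumption, any reachable non-diagonal pair $\{u,v\}$ at layer $j$ must satisfy $g_u(y)\neq g_v(y)$ for every continuation $y$, and since $|V_{n+1}|=2$ this means $g_u$ and $g_v$ are pointwise opposite as two-valued functions.

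The key combinatorial step is that among three pairwise-distinct two-valued functions at most one pair can be pointwise opposite: if both $\{A,B\}$ and $\{A,C\}$ were opposite then $g_B$ and $g_C$ would agree everywhere, contradicting distinctness. Hence at most one non-diagonal pair is reachable at any layer $j$. I close the argument with one more observation: under no-merge, a vertex $c\in V_j$ is reachable from $V_1$ if and only if it lies in some reachable pair at $V_j$; indeed, a path reaching $c$ on input $x$ must be accompanied by the other path landing elsewhere, so the induced pair contains $c$. But when $|V_j|=3$, no-redundancy forces all three vertices to be reachable, whereas the unique reachable pair contains only two of them, a contradiction. The main technical step to double-check is the opposite-pair counting; the rest is routine induction plus the careful translation between vertex-reachability and pair-reachability in the pair-product automaton.
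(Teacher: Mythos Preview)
Your proof is correct, and Case~2 takes a genuinely different route from the paper's argument.

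The paper handles the width-$3$ case locally: it looks at the \emph{last} layer $i$ with $|V_i|=3$, uses pigeonhole on the six edges into the width-$2$ layer $V_{i+1}$ to find two colliding pairs of edges (one for each label) whose sources jointly cover all three vertices of $V_i$ with one vertex $v_{i,2}$ shared, and then uses reachability of $v_{i,2}$ to explicitly build an input that forces a collision at $V_{i+1}$.

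Your argument is global and non-constructive: you push the width-$2$ constraint at the \emph{final} layer all the way back via the suffix functions $g_v$, observe that under the no-merge hypothesis any reachable unordered pair must consist of pointwise opposite two-valued functions, and then use the simple counting fact that three pairwise-distinct two-valued functions admit at most one opposite pair. This immediately bounds the number of reachable pairs at a width-$3$ layer by one, contradicting reachability of all three vertices.

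Both arguments ultimately exploit the width-$2$ bottleneck at the end, but yours does so through the function semantics rather than through a single edge-layer. Your approach has the advantage of yielding a clean structural statement (at any layer, under no-merge, at most one unordered pair is reachable), which could be useful if one wanted quantitative control on the pair-product automaton. The paper's approach is more elementary and actually exhibits a colliding input. One minor remark: in Case~1 the invocation of no-redundancy is not really needed; injectivity of the non-colliding layers already propagates $|V_i|=2$ forward from $|V_1|=2$.
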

\begin{proof}
	First consider the case that $B$ has width $2$. Then, there exists a layer $i$ and a value $b\in \pmone$ such that the two edges marked by $b$ in the $i$-th layer collide. Any string whose $i$-th bit equals $b$ results in colliding paths.

	For the rest of the proof assume that $B$ has a layer with width $3$.	Let $V_1, \ldots, V_{n+1}$ be the layers of vertices in $B$.
	Let $i$ denote the index of the last layer in $B$ with width $3$.
	Since $B$ has width-2 at the end, $i<n+1$.
	
	There are  six edges between $V_{i}$ and $V_{i+1}$: three edges marked with $x_{i}=-1$ and three edges marked with $x_{i}=1$.
	Since $|V_{i+1}|=2$, by the Pigeon-hole principle, there are two edges marked with $x_i = -1$ going to some vertex $v \in V_{i+1}$, and two edges marked with $x_i = 1$ going to some vertex $v'\in V_{i+1}$ ($v'$ is not necessarily different from $v$). 
	These two pairs of edges cannot be starting from the same two nodes in $V_{i}$ since then the two nodes will be locally-equivalent. 
	By renaming the nodes in $V_i$, we can assume that the two edges from $v_{i,1}, v_{i,2} \in V_i$ marked with $-1$ go to $v\in V_{i+1}$ 
	and the two edges from $v_{i,2}, v_{i,3}\in V_i$ marked with $1$ go to $v'\in V_{i+1}$. 
	
	Since $v_{i,2}$ is reachable, there is an input $(x_1, \ldots, x_{i-1})$ that leads from $v_{1,1}$ or $v_{1,2}$ to $v_{i,2}$.
	Without loss of generality, we assume that $v_{i,2}$ is reachable from  $v_{1,1}$.
	Let $\tilde{v} \in V_i$ be the vertex reached by following the same input $(x_1, \ldots, x_{i-1})$ starting from the other start vertex $v_{1,2}$.
	If $\tilde{v} = v_{i,2}$, then we already found a collision. 
	If $\tilde{v} = v_{i,1}$ then for the choice $x_i = -1$ the two paths defined by $(x_1, \ldots, x_i)$  starting from $v_{1,1}$ and $v_{1,2}$ collide on $v \in V_{i+1}$.
	Similarly, if $\tilde{v} = v_{i,3}$, then for the choice $x_i = 1$ the two paths collide on $v' \in V_{i+1}$.
	\end{proof}

\begin{claim}[``First Collisions'' can be detected by 3ROBPs]\label{claim:checkcollision}
Let $B$ be a 3ROBP with 2 vertices at the first layer, denoted $v_{1,1}, v_{1,2}$.
Suppose there are at most $\ell$ colliding layers in $B$ and that there exists a string on which the two paths from $v_{1,1}$ and $v_{1,2}$ collide.
Let $u$ be the first vertex on which a collision can occur, and let $E$ be the event that a collision happened on $u$. Then,  $E$ can be computed by another width-$3$ ROBP with at most $\ell$-colliding layers. 
\end{claim}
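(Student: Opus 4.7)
The plan is to construct $B'$ as a ``synchronized product'' on unordered pairs of vertices of $B$, exploiting the fact that a width-$3$ program has only $\binom{3}{2}=3$ unordered pairs of distinct vertices per layer, which exactly matches our width budget.

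Let $j$ be the layer of $B$ containing $u$. I would define $B'$ to have a single start vertex representing the pair $\{v_{1,1},v_{1,2}\}$. For each $2 \le i \le j-1$, the vertices at layer $i$ of $B'$ are the unordered pairs $\{a,b\}$ of distinct reachable vertices of the $i$-th layer of $B$; the transition under input $x_i$ sends $\{a,b\}$ to $\{a',b'\}$, where $a \to a'$ and $b \to b'$ along the $x_i$-edges of $B$. The hypothesis that $u$ is the first vertex at which a collision can occur guarantees that we never have $a=b$ before layer $j$, so restricting to distinct pairs is legitimate. At the transition from layer $j-1$ to layer $j$, the pair $\{a,b\}$ is routed to ``accept'' if $\{a',b'\}=\{u\}$ and to ``reject'' otherwise; layers $j$ through $n+1$ consist of those two absorbing states with self-loops on both labels. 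By construction, $B'$ has width at most $3$, has at most two vertices at its first and last layers, reads the same inputs in the same order as $B$, and accepts $x$ iff both paths in $B$ on input $x$ meet at $u$, i.e., iff $E$ occurs.

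The main task is then to bound the number of colliding edge-layers of $B'$ by $\ell$, which I would do by exhibiting, for each colliding edge-layer of $B'$, a corresponding colliding edge-layer of $B$ at the same index, thereby obtaining an injection. For an edge layer $i<j-1$ of $B'$, a collision means two distinct pair-states $\{a_1,b_1\}$ and $\{a_2,b_2\}$ merge to the same pair $\{a',b'\}$ on some label $b$; since $|\{a_1,b_1,a_2,b_2\}|\ge 3$ maps into the $2$-element set $\{a',b'\}$ via $b$-edges, the corresponding layer of $B$ is also colliding. For edge layers $i\ge j$, the accept/reject states each have self-loops, so the two $b$-edges at such a layer enter distinct vertices and no collision occurs. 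The subtle case, which I expect to be the main obstacle, is edge layer $j-1$, where collapsing several pair-states into accept or reject can in principle produce a collision in $B'$ that does not stem from the product argument above; here the key observation is that $u$ is, by assumption, a vertex at which a collision can occur, so two same-label edges of $B$ from layer $j-1$ enter $u$, making edge layer $j-1$ of $B$ itself a colliding layer that absorbs any collision of $B'$ at that index. Combining the three cases yields the desired injection and hence the bound of at most $\ell$ colliding layers in $B'$.
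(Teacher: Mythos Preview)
Your construction and argument are correct and match the paper's approach exactly: track the unordered pair of states of the two paths (yielding at most $\binom{3}{2}=3$ states per layer), collapse to accept/reject at the layer of $u$, and argue that a non-colliding (permutation) layer of $B$ induces a non-colliding layer of the pair program. Your explicit treatment of edge layer $j-1$ is a nice detail the paper leaves implicit; as you observe, the mere existence of a collision at $u$ forces layer $j-1$ of $B$ to be colliding, so the paper's contrapositive ``permutation $\Rightarrow$ permutation on pairs'' is vacuously fine there.
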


\begin{proof}
To simulate whether the paths starting from $v_{1,1}$ and $v_{1,2}$ collide at $u$, we consider the 3ROBP that keeps the {\bf unordered} pair corresponding to the states of the two paths during the computation. In each layer until $u$, we have only states corresponding to $\{0,1\}, \{0,2\}$ or $\{1,2\}$. 
When we reach the layer of $u$ we have two states: ``accept'' (corresponding to a collision on $u$) and ``reject'' (corresponding to anything else).
Observe that any non-colliding layer in the original program defines a non-colliding layer in the new branching program (as a permutation over a finite set also defines a permutation over unordered pairs from this set). Thus, there are at most $\ell$ colliding layers in the 3ROBP computing $E$.
\end{proof}

We are now ready to prove the main structural lemma --  Lemma~\ref{lem:structcolliding}. In the following, we consider branching programs with two initial nodes $v_{1,1}, v_{1,2}$. We interpret the value of the program on input $x$ as its average value on the two paths starting from $v_{1,1}$ and $v_{1,2}$. That is, the program can get value $1, 0$ or $-1$ depending on whether the two paths from $v_{1,1}$ and $v_{1,2}$ accept or not.

Throughout this section we think of the error terms as $\{0,1\}$-indicators (instead of the usual $\pmone$-notation for other Boolean functions). We shall use $A \wedge B$ and $\bar{A}$ to denote the standard AND and negation of these Boolean values.

\begin{lemma}\label{lemma:short programs plus error term}
	Let $B = D_1\circ \ldots \circ D_m$ be a ROBP where each $D_i$ is a width-$3$ ROBP with at most $2$ vertices on the first and last layers.
	Then, for any $j\in \{2, \ldots m\}$ we can write $B(x)$ as the sum of $(D_j \circ  \ldots \circ D_m)(x)$ and an error term $E(x)$, that is bounded in absolute value by $\bar{\FCol_j(x)} \wedge  \ldots \wedge \bar{\FCol_m(x)}$ where $\FCol_{i}(x)$ denotes the event that the two paths in $D_{i}$ collide on input $x$ at the first vertex on which it is possible to collide in $D_i$.
\end{lemma}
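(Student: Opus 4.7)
The plan is a direct case analysis on what the two computation paths of $B$ look like when they enter the subprogram $D_j$. Write $B_j := D_j \circ \ldots \circ D_m$, let $P_1, P_2$ be the two paths of $B$ starting from its initial vertices, and let $Q_1, Q_2$ be the two paths of $B_j$ starting from the (at most two) initial vertices of $D_j$; recall that each such program is evaluated as the average of the $\pm 1$ outputs along its paths.

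First I split into two situations according to the state of $(P_1, P_2)$ at the start of $D_j$. If $P_1$ and $P_2$ are at two distinct vertices there, then since $D_j$'s first layer has at most two vertices, those are exactly the two initial vertices of $D_j$, so the continuations of $P_1, P_2$ through the rest of the program are exactly $Q_1, Q_2$ (up to relabeling); hence $B(x) = B_j(x)$ and $E(x) = 0$. Otherwise $P_1$ and $P_2$ have already merged, landing at a common vertex $v$ of $D_j$'s first layer. If $D_j$ has only one initial vertex then it must equal $v$ and again $B(x) = B_j(x)$. If $D_j$ has two initial vertices $v_1, v_2$ with, say, $v = v_1$, then $B(x)$ is the $\pm 1$ value of the single path continuing from $v_1$, while $B_j(x)$ is the average of the values of the paths from $v_1$ and $v_2$, so $E(x) = (\mathrm{val}(v_1) - \mathrm{val}(v_2))/2 \in \{-1,0,1\}$ and in particular $|E(x)| \le 1$.

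The remaining step upgrades this pointwise bound to $|E(x)| \le \bar{\FCol_j(x)} \wedge \ldots \wedge \bar{\FCol_m(x)}$. In the only situation where $E(x)$ can be nonzero (the last case above) one has $\mathrm{val}(v_1) \ne \mathrm{val}(v_2)$, so the paths $Q_1$ and $Q_2$ output different final labels and hence cannot merge anywhere in $B_j$: in a deterministic branching program, once two paths collide they remain together and produce the same output. Therefore, at the start of every $D_i$ with $i \in \{j,\ldots,m\}$ the paths $Q_1, Q_2$ sit at the two distinct initial vertices of $D_i$ and fail to collide at any vertex inside $D_i$; in particular they do not collide at the designated first-possible-collision vertex $u_i$ that defines $\FCol_i$, so $\bar{\FCol_i(x)}$ holds for every such $i$. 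Contrapositively, if any $\FCol_i$ fires then $E(x) = 0$, which is the claimed bound.

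I do not expect a real obstacle here, since the whole argument is a case analysis, but some care is needed in the last step to connect ``$Q_1, Q_2$ never collide in $B_j$'' to $\bar{\FCol_i}$ for every $i \ge j$. The key observation is that, under this non-collision assumption, at the start of every $D_i$ the paths $Q_1, Q_2$ occupy precisely the two initial vertices of $D_i$, so their behavior inside $D_i$ is identical to the two-source computation that defines $\FCol_i$ in Claim~\ref{claim:checkcollision}, and the absence of any collision in $D_i$ certainly subsumes the absence of collision at the specific vertex $u_i$.
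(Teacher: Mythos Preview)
Your proof is correct and follows essentially the same approach as the paper. The only cosmetic difference is the organization of the case split: you condition on whether the $B$-paths $P_1,P_2$ have already merged by the start of $D_j$, whereas the paper conditions directly on whether the two $B_j$-paths $Q_1,Q_2$ ever collide; in both arguments the key step is identical, namely that $E(x)\neq 0$ forces $Q_1,Q_2$ never to merge in $B_j$, whence $\FCol_i(x)=0$ for every $i\ge j$.
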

\begin{proof}
Assume without loss of generality that no layer of vertices has width-$1$ except for maybe the first. 
For $j=2, \ldots, m$, let $v_{j,1}$ and $v_{j,2}$ be the two nodes at the first layer of the subprogram $D_j$.
If $D_1$ has two nodes at the first layer, then denote them by $v_{1,1}$ and $v_{1,2}$, otherwise denote the single node by $v_{1,1}$. 
Let $x$ be an input to the branching program $B$.
If the two paths defined by $x$ from $\{v_{j,1}, v_{j,2}\}$ collide at some point, then the value of $B(x)$ equals the value of $(D_j \circ \ldots \circ D_m)(x)$.
If the two paths do not collide, then $(D_j \circ \ldots \circ D_m)(x) = 0$, since it is the average of two paths with different outcomes, thus $E(x)= B(x) - (D_j \circ \ldots \circ D_m)(x)$ is at most $1$ in absolute value. 
Furthermore, in such a case, for all $i \in \{j, \ldots, m\}$ it holds that both paths in the subprogram $D_i$ starting from $v_{i,1}$ and $v_{i,2}$ on input $x$ do not collide, i.e., $\FCol_i(x) = 0$.
Overall, we got that $B(x) = E(x) + (D_j \circ \ldots \circ D_m)(x)$, and  $E(x)\neq 0$, it holds that $\bar{\FCol_j(x)} \wedge  \ldots \wedge \bar{\FCol_m(x)} =1$ (i.e., $|E(x)| \le \bar{\FCol_j(x)} \wedge  \ldots \wedge \bar{\FCol_m(x)}$).
\end{proof}

\begin{proof}[Proof of Lemma~\ref{lem:structcolliding}]
Let $B$ be a $(3,\ell,m)$-ROBP $B = D_1 \circ \ldots \circ D_m$.
If $B$ has no colliding layers, then there is nothing to prove since $B$ itself is a $(3,\ell,1)$-ROBP.
If $B$ has colliding layers, then without loss of generality each $D_i$ has at least one colliding layer (since otherwise we can merge subprograms with no colliding layers with their successors or predecessors).
If $m \leq C^\ell$, there is nothing to prove and we can take $B' =B$ and $E=0$. 
Suppose that $m > C^\ell$. 
Let $j = m - C^\ell + 1> 1$. Let $B' = D_j \circ \cdots \circ D_m$ and let $F(x) = \bar{\FCol_j(x)} \wedge  \ldots \wedge \bar{\FCol_m(x)}$ where $\FCol_{i}(x)$ denotes the event that the two paths in $D_{i}$ collide on input $x$ at the first vertex on which it is possible to collide in $D_i$. Then, by the previous claim, we can write $B = B' + E$ where for any input $x$, $|E(x)| \leq F(x)$. 
We argue that this gives the desired decomposition. 
Indeed, By Claim~\ref{claim:checkcollision}, for $i\in \{j, \ldots, m\}$ the event $\FCol_i(x)$ can be computed by a $(3,\ell,1)$-ROBP. Further, by Claim~\ref{claim:XOR or Colliding} each $D_i$ has a possible collision, and thus each $\FCol_{i}$ is a non-zero event. 
\end{proof}

\subsection{PRGs for ROBPs with few colliding layers}
In this section we show that we can $\eps$-fool ordered ROBPs with at most $\ell$-colliding layers with $\tilde{O}(\log(\ell/\eps) \cdot \log (n))$ seed-length. 

\begin{theorem}\label{thm:foolfewcollisions}
For any $\eps > 0$, there is a log-space explicit PRG that $\eps$-fools ordered width $w$-ROBPs with length $n$ and at most $\ell$ colliding layers using seed length
$$O((\log \log n + \log(1/\eps) + \log(\ell) + w) \cdot \log n.$$
\end{theorem}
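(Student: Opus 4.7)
My plan is to apply the Impagliazzo--Nisan--Wigderson (INW) generator~\cite{INW94} and extend the analysis of Braverman, Rao, Raz, and Yehudayoff (BRRY)~\cite{BravermanRRY14} from regular (permutation) ROBPs to ROBPs with at most $\ell$ colliding layers. Recall that, after removing identity layers as redundant, every non-colliding layer of an ordered ROBP is a permutation layer, so a ROBP with at most $\ell$ colliding layers is a concatenation of at most $\ell+1$ permutation ROBPs separated by the colliding layers.

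First, I would instantiate the standard INW generator at recursion depth $d = \log n$, using at each level $t \in \{0, \ldots, d-1\}$ a hash function $h_t$ sampled from an $\eta_t$-biased distribution over $\pmone^{O(\log n)}$. The total seed length is $O(\log n) + \sum_{t=0}^{d-1} O(\log(1/\eta_t))$, so the task reduces to choosing the $\eta_t$'s as large as possible while keeping the overall error bounded by $\eps$.

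Next, following the level-by-level hybrid argument of BRRY, I would bound the total fooling error by a sum of $d$ contributions, where the $t$-th contribution is controlled by a spectral quantity (essentially a $2\mapsto 2$ operator norm) of a matrix product derived from sub-programs of $B$ of length $2^t$. For permutation ROBPs, BRRY show that this quantity decays like $w \cdot 2^{-\Omega(s_t)}$, where $s_t = \log(1/\eta_t)$, because each permutation layer contributes a doubly-stochastic transition matrix and compositions of such matrices mix rapidly toward a rank-$1$ limit. The key technical extension is a corresponding spectral bound for sub-programs with at most $\ell$ colliding layers: I would show, by a potential-function argument that tracks how close the cumulative transition matrix is to rank-$1$ and how much each colliding layer can disrupt this approximation, that the relevant spectral norm is at most $2^{O(w)} \cdot \ell \cdot 2^{-\Omega(s_t)}$.

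The main obstacle will be obtaining the logarithmic (rather than polynomial) dependence on $\ell$: naively, each colliding layer could blow up the spectral quantity multiplicatively, yielding a $\mathrm{poly}(\ell)$ factor, but I expect a refined accounting argument exploiting the mixing of the permutation layers interspersed between colliding ones reduces the blow-up to $2^{O(w)} \cdot \ell$. Given such a bound, setting $s_t = O(w + \log\log n + \log(1/\eps) + \log \ell)$ uniformly across all $d$ levels makes the per-level error at most $\eps / \log n$, so that summing yields total error $\eps$ and total seed length $O((\log\log n + \log(1/\eps) + \log\ell + w) \cdot \log n)$, as claimed.
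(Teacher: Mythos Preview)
Your proposal has a real gap: the crucial step --- the spectral bound $2^{O(w)}\cdot \ell \cdot 2^{-\Omega(s_t)}$ for sub-programs containing colliding layers --- is only asserted (``I expect a refined accounting argument\ldots''), not proved. The BRRY weight/mixing analysis relies on the transition matrices being doubly stochastic, which fails precisely at colliding layers; it is not at all clear that the blow-up is merely additive in $\ell$ rather than multiplicative, and you give no mechanism for the interspersed permutation layers to ``absorb'' the damage. Without that bound the whole seed-length calculation collapses. (A smaller point: ``non-colliding $\Rightarrow$ permutation'' is only true after padding every layer to width exactly $w$; otherwise widths can grow across a non-colliding layer.)

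The paper avoids re-opening the BRRY analysis altogether. It quotes BRRY not for \emph{permutation} ROBPs but for the more general class of \emph{$\delta$-reachable} ROBPs (every reachable vertex is hit with probability $\ge \delta$), with seed length $O(\log\log n + \log(1/\eps) + \log(1/\delta) + \log w)\cdot \log n$. The entire proof of Theorem~\ref{thm:foolfewcollisions} is then a black-box reduction: given a width-$w$ ROBP $B$ with $\le \ell$ colliding layers, build a width-$(w{+}1)$ program $B'$ by adding an ``immediate stop'' sink and, at each of the $\ell$ colliding layers in turn, rerouting to the sink any vertex whose reach-probability has dropped below $2^{w-1}\delta$. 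Between consecutive colliding layers there are no collisions, so Claim~\ref{claim:pv-large} (with $\ell=0$) guarantees probabilities decay by at most $2^{-(w-1)}$; hence $B'$ is $\delta$-reachable. The difference $|B-B'|$ under both the uniform and the pseudorandom distribution is bounded by the probability of ever touching a rerouted vertex, which is at most $|\Vsmall|\cdot(2^{w-1}\delta+\eps')\le \ell w(2^{w-1}\delta+\eps')$. Setting $\eps'=\Theta(\eps/\ell w)$ and $\delta=\eps'/2^w$ gives the stated seed length. This route is short, modular, and sidesteps exactly the spectral obstacle you flagged.
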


The above relies on the PRGs for regular branching programs and generalizations of them due to Braverman, Rao, Raz, and Yehudayoff \cite{BravermanRRY14}. 
In the following, we say that a read-once branching program $B$ is {\sf $\delta$-reachable} if for all reachable vertices $v$ in $B$ we have $p_v(B) \ge \delta$, where  
$$
p_v(B):= \Pr_{x \sim U_n}[\text{reaching $v$ on the walk on $B$ defined by $x$}].$$

We start by quoting a result by Braverman, Rao, Raz, Yehudayoff~\cite{BravermanRRY14}.

\begin{theorem}[\cite{BravermanRRY14}]\label{thm:BRRY}
There is a log-space explicit PRG that  $\eps$-fools all 
$\delta$-reachable ROBPs of length-$n$ and width-$w$
 using seed length
$$O(\log \log n + \log(1/\eps) + \log(1/\delta) + \log(w)) \cdot \log n.$$
\end{theorem}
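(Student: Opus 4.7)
The plan is to use Theorem~\ref{thm:BRRY} after decomposing the program into permutation pieces. Given a width-$w$ ROBP $B$ of length $n$ with at most $\ell$ colliding layers, I would first write $B = P_0 \circ C_1 \circ P_1 \circ C_2 \circ \cdots \circ C_\ell \circ P_\ell$, where each $P_i$ is a (possibly empty) maximal permutation sub-program and each $C_j$ is a single colliding layer. Because permutation layers preserve the uniform distribution on the set of reachable states, each non-trivial $P_i$ has the property that every reachable vertex in any of its layers is hit with probability at least $1/w$ (given a uniform distribution on the at most $w$ reachable start states). Hence each $P_i$ is $(1/w)$-reachable, and Theorem~\ref{thm:BRRY} produces a PRG $G_i$ that $\eps'$-fools $P_i$ with seed length $s_0 = O\big((\log\log n + \log(1/\eps') + \log w)\cdot\log n\big)$.

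Second, I would combine the $G_i$'s using an INW-style composition tree of depth $\lceil \log_2(\ell+1)\rceil$, whose leaves correspond to the $P_i$'s. At each internal node the two child PRGs are merged via an $\eta$-biased string of length $O(\log(w/\eta))$, which is enough to fool the width-$w$ ``interface'' between the sub-programs being joined. Each colliding layer $C_j$ is absorbed into one of the adjacent permutation pieces: it simply re-labels the starting distribution of the piece it is attached to, so the absorbed piece is still governed by BRRY-style analysis with reachability degraded by at most a single factor $1/w$ per absorption. Choosing $\eps' = \eps/\poly(\ell)$ and distributing an error budget of $\eta = \eps/\poly(\ell, w)$ across the $O(\log \ell)$ tree levels gives a total seed length of $s_0 + O(\log \ell)\cdot O(\log(w/\eta))$, which simplifies to the claimed $O\big((\log\log n + \log(1/\eps) + \log \ell + w)\cdot \log n\big)$.

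The log-space explicitness follows because the decomposition of $B$ can be computed on the fly in space $O(\log n)$, the BRRY generators $G_i$ are themselves log-space explicit, and the INW composition preserves log-space explicitness. The main obstacle is the clean verification of the INW merge step across absorbed colliding layers: unlike the pure permutation case treated in BRRY, the pieces we compose have a small number of non-regular transitions at the boundaries, and one must track the error carefully through all $\log_2(\ell+1)$ levels of the composition tree. The key point that makes this tractable is that each colliding layer reduces reachability by at most a factor of $w$ and that we only ever need to fool width-$w$ ``difference tests'' at the interfaces, so the merge cost stays $O(\log(w/\eta))$ per level, summing to $O(\log\ell \cdot \log(w/\eps))$ over the whole tree rather than to a term that scales with $\ell$.
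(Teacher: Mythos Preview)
Your proposal does not prove Theorem~\ref{thm:BRRY}. That theorem is quoted from \cite{BravermanRRY14} and carries no proof in this paper; indeed, your very first sentence \emph{invokes} Theorem~\ref{thm:BRRY} as a black box. What you have actually sketched is a proof of Theorem~\ref{thm:foolfewcollisions} (PRGs for width-$w$ ROBPs with at most $\ell$ colliding layers), which is the result the paper derives \emph{from} Theorem~\ref{thm:BRRY}.

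Taking your proposal as aimed at Theorem~\ref{thm:foolfewcollisions}, the route is genuinely different from the paper's. The paper does not decompose $B$ into permutation blocks or build an INW tree over them. Instead, via Lemma~\ref{lemma:negligible}, it constructs a single modified program $B'$ of width $w{+}1$ by adding a sink state and, at each of the $\ell$ colliding layers in turn, rerouting to the sink all edges leaving vertices whose reaching probability has dropped below $2^{w-1}\delta$. Claim~\ref{claim:pv-large} guarantees that between colliding layers reachability can fall by at most a factor $2^{-(w-1)}$, so after at most $\ell w$ reroutings the resulting $B'$ is $\delta$-reachable. One then applies Theorem~\ref{thm:BRRY} \emph{once} to $B'$ and controls $|B(\cdot)-B'(\cdot)|$ under both the uniform and pseudorandom distributions by the total mass passing through the rerouted vertices. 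Setting $\delta=\Theta(\eps\,2^{-w}/(\ell w))$ yields the stated seed length, with the additive $w$ term arising from $\log(1/\delta)$. Your decompose-and-recompose strategy is plausible and, if the INW merge across absorbed colliding layers is carried out carefully, should also reach the target (potentially with $\log w$ in place of $w$). The paper's argument buys simplicity: one global modification and one BRRY call, with none of the level-by-level merge bookkeeping you yourself flag as the main obstacle.
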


Next, we reduce the task of fooling ROBPs with at most $\ell$-colliding layers to the task of fooling $\delta$-reachable ROBPs.
The reduction is similar to that in \cite{CHRT17}. The main difference is that we simulate a ROBP with width $w$ by a $\delta$-reachable ROBP of width $w+1$ by adding a new sink state that should be thought of as ``immediate stop''. 
This change seems essential in our case, and the reduction from \cite{CHRT17} does not seem to satisfy the necessary properties here.

\begin{lemma}\label{lemma:negligible}
Let $\delta \le 2^{-(w-1)}$.
Let $\D$ be a distribution on $\pmone^n$ that $\eps$-fools all $\delta$-reachable ROBPs of length $n$ and width $w+1$.
Then, $\D$ also fools width-$w$ ROBPs with at most $\ell$ colliding layers with error at most $(\ell w +1)\cdot \eps +  (2^{w} w \ell) \cdot \delta$.
\end{lemma}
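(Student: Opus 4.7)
The plan is to transform the given width-$w$ ROBP $B$ with at most $\ell$ colliding layers into a $\delta$-reachable width-$(w+1)$ ROBP $B'$ that agrees with $B$ on most inputs, and then apply the hypothesis on $\D$ to $B'$ together with $O(\ell w)$ auxiliary programs.

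First I would construct $B'$ by adding a single fresh ``trap'' vertex $s_i$ to every layer $i$ of $B$: both outgoing edges of $s_i$ point to $s_{i+1}$, and the final trap $s_{n+1}$ is declared rejecting. Call a vertex $v$ of $B$ \emph{bad} if $p_v(B) < \delta$, and let $V_{\mathrm{bad}}$ denote the set of such vertices. For every edge of $B$ whose head lies in $V_{\mathrm{bad}}$, re-route that edge into the trap of the next layer. The resulting program $B'$ has width $w+1$, and $B(x)$ and $B'(x)$ differ only when the walk on $x$ in $B$ visits some bad vertex before the final layer.

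The core combinatorial step is to show $|V_{\mathrm{bad}}| = O(2^w w \ell)$. This should follow from a structural argument in the spirit of Claim~\ref{claim:pv-large}: reachability probabilities only shrink at colliding or non-permutation, non-colliding layers (of which there are at most $O(\ell+w)$), and at each such layer the collection of new low-probability vertices is controlled in terms of $w$. A careful accounting gives the stated bound; the $2^w$ factor arises because, in the worst case, each colliding layer can split probabilities across all $2^w$ subsets of the width-$w$ state space. Combined with $\sum_{v\in V_{\mathrm{bad}}} p_v(B) \le |V_{\mathrm{bad}}|\cdot \delta$, this bounds the disagreement of $B$ and $B'$ under the uniform distribution by $O(2^w w\ell)\cdot \delta$, and it simultaneously verifies that each surviving good vertex of $B$ retains reachability at least $\delta$ in $B'$ up to the same additive slack, so $B'$ is genuinely $\delta$-reachable.

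To transfer the bound to $\D$, note first that the hypothesis gives $|\E_{x\sim U_n}[B'(x)] - \E_{x\sim \D}[B'(x)]|\le \eps$. To control the difference between $B$ and $B'$ under $\D$, for each of the at most $\ell$ colliding layers $i$ and each of the at most $w$ reachable vertices $v$ at layer $i$, I would construct an auxiliary width-$(w+1)$ $\delta$-reachable program $A_{i,v}$ that simulates $B$ up to layer $i$ and detects whether the walk reaches $v$, funneling everything else into the trap. Applying the hypothesis to each of the at most $\ell w$ such programs and summing their estimates, a union bound controls the probability under $\D$ of visiting any bad vertex by $O(2^w w\ell)\cdot \delta + \ell w \cdot \eps$. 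Combining with the $\eps$-fooling of $B'$ itself yields the claimed $(\ell w + 1)\eps + (2^w w\ell)\delta$.

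The main obstacle will be to simultaneously (i) establish the structural count $|V_{\mathrm{bad}}| = O(2^w w \ell)$ and (ii) ensure that, after re-routing, every surviving good vertex in $B'$ still has reachability at least $\delta$ --- paths to such a vertex may have passed through $V_{\mathrm{bad}}$ in $B$, and these paths are now diverted to the trap, potentially eroding the vertex's probability in $B'$. The role of the extra trap state (the key novelty relative to the CHRT reduction) is that it absorbs all such lost mass within a single additional width slot so that the pruning incurs no further blow-up. Making these estimates rigorous likely requires an iterative pruning argument maintaining an invariant that folds probability-loss into the $|V_{\mathrm{bad}}|\cdot \delta$ budget.
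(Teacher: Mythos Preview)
Your high-level outline is the same as the paper's: build a width-$(w{+}1)$ program $B'$ by adding a trap state and rerouting certain low-probability vertices, then use a triangle inequality through $B'$ and auxiliary ``reach-$v$'' programs. But the specific rerouting you propose does not work, and the gap is exactly at the two points you flagged as obstacles.

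Your set $V_{\mathrm{bad}}=\{v:p_v(B)<\delta\}$ can have size $\Theta(n)$, not $O(2^w w\ell)$. Between colliding layers the transitions are permutations, so the multiset of reachability probabilities is preserved layer by layer; once a single colliding layer produces a vertex with $p_v<\delta$, there is a vertex with that same probability in \emph{every} subsequent layer until the next collision. Concretely, with $w=3$ and $\delta=2^{-(w-1)}=1/4$, two colliding layers can produce probabilities $(3/4,1/8,1/8)$, and $n$ permutation layers afterward give $|V_{\mathrm{bad}}|\ge 2(n-O(1))$. So the claimed structural count fails, and with it both your bound $\sum_{v\in V_{\mathrm{bad}}}p_v\le|V_{\mathrm{bad}}|\cdot\delta$ and the union bound over auxiliary programs. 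Separately, your auxiliary programs $A_{i,v}$ ``simulate $B$ up to layer $i$'' and hence inherit $B$'s low-probability vertices; they are not $\delta$-reachable, so the hypothesis on $\D$ does not apply to them.

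The paper's fix is to reroute \emph{only at the $\ell$ colliding layers}, and to do so iteratively: at the $j$-th colliding layer, send to the trap the outgoing edges of any vertex whose reachability in the current program $B_{j-1}$ is below the inflated threshold $2^{w-1}\delta$. This immediately gives $|V_{\mathrm{small}}|\le \ell w$, and Claim~\ref{claim:pv-large} applied to the collision-free stretch between consecutive colliding layers shows that every vertex surviving in $B_j$ keeps probability at least $(2^{w-1}\delta)\cdot 2^{-(w-1)}=\delta$; this dissolves your obstacle~(ii). For the $\D$-side, no separate auxiliary programs are needed: reaching a particular $v\in V_{\mathrm{small}}$ is computed by a \emph{prefix} of $B'$ (just relabel accept states at that layer), which is itself $\delta$-reachable, so the hypothesis applies directly. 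The $2^w$ in the final bound comes from the threshold $2^{w-1}\delta$, not from counting vertices.
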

\begin{proof}
	Let $\D$ be a distribution on $\pmone^n$ that $\eps$-fools all $\delta$-reachable ROBPs of length-$n$ and width-$w$.
The first observation is that $\D$ also fools prefixes of these programs. This reason is simple: to simulate the prefix of length-$k$ of a $\delta$-reachable ROBP $B$, one can just reroute the last $n-k$ layers of edges in $B$ so that they would ``do nothing'', i.e. that they would be the identity transformation regardless of the values of $x_{k+1}, \ldots, x_n$.

Let $B$ be a length $n$ width-$w$ ROBP with at most $\ell$ colliding layers.
Next, we introduce $B'$, a $\delta$-reachable ROBP of length-$n$ and width-$(w+1)$, that would help bound the difference between 
$$
B(U_n) := \Pr_{x\sim U_n} [B(x)=1] \qquad\text{and}\qquad 
B(\D) := \Pr_{x\sim \D} [B(x)=1]\;,
$$ 
where $U_n$ is the uniform distribution over $\pmone^n$.
Let $B'$ be the the following modified version of $B$.
To construct $B'$ we consider a sequence of $\ell+1$ branching programs $B_0, \ldots, B_\ell$ where $B_0 = B$ and $B' = B_\ell$.
Let $i_1, \ldots, i_\ell$ be the colliding layers in $B$.
For $j=1, \ldots, \ell$ we take $B_j$ to be $B_{j-1}$ except we may reroute some of the edges in the $i_j$-th layer.
We explain the rerouting procedure. 
For $j=1,\ldots, \ell$ we calculate the probability to reach vertices in layer $V_{i_j}$ of $B_{j-1}$. If some vertex $v$ in the $i_j$-th layer has probability smaller than $2^{w-1} \cdot \delta$, then we reroute the two edges going from the vertex $v$ to go to ``immediate stop''. We denote by $\Vsmall$ the set of vertices for which we rerouted the outgoing edges from them.

First, we claim that any reachable vertex $v$ in $B_\ell$ has $p_v  \ge \delta$. 
Let $i_{\ell+1} = n+1$ for convenience. 
We apply induction and show that for $j=0,1, \ldots, \ell$ any vertex reachable by $B_{j}$ in layers $1,\ldots, i_{j+1}$ has $p_v \ge \delta$.
The base case holds because up to layer $i_{1}$ the branching program has no colliding layers and we may apply Claim~\ref{claim:pv-large} to get that $p_v \ge 2^{-(w-1)} \ge \delta$.
To apply induction assume the claim holds for $B_{j-1}$ and show that it holds for $B_{j}$.
The claim obviously holds for all vertices in layers $1, \ldots, i_{j}$ in $B_j$ since we didn't change any edge in those layers going from  $B_{j-1}$ to $B_j$.
Let $v$ be a reachable vertex in layer $i$ where $i_j < i \le i_{j+1}$ in $B_j$. It means that there is a vertex in $v'$ with $p_{v'}(B_j)  \ge 2^{w-1}\cdot \delta$ in the $i_{j}$-th layer of $B_j$ (and also in $B_{j-1}$) and a path going from $v'$ to $v$. 
Looking at the subprogram from $v'$ to $v$ we note that this is a subprogram with no colliding edges (only the first layer has the potential to be colliding, but in a ROBP the first layer can never be colliding as there is only one edge marked by $(-1)$ and only one edge marked by $-1$). By Claim~\ref{claim:pv-large} the probability to get from $v'$ to $v$ is at least $2^{-(w-1)}$.
Thus, the probability to reach $v$ is at least $p_{v'}(B_j) \cdot \Pr[\text{reach $v$}|\text{reached $v'$}] \ge 2^{w-1} \cdot \delta \cdot 2^{-(w-1)} = \delta$.

Next, we bound $|B(U_n)-B(\D)|$ by using the triangle inequality
\begin{equation}\label{eq:hybrid}|B(U_n)-B(\D)|  \le |B(U_n)-B'(U_n)| + |B'(U_n)-B'(\D)| + |B'(\D)-B(\D)|\end{equation}
and bounding each of the three terms separately.
\begin{enumerate}
\item	
The first term is bounded by the probability of reaching one of the nodes in $\Vsmall$ in $B'$ when taking a uniform random walk. This follows since if the path defined by $x$ didn't pass through $\Vsmall$ then we would end up with the same node in both $B$ and $B'$ (since no rerouting affected the path). Each vertex $v$ in $\Vsmall$ has $p_v(B') < 2^{w-1} \cdot \delta$.
By union bound, the probability to pass through $\Vsmall$ is at most $|\Vsmall| \cdot 2^{w-1} \cdot \delta$.
\item The second term is at most $\eps$ since the program $B'$ is $\delta$-reachable.
\item Similarly to the first term, the third term is bounded by the probability of reaching one of the nodes in $\Vsmall$ in $B'$ when taking a walk sampled by $\D$.
	\begin{align*}
		|B'(\D) - B(\D)|
		&\le \Pr_{x\sim \D}[\text{reaching $\Vsmall$ on the walk on $B'$ defined by $x$}] \\
		&\le \sum_{v\in \Vsmall} \Pr_{x\sim \D}[\text{reaching $v$ on the walk on $B'$ defined by $x$}]
	\end{align*}
	However since $\D$ is pseudorandom for prefixes of $B'$, for each $v \in \Vsmall$ the probability of reaching $v$ when walking according to $\D$ is  $\eps$-close to the probability of reaching $v$ when walking according to $U_n$.
	\begin{align*}
		|B'(\D) - B(\D)|
		&\le \sum_{v\in \Vsmall} (\eps + \Pr_{x\sim U_n}[\text{reaching $v$ on the walk on $B'$ defined by $x$}])\\
		&= \sum_{v\in \Vsmall} \left(\eps + p_v(B')\right)
		\le |\Vsmall| \cdot (\eps + 2^{w-1} \delta)
	\end{align*}
\end{enumerate}

Summing the upper bound on the three terms in Eq.~\eqref{eq:hybrid} gives:
\[|B(U_n)-B(\D)| \le |\Vsmall|\cdot (\eps +  2^{w} \delta) + \eps  \le \ell w \cdot (\eps +  2^{w} \delta) + \eps.\;\qedhere\]
\end{proof}

\begin{proof}[Proof of Theorem~\ref{thm:foolfewcollisions}]
	Take $\eps' = \eps/(2(\ell w+1))$ and $\delta = \eps'/2^{w}$. 
	Take the generator from Theorem~\ref{thm:BRRY} with parameters $\delta$ and $\eps'$.
	Applying Lemma~\ref{lemma:negligible}, the error of this generator on the class of ROBPs with width $w$ length $n$ and at most $\ell$ colliding layers is at most 
	$(\ell w+1) \cdot \eps' + (2^{w} \cdot w\cdot \ell) \cdot \delta \le \eps/2 + \eps/2 = \eps$. 
	By Theorem~\ref{thm:BRRY}, its seed length is 
	$$O(\log \log n + \log(1/\eps') + \log(1/\delta) + \log(w)) \cdot \log(n)$$
	which is at most $O(\log \log n + \log(1/\eps) + \log(\ell) + w)\cdot \log(n)$.
	\end{proof}

\subsection{Proof of Theorem~\ref{thm:main-3ROBP-ordered}}
We are now ready to prove our main result on fooling 3ROBPs. Our generator is obtained by applying Claim~\ref{claim:assigning} iteratively $O(\log \log n)$ times and then using a PRG fooling 3ROBPs with at most $O(\poly(1/\eps))$ colliding layers as in Theorem~\ref{thm:foolfewcollisions}. The intuition is as follows.

Let $B$ be a 3ROBP and let $\rho_0$ be a pseudorandom restriction as in Claim~\ref{claim:assigning}. We first show that with probability at least $1-\eps/n$ over $\rho_0$,  $B^0 = B\rest_{\rho_0}$ is a $(3,\ell_0,m)$-ROBP for $\ell_0 = O(\log(n/\eps))$. Let $B^0 = D_1^0 \circ \cdots \circ D_m^0$ where each $D_i^0$ has at most $\ell_0$ colliding layers and begins and ends with width two layers. Let $\rho_1$ be an independent pseudo-random restriction as in Claim~\ref{claim:assigning}. Then $B^1 \equiv B^0\rest_{\rho_1} = D_1^0\rest_{\rho_1} \circ \cdots \circ D_m^0\rest_{\rho_1}$ and it is easy to check that with probability at least $1- 2^{-\Omega(\ell_0)}$, each $D_i^0\rest_{\rho_1}$ has at most $\ell_0/2$ colliding layers. 
Ideally, we would like to apply a union bound over the different $D_i^0$ and conclude that $B^1$ is a $(3,\ell_0/2,m)$-ROBP. 
In the first step, this approach works since $m\le  C^{\ell_0}$ for a large enough constant $C$ (by the definition on $\ell_0$), and we can afford a union bound. 
We get that  with probability at least $1 - 2^{-\Omega(\ell_0)}$, $B^1$ is a $(3,\ell_0/2,m_1)$-ROBP (for some $m_1\le m$).
Continuing this process by induction, at step $i$ we have that $B^{i}$ is a $(3,\ell_0/2^i, m_i)$-ROBP. To carry the union bound in the $i$-th step we need $m_i \le C^{\ell_0/2^i}$, however $m_i$ could be much larger than that. Nevertheless, we know that we can always approximate $B^i$ with a $(3,\ell_0/2^i,C^{\ell_0/2^i})$-ROBP by Lemma~\ref{lem:structcolliding}. This approximation allows us to apply the union bound and conclude that the number of colliding layers in each block decreases by a factor of $2$. We iterate this approach until the maximal number of colliding layers in a subprogram is at most $O(\log (1/\eps))$, and then use the PRG from Theorem~\ref{thm:foolfewcollisions}.

To carry the induction forward as outlined above, we need the following lemma that shows that the error terms simplify as well under the pseudorandom restrictions.

\begin{lemma}\label{lemma:error under restriction}
For any constant $C \ge 20$, there exists $\alpha \in (0,1)$ such that the following holds.
Let $\ell,n\in \N$ be sufficiently large and $m = C^{\ell} \le n$. Let $F = \bar{\FCol_1} \wedge \ldots \wedge \bar{\FCol_m}$ where $\FCol_i(x)$ are non-zero events on disjoint variables computed by $(3,\ell,1)$-ROBPs.
Let $\rho$ be a pseudorandom restriction as in Claim~\ref{claim:assigning} with parameter $\alpha$ and error parameter $\delta \le 1/n^{5}$. 
Then, with probability at least $1 - 2C^{-\ell/2}$, we have $F\rest_\rho \le \bar{\FCol'_i}\wedge \ldots \wedge \bar{\FCol'_{\sqrt{m}}}$ where $\FCol'_i(x)$ are non-zero events on disjoint variables computed by $(3,\ell/2,1)$-ROBPs.
\end{lemma}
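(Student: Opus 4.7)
The plan is to identify a subset $S\subseteq[m]$ of at least $\sqrt{m}=C^{\ell/2}$ indices with the property that for each $i\in S$, the restricted event $\FCol_i\rest_\rho$ is itself non-zero and computed by a $(3,\ell/2,1)$-ROBP; then setting $\FCol'_j:=\FCol_{i_j}\rest_\rho$ for $j=1,\ldots,\sqrt{m}$ gives non-zero events on disjoint variable sets (disjointness inherited from the $\FCol_i$'s), and
\[
F\rest_\rho \;=\; \bigwedge_{i=1}^{m}\overline{\FCol_i\rest_\rho}\;\le\; \bigwedge_{j=1}^{\sqrt{m}}\overline{\FCol'_j}
\]
just by dropping terms from the conjunction. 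Since restriction preserves both width $3$ and the property that the first and last layers have width at most $2$, ``goodness'' of index $i$ reduces to (a) $\FCol_i\rest_\rho\not\equiv 0$ and (b) the restricted ROBP has at most $\ell/2$ colliding layers.

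For (a), I will invoke Claim~\ref{claim:pv-large} applied to an accepting vertex of $\FCol_i$ to obtain $\Pr_{x\sim U_{B_i}}[\FCol_i(x)=1]\ge 4^{-(\ell+1)}$; this immediately yields $\Pr_\rho[\FCol_i\rest_\rho\not\equiv 0]\ge 4^{-(\ell+1)}$ for a uniform random restriction (every $\rho$ admitting an accepting completion makes the restricted function non-zero), and the same bound up to a negligible additive $O(\delta)$ under the $\delta$-biased restriction since $\delta\le n^{-5}$ and the indicator depends only on $\rho|_{B_i}$. For (b), I will exploit the ``colliding-layer reduction'' intuition underlying the iteration scheme of this section: for a sufficiently small marginal $\alpha=\alpha(C)$, each colliding layer of a $(3,\ell,1)$-ROBP has constant probability of either being \emph{killed} (its variable is fixed to the value that collapses a one-sided collision) or \emph{merged} with a neighbouring colliding layer that lands in the same alive range of the restriction, so that a union bound over the $\ell$ possible starting positions of a run of $\ell/2$ surviving colliding layers gives $\Pr[\text{(b) fails}]\le \ell\cdot(\text{const})^{\ell/2} = 2^{-\Omega(\ell)}$. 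Combining (a) and (b) yields $\Pr[i\text{ good}]\ge q:=4^{-\ell-2}$ for $\ell$ sufficiently large.

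I will then boost this per-index lower bound to a concentration statement via approximate independence. Because the $\FCol_i$'s are on pairwise disjoint variable sets $B_i$, the events $\{i\text{ good}\}_{i\in[m]}$ are fully independent under a truly random $\rho$ and $\delta\cdot 2^{O(k)}$-close to $k$-wise independent under the $\delta$-biased $\rho$ for any $k\le O(\log(1/\delta))$. Applying Lemma~\ref{lemma:tail_bounds} to the centred indicators with $k=\ell$ (legal since $\ell\le \log_C(n)\ll\log(1/\delta)$ given $\delta\le n^{-5}$) yields $\Pr[\#\text{good}\le qm/2]\le (O(k)/(qm))^{k/2}$; with $qm\ge (C/16)^\ell$ and $k=\ell$ this simplifies to at most $2C^{-\ell/2}$ provided $C\ge 20$ and $\ell$ is sufficiently large, and since $qm/2\gg \sqrt{m}$ we conclude that at least $\sqrt{m}$ indices are good with probability at least $1-2C^{-\ell/2}$.

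The hard part will be (b) of the per-index bound: a naive accounting shows only that the restricted colliding count is at most $\ell$, because ``bottleneck'' colliding layers (where both sides of the layer collide) always contribute to the restricted program regardless of the restriction. Pushing this down to $\ell/2$ requires carefully exploiting the structure of $(3,\ell,1)$-ROBPs — in particular, the width-$2$ start/end constraint limits how bottlenecks can be distributed relative to the non-colliding width-changing layers — together with the $(1-\alpha)/2$ kill probability for one-sided colliding layers, choosing $\alpha=\alpha(C)$ small but positive so that the two effects combine. I expect that a clean statement of the form ``with constant probability, the number of colliding layers in $\FCol_i\rest_\rho$ is at most $\ell/2$'' can be extracted from the same one-step pseudorandom-restriction arguments developed earlier in this section, so the main technical effort lies in the precise accounting rather than in any conceptually new ingredient.
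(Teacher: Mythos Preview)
Your overall plan --- show that (b) each $\FCol_i\rest_\rho$ is computable by a $(3,\ell/2,1)$-ROBP and (a) at least $C^{\ell/2}$ of them remain non-zero, then keep those --- is exactly what the paper does. But your execution of both parts has problems.

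\textbf{Part (b) is easy, not hard.} Your worry about ``bottleneck'' colliding layers is misplaced. Once a colliding layer's variable is assigned, that layer is no longer present in the restricted program: the fixed transition either acts as a permutation (and composing a permutation into a neighboring alive layer cannot turn a non-colliding layer into a colliding one), or it collapses the underlying width to $\le 2$, after which $\FCol_i$ has a single non-terminal state and everything before is irrelevant. Hence the number of colliding layers in $\FCol_i\rest_\rho$ is at most the number of colliding-layer variables that remain \emph{alive}, and $\Pr[\text{(b) fails for }i]\le \binom{\ell}{\ell/2}\alpha^{\ell/2}+(\eps/n)^{\omega(1)}\le 2^\ell\alpha^{\ell/2}$. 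The paper union-bounds this over all $m=C^\ell$ indices and picks $\alpha$ so that $(2C)^\ell\alpha^{\ell/2}\le C^{-\ell/2}$; no per-index probability from (b) enters the concentration step.

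\textbf{The concentration step has a real gap.} You claim the events $\{i\text{ good}\}$ are $\delta\cdot 2^{O(k)}$-close to $k$-wise independent under the pseudorandom $\rho$, but this does not follow. The event ``$\FCol_i\rest_\rho\not\equiv 0$'' is a function of $\rho$ on the \emph{entire} variable set $B_i$ of $\FCol_i$, and $|B_i|$ can be as large as $n$: a $(3,\ell,1)$-ROBP bounds the number of colliding layers, not the length. A $\delta$-biased distribution is only close to uniform on $O(\log(1/\delta))$ coordinates, so you get no control over the joint law of $\rho$ on $k$ blocks of unbounded size, and Lemma~\ref{lemma:tail_bounds} does not apply. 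The paper instead exploits the \emph{fooling} guarantee of $\rho$: it introduces fresh uniform $y$ and sets $A_i=\{\FCol_i\rest_\rho(y)=1\}$, so $\Pr[A_i]=p_i\pm\delta$ directly by Claim~\ref{claim:assigning}. The key observation is that $A_i\wedge A_j$ is the event $(\FCol_i\wedge\FCol_j)\rest_\rho(y)=1$, and $\FCol_i\wedge\FCol_j$ is a Boolean combination of two 3ROBPs on disjoint variables --- which by Lemma~\ref{lemma:fooling H of width-3} is fooled by $\rho$ with error $\delta(n+1)^2$. This bounds all pairwise covariances, and Chebyshev with $M=\sum_i p_i\ge m\cdot 4^{-(\ell+1)}\ge 8C^{\ell/2}$ (via Claim~\ref{claim:pv-large}) gives $\Pr[\sum_i\one_{A_i}<M/2]\le 8/M\le C^{-\ell/2}$. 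Since $A_i=1$ forces $\FCol_i\rest_\rho\not\equiv 0$, this yields at least $C^{\ell/2}$ non-zero restricted events.
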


\begin{proof}
	First, we show that with high probability, each $\FCol_i$ has at most $\ell/2$ colliding layers under the pseudo-random restriction. 
	To see it, note that any colliding layer that is restricted can be either:
	\begin{itemize}
		\item Assigned to a value that reduces the width of the original program to $2$, and thus the width of $\FCol_i$ to 1, in which case any previous layer in $\FCol_i$ is not affecting its value.
		\item Assigned to a value that applies a permutation on the states of the program, thus reducing the number of colliding layers.
	\end{itemize}  
	In either case, if $k$ colliding layers are unassigned, then $\FCol_i\rest_{\rho}$ can be a computed by a 3ROBP with at most $k$ colliding layers.
	By Claim~\ref{claim:assigning} the probability that less than $\ell/2$ colliding layers are unassigned is at least $1-\binom{\ell}{\ell/2} \cdot \alpha^{\ell/2} -(\eps/n)^{\omega(1)} \ge 1-2^{\ell} \alpha^{\ell/2}$.	Taking a union bound over the $C^{\ell}$ functions $\{\FCol_i\}_{i=1}^{m}$ we get that with probability at least $1- (2C)^\ell \cdot \alpha^{\ell/2} \ge 1-C^{-\ell/2}$ (for a suitable choice of $\alpha$) all functions $\{\FCol_i\rest_{\rho}\}_{i=1}^{m}$ can be computed by 3ROBPs with at most $\ell/2$ colliding layers.
	
	We move to show that with high probability at least $C^{\ell/2}$ of the functions $\FCol_i\rest_{\rho}$ are non-zero.
	We apply the second moment method. 
	Denote by $p_i = \Pr_{z\sim U}[\FCol_i(z)]$ for $i\in [m]$.
	Let $A_1, \ldots, A_m$ be the events that $\{(\FCol_i\rest_{\rho})(y) =1\}_{i=1}^{m}$ respectively, where $\rho$ is the pseudo-random restriction from Claim~\ref{claim:assigning} and $y$ is uniformly distributed. By Claim~\ref{claim:assigning}
	$$\Pr[A_i] = \Pr_{\rho,y\sim U}[(\FCol_i\rest_{\rho})(y) =1] = \E_{z\sim U}[\FCol_i(z)] \pm \delta = p_i \pm \delta,$$
	and by the next lemma, whose proof is deferred to Appendix~\ref{app:composition 3ROBPs}, we get
	\begin{align*}\Pr[A_i \wedge A_j] &= \Pr_{\rho,y}[(\FCol_i\rest_{\rho})(y) \wedge (\FCol_j\rest_{\rho})(y) =1]\\
	&= \E_{z\sim U}[\FCol_i(z)\wedge \FCol_j(z)] \pm  \delta\cdot (n+1)^2 = p_i p_j \pm \delta\cdot (n+1)^2.
	\end{align*}	

\begin{lemma}\label{lemma:fooling H of width-3}
Let $f_1, \ldots, f_k$ be 3ROBPs on disjoint sets of variables of $[n]$. 
Let $H:\pmone^k \to \pmone$ be any Boolean function.
Then, $f = H(f_1, f_2, \ldots, f_k)$ is $\delta \cdot (n+1)^k$-fooled by the pseudorandom restriction in Claim~\ref{claim:assigning}.
\end{lemma}
Thus, the covariance of the two events $A_i$ and $A_j$ is at most $\delta' := \delta ((n+1)^2+3)$.	
Denote by $M = \sum_{i=1}^m{p_i}$.
By Claim~\ref{claim:pv-large} we have that $M \ge 4^{-(\ell+1)} \cdot m \ge 8C^{\ell/2}$ (since $m=C^{\ell}$, $C\ge 20$ and $\ell$ is sufficiently large).
Let $Z = \sum_{i=1}^{m} \one_{A_i}$. 
Then, $\E[Z] \ge M -\delta m$ and by Chebyshev's inequality 
	\begin{align*}\Pr\left[Z<M/2\right] \le 
	\Pr\left[|Z-\E[Z]| \ge M/2-\delta m\right]
	&\le \frac{\Var[Z]}{(M/2-\delta m)^2}\;.\end{align*}
We  bound 
$$\Var[Z] = \sum_{i}{\Var[A_i]} + \sum_{i\neq j}{\cov[A_i,A_j]} \le \sum_{i=1}^{m}(p_i + \delta) + \sum_{i\neq j}\delta' \;\le\; M + m^2\cdot \delta',$$
 which gives $\Pr[Z<M/2] \le (M + \delta' m^2)/(M/2 - \delta m)^2 \le (M+1)/(M/2-1)^2 \le 8/M \le C^{-\ell/2}$ using $\delta \le 1/n^5$.
 In the complement event, at least $M/2 \ge C^{\ell/2}$ of the events $A_1, \ldots, A_m$ occur, and in particular at least $C^{\ell/2}$ of the restricted functions $\{\FCol_i\rest_\rho\}_{i=1}^m$ are non-zero. 

Suppose that at least $C^{\ell/2}$ of the restricted functions $\{\FCol_i\rest_\rho\}_{i=1}^m$ are non-zero, and that all restricted functions has at most $\ell/2$ colliding layers. By the above analysis this happens with probability at least $1-2C^{-\ell/2}$. Under this assumption, we can reduce the number of functions to be exactly $\sqrt{m} = C^{\ell/2}$, resulting in an upper bound on $F\rest_{\rho}$ which we denote by $\bar{\FCol'_{1}(x)} \wedge \ldots \wedge \bar{\FCol'_{\sqrt{m}}(x)} $.
	\end{proof}

We are now ready to prove the main theorem, Theorem~\ref{thm:main-3ROBP-ordered}.
\begin{proof}[Proof of Theorem~\ref{thm:main-3ROBP-ordered}]
Let $C\ge 20$.
Let $\alpha \in (0,1)$ be a constant to be chosen later. Let $\ell_0 = O(\log (n/\eps))$. Let $k$ be a parameter to be chosen later and let $\ell_i = \ell_0/2^i$ for $1 \leq i \leq k$. 

Our generator is as follows. First choose $\rho_0, \rho_1,\ldots,\rho_k$ independent pseudo-random restrictions as in Claim~\ref{claim:assigning} with parameter $\alpha$ and $\delta = \eps/n^{10}$. After iteratively applying the restrictions $\rho_0,\rho_1,\ldots,\rho_k$, we set the remaining bits using the generator from Theorem~\ref{thm:foolfewcollisions} for a parameter $\ell = \ell_k \cdot C^{\ell_k}$ and error parameter $\eps'$ to be chosen later. Let $Y$ be the output distribution of the generator. 

Let $B^0 = B\rest_{\rho_0}$. We first claim that $B^0$ is a $(3,\ell_0,m)$-ROBP with high probability. 
In the following let $X$ be uniformly random over $\pmone^n$. 
\begin{claim}\label{claim:main1}
With probability at least $1-\eps/n$, $B\rest_{\rho_0}$ is a $(3,\ell_0,m)$-ROBP and $\E_{\rho_0,X}[B\rest_{\rho_0}(X)] = \E_X[B(X)] \pm \delta$. 
\end{claim}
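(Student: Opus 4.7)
The expectation statement $\E_{\rho_0,X}[B\rest_{\rho_0}(X)]=\E_X[B(X)]\pm\delta$ is immediate from Claim~\ref{claim:assigning} applied with error parameter $\delta$, so the real work is the structural claim. Assume without loss of generality that $B$ has no redundant edge layers, and enumerate the colliding edge layers of $B$ as $c_1<c_2<\cdots<c_K$ with $K\le n$. For each such layer fix the non-empty set of \emph{collision values} $\Col_j\subseteq\{\pm1\}$ whose edges land in at most two distinct vertices of the following vertex layer. Call $c_j$ \emph{reducing} under $\rho_0=(T,y)$ if $c_j\notin T$ and $y_{c_j}\in\Col_j$, and write $R_j$ for this event; the $R_j$'s depend on pairwise disjoint bits of the source randomness.

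The decisive structural observation is that a reducing layer $c_j$ collapses the vertex layer at position $c_j+1$ to at most two reachable vertices in $B\rest_{\rho_0}$, while any restricted colliding layer that is \emph{not} reducing ceases to be a colliding layer of $B\rest_{\rho_0}$ (the surviving edges form a permutation, since the only other value is the unique non-collision value in the $|\Col_j|=1$ case). Hence, if every window of $\ell_0+1$ consecutive colliding layers $c_i,\dots,c_{i+\ell_0}$ of $B$ contains at least one reducing layer, then between any two consecutive width-$2$ vertex layers of $B\rest_{\rho_0}$ the number of colliding layers of $B\rest_{\rho_0}$ is at most $\ell_0$ (bounded by the number of colliding layers of $B$ in the same range). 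Since the start vertex is unique and the final vertex layer has effective width $2$ (accept/reject), cutting $B\rest_{\rho_0}$ at each width-$2$ vertex layer then yields a decomposition $D_1\circ\cdots\circ D_m$ exhibiting $B\rest_{\rho_0}$ as a $(3,\ell_0,m)$-ROBP.

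To bound the failure probability, under a uniform restriction each $R_j$ occurs with probability at least $(1-\alpha)/2$ and the $R_j$'s are mutually independent, so the probability that a fixed window of $\ell_0+1$ consecutive colliding layers avoids every reducing event is at most $((1+\alpha)/2)^{\ell_0+1}$. Since $\rho_0$ is $(\eps/n)^{\omega(1)}$-biased in both $T$ and $y$ by Claim~\ref{claim:assigning}, and each window-indicator depends on only $O(\log(n/\eps))$ coordinates (two per $R_j$), Claim~\ref{claim:inclusion-exclusion} together with the small-bias of $y$ shows the pseudorandom window-probability differs from the uniform one by at most $(\eps/n)^{\omega(1)}\cdot 2^{O(\log(n/\eps))}=(\eps/n)^{\omega(1)}$. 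Choosing the implicit constant in $\ell_0=O(\log(n/\eps))$ large enough (depending only on $\alpha$) makes each window fail with probability at most $\eps/(2n^2)$; a union bound over the at most $n$ starting windows yields total failure probability $\le\eps/n$. The only mildly delicate point is this transition from uniform to pseudorandom on window-events, which is absorbed comfortably by the super-polynomial bias of Claim~\ref{claim:assigning}; the core of the argument is the elementary graph-theoretic fact that restricting a colliding layer to a collision value immediately collapses the following vertex layer to width two.
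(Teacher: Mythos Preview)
Your proposal is correct and follows essentially the same approach as the paper's proof. Both enumerate the colliding layers of $B$, define the same ``good''/``reducing'' event (layer is restricted and assigned a collision value), show each window of $\Theta(\log(n/\eps))$ consecutive colliding layers contains a reducing one with high probability, and conclude that cutting at the resulting width-$2$ layers gives the $(3,\ell_0,m)$ decomposition. The only cosmetic differences are that the paper uses window size $\ell_0$ and bounds the bad-window probability by explicitly summing over partitions $S_1\cup S_2$ of the window into ``in $T$'' versus ``in $\bar T$ with non-collision value'' to get $2(\tfrac{1}{2}+\alpha)^{\ell_0}$, whereas you compute the uniform-restriction bound $((1+\alpha)/2)^{\ell_0+1}$ first and then transfer via the small-bias guarantees; both routes are equivalent. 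Your observation that a restricted colliding layer assigned the non-collision value ceases to be colliding in $B\rest_{\rho_0}$ is correct and makes explicit a step the paper leaves implicit.
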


For $0 \leq i \leq k$, let $\rho^i \triangleq \rho_0 \circ \cdots \rho_i$. We will show the following claim by induction on $i$.
\begin{claim}\label{claim:main2}
For $0 \leq i \leq k$, with probability at least $1- \frac{\eps}{n} - 4i \cdot C^{-\ell_i}$, $B\rest_{\rho^i}$ can be written as $B^i + E^0 + E^1 + \cdots + E^i$ where $B^i$ is a $(3,\ell_i,C^{\ell_i})$-ROBP and the error terms $E^j$ for $0 \leq j \leq i$ satisfy:
 Either $E_j\equiv 0$ or $|E^j(x)| \leq F^j(x)$ with $F^j(x) = \wedge_{h=1}^{m_j} (\neg F_h^j(x))$ where $F_h^j$ are non-zero events computed by $(3,\ell_i,1)$-ROBPs on disjoint sets of variables and $m_j = C^{\ell_i}$. 

Furthermore, $\E_{\rho^i,X}[B\rest_{\rho^i}(X)] = \E_X[B(X)] \pm (i+1) \delta$. 
\end{claim}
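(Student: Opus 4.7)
My plan is to induct on $i$, with the base case $i=0$ following immediately by combining Claim~\ref{claim:main1} (which gives that $B|_{\rho_0}$ is a $(3,\ell_0,m)$-ROBP with probability $\ge 1-\eps/n$ and preserves expectation up to $\delta$) with the main structural Lemma~\ref{lem:structcolliding} applied at parameter $\ell_0$, producing $B|_{\rho_0} = B^0 + E^0$ in exactly the required form with $m_0 = C^{\ell_0}$.

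For the inductive step I would take a fresh pseudorandom restriction $\rho_i$ and use linearity to write
\[
B|_{\rho^i} \;=\; B^{i-1}|_{\rho_i} \;+\; \sum_{j=0}^{i-1} E^j|_{\rho_i},
\]
then analyze each summand. The ``main'' term $B^{i-1}|_{\rho_i}$ begins as a $(3,\ell_{i-1},C^{\ell_{i-1}})$-ROBP; by the same calculation as the first half of the proof of Lemma~\ref{lemma:error under restriction} (restrictions either kill a colliding layer outright or make it non-colliding), for a suitable choice of $\alpha$ each sub-program retains more than $\ell_i = \ell_{i-1}/2$ colliding layers with probability at most $C^{-2\ell_i}$. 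A union bound over the $C^{\ell_{i-1}} = C^{2\ell_i}$ sub-programs gives that $B^{i-1}|_{\rho_i}$ is a $(3,\ell_i,C^{2\ell_i})$-ROBP with probability $\ge 1 - C^{-\ell_i}$. Re-applying Lemma~\ref{lem:structcolliding} at parameter $\ell_i$ then cuts the sub-program count back down to $C^{\ell_i}$, producing $B^{i-1}|_{\rho_i} = B^i + E^i$ with $B^i$ a $(3,\ell_i,C^{\ell_i})$-ROBP and $|E^i|$ bounded by a conjunction of $C^{\ell_i}$ negations of $(3,\ell_i,1)$-ROBPs on disjoint variables, exactly as required.

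For each pre-existing error term $E^j$ with $j<i$ I would invoke Lemma~\ref{lemma:error under restriction} directly on the bounding conjunction $F^j$ (at input parameter $\ell_{i-1}$): with probability at least $1-2C^{-\ell_i}$ it converts a conjunction of $C^{\ell_{i-1}}$ negations of $(3,\ell_{i-1},1)$-ROBPs into a conjunction of $\sqrt{C^{\ell_{i-1}}}=C^{\ell_i}$ negations of $(3,\ell_i,1)$-ROBPs, so the restricted $E^j|_{\rho_i}$ still fits the claim's template with $m_j = C^{\ell_i}$. Summing these failure probabilities gives at most $(2i+2)C^{-\ell_i}$ of new failure; combined with the inductive bound $\eps/n + 4(i-1)C^{-\ell_{i-1}}$ (which is dominated by $\eps/n + (i-1)C^{-\ell_i}$ since $\ell_{i-1}=2\ell_i$ and $C\ge 20$), the total stays within $\eps/n + 4iC^{-\ell_i}$. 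The expectation statement is routine: conditioning on $\rho^{i-1}$, $B|_{\rho^{i-1}}$ is itself a width-3 ROBP, so Claim~\ref{claim:assigning} shows that applying $\rho_i$ changes its expectation by at most $\delta$, and the triangle inequality against the inductive hypothesis yields the $\pm(i+1)\delta$ bound.

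The main obstacle I expect is the tension between the per-sub-program success probability $1 - 2^{-\Omega(\ell_i)}$ of halving the collision count and the number of sub-programs one must union-bound over; without the repeated use of Lemma~\ref{lem:structcolliding} the sub-program count could blow up to $m \gg \exp(\ell_i)$ and no union bound would close. The key conceptual ingredient is therefore that after each pseudorandom restriction we immediately re-compress to at most $C^{\ell_i}$ sub-programs, keeping the combinatorial count aligned with the available concentration.
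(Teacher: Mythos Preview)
Your proposal is correct and follows essentially the same approach as the paper: induct on $i$, use Claim~\ref{claim:main1} plus Lemma~\ref{lem:structcolliding} for the base case, and in the step handle $B^{i-1}\rest_{\rho_i}$ by a union bound over its $C^{\ell_{i-1}}$ sub-programs followed by a re-application of Lemma~\ref{lem:structcolliding}, while pushing each pre-existing error bound $F^j$ through Lemma~\ref{lemma:error under restriction}. One small arithmetic slip: a per-sub-program failure of $C^{-2\ell_i}$ union-bounded over $C^{2\ell_i}$ sub-programs gives $1$, not $C^{-\ell_i}$; you need (and can get, by your ``suitable choice of $\alpha$'') a per-sub-program bound of order $C^{-3\ell_i}$, exactly as the paper's choice of $\alpha$ ensures.
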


A crucial point in the above is that the functions $F^0,\ldots,F^i$ bounding the error terms are conjunctions of negations of $(3,\ell_i,1)$-ROBPs and there exactly $C^{\ell_i}$ in each of them. 
\begin{proof}
For $i= 0$, the claim follows immediately by applying Lemma~\ref{lem:structcolliding} to $B\rest_{\rho_0}$. Now, suppose the claim is true for $i$. Suppose, we can write $B\rest_{\rho^i} = B^i + {\cal E}^i$, where ${\cal E}^i = E^0 + E^1 + \cdots + E^i$ as in the claim. By the induction hypothesis, this happens with probability at least $1 - \frac{\eps}{n} -  4i \cdot C^{- \ell_i}$. 

Clearly, $B\rest_{\rho^{i+1}} = B^i\rest_{\rho_{i+1}} + {\cal{E}}^i\rest_{\rho_{i+1}}$. Let $B^i = D_1 \circ \cdots \circ D_{m'}$ be a decomposition where each $D_j$ has at most $\ell_i$ colliding layers, starts and ends with width-$2$ layers and $m' \leq C^{\ell_i}$.  

Now, observe that as each $D_j$ has at most $\ell_i$ colliding layers, the probability that at least $\ell_i/2$ of these colliding layers are unfixed under $\rho_{i+1}$ is at most $\binom{\ell_i}{\ell_i/2}\cdot (\alpha^{\ell_i/2} + (\eps/n)^{\omega(1)}) \le 2^{\ell_i}\alpha^{\ell_i/2}$ by Claim~\ref{claim:assigning}. Thus, by a union bound over $1 \leq j \leq m'$, with probability at least $1 - 2^{\ell_i} \alpha^{\ell_i/2} \cdot C^{\ell_i} \ge 1-C^{-\ell_i}$ (for a suitable choice of $\alpha$), over $\rho_{i+1}$, $B^i\rest_{\rho_{i+1}}$ is a $(3,\ell_i/2,C^{\ell_i})$-ROBP.  Now, conditioning on this event, by Lemma~\ref{lem:structcolliding}, we can write $B^i\rest_{\rho_{i+1}}$ as $B^{i+1} + E^{i+1}$, where $B^{i+1}$ is a $(3,\ell_{i+1}, C^{\ell_{i+1}})$-ROBP and $E^{i+1}$ satisfies the conditions of the claim. Thus, with probability at least $1 - \frac{\eps}{n} -  4i \cdot C^{-\ell_{i}} - C^{-\ell_i}$, 
\begin{align*}
B\rest_{\rho^{i+1}} &= B^i\rest_{\rho_{i+1}} + {\cal E}^i\rest_{\rho_{i+1}} \\
&= B^{i+1} + {\cal E}^i\rest_{\rho_{i+1}} + E^{i+1},
\end{align*}
where $B^{i+1}$, and $E^{i+1}$ satisfy the conditions of the claim. 

We just need to argue that ${\cal E}^i\rest_{\rho_{i+1}}$ can be written in the requisite form. To this end, note that for $0 \leq j \leq i$, $|E^j\rest_{\rho_{i+1}}| \leq F^j\rest_{\rho_{i+1}}$. By the induction hypothesis, we either have $E^{j}\equiv 0$ or we can write $|E^{j}| \le F^j = \wedge_{h=1}^{m_j} (\neg F_h^j(x))$ where $F_h^i$ are $(3,\ell_i,1)$-ROBPs on disjoint sets of variables and $m_j = C^{\ell_i}$. We can now apply Lemma~\ref{lemma:error under restriction} to conclude that with probability at least $1 - 2C^{-\ell_{i}/2}$, we can write $F^j\rest_{\rho_{i+1}} = \wedge_{h=1}^{m_j'} (\neg H_h^j(x))$ where $H_h^j$ are non-zero events computed by $(3,\ell_i/2,1)$-ROBPs on disjoint sets of variables and $m_j' = C^{\ell_i/2}$. This satisfies the constraints of the claim. 

Adding up the failure probabilities over the choice of $\rho_{i+1}$, we get the desired decomposition for $i+1$ with probability at least 

$$1 - \frac{\eps}{n} - 4i\cdot C^{-\ell_i} -  C^{-\ell_i} - (i+1)\cdot 2 C^{-\ell_i/2}
\ge 1 - \frac{\eps}{n} - 4(i+1) C^{-\ell_{i+1}}.$$
(since $2C^{-\ell_i} \le  C^{-\ell_{i+1}}$).
The furthermore part follows immediately from Claim~\ref{claim:assigning}. The claim now follows by induction. 
\end{proof}

We are now ready to prove the theorem. By the above claim, we have that with probability at least $1 - \frac{\eps}{n} - 4k C^{-\ell_k}$ over the choice of $\rho_0, \rho_1,\ldots,\rho_k$, we can write
$$B\rest_{\rho^k} = B^k + E^0 + \cdots + E^k,$$
where $B^k$ is a $(3, \ell_k,C^{\ell_k})$-ROBP and $E^0,\ldots,E^k$ can be bounded by functions $F^0,\ldots,F^k$ that are  conjunctions of negations of $C^{\ell_k}$ non-zero events computed by $(3,\ell_k,1)$-ROBPs. 

Note that each such $F^j$ can be written as a width-$4$ ROBP, say $H^j$, by adding an additional layer to compute the conjunction and that the number of collisions in the width $4$ ROBP is at most $\ell_k \cdot C^{\ell_k}$. Therefore, if we let $Y$ be the output distribution of the generator from Theorem~\ref{thm:foolfewcollisions} with $\ell = \ell_k \cdot C^{\ell_k}$ and error parameter $\eps'$, we get that for all $0 \leq j \leq k$, and $X$ uniformly random over $\pmone^n$, 
\begin{align*}
\E[B^k(X)] &= \E[B^k(Y)] \pm \eps'\\
\E[|E^j(Y)|] &\leq \E[H^j(Y)] \leq \E[H^j(X)] + \eps' \leq (1-4^{-(\ell_k+1)})^{C^{\ell_k}} + \eps'
\end{align*}
where we used Claim~\ref{claim:pv-large} to bound $\E[H^{j}(X)]$. Since $C \ge 20$, $\E[|E^j(Y)|] \le \exp(-2^{\ell_k}) + \eps'$.

Combining the above inequalities we get that with probability at least $1 - \frac{\eps}{n} - 4k C^{-\ell_k}$ over the choice of $\rho_0, \rho_1,\ldots,\rho_k$
$$\Big|\E_X [B\rest_{\rho^k}(X)] - \E_Y [B\rest_{\rho^k}(Y)]\Big| \le  \eps' + (k+1)\cdot (\exp(-2^{\ell_k})  + \eps').$$
Finally, as we also have that 
$$\Big|\E_{\rho_0,\ldots,\rho_k}[ B\rest_{\rho^k}(X)] - \E[B(X)] \Big|\le  (k+1) \cdot \delta,$$
we get
$$\Big|\E_{\rho_0,\ldots,\rho_k}[B\rest_{\rho^k}(Y)] - \E[B(X)]\Big| \le \left( (k+1) \cdot \delta\right) \;+\; \left(\frac{\eps}{n} + 4k C^{-\ell_k}\right)\; +\; \left(\eps' + (k+1)\cdot (\exp(-2^{\ell_k})  +  \eps')\right).$$

To get $\ell_k = \log\log\log(n) + \log(1/\eps)$  we set $k = \log(\ell_0/\ell_k) = O(\log \log n)$. Furthermore, setting  $\delta = \eps/n^{10}$ and $\eps' = \eps/4k$, the above error bound becomes
$$\Big|\E_{\rho_0,\ldots,\rho_k}[B\rest_{\rho^k}(Y)] - \E[B(X)]\Big| \le  \eps.$$

Finally, we estimate the seed-length of our generator. Choosing the random restrictions takes $\tilde{O}(\log(n/\delta)) = \tilde{O}(\log(n/\eps))$ random bits. Sampling $Y$ requires seed-length
$$
O(\log \log n + \log(1/\eps') + \log(\ell_k \cdot C^{\ell_k}) + 4) \cdot\log n = O(\log \log n + \log(1/\eps)) \cdot \log n.
$$
Thus, the final seed-length is $\tilde{O}(\log(n/\eps)) + O(\log(1/\eps) (\log n))$. The theorem follows. 
\end{proof}

\subsection{Proof of Claim~\ref{claim:main1}}
\begin{claimNoNum}
With probability at least $1-\eps/n$, $B\rest_{\rho_0}$ is a $(3,\ell_0,m)$-ROBP and $\E_{\rho_0,X}[B\rest_{\rho_0}(X)] = \E_X[B(X)] \pm \delta$. 
\end{claimNoNum}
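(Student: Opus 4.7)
The second assertion, $\E_{\rho_0,X}[B\rest_{\rho_0}(X)] = \E_X[B(X)] \pm \delta$, is an immediate consequence of the error guarantee of Claim~\ref{claim:assigning} applied to $B$ itself, so the substantive content is the structural part: with probability at least $1-\eps/n$ over $\rho_0 = (T,y)$, the restricted program $B\rest_{\rho_0}$ decomposes as $D_1 \circ \cdots \circ D_m$ where each $D_j$ is a width-$3$ sub-program whose first and last vertex-layers have size at most $2$ and that contains at most $\ell_0$ colliding layers.

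My plan is to track width drops induced by $\rho_0$ and use them as checkpoints. For each colliding layer $i$ of $B$, call $i$ \emph{good} under $\rho_0$ if $i \notin T$ and $y_i$ is a direction for which $E_{i,y_i}$ has image of size at most $2$; such a direction exists precisely because layer $i$ is colliding. Since $T$ has marginal $\alpha$ and $y$ is near-uniform conditioned on $T$, any single colliding layer $i$ is good with probability at least $(1-\alpha)/2 - (\eps/n)^{\omega(1)}$, and the $(\eps/n)^{\omega(1)}$-bias of $T$ and $y$ renders these events near-independent across colliding layers.

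I would next partition the colliding layers of $B$ into consecutive windows of $\ell_0/2$ layers each (at most $2n/\ell_0$ windows). For $\ell_0 = C\log(n/\eps)$ with a sufficiently large constant $C$, inclusion-exclusion combined with the near-independence bounds the probability that a fixed window contains no good layer by $(1-(1-\alpha)/2)^{\ell_0/2} + (\eps/n)^{\omega(1)} \cdot 2^{O(\ell_0)}$, both of which are at most $\eps/(2n^2)$ for $C$ large enough. A union bound over the windows then yields that, with probability at least $1-\eps/n$, every window contains at least one good layer.

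Finally, I extract the decomposition. A good layer $i$ forces the reachable vertex set at $V_{i+1}$ in $B$ to have size at most $2$, and since the fixed transitions at positions outside $T$ can only preserve or shrink the reachable set, the first layer of $B\rest_{\rho_0}$ corresponding to a position $t \in T$ with $t \ge i+1$ (or the final vertex layer, if no such $t$ exists) also has width at most $2$; I take such positions as checkpoints. Selecting one checkpoint per window ensures that any two consecutive checkpoints are separated by at most $\ell_0$ colliding layers of $B$ (at most half a window on either side of the chosen good layers). The number of colliding layers of $B\rest_{\rho_0}$ within each resulting sub-program is bounded by the number of colliding layers of $B$ in its range, because if every layer of $B$ in a block $[t_j,t_{j+1})$ is non-colliding then each $E_{i,\cdot}$ in that block is injective on $V_i$ and a composition of injective maps remains injective on any reachable subset, so the corresponding layer of $B\rest_{\rho_0}$ is itself non-colliding. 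I expect this last comparison -- translating colliding layers of $B\rest_{\rho_0}$ into colliding layers of $B$ -- to be the most delicate step to formalize, as it requires careful tracking of reachable subsets along the composed transitions.
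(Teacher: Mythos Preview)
Your proposal is correct and follows essentially the same approach as the paper: define a colliding layer to be ``good'' if it is fixed to its colliding direction, show every window of consecutive colliding layers contains a good layer with high probability via the near-independence of $T$ and $y$, and use the resulting width-$2$ cuts as the boundaries of the $D_j$. The paper phrases the window argument with sliding windows of $\ell_0$ colliding layers rather than disjoint windows of size $\ell_0/2$, and simply asserts the decomposition without the extra care you take; the step you flag as most delicate is in fact routine, since a colliding layer of $B\rest_{\rho_0}$ at position $t_j$ forces some $E_{i,\cdot}$ with $i\in[t_j,t_{j+1})$ to be non-injective and hence some layer of $B$ in that range to be colliding.
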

\begin{proof}
The second part follows from Claim~\ref{claim:assigning}. We are left to prove the first part.

Let $\rho_0 = (T, y)$ be the pseudorandom restriction, where $T\subseteq[n]$ and $y\in \pmone^{[n]\setminus T}$.
Assume there are $L$ colliding layers in $B$ and let $i_1, i_2, \ldots, i_{L}$ be their indices.
For $j \in [L]$, call a layer $i_j$ ``good'' under the choice of $(T,y)$ if $i_j \in [n]\setminus T$ and the edges in the $i_j$-layer of $B$ marked by $y_{i_j}$ collide.

For $j \in \{1,\ldots, L-\ell_0+1\}$ let $\cE_{j}$ be the event that none of layers $\{i_{j}, i_{j+1}, \ldots, i_{j+(\ell_0-1)}\}$ is good.
Recall that $T$ is sampled from a $(\eps/n)^{\omega(1)}$-biased distribution with marginals $\alpha$, and
$y$ is sampled from a $(\eps/n)^{\omega(1)}$-biased distribution.
For $\cE_j$ to happen, we must have a partition  $S_1 \cup S_2 = \{j,j+1, \ldots, j+\ell_0-1\}$ such that all layers $i_{j'}$ for $j'\in S_1$ are in $T$ and all layers $i_{j''}$ for $j''\in S_2$ are in $[n]\setminus T$ but the edged marked by $y_{i_{j''}}$ in the $i_{j''}$-th layer do not collide.
For any fixed $j$ and fixed partition $S_1 \cup S_2 = \{j,j+1, \ldots, j+\ell_0-1\}$, the above event happens with probability at most
$$(\alpha^{|S_1|}+(\eps/n)^{\omega(1)}) \cdot 
(2^{-|S_2|}+(\eps/n)^{\omega(1)}) \le 2\cdot  \alpha^{|S_1|}\cdot 2^{-|S_2|} = 2\cdot  \alpha^{|S_1|}\cdot 2^{-(\ell_0-|S_1|)}
$$
(using $|S_1|+|S_2|=\ell_0 = O(\log(n/\eps))$).
Overall, 
$$\Pr[\cE_j] \le \sum_{S_1 \subseteq\{j, \ldots, j+\ell_0-1\}}(2\cdot  \alpha^{|S_1|}\cdot 2^{-(\ell_0-|S_1|}) = 2 \cdot (\tfrac{1}{2}+\alpha)^{\ell_0} \le \eps/n^2
$$
assuming $\alpha>0$ is a sufficiently small constant and $\ell_0 = c  \log(n/\eps)$ for a sufficiently large constant $c>0$.
By the union bound, $$\Pr[\cE_1 \vee \cE_2 \vee \ldots \vee \cE_{L-\ell_0+1}] \le (L-\ell_0+1) \cdot \eps/n^2 \le \eps/n.$$
Under the event that all $\cE_j$ are false, we get that $B\rest_\rho$  can be written as $D_1 \circ \ldots \circ D_m$ where each $D_i$ is a width-$3$ ROBP with at most $\ell_0$ colliding layers and at most $2$ vertices on the first and last layer.
\end{proof}

\subsection{Pseudorandom generator for unordered 3ROBPs}
In this section, using the recent generator of Chattopadhyay, Hatami, Hosseini, Lovett~\cite{CHHL18}, 
and a Fourier bound by Steinke, Vadhan and Wan~\cite{SteinkeVW17}, 
we show that we can also handle unordered 3ROBPs, thus proving Theorem~\ref{thm:main-3ROBP-unordered}.
\begin{lemma}[Lemma~3.14~\cite{SteinkeVW17}]
	Let $\ell\in \N$ and let $B$ be a width-$w$ ROBP with at most $\ell$ colliding layers. 
	Then, for all $k=1, \ldots, n$ it holds that $L_{1,k}(f) \le O(w^{3}\cdot \ell)^{k}$.
\end{lemma}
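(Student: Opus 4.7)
The plan is to set up a matrix representation of Fourier coefficients and then argue that only the colliding layers contribute to the growth of $L_{1,k}$. Concretely, I would write the ROBP as a matrix product $B(x) = u^\top M_1(x_1) M_2(x_2) \cdots M_n(x_n) v$, where $u$ is the indicator of the start state, $v$ is the indicator of accepting states, and each $M_i(x_i) \in \{0,1\}^{w\times w}$ is the layer-$i$ transition matrix on input bit $x_i$. Decompose each layer as $M_i(x_i) = A_i + x_i B_i$ with $A_i = \tfrac{1}{2}(M_i(1)+M_i(-1))$ and $B_i = \tfrac{1}{2}(M_i(1)-M_i(-1))$. Expanding the product, $\hat B(S) = u^\top N_1^S \cdots N_n^S\, v$, where $N_i^S = B_i$ if $i\in S$ and $N_i^S = A_i$ otherwise. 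Thus $L_{1,k}(B) = \sum_{|S|=k} |\hat B(S)|$ reduces to controlling signed matrix products over all choices of $k$ ``$B$-slots''.

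The main structural observation is that a \emph{non-colliding} (permutation) layer $i$ has $M_i(1)$ and $M_i(-1)$ both equal to permutation matrices on $[w]$. Consequently, $A_i$ is doubly stochastic, so it is nonexpansive in both $\ell_1$ and $\ell_\infty$ norms, and $B_i$ is half the difference of two permutations, which has zero row and column sums and operator norm at most $1$. Because such layers preserve suitable ``mass'' quantities, inserting a $B_i$ at a permutation position does not increase the cumulative multiplicative factor in the product $u^\top \prod_i N_i^S\, v$. In contrast, at a \emph{colliding} layer, the two matrices differ in a non-permutation way, and the analysis shows that each such layer contributes at most a factor of $O(w^3)$ to the relevant operator/$\ell_1$ norm (the cube reflects the usual row-column-entry counting in a width-$w$ stochastic decomposition, exactly as in the CHRT Fourier-growth argument used in Theorem~\ref{thm:CHRTa}).

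From here the bound follows by a level-$k$ Fourier-growth induction in the spirit of \cite{CHRT17,SteinkeVW17}. I would introduce a potential $\Phi_k(i)$ measuring the partial-sum $L_{1,k}$ of the prefix program through layer $i$, and show that for a permutation layer $\Phi_k(i) \le \Phi_k(i-1) + \Phi_{k-1}(i-1)\cdot O(1)$ (with the $O(1)$ absorbed into the permutation structure), while for a colliding layer one has $\Phi_k(i) \le \Phi_k(i-1) + O(w^3)\cdot \Phi_{k-1}(i-1)$. Iterating over the $\ell$ colliding layers and telescoping yields $\Phi_k(n) \le \binom{\ell}{k}\cdot (O(w^3))^k \le (O(w^3\ell))^k$, which is the claimed bound.

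The main technical obstacle is proving the two recurrences for $\Phi_k$ rigorously: one must verify that permutation layers genuinely do \emph{not} blow up the level-$k$ potential, despite the $x_i$-dependence contributing a ``$B_i$ slot'' choice. This is where the fact that $A_i$ is doubly stochastic and $B_i$ is a bounded-norm antisymmetric perturbation is essential, together with a careful bookkeeping of which prefix matrix-products survive in $\ell_1$ after taking absolute values. Once this per-layer step is established, summing over placements of the $k$ selected coordinates inside the $\ell$ colliding slots gives the final $(O(w^3\ell))^k$ estimate.
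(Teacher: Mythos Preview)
The paper does not prove this lemma; it is quoted from \cite{SteinkeVW17} and used as a black box. So there is no in-paper proof to compare against, and I evaluate your proposal on its own terms.

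Your framework --- write $M_i(x_i)=A_i+x_iB_i$, read off $\hat B(S)$ as a matrix product, run a level-$k$ potential recurrence in which only colliding layers drive growth --- is the right shape and matches the approach of \cite{SteinkeVW17} (building on \cite{ReingoldSV13}). But there is a real gap at exactly the step you flag as the main obstacle. You write that a permutation layer satisfies $\Phi_k(i)\le\Phi_k(i-1)+O(1)\cdot\Phi_{k-1}(i-1)$ ``with the $O(1)$ absorbed,'' and then telescope over only the $\ell$ colliding layers to get $\binom{\ell}{k}(O(w^3))^k$. These two claims are inconsistent: if each non-identity permutation layer contributes a positive $O(1)\cdot\Phi_{k-1}$ term --- and it does, since $B_i\neq 0$ there --- iterating gives order $\binom{n}{k}$, not $\binom{\ell}{k}$. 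Concretely, the properties you cite (that $A_i$ is doubly stochastic and $|B_i|$ has row and column sums at most~$1$) yield exactly $\|\phi_k(i)\|_1\le\|\phi_k(i-1)\|_1+\|\phi_{k-1}(i-1)\|_1$ for the natural vector potential, and nothing stronger; the zero-row/column-sum property of $B_i$ is not used at all once you pass to entrywise absolute values.

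What the actual proof needs is the nontrivial result of \cite{ReingoldSV13} that a permutation segment of \emph{arbitrary length} has level-$k'$ Fourier $L_1$ mass at most $(O(w^2))^{k'}$, independent of the segment length. That argument tracks a matrix-valued potential and exploits the cancellation $B_i\one=0$ \emph{before} taking absolute values --- a cancellation your scalar $\ell_1$ bookkeeping destroys. With the segment bound in hand, \cite{SteinkeVW17} decomposes $B$ into $\ell{+}1$ permutation segments interleaved with the $\ell$ single colliding layers and combines multiplicatively to get $(O(w^3\ell))^k$. Your outline is missing this segment-level ingredient; the ``absorption'' step cannot be made rigorous from the facts you list.
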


\begin{theorem}[Theorem~4.5 \cite{CHHL18}] 
Let $\mathcal{F}$ be a family of $n$-variate Boolean functions closed under restrictions. Assume that for all $f\in \mathcal{F}$ for all $k=1,\ldots, n$, $L_{1,k}(f) \le a\cdot b^k$.
Then, for any $\eps>0$, there exists a log-space explicit PRG which fools $\mathcal{F}$ with error $\eps$, whose seed length is 
$O( \log(n/\eps) \cdot (\log\log(n) + \log(a/\eps)) \cdot b^2)$. 
\end{theorem}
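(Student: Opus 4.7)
The plan is to follow the polarizing random walk framework. At a high level, I will build a distribution on $\pmone^n$ by iterating a single pseudorandom fractional step $T = O(b^2 \log(n/\eps))$ times, each step using $O(\log(n/\eps) + \log\log n + \log(a/\eps))$ fresh random bits coming from an appropriate small-bias generator; the total seed length then matches the claimed bound. The fractional step is: pick $y \in \pmone^n$ from a $\delta$-biased distribution with $\delta = (\eps/na)^{\Theta(1)}$, and update the current point $x_t \in [-1,1]^n$ by $x_{t+1} = x_t + p\,(\mathbf 1-|x_t|)\odot y_t$, where $p = 1/(Cb)$ for a large absolute constant $C$ and the operations are coordinate-wise. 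Because of the $(\mathbf 1-|x_t|)$ damping, $x_t$ stays in $[-1,1]^n$ almost surely and polarized coordinates never move again. After $T$ steps, every coordinate has concentrated near $\pm 1$ with probability $1-\eps/3$ by standard martingale/variance arguments; round by $\sgn(x_T)$ and output the result.

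The correctness argument analyzes, for each fixed $f \in \mathcal{F}$, the expected evolution of the multilinear extension $\tilde f:[-1,1]^n\to\R$ along the walk. For any point $x\in[-1,1]^n$, write $f_x$ for the restricted function obtained by evaluating $\tilde f$ with some coordinates fixed according to the polarized pattern of $x$ and leaving the rest fractional; the hypothesis that $\mathcal F$ is closed under restrictions ensures $f_x$ still enjoys the same $L_{1,k}$ growth bound. Expanding $\tilde f(x_{t+1})$ as a Taylor polynomial in $y_t$ around $x_t$ and conditioning on $x_t$, the degree-$k$ contribution is a multilinear form whose coefficients are partial derivatives of $\tilde f$ at $x_t$ multiplied by factors of $p\,(\mathbf 1-|x_t|)$ (each at most $p$ in magnitude), so its expectation over $y_t$ is bounded in absolute value by $\delta\cdot L_{1,k}(f_{x_t}) \cdot p^k + \big(\text{exact uniform contribution}\big)$. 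The exact uniform contribution for odd $k$ vanishes, and for even $k\ge 2$ is at most $(pb)^k a$ because $L_{1,k}(f_{x_t})\le ab^k$; for $k=1$ the $\delta$-bias kills it up to $\delta a b$.

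Telescoping over the $T$ steps and summing the geometric series (using $pb \le 1/2$) gives
\[
\bigl|\E[\tilde f(x_T)] - \tilde f(0)\bigr| \;\le\; T\cdot\Bigl(\delta\, a\sum_{k\ge 1}b^k p^k + a\sum_{k\ge 2}(pb)^k\Bigr) \;\le\; T\cdot O\!\bigl(a(pb)^2\bigr) + T\delta\,a\cdot O(1) \;\le\; \eps/3,
\]
once $T\cdot(pb)^2 = O(\log(n/\eps))$ and $\delta$ is polynomially small in $n,a,1/\eps$. Since $\tilde f(0)=\E_{U_n}[f]$ and $\tilde f(x_T)=\E[f(\sgn(x_T))\mid x_T]$ up to the unpolarized-coordinate error (handled by the probability-$\eps/3$ bad event and boundedness of $f$), this yields $|\E[f(\sgn(x_T))]-\E_{U_n}[f]|\le\eps$. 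Setting $\delta = \eps/(na)^{C'}$ and instantiating with a log-space explicit small-bias generator (Naor--Naor/Ta-Shma) gives $O(\log(n/\eps) + \log(a/\eps))$ bits per step; multiplying by $T$ produces the stated seed length (the extra $\log\log n$ comes from a union-bound/error-budget over $T$ steps when controlling the polarization event).

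The main obstacle is the per-step bound on the Taylor expansion in the presence of the $(\mathbf 1-|x_t|)$ factor: naively one would analyze a walk on the full cube, but one must verify that restricting $\tilde f$ to the already-polarized coordinates of $x_t$ preserves both multilinearity and the Fourier $L_{1,k}$ growth — this is precisely where closure of $\mathcal F$ under restrictions is used, and where the careful bookkeeping of odd/even degree contributions (so that the degree-$1$ term only costs $\delta\,L_1(f)$ rather than $p\,L_1(f)$) is needed to avoid an extra $b$ factor in the seed length.
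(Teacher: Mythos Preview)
The paper does not prove this statement at all: it is quoted verbatim as Theorem~4.5 of \cite{CHHL18} and used as a black box, so there is no ``paper's own proof'' to compare against. That said, your outline is in the right spirit (the polarizing random walk with $\delta$-biased steps is exactly the CHHL framework), but the write-up contains two genuine errors.

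First, the per-step error analysis is wrong. Since $\tilde f$ is multilinear, the expansion of $\tilde f\bigl(x_t + p(\mathbf 1-|x_t|)\odot y_t\bigr)$ in $y_t$ has every monomial $\prod_{i\in R} y_{t,i}$ with $R$ a set of \emph{distinct} indices; under a truly uniform $y_t$ these all have expectation~$0$ for $R\neq\emptyset$, regardless of the parity of $|R|$. So the ``exact uniform contribution for even $k\ge 2$ is at most $(pb)^k a$'' is simply false: it is~$0$. The only error per step comes from the $\delta$-bias, giving $O(\delta\cdot a)$ after summing the geometric series in $k$. Your displayed bound $T\cdot O\!\bigl(a(pb)^2\bigr)\le \eps/3$ is in fact self-contradictory: with $p=1/(Cb)$ one has $(pb)^2=1/C^2$, a fixed constant, so the left side is $\Omega(Ta)$ and cannot be made $\le \eps/3$ for any $T\ge 1$ when $a\ge 1$. (Relatedly, ``$T\cdot(pb)^2=O(\log(n/\eps))$'' would force $T=O(\log(n/\eps))$, contradicting your earlier choice $T=O(b^2\log(n/\eps))$.)

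Second, even after you drop the spurious $(pb)^2$ term, your seed-length arithmetic does not reproduce the stated bound. A $\delta$-biased distribution on $\{\pm1\}^n$ requires $\Omega(\log n)$ seed bits per step no matter how large $\delta$ is, because the output has $n$ coordinates. Hence with $T=\Theta(b^2\log(n/\eps))$ independent steps the total seed is at least $\Theta\bigl(b^2\log(n/\eps)\cdot\log n\bigr)$, which is a $\log n/\log\log n$ factor worse than the target $O\bigl(b^2\log(n/\eps)\cdot(\log\log n+\log(a/\eps))\bigr)$. The $\log\log n$ in the stated bound cannot arise from the scheme you describe; achieving it requires the additional ideas in \cite{CHHL18} that you have not reproduced here (in particular, a more careful choice of the number of steps and the per-step bias, together with a rounding argument that does not insist on all $n$ coordinates polarizing individually).
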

	
\begin{corollary}\label{cor:CHHL}
There is a log-space explicit PRG that $\eps$-fools unordered ROBPs with width $w$, length $n$ and at most $\ell$ colliding layers using seed length
$$O(\log(n/\eps) \cdot (\log \log(n) + \log(1/\eps)) \cdot w^6 \ell^2 )$$
\end{corollary}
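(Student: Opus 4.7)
The plan is to combine the two results quoted immediately above the corollary statement in a completely black-box way; essentially no new work is needed.

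First I would verify that the class $\mathcal{F}_{n,w,\ell}$ of unordered width-$w$ length-$n$ ROBPs with at most $\ell$ colliding layers is closed under restrictions. This is essentially by inspection: restricting a variable to a constant simply replaces the edge pair at its layer by a single (deterministic) transition on the state space, which can only merge/remove vertices and either destroy or preserve (but never create) colliding layers. Hence a restriction of $f \in \mathcal{F}_{n,w,\ell}$ remains in $\mathcal{F}_{n,w,\ell}$.

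Next, I would apply the Steinke--Vadhan--Wan Fourier bound (Lemma~3.14 of \cite{SteinkeVW17}, quoted just above the corollary) to every $f \in \mathcal{F}_{n,w,\ell}$: for all $k \geq 1$,
\[
L_{1,k}(f) \;\leq\; \bigl(O(w^{3}\ell)\bigr)^{k}.
\]
Thus the hypothesis of the CHHL generator (Theorem~4.5 of \cite{CHHL18}) is satisfied with parameters $a = 1$ and $b = O(w^{3}\ell)$.

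Finally, I would plug these parameters into the CHHL seed-length formula $O\bigl(\log(n/\eps)\cdot(\log\log n + \log(a/\eps))\cdot b^{2}\bigr)$ to obtain
\[
O\bigl(\log(n/\eps)\cdot(\log\log n + \log(1/\eps))\cdot (w^{3}\ell)^{2}\bigr)
\;=\;
O\bigl(\log(n/\eps)\cdot(\log\log n + \log(1/\eps))\cdot w^{6}\ell^{2}\bigr),
\]
which matches the stated bound. The resulting PRG is log-space explicit because both the SVW Fourier bound is used only as an existence statement about $\mathcal{F}_{n,w,\ell}$ and the CHHL generator is itself log-space explicit. There is no real obstacle here; the only thing to double-check is the closure-under-restrictions condition, which is immediate for this class.
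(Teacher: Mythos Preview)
Your proposal is correct and is precisely the intended argument: the paper states Corollary~\ref{cor:CHHL} without proof, as an immediate combination of the Steinke--Vadhan--Wan $L_{1,k}$ bound and the CHHL generator, and you have carried this out exactly, including the routine check that the class is closed under restrictions.
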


\begin{proof}[Proof of Theorem~\ref{thm:main-3ROBP-unordered}]
The proof is essentially the same as that of Theorem~\ref{thm:main-3ROBP-ordered}, where instead of using the generator from Theorem~\ref{thm:foolfewcollisions} to set the bits after the pseudorandom restrictions, we use the generator from the above corollary. The final seed-length has a worse dependence on $\eps$ as we need to set $\ell = C^{\log(1/\eps) + \log\log\log(n)} = \poly(1/\eps) \cdot \poly\log\log(n)$ in Cor.~\ref{cor:CHHL}.
\end{proof}

\subsection{Pseudorandom generator for locally-monotone width-3 ROBPs}
In this section, we construct pseudorandom generators that $\eps$-fool unordered locally-monotone 3ROBPs with seed-length $\tilde{O}(\log(n/\eps))$.
Our dependency on $\eps$ is much better than in Section~\ref{sec:PRG_3ROBPs}, and we get nearly logarithmic (in $n$) seed-length even for error $\eps = 1/\poly(n)$.

We remark that read-once CNFs and read-once DNFs are special cases of  locally-monotone 3ROBPs, hence our result extends the result of GMRTV \cite{GopalanMRTV12} that constructs an $\eps$-PRG for the former classes of functions using seed length $\tilde{O}(\log(n/\eps))$.

\begin{thm}\label{thm:main-3ROBP-locally-monotone}
For any $\eps > 0$, there exists a log-space explicit PRG that $\eps$-fools  unordered locally-monotone 3ROBPs with seed-length $\tilde{O}(\log(n/\eps))$. 
\end{thm}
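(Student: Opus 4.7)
The proof I would pursue follows the framework of Theorem~\ref{thm:main-3ROBP-ordered}, exploiting a structural feature special to locally-monotone 3ROBPs: by the Collision Lemma (Lemma~\ref{lemma:identity or collision}), every non-identity layer of a locally-monotone 3ROBP is a colliding layer. After removing identity layers, the program has the property that \emph{every} layer is colliding, which lets me bound subprogram lengths uniformly in the iterative analysis and avoid the final PRG of Theorem~\ref{thm:foolfewcollisions} entirely.

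First I would iterate Claim~\ref{claim:assigning} for $k = O(\log\log(n/\eps))$ rounds with a small constant $\alpha$, using $\tilde{O}(\log(n/\eps))$ random bits in total. Reapplying the argument of Claim~\ref{claim:main1} to a locally-monotone $B$ (with no identity layers) shows that the restricted program is a $(3,\ell_0,m)$-ROBP with $\ell_0 = O(\log(n/\eps))$ where, crucially, each subprogram has total \emph{length} at most $\ell_0$: because every layer of $B$ was colliding, ``fixed-to-collision'' width-reducing variables occur within every $\ell_0$ consecutive original layers with high probability, leaving at most $\ell_0$ live variables between consecutive width-$2$ markers. Proceeding by induction as in Claim~\ref{claim:main2}, using Lemma~\ref{lem:structcolliding} to approximate by $(3,\ell_i, C^{\ell_i})$-ROBPs, I would maintain the invariant that subprogram length stays $\le \ell_0$. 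This invariant is preserved because the structural lemma reuses subprograms from the prior round directly (it selects the last $C^{\ell_i}$ of them without merging), and subsequent restrictions can only shrink subprograms.

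After $k$ rounds I obtain an approximator $B^k$ that is a $(3,\ell_k, C^{\ell_k})$-ROBP with $\ell_k = O(\log\log(n/\eps))$ and subprograms of length $\le \ell_0$, so $B^k$ has total length at most $C^{\ell_k}\cdot \ell_0 = \poly\log(n/\eps)$. Applying the BDVY decomposition (Theorem~\ref{thm:BDVY}) writes $B^k$ as a linear combination (with total coefficient weight $\le C^{\ell_k}$) of XORs of width-$3$ ROBPs of length $\le \ell_0$ on disjoint sets of variables. The final PRG step is Theorem~\ref{thm:GXOR} with $w = 3$ and $b = \ell_0 = O(\log(n/\eps))$, which fools each such XOR using only $O(\log(b) + \log\log(n/\eps))^{8} \cdot \log(n/\eps) = \tilde{O}(\log(n/\eps))$ bits, at a small-enough error that the $C^{\ell_k}$ BDVY coefficients still leave total error $\le \eps$. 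The error-bounding functions $F^j$ coming from Lemma~\ref{lem:structcolliding} are conjunctions of width-$3$ ROBPs on disjoint variables of length $\le \ell_0$, which are product functions of the same shape and can be fooled analogously (or alternatively via Theorem~\ref{thm:CHRT} applied to a length-$\poly\log(n/\eps)$ width-$4$ ROBP, yielding the same seed bound).

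The main obstacles I anticipate are twofold. First, maintaining the subprogram-length invariant $\le \ell_0$ through all $k$ rounds requires carefully verifying that Lemma~\ref{lem:structcolliding}'s approximator does not concatenate subprograms from different rounds and that Lemma~\ref{lemma:error under restriction} preserves the disjoint-block structure of the error terms under restriction. Second---and this is the more delicate point---one must balance the per-round failure probability $C^{-\ell_i}$ coming from Claim~\ref{claim:main2} against the length constraint $C^{\ell_k}\cdot \ell_0 \le \poly\log(n/\eps)$; for polynomially small $\eps$ this may force a more delicate parameter choice than in Theorem~\ref{thm:main-3ROBP-ordered}, possibly requiring a sharper variance-based tail bound in the inductive step or choosing $\alpha$ polynomially small in the constant $C$ to ensure the $\exp(-\Omega(\ell_i))$ decay dominates the $C^{\ell_i}$ union bound.
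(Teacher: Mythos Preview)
Your proposal is correct in identifying the key structural fact---that every non-identity layer of a locally-monotone 3ROBP is colliding (Lemma~\ref{lemma:identity or collision})---but you then proceed to use the full iterative machinery of Theorem~\ref{thm:main-3ROBP-ordered}, which is unnecessary here. The paper's proof is far simpler and uses only \emph{one} application of Claim~\ref{claim:assigning}.

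Here is the point you are missing. The iterative restriction-and-approximate scheme in Theorem~\ref{thm:main-3ROBP-ordered} exists precisely because, for a general 3ROBP, a single restriction only guarantees that each subprogram has at most $\ell_0$ \emph{colliding} layers, while its total length can still be large; the iteration is needed to whittle $\ell$ down so that the final generator (Theorem~\ref{thm:foolfewcollisions} or Corollary~\ref{cor:CHHL}) applies. For locally-monotone programs, as you yourself observe, every layer is colliding, so the conclusion of Claim~\ref{claim:main1} upgrades to: with probability $1-\eps/n$, the restricted program $B\rest_{\rho_0}$ is $D_1\circ\cdots\circ D_m$ where each $D_i$ has at most $\ell_0 = O(\log(n/\eps))$ \emph{layers}, not merely $\ell_0$ colliding layers. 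At this point you are already done: apply Theorem~\ref{thm:BDVY} directly to write $B\rest_{\rho_0}$ as a linear combination of XORs of width-$3$ ROBPs of length $\le \ell_0$, with total coefficient weight at most $m\le n$, and fool each term with $\GXOR(T,3,\ell_0,\eps/3n)$. The seed-length is $\tilde{O}(\log(n/\eps))$ and the error is at most $\eps/3 + \eps/n + (\eps/3n)\cdot n \le \eps$.

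So the obstacles you list in your final paragraph---maintaining the length invariant through $k$ rounds, verifying Lemma~\ref{lem:structcolliding} does not merge subprograms, tracking error terms under Lemma~\ref{lemma:error under restriction}, and balancing $C^{-\ell_i}$ against $C^{\ell_k}\cdot\ell_0$---all evaporate, because there is no iteration. Your approach may well be made to work, but it imports all the delicate bookkeeping of the general case into a setting where a single restriction already delivers short subprograms.
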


\begin{proof}
The pseudorandom generator samples a string in $\pmone^n$ as follows:
\begin{enumerate}
	\item Apply the pseudorandom restriction $\rho_0 = (T,y)$ from Claim~\ref{claim:assigning} with $\alpha = \frac{1}{10}$ and $\delta = \frac{\eps}{3}$.
	\item Assign the coordinates in $T$ using the pseudorandom generator $\GXOR(T,3, \ell_0, \eps/3n)$ with $\ell_0 = O(\log(n/\eps))$.
\end{enumerate}
It is clear that the seed-length is $\tilde{O}(\log(n/\eps))$ by Claims~\ref{claim:assigning} and~\ref{claim:seed-length}.

Next, we show that the generator $\eps$-fools unordered locally monotone 3ROBPs.
By Claim~\ref{claim:main1}, with probability at least $1-\eps/n$, $B\rest_{\rho_0}$ is a $(3,\ell_0,m)$-ROBP for $\ell_0 = O(\log(n/\eps))$. Furthermore, $\E_{\rho_0,x\sim U}[B\rest_{\rho_0}(x)] = \E[B] \pm \eps/3$.
Now, observe that since $B$ is locally monotone, for each layer $i$ either $E_{i,1}$ or $E_{i,-1}$ has at most $2$ end-vertices (Lemma~\ref{lemma:identity or collision}). 
 By the proof of Claim~\ref{claim:main1} we get that with probability at least $1-\eps/n$, $B\rest_{\rho_0}$ is of the form $D_1 \circ \cdots \circ D_m$ where each $D_i$ has at most $\ell_0$ layers, and at most $2$ vertices on the first and last layers. In other words, in this case not only does each subprogram have a few colliding layers, it actually has a few layers!
Whenever $B\rest_{\rho_0}$ is of the above form we say that it {\sf simplified} under the restriction.

Thm.~\ref{thm:BDVY} states that whenever $B\rest_{\rho_0}$ simplified, it can be written as $ \sum_{\alpha\in \B^m} {c_{\alpha} \prod_{i=1}^{m}D_{i,\alpha_i}}$
	where $D_{i,\alpha_i}$ are subprograms of $D_{i}$, and 
	$\sum_{\alpha\in \B^m}{|c_\alpha|} \le m\le n$.
By Claim~\ref{claim:correctness}, the distribution $\D = \GXOR(T,3, \ell_0, \eps/3n)$ fools $\prod_{i=1}^{m} D_{i,\alpha_i}$ with error at most $\eps/3n$, for all $\alpha \in \B^m$.
Thus, it fools $B\rest_{\rho_0}$ with error at most $(\eps/3n) \cdot \sum_{\alpha}{|c_\alpha|} \le \eps/3$.
Overall, we get  
\begin{align*}\Big|\E_{\substack{\rho_0 = (T,y), x\sim \D}}\;[B(\Sel_T(x,y))] - \E[B]\Big|&\le \Bigg|\E_{\substack{\rho_0 = (T,y),\\x\sim U_T}}[B\rest_{\rho_0}(x)] - \E[B]\Bigg| \\
&	\quad+ \Pr[B\rest_{\rho_0}\text{ did not simplify}] + \eps/3 \\
&\le  \eps/3 + \eps/n  + \eps/3 \le \eps.\qedhere\end{align*}
\end{proof}

\subsection*{Acknowledgements}
We would like to thank Oded Goldreich and Salil Vadhan for very helpful comments on an earlier version of this manuscript.

\bibliographystyle{alphaabbr}
{\small
\bibliography{bibs.bib}
}

\appendix
\section{Appendices}
\subsection{Proof of Theorem~\ref{thm:CHRTa}}
\label{app:CHRT}

In this section, we view the Boolean functions computed by branching programs as functions $B: \pmone^n \to \B$. For any set $T \subseteq [n]$, this changes the sum $\sum_{S \subseteq T}{|\hat{B}(S)|}$ by a factor of $2$, which we can afford.

Let $B$ be a ROBP of length $n$ and width $w$. Recall that $V_1, \ldots, V_{n+1}$ denote the layers of vertices in $B$.
For a vertex $v \in V_i$ in the branching program we denote by $B_{\to v}$ the sub-branching program ending in the $i$-th layer and having  $v$ the only accepting state.
We denote by $B_{v \to}$ the sub-branching program starting at $v$ and ending at $V_{n+1}$.
Observe that we may express the function computed by the branching program $B$ as a sum of products of these sub-programs, namely 
\begin{equation}\label{eq:bp-decomposition}
\forall{i\in[n]}: \forall{x\in \pmone^n}: B(x) = \sum_{v\in V_i}{B_{\to v}(x) \cdot B_{v \to}(x)}.	
\end{equation}

The main technical result from \cite{CHRT17} is the following theorem:
\begin{theorem}[\protect{\cite[Thm.~2]{CHRT17}}]\label{thm:main_fourier} Let $B$ be an ordered read-once, oblivious branching program of length $n$ and width $w$. Then, 
	$$\forall{k\in [n]}: \;\;\sum_{s: |s|=k} \abs{\widehat{B}(s)} 
	\le 
	O(\log n)^{wk}\;.$$
	\end{theorem}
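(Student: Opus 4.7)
The plan is to apply Markov's inequality to the nonnegative random variable $L_1(\tilde B) = \sum_{S\subseteq T}|\hat B(S)|$; it therefore suffices to show $\E_{T\sim\D_p}[L_1(\tilde B)] \le O((nw)^3)$. By Fact~\ref{fact:bias-fnc-Fourier} and linearity of expectation,
\[
\E_{T\sim\D_p}[L_1(\tilde B)] = \sum_{S\subseteq[n]}|\hat B(S)|\cdot\Pr_{T\sim\D_p}[S\subseteq T] \le \sum_S |\hat B(S)|\bigl(p^{|S|}+\delta_T\bigr),
\]
using the $\delta_T$-biased-with-marginals-$p$ property of $\D_p$ at each nonempty $S$.

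I would next invoke Theorem~\ref{thm:main_fourier}: although it is stated for ordered ROBPs, the per-level spectral sum $M_\ell := \sum_{|S|=\ell}|\hat B(S)|$ is invariant under permutations of the input variables, so we may relabel the variables of $B$ arbitrarily to apply the theorem and conclude $M_\ell \le O(\log n)^{w\ell}$ even for unordered $B$. The ``truly random'' contribution then sums geometrically: with $p\le 1/O(\log n)^w$ chosen so that $p\cdot O(\log n)^w \le 1/2$, we have $\sum_\ell p^\ell M_\ell \le \sum_\ell 2^{-\ell}\le 2$.

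The remaining piece, $\delta_T\cdot\sum_S|\hat B(S)|$, is the delicate one. I would split the level sum at $\ell=2k$. For $\ell\le 2k$, the hypothesis $\delta_T\le p^{2k}$ gives $\delta_T\cdot M_\ell \le p^{2k-\ell}\cdot\bigl(p\cdot O(\log n)^w\bigr)^\ell \le 2^{-\ell}$, which sums to $O(1)$. For $\ell>2k$, one also has $p^{|S|}\le p^{2k}\le \delta_T$, so $\Pr[S\subseteq T]\le 2\delta_T$, and the contribution is at most $2\delta_T\cdot\sum_{\ell>2k}M_\ell$; the polynomial slack $(nw)^3$ in the conclusion, together with the strength of the hypothesis $\delta_T\le p^{2k}$, is used here to absorb this coarse estimate of the tail.

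The main obstacle will be exactly this tail control: summing $M_\ell \le O(\log n)^{w\ell}$ level-by-level for $\ell$ up to $n$ produces a total that is only controllable by $\delta_T$ after carefully balancing the implicit constants --- in particular calibrating how large the constant hidden in ``$k=O(\log(n/\eps))$'' must be so that $\delta_T\cdot\sum_{\ell>2k}O(\log n)^{w\ell}\le O((nw)^3)$. Once that step is completed, one obtains $\E_T[L_1(\tilde B)] \le O((nw)^3)$, and Markov's inequality immediately gives $\Pr_{T\sim\D_p}[L_1(\tilde B) > O((nw)^3)/\eps] \le \eps$, which is the desired conclusion.
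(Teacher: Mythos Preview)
Your proposal is aimed at Theorem~\ref{thm:CHRTa} (the corollary proved in Appendix~\ref{app:CHRT}), not at Theorem~\ref{thm:main_fourier} itself, which is simply quoted from \cite{CHRT17} and not reproved in this paper. I will therefore compare your argument to the paper's proof of Theorem~\ref{thm:CHRTa}.

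Your strategy --- bound $\E_T[L_1(\tilde B)]$ directly and apply Markov --- is different from the paper's, and the step you yourself flag as ``the main obstacle'' is a genuine, unfixable gap. For the tail $\ell>2k$ you need $\delta_T\cdot\sum_{\ell>2k}M_\ell = O((nw)^3)$. The only available level bound is $M_\ell\le O(\log n)^{w\ell}$, so $\sum_{\ell>2k}M_\ell$ can be as large as $O(\log n)^{wn}$, whereas $\delta_T=p^{2k}$ with $p=1/O(\log n)^w$ and $k=O(\log(n/\eps))$ is only of order $(n/\eps)^{-O(w\log\log n)}$. No ``careful balancing of constants'' in $k=O(\log(n/\eps))$ helps: you would need $k=\Omega(n)$. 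Nor is there a cheap global rescue --- the total spectral norm of constant-width ROBPs is in general super-polynomial (indeed, if it were polynomial the restriction would be unnecessary).

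The paper sidesteps this by \emph{not} bounding an expectation over $T$ for the high levels. It first applies Markov level-by-level and \emph{subprogram-by-subprogram}: with probability $\ge 1-\eps$ over $T$, for every pair of vertices $v_0,v$ and every $j\le 2k$ one has $\sum_{s\subseteq T,\,|s|=j}|\widehat{B_{v_0\to v}}(s)|\le \beta/2^j$ (Claim~\ref{claim:midlayers}). Then, \emph{conditioned on this good event for a fixed $T$}, it uses the branching-program decomposition $B_{\to v}=\sum_{v_0\in V_i}B_{\to v_0}\cdot B_{v_0\to v}$ to prove by induction on $j$ that the level-$j$ restricted mass stays small for all $j>2k$ as well. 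The key point is that each high-level coefficient is split, at an intermediate layer, into a product of a level-$(j-k)$ coefficient and a level-$k$ coefficient of subprograms, both already controlled; this recursive structure is precisely what a one-shot expectation bound cannot exploit.
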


We are ready to prove a corollary of this theorem, namely Theorem~\ref{thm:CHRTa}.

\begin{theorem}[Thm.~\ref{thm:CHRTa}, restated]
Let $B$ be a width-$w$ length-$n$ ROBP. Let $\eps>0$, $p \le 1/O(\log n)^w$, $k = O(\log(n/\eps))$, and $\D$ be a $\delta_T$-biased distribution over $[n]$ with marginals $p$, where $\delta_T \le p^{2k}$.
Then,
with probability at least $1-\eps$ over $T\sim \D$, 
\[L_1(\tilde{B}) =  \sum_{S \subseteq T}{ |\hat{B}(S)|} \le O((nw)^3/\eps).\]
\end{theorem}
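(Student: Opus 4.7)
The plan is to apply Markov's inequality: it suffices to show $\E_{T\sim\D}[L_1(\tilde{B})] \le O((nw)^3)$, after which Markov gives $\Pr_T[L_1(\tilde{B}) > O((nw)^3)/\eps] \le \eps$. By Fact~\ref{fact:bias-fnc-Fourier} and the $\delta_T$-biased-with-marginals-$p$ property of $\D$,
\[
\E_T[L_1(\tilde{B})] \;=\; \sum_{S\subseteq[n]} |\hat{B}(S)|\cdot\Pr_T[S\subseteq T] \;\le\; \sum_S |\hat{B}(S)|\,(p^{|S|}+\delta_T),
\]
and I would split this sum by Fourier level at the threshold $k = O(\log(n/\eps))$.

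For low levels $k'\le k$, the plan is to combine the CHRT level-wise bound $\sum_{|S|=k'}|\hat{B}(S)|\le(c\log n)^{wk'}$ (Theorem~\ref{thm:main_fourier}) with the hypothesis $p\le 1/(2c\log n)^w$, yielding a geometric series $\sum_{k'\le k}(p(c\log n)^w)^{k'} = O(1)$. The $\delta_T$-correction from these levels is dominated by its top term $\delta_T \cdot O((c\log n)^{wk})$, which is $O(1)$ after invoking the hypothesis $\delta_T \le p^{2k} = (2c\log n)^{-2wk}$.

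For the high levels $k'>k$ the term-by-term CHRT bound becomes too loose to sum, so the plan is to pass through the subprogram product expansion from Equation~\eqref{eq:bp-decomposition}: introduce $m=O(\log n)$ checkpoint layers $0=i_0<i_1<\cdots<i_m=n$ and write
\[
B(x) \;=\; \sum_{\vec{v}} \;\prod_{j=1}^m B_{v_{j-1}\to v_j}(x_{[i_{j-1}+1,\,i_j]}),
\]
where each factor is a width-$w$ sub-ROBP of length $\ell = n/m$. Since the factors depend on disjoint variable blocks, the bias operator factorizes and sub-multiplicativity of $L_1$ gives $L_1(\tilde{B}) \le \sum_{\vec v}\prod_j L_1(\Bias_{T_j}B_{v_{j-1}\to v_j})$ with $T_j = T\cap[i_{j-1}+1,i_j]$. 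Applying the same low-level argument inside each short segment bounds every segmental expected $L_1$ by a constant, and the $w^{m-1}=n^{O(\log w)}$ count of vertex-tuples, controlled through Cauchy-Schwarz over $\vec v$ using the pointwise identity $\sum_{v\in V_i}B_{\to v}(x)=1$ and its Parseval consequence $\sum_v \sum_S \hat{B}_{\to v}(S)^2 = 1$, delivers the polynomial bound $O((nw)^3)$ on $\E_T[L_1(\tilde{B})]$.

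\textbf{Main obstacle.} The hardest step is the high-level contribution, because a direct level-sum using only the CHRT bound fails: the $\delta_T\cdot L_{1,k'}(B)$ term grows with $k'$, and a crude bound $L_1(B)\le\sum_{k'}(c\log n)^{wk'}$ is quasipolynomially large, so $\delta_T L_1(B)$ is not polynomial under the given $\delta_T\le p^{2k}$. The subprogram decomposition combined with Cauchy-Schwarz over checkpoint vertices—exploiting $\sum_{v\in V_i}B_{\to v}(x)=1$ to avoid the naive $w^m$ blow-up—is what keeps the final estimate polynomial. The $\delta_T$-biased correlations between different $T_j$ are handled by invoking $\delta_T\le p^{2k}$ to approximate the joint distribution of the $T_j$ by the independent case up to a negligible additive error, compatibly with the $m=O(\log n)$ factorization.
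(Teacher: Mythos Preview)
Your low-level analysis (levels $k'\le k$) is essentially the paper's argument and is fine. The gap is in the high-level part: your plan is to bound $\E_{T\sim\D}[L_1(\tilde B)]$ and apply Markov, but under $\D$ this expectation is \emph{not} known to be polynomial with the stated hypothesis $\delta_T\le p^{2k}$. Expanding gives
\[
\E_{T\sim\D}[L_1(\tilde B)] \;\le\; O(1) \;+\; \delta_T\cdot L_1(B),
\]
and $L_1(B)$ for a width-$w$ length-$n$ ROBP is only known to be at most $\sum_{k'}O(\log n)^{wk'}$ (or the trivial $2^n$), while $\delta_T\le p^{2k}$ is merely quasipolynomially small. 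Your checkpoint decomposition does not escape this: each segment has length $\ell=n/O(\log n)$, so each segmental $L_1$ can still be $O(\log n)^{\Theta(wn/\log n)}$, and the product over $m=O(\log n)$ segments is at least as large; when you ``approximate the joint distribution of the $T_j$ by the independent case'', the additive error is exactly $\delta_T$ times this product, which is not negligible. The Cauchy--Schwarz idea you sketch addresses only the $w^{m}$ vertex-count (already polynomial for constant $w$) and not this $\delta_T\cdot L_1$ term.

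The paper's proof sidesteps the issue by \emph{not} bounding the full expectation. Instead it applies Markov and a union bound only at levels $j\le 2k$, simultaneously for \emph{every} subprogram $B_{v_0\to v}$: for each such $M$ and each $j\le 2k$ one has $\E_T\big[\sum_{|s|=j,\,s\subseteq T}|\hat M(s)|\big]\le O(\log n)^{wj}(p^j+\delta_T)\le 2^{-j}$, and here the $\delta_T$ term is harmless precisely because $j\le 2k$ forces $O(\log n)^{wj}\le p^{-2k}$. This fixes a ``good'' $T$ with probability $1-\eps$. Then, for that fixed $T$, the bound is extended to all levels $j>2k$ by a purely combinatorial induction on $j$: one splits each $s\subseteq T$ with $|s|=j$ into its last $k$ elements (controlled by the already-established level-$k$ bound for the subprogram) and the remaining $j-k$ elements (handled by the inductive hypothesis). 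No probability over $T$, hence no $\delta_T$, appears in this step. This is the missing idea in your proposal.
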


\begin{claim}\label{claim:midlayers}
For all $\beta>0$, the following holds with probability at least $1-\frac{w^2 \cdot n^3}{\beta}$ over $T$:   for all  $v_0$ and $v$ and $1\le j\leq \min\{2k,n\}$:
\begin{equation}\label{eq:lowerlayers}\sum_{s \subseteq T, |s|=j}{ |\hat{B_{v_0 \to v}}(s)|} \le \frac{\beta}{2^j}.\end{equation}
\end{claim}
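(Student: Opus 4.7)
The plan is to fix a subprogram index pair $(v_0,v)$ and a level $j$, bound the expected level-$j$ Fourier $L_1$ mass of $B_{v_0\to v}$ restricted to subsets of $T$ using Theorem~\ref{thm:main_fourier}, apply Markov's inequality, and then union-bound over $(v_0,v,j)$.

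First, I would define $X_{v_0,v,j} \triangleq \sum_{s\subseteq T,\,|s|=j}|\widehat{B_{v_0\to v}}(s)|$. Since $T$ is drawn from a $\delta_T$-biased distribution with marginals $p$, for any fixed non-empty $s$ we have $\Pr_{T}[s\subseteq T]\le p^{|s|}+\delta_T$. Linearity of expectation together with Theorem~\ref{thm:main_fourier} applied to the subprogram $B_{v_0\to v}$ (which is itself a width-$w$ ROBP of length at most $n$) gives
\[
\E[X_{v_0,v,j}] \;\le\; (p^j+\delta_T)\cdot \sum_{s:|s|=j}|\widehat{B_{v_0\to v}}(s)| \;\le\; (p^j+\delta_T)\cdot O(\log n)^{wj}.
\]

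Next, I would use the assumed bounds on $p$ and $\delta_T$. Choosing the hidden constant in $p\le 1/O(\log n)^w$ so that $2p\cdot O(\log n)^w\le 1/2$ yields $2^j\cdot p^j\cdot O(\log n)^{wj}\le (1/2)^j$. For the $\delta_T$ term, since $\delta_T\le p^{2k}$ and $j\le 2k$, we have
\[
2^j\cdot \delta_T\cdot O(\log n)^{wj}\;\le\; 2^j\cdot p^{2k-j}\cdot(p\cdot O(\log n)^w)^j\;\le\; (1/2)^j,
\]
where we used $p\le 1$ and $2p\cdot O(\log n)^w\le 1$ in the last step. By Markov's inequality,
\[
\Pr\!\Big[X_{v_0,v,j}\ge \tfrac{\beta}{2^j}\Big]\;\le\; \frac{2^j\cdot \E[X_{v_0,v,j}]}{\beta}\;\le\; \frac{2\cdot(1/2)^j}{\beta}.
\]

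Finally, I would take a union bound over all triples $(v_0,v,j)$. There are at most $(wn)^2$ vertex pairs and the levels $j$ range in $\{1,\ldots,2k\}$, but summing $\sum_{j\ge1}2\cdot(1/2)^j/\beta\le 2/\beta$ collapses the dependence on $j$. Thus the total failure probability is bounded by $(wn)^2\cdot 2/\beta \le w^2 n^3/\beta$, matching the stated bound. The only real obstacle here is choosing the constant in the hypothesis $p\le 1/O(\log n)^w$ large enough to make the geometric series in $j$ converge; everything else is a direct consequence of Theorem~\ref{thm:main_fourier} and the bias-plus-marginals properties of $\D$.
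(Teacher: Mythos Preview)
Your proof is correct and follows essentially the same approach as the paper: bound the expected level-$j$ mass via Theorem~\ref{thm:main_fourier} together with $\Pr[s\subseteq T]\le p^{|s|}+\delta_T$, apply Markov, and union-bound over $(v_0,v,j)$. The only cosmetic difference is that the paper bounds $\E[X_{v_0,v,j}]$ directly by $1/2^j$ and then multiplies by $n$ for the number of $j$-values, whereas you push for the sharper bound $2\cdot(1/2)^j/2^j$ and sum the resulting geometric series; both routes give the same $w^2 n^3/\beta$.
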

\begin{proof}
Fix $v_0$ and $v$. Letting $M$ denote the branching program $B_{v_0 \to v}$ we get
$
\sum_{s:|s|=j} \abs{ \widehat{M}(s) } \leq O(\log n)^{wj}
$ from Theorem~\ref{thm:main_fourier}.
Thus,
\begin{align*}
\E_T\left[ \sum_{s:|s|=j} | \widehat{M}(s)| \cdot \one_{\{s \subseteq T\}}\right] = \sum_{s:|s|=j} 
|\widehat{M}(s)| \cdot \Pr_{T}[s\subseteq T] \leq  O(\log n)^{wj} \cdot (p^j + \delta)\leq \frac{1}{2^j}.
\end{align*}
Finally, we conclude by applying the Markov inequality and a union bound, as there is a total of at most $w^2\cdot n^2$ branching programs $B_{v_0 \to v}$ and at most $n$ choices for $j$.
\end{proof}

Theorem~\ref{thm:CHRTa} follows from the next claim which uses Claim~\ref{claim:midlayers} with $\beta=(nw)^3 /\eps$ and $k =  O(\log (n/\eps))$ that ensure $\frac{w^2 \cdot n^3}{\beta}\leq \eps$ and  $\frac{\beta}{2^k}\leq \frac{\eps}{nw}$.
Indeed, with probability at least $1-\eps$, the spectral-norm of $\tilde{B}$ is at most $1+\sum_{j=1}^{k}{\frac{\beta}{2^j}} + (n-k) \cdot \frac{\eps}{nw} \le 2+\beta$.
\begin{claim}
Suppose that $T$ is such that the events in Claim~\ref{claim:midlayers} hold for $\beta, k$ such that $\beta/2^k \le \eps/(nw)$. 
Then for every $j$ such that $k\le j \le n$, 
\begin{equation}\label{eq:higherlayers}\sum_{s \subseteq T, |s|=j}{ |\hat{B}(s)|} \le \frac{\eps}{nw}.\end{equation}
\end{claim}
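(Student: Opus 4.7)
The strategy is induction on $j$ via the subprogram decomposition in Equation~\eqref{eq:bp-decomposition}. Extending Claim~\ref{claim:midlayers} slightly is the key preparatory step: in addition to $B_{v_0 \to v}$ for vertex pairs, I need the same bound $\sum_{s \subseteq T,\,|s|=j} |\hat{M}(s)| \le \beta/2^j$ (for $1 \le j \le 2k$) to also hold for $M = B_{u \to \mathrm{acc}}$, the subprogram from any intermediate vertex $u$ that accepts upon reaching any accepting vertex of $V_{n+1}$. Each such $M$ is itself a width-$w$ ROBP of length at most $n$, so Theorem~\ref{thm:main_fourier} gives the underlying bound $\sum_{s,\,|s|=j} |\hat{M}(s)| \le O(\log n)^{wj}$ and the same Markov/union-bound argument as in Claim~\ref{claim:midlayers} applies; this extension only multiplies the failure probability by a constant. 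The point is that $B = B_{v_0 \to \mathrm{acc}}$ is now itself covered, avoiding a naive loss of $w$ coming from the split $B = \sum_{v\ \mathrm{accepting}} B_{v_0 \to v}$.

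Define $F(j) := \sup_M \sum_{s \subseteq T,\,|s|=j} |\hat{M}(s)|$, where $M$ ranges over all subprograms of $B$ (including those of the form $B_{u \to \mathrm{acc}}$). For $k \le j \le 2k$ the extended Claim~\ref{claim:midlayers} and the hypothesis $\beta/2^k \le \eps/(nw)$ give the base case $F(j) \le \beta/2^j \le \eps/(nw)$. For $j > 2k$, fix any subprogram $M = B_{u \to v}$ and, for each $s$ of size $j$, split at $i^\star = s_{(k)}$ so that $s = s_1 \sqcup s_2$ with $|s_1|=k$, $|s_2|=j-k$. Equation~\eqref{eq:bp-decomposition} at layer $i^\star$ expresses $\hat{M}(s)$ as a sum over $u' \in V_{i^\star+1}$; taking absolute values, regrouping the outer sum over $s$ by $(i^\star, u')$, and loosely dropping the constraint $\max(s_1) = i^\star$ yields
\begin{equation*}
\sum_{s \subseteq T,\,|s|=j} |\hat{M}(s)| \;\le\; \sum_{i} \sum_{u' \in V_{i+1}} \Big(\sum_{\substack{s_1 \subseteq T \cap [1,i]\\|s_1|=k}} |\hat{B_{u \to u'}}(s_1)|\Big) \Big(\sum_{\substack{s_2 \subseteq T \cap (i,n]\\|s_2|=j-k}} |\hat{B_{u' \to v}}(s_2)|\Big) \;\le\; nw \cdot \frac{\beta}{2^k} \cdot F(j-k),
\end{equation*}
where the extended Claim~\ref{claim:midlayers} bounds the $s_1$-sum by $\beta/2^k$ and the $s_2$-sum is at most $F(j-k)$ by definition.

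Taking the supremum over $M$ and invoking $nw \cdot \beta/2^k \le \eps \le 1$ gives the contraction $F(j) \le \eps \cdot F(j-k)$ for every $j > 2k$. Iterating until $j - mk$ lands in the base-case window $(k, 2k]$ produces $F(j) \le \eps^m \cdot \beta/2^{j-mk} \le \beta/2^k \le \eps/(nw)$ for every $j \ge k$; specializing to $M = B$ is exactly the claim.

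The main subtlety I expect is that the induction invokes $F$ on a new subprogram $B_{u' \to v}$ (possibly with $v = \mathrm{acc}$), so the hypothesis has to apply uniformly to every subprogram — hence the definition of $F$ as a supremum and the extension of Claim~\ref{claim:midlayers} to the ``virtual-accept'' subprograms. If one instead re-decomposes $B$ at the top as $\sum_{v\ \mathrm{accepting}} B_{v_0 \to v}$, an extra factor of $w$ is unavoidable and the argument only yields the weaker bound $\eps/n$, which is still enough to drive Theorem~\ref{thm:CHRTa} but misses the constant stated here.
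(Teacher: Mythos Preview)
Your proof is correct and takes essentially the same approach as the paper: induction on $j$ via the decomposition in Eq.~\eqref{eq:bp-decomposition}, using Claim~\ref{claim:midlayers} to control the size-$k$ piece. The one organizational difference worth noting is that the paper splits $s$ into its first $j-k$ and last $k$ elements (the reverse of your split), so the recursive piece is always of the form $B_{\to v_0}$; combined with the WLOG of a single accepting vertex, this lets the induction run only over programs $B_{\to v}$, and your extension of Claim~\ref{claim:midlayers} to ``virtual-accept'' subprograms becomes unnecessary.
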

\begin{proof}
We  prove by induction on $j$ that Eq.~\eqref{eq:higherlayers} holds for all $B_{\to v}$, for any $\ell\in [n+1]$ and $v\in V_\ell$. Note that $B$ itself is of the form $B_{\to v}$ for $v$ being the accept node in the final layer (w.l.o.g. there exists only one such node).   
The case $k \le j \le 2k$ is handled by Claim~\ref{claim:midlayers}, since $\sum_{s\subseteq T:|s|=j}|\hat{B_{\to v}}(s)|  \le	 \frac{\beta}{2^j} \le \frac{\beta}{2^k} \le \frac{\eps}{(nw)^2}$. 
For $j > 2k$ we have: 
\begin{align*}
\sum_{s\subseteq T: |s|=j} |\widehat{B_{\to v}}(s)| 
&\le \sum_{i\in T \cap [\ell]} \sum_{v_0\in V_i} \;\;\sum_{\substack{s_0\subseteq T \cap \{1,\ldots,i-1\}:\\ |s_0|=j-k}}\;\;\sum_{\substack{s_1\subseteq T \cap \{i,\ldots, \ell \}:\\ |s_1|=k, i\in s_1}}\;\; \vert \widehat{B_{\to v_0}}(s_0)\cdot \widehat{B_{v_0\to v}}(s_1)\vert \tag{by \cref{eq:bp-decomposition}}\\
& \le \sum_{i\in T \cap [\ell]} \sum_{v_0\in V_i} \Big(\sum_{\substack{s_0\subseteq T \cap \{1,\ldots,i-1\}:\\ |s_0|=j-k}}\vert\widehat{B_{\to v_0}}(s_0)\vert\Big)\cdot \Big(\sum_{\substack{s_1\subseteq T \cap \{i,\ldots, \ell\}:\\ |s_1|=k, i\in s_1}} \vert \widehat{ B_{v_0\to v}}(s_1)\vert \Big) \\
&\le \sum_{i\in T \cap [\ell]} \sum_{v_0\in V_i} 
\frac{\eps}{nw}
\cdot 
\frac{\eps}{nw}
\le \frac{\eps}{nw} \tag{induction and Claim~\ref{claim:midlayers}}
\end{align*}
This completes the induction, and hence the claim follows.
\end{proof}


\subsection{Restatement of XOR-lemma for functions fooled by small-biased spaces}
\label{app:GMRTV}
In this section we show how Lemma~\ref{lemma:3.2} is a restatement of Thm~4.1 in \cite{GopalanMRTV12}. We recall the following equivalence between having sandwiching approximations with small spectral-norm and being fooled by every small-biased distribution.
\begin{lemma}[\cite{DETT10}]\label{lem:DETT}
Let $f: \pmone^n \to \R$ be a function. Then, the following hold for every $0 < \eps < \delta$:
\begin{itemize}
	\item If $f$ has $\delta$-sandwiching approximations of spectral-norm at most $\delta/\eps$, then for every $\eps$-biased distribution $D$ on $\pmone^n$, $|\E_{x\sim D}[f(x)] - \E[f]| \le \delta$.
	\item If for every $\eps$-biased distribution $D$ on $\pmone^n$, $|\E_{x\sim D}[f(x)] - \E[f]| \le \delta$, then $f$ has $(2\delta)$-sandwiching approximations of spectral-norm at most $1+\delta/\eps$.
\end{itemize}
\end{lemma}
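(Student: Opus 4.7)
The plan is to prove the two directions of Lemma~\ref{lem:DETT} separately, with the first being a direct Fourier estimate and the second using linear programming (LP) duality.

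For the first direction (sandwiching implies fooling), I would start from the standard fact that any $\eps$-biased distribution $D$ satisfies $|\E_{x\sim D}[g(x)] - \E[g]| \le \eps \cdot \sum_{S \neq \emptyset} |\hat{g}(S)| \le \eps \cdot L_1(g)$ for every $g: \pmone^n \to \R$. Let $f_\ell \le f \le f_u$ be the given $\delta$-sandwiching approximations with $L_1(f_u), L_1(f_\ell) \le \delta/\eps$. Then
\[
\E_{x\sim D}[f(x)] \le \E_{x\sim D}[f_u(x)] \le \E[f_u] + \eps \cdot L_1(f_u) \le \E[f_u] + \delta,
\]
and the sandwiching hypothesis bounds $\E[f_u] - \E[f]$. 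A symmetric argument using $f_\ell$ gives the lower bound, yielding the desired $\delta$-fooling (up to constant factors, which match the statement because the sandwiching definition is normalized accordingly).

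The second direction (fooling implies sandwiching) is the substantive one, and I would prove it via LP duality. Consider the primal LP whose variables are the values $\{g(x)\}_{x\in \pmone^n}$ of a candidate upper-sandwich $g$:
\[
\text{minimize } \E[g] \quad \text{subject to } g(x) \ge f(x) \; \forall x,\; \text{and } L_1(g) \le 1 + \delta/\eps.
\]
The spectral-norm constraint can be linearized by writing $g = \sum_S a_S \chi_S$ with auxiliary variables $b_S \ge \pm a_S$ and $\sum_S b_S \le 1 + \delta/\eps$. Assume toward contradiction that the optimum exceeds $\E[f] + 2\delta$. By LP duality, there exist dual multipliers $\{\lambda_x \ge 0\}$ on the pointwise constraints and a scalar $\mu \ge 0$ on the spectral-norm constraint such that every feasible $g$ has $\E[g] \ge \sum_x \lambda_x (f(x) - g(x)) + \mu \cdot (\text{slack})$ in a way that packages into: for every character $\chi_S$ with $S \neq \emptyset$, $|\sum_x \lambda_x \chi_S(x)| \le \mu$, together with $\sum_x \lambda_x = 1$ (from the coefficient of $\chi_\emptyset$ balancing $\E[g] = \hat{g}(\emptyset)$).

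The key step then is to reinterpret $\lambda$: after normalizing, $\lambda$ defines a probability distribution $D$ on $\pmone^n$ whose Fourier bias on every non-empty $S$ is at most $\mu/\sum_x \lambda_x$. The dual objective being large forces $\mu \le \eps$, so $D$ is $\eps$-biased. Combining the dual bound with the infeasibility assumption produces
\[
\E_{x\sim D}[f(x)] = \sum_x \lambda_x f(x) > \E[f] + \delta,
\]
contradicting that every $\eps$-biased distribution $\delta$-fools $f$. Hence the primal is feasible with value at most $\E[f] + 2\delta$, giving the upper sandwich $f_u$; the lower sandwich $f_\ell$ is obtained symmetrically by maximizing $\E[g]$ subject to $g \le f$.

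The main obstacle will be setting up the LP precisely so the dual yields a genuine probability distribution (nonnegative multipliers summing to one) rather than a signed measure, and verifying that the spectral-norm slack indeed forces bias at most $\eps$ rather than some other scale. This bookkeeping is what determines the exact constants $2\delta$ and $1 + \delta/\eps$ in the statement, and it is where one must be careful to split $L_1$ into the constant term (which contributes $1$ to the spectral-norm budget, coming from $\hat{g}(\emptyset)$) versus the higher-order terms (which contribute $\delta/\eps$).
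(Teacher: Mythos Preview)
The paper does not prove this lemma; it is quoted from \cite{DETT10} and used as a black box in Appendix~\ref{app:GMRTV}, so there is no paper-proof to compare against. Your proposal is the standard argument from the original source: the forward direction by a direct Fourier estimate through the sandwiching functions, and the converse by LP duality, reading off an $\eps$-biased distribution from the normalized dual multipliers on the pointwise constraints.

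One small remark: in your first direction the chain $\E_D[f] \le \E_D[f_u] \le \E[f_u] + \eps L_1(f_u) \le \E[f] + \delta + \delta$ actually yields $2\delta$, not $\delta$. You flag this with the ``up to constant factors'' caveat, and indeed the exact constant depends on which convention for $\delta$-sandwiching is in force (gap $\E[f_u - f]$ versus $\E[f_u - f_\ell]$, and whether $L_1$ counts the constant coefficient). The paper's applications are insensitive to this factor of two, so nothing downstream is affected.
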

We recall \cite[Thm.~4.1]{GopalanMRTV12}.
\begin{theorem}[\protect{\cite[Thm.~4.1]{GopalanMRTV12}}]\label{thm:GMRTV:4.1}
 	Let $F_1, \ldots, F_k : \pmone^n \to [0,1]$ be functions on disjoint input variables such that each $F_i$ has $\delta$-sandwiching approximation of spectral-norm at most $t$.
 	Let $H:[0,1]^k \to [0,1]$ be a multilinear function in its inputs.
 	Let $h: \pmone^n \to [0,1]$ be defined as $h(x) = H(F_1(x), \ldots, F_k(x))$. Then $h$ has $(16^k\delta)$-sandwiching approximations of spectral-norm at most $4^k(t+1)^k$. 	
\end{theorem}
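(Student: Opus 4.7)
My plan is to build sandwiching approximations of $h = H(F_1,\ldots,F_k)$ by first writing $H$ in the ``product-of-affine'' basis and then sandwiching each basis element term-by-term. Concretely, since $H:[0,1]^k\to[0,1]$ is multilinear, it has the unique representation
\[
H(y_1,\ldots,y_k) \;=\; \sum_{S\subseteq[k]} H(\mathbf{1}_S)\,\prod_{i\in S} y_i\,\prod_{i\notin S}(1-y_i),
\]
and crucially every coefficient $H(\mathbf{1}_S)\in[0,1]$ is nonnegative because $H$ is $[0,1]$-valued on the hypercube. This nonnegativity is the key structural fact that makes a pointwise sandwiching strategy preserve the desired inequalities after taking the convex combination.

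Next, for each $S$, I will sandwich the product $P_S(x):=\prod_{i\in S} F_i(x)\prod_{i\notin S}(1-F_i(x))$ factor-by-factor using the given sandwiches of each $F_i$. Assuming (as in \cite{DETT10}) that the sandwiching approximations can be taken $[0,1]$-valued, I set
\[
P_S^{u} \;:=\; \prod_{i\in S} F_i^{u}\,\prod_{i\notin S}\bigl(1-F_i^{\ell}\bigr),\qquad
P_S^{\ell} \;:=\; \prod_{i\in S} F_i^{\ell}\,\prod_{i\notin S}\bigl(1-F_i^{u}\bigr).
\]
Since every factor is nonnegative, $0\le P_S^{\ell}\le P_S\le P_S^{u}$ pointwise. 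I then aggregate via
\[
h^{u} \;:=\; \sum_{S} H(\mathbf{1}_S)\, P_S^{u},\qquad h^{\ell} \;:=\; \sum_{S} H(\mathbf{1}_S)\, P_S^{\ell},
\]
and the nonnegativity of $H(\mathbf{1}_S)$ immediately yields $h^{\ell}\le h\le h^{u}$.

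For the two bounds I would argue as follows. Spectral norm: because $F_1,\ldots,F_k$ live on disjoint sets of variables, $L_1$ is multiplicative over the product, and $L_1(1-F_i^{\cdot})\le 1+L_1(F_i^{\cdot})\le t+1$, so $L_1(P_S^{u})\le (t+1)^k$; summing over the $2^k$ subsets and using $|H(\mathbf{1}_S)|\le 1$ gives $L_1(h^{u})\le 2^k(t+1)^k\le 4^k(t+1)^k$, and similarly for $h^{\ell}$. Error: by disjointness, $\E[P_S^{u}]=\prod_{i\in S}\E[F_i^{u}]\prod_{i\notin S}(1-\E[F_i^{\ell}])$ and analogously for $P_S^{\ell}$, so a one-at-a-time telescoping swap (each factor lies in $[0,1]$, and a single swap costs at most $\E[F_i^{u}-F_i^{\ell}]\le\delta$ while other factors contribute at most $1$) yields $\E[P_S^{u}-P_S^{\ell}]\le k\delta$. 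Summing against $|H(\mathbf{1}_S)|\le 1$ over $2^k$ subsets gives $\E[h^{u}-h^{\ell}]\le k\cdot 2^k\cdot \delta\le 16^k\delta$.

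The one place that requires real care is ensuring the pointwise sandwich $P_S^{\ell}\le P_S\le P_S^{u}$ really holds; the proof hinges on each factor $F_i^{u},\,1-F_i^{\ell}$ being \emph{nonnegative}, which is automatic if we may take $F_i^{\ell},F_i^{u}\in[0,1]$, and is the standard convention in \cite{DETT10}. If one only assumes real-valued sandwichers with bounded $L_1$, the same plan works after truncating $F_i^{u}$ to $\min(1,F_i^{u})$ and $F_i^{\ell}$ to $\max(0,F_i^{\ell})$; since truncation is a contraction in $L_\infty$ and preserves the sandwich of a $[0,1]$-valued function, it at worst inflates spectral norm by a constant factor, which is absorbed by the slack between $2^k(t+1)^k$ and $4^k(t+1)^k$ (and between $k2^k\delta$ and $16^k\delta$). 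I expect this truncation bookkeeping to be the most delicate piece of an otherwise routine computation.
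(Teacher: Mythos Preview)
The paper does not prove this theorem; it quotes it from \cite{GopalanMRTV12} and uses it as a black box to derive Lemma~\ref{lemma:3.2}. So there is no proof here to compare against, and I evaluate your argument on its own.

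Your overall plan is the right one, and the \emph{upper} sandwich works exactly as you wrote, with no truncation needed: in $P_S^{u}=\prod_{i\in S}F_i^{u}\prod_{i\notin S}(1-F_i^{\ell})$ every factor is automatically nonnegative because $F_i^{u}\ge F_i\ge 0$ and $1-F_i^{\ell}\ge 1-F_i\ge 0$, so $P_S^{u}\ge P_S$ pointwise and $L_1(h^{u})\le 2^k(t+1)^k$.

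The gap is the lower sandwich. Your proposed remedy---truncate $F_i^{\ell}$ and $F_i^{u}$ to $[0,1]$---is not justified. Being an $L_\infty$ contraction says nothing about the Fourier $L_1$ norm; pointwise $\max(0,\cdot)$ applied to a low-spectral-norm function whose sign pattern encodes a complicated Boolean set can blow up $L_1$ badly, and there is no ``constant factor'' bound to absorb into the $4^k$ slack. You flag this as the most delicate step, but it is actually the missing step.

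The clean fix avoids truncation via complementation. Since $1-H$ is also multilinear and $[0,1]$-valued on $\{0,1\}^k$, and $1-h=\sum_S (1-H(\mathbf 1_S))\,P_S$ (using $\sum_S P_S\equiv 1$), apply your \emph{upper} construction to $1-h$ and set
\[
h^{\ell}\;:=\;1-\sum_{S}(1-H(\mathbf 1_S))\,P_S^{u}.
\]
Then $h^{\ell}\le h$ because $P_S\le P_S^{u}$ and $1-H(\mathbf 1_S)\ge 0$, and one gets the pleasant identity $h^{u}-h^{\ell}=\sum_S P_S^{u}-1$. By disjointness,
\[
\E\Big[\sum_S P_S^{u}\Big]=\prod_{i=1}^{k}\bigl(\E[F_i^{u}]+1-\E[F_i^{\ell}]\bigr)\le (1+\delta)^k,
\]
so $\E[h^{u}-h^{\ell}]\le (1+\delta)^k-1\le (2^k-1)\,\delta\le 16^k\delta$ (convexity in $\delta\in[0,1]$), and $L_1(h^{\ell})\le 1+2^k(t+1)^k\le 4^k(t+1)^k$. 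With this one change your proof goes through.
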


We translate the domain $[0,1]$ to $[-1,1]$ to get  a restatement of the previous theorem.
\begin{theorem}[\protect{\cite[Thm.~4.1]{GopalanMRTV12}, $\pm1$-version}]\label{thm:GMRTV:4.1-new-version}
 	Let $F_1, \ldots, F_k : \pmone^n \to [-1,1]$ be functions on disjoint input variables such that each $F_i$ has $\delta$-sandwiching approximation of spectral-norm at most $t$.
 	Let $H:[-1,1]^k \to [-1,1]$ be a multilinear function in its inputs.
 	Let $h: \pmone^n \to [-1,1]$ be defined as $h(x) = H(F_1(x), \ldots, F_k(x))$. Then $h$ has $(16^k\delta)$-sandwiching approximations of spectral-norm at most $2^{k+1}(t+4)^k$.
\end{theorem}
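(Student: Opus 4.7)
The plan is to derive the $\pm 1$-version from the $[0,1]$-version, Theorem~\ref{thm:GMRTV:4.1}, by means of the standard affine rescaling $y \mapsto (y+1)/2$ that bijects $[-1,1]$ with $[0,1]$. First I would set $F'_i := (F_i+1)/2 : \pmone^n \to [0,1]$ and define $H' : [0,1]^k \to [0,1]$ by $H'(z_1,\ldots,z_k) := \tfrac{1}{2}\bigl(H(2z_1-1,\ldots,2z_k-1)+1\bigr)$. Substituting an affine function of $z_i$ into each coordinate of a multilinear $H$ preserves degree-one dependence on every coordinate, so $H'$ is multilinear; since $H$ maps into $[-1,1]$, $H'$ indeed lands in $[0,1]$. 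Writing $h' := H'(F'_1,\ldots,F'_k)$, we have the key identity $h = 2h' - 1$, which transports sandwiching data between $h$ and $h'$.

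Next I would translate the sandwiching hypothesis. Given $\ell_i \le F_i \le u_i$ pointwise with $\E[u_i - \ell_i] \le \delta$ and $L_1(\ell_i), L_1(u_i) \le t$, define $\ell'_i := (\ell_i+1)/2$ and $u'_i := (u_i+1)/2$. Then $\ell'_i \le F'_i \le u'_i$ pointwise, $\E[u'_i - \ell'_i] \le \delta/2$, and each spectral norm is bounded by $(t+1)/2$, since the affine shift contributes at most $1/2$ to the constant Fourier coefficient and halves every other coefficient. Applying Theorem~\ref{thm:GMRTV:4.1} with error parameter $\delta/2$ and spectral bound $(t+1)/2$ then yields sandwiching approximators $\ell'' \le h' \le u''$ with $\E[u''-\ell''] \le 16^k \cdot \delta/2$ and $L_1(\ell''), L_1(u'') \le 4^k\bigl((t+1)/2+1\bigr)^k = 2^k(t+3)^k$.

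Finally I would set $\ell := 2\ell''-1$ and $u := 2u''-1$, obtaining $\ell \le h \le u$, $\E[u-\ell] = 2 \cdot 16^k \cdot \delta/2 = 16^k \delta$, and $L_1(\ell) \le 2L_1(\ell'')+1 \le 2^{k+1}(t+3)^k + 1 \le 2^{k+1}(t+4)^k$ (using $(t+4)^k - (t+3)^k \ge 1$ for $k \ge 1, t \ge 0$), and similarly for $u$. The entire argument is bookkeeping, so the main obstacle is merely to track constants carefully: one must verify that the constant shift costs at most an additive $1$ in the $L_1$ norm and that this additive cost can be absorbed into the slack between $(t+3)^k$ and $(t+4)^k$; there is no substantive obstacle. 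Combined with Lemma~\ref{lem:DETT}, which interconverts ``fooled by every $\eps$-biased distribution'' with ``$L_1$-bounded sandwiching approximators,'' this $\pm 1$-version immediately yields Lemma~\ref{lemma:3.2}.
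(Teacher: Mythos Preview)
Your proposal is correct and essentially identical to the paper's proof: both use the affine change of variables $F'_i=(F_i+1)/2$ and $H'(z)=\tfrac12(H(2z-1)+1)$, apply Theorem~\ref{thm:GMRTV:4.1} with parameters $\delta/2$ and $(t+1)/2$, and then undo the rescaling to obtain error $16^k\delta$ and spectral norm $1+2\cdot 2^k(t+3)^k\le 2^{k+1}(t+4)^k$. The only cosmetic difference is that you spell out the sandwiching approximators $\ell'_i,u'_i$ explicitly, whereas the paper states the derived bounds directly.
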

\begin{proof} We take $F'_1, \ldots, F'_k$ to be $\frac{F_1+1}{2}, \ldots, \frac{F_{k}+1}{2}$ respectively.
We get that $F'_i$ has $\delta/2$-sandwiching approximations of spectral-norm at most $(t+1)/2$, for all $i\in \{1,\ldots, k\}$.
We take $H':[0,1]^k\to [0,1]$ to be 
$H'(y_1, \ldots, y_k) = \frac{1+H(2y_1 - 1, \ldots, 2y_k-1)}{2}$.
Since $H$ is multilinear, so is $H'$.
By Theorem~\ref{thm:GMRTV:4.1}, we get that $H'(F'_1, \ldots, F'_k)$ 
has $(16^{k} \cdot \delta/2)$-sandwiching approximations of spectral-norm at most $4^k(\frac{t+1}{2}+1)^k$.
Since $H(F_1, \ldots, F_k) = 2\cdot H'(F'_1, \ldots, F'_k) - 1$
we got that $H$ as a $(16^{k} \cdot \delta)$-sandwiching approximations of spectral-norm at most $1+2\cdot4^k(\frac{t+1}{2}+1)^k = 1+2\cdot 2^{k}(t+3)^k \le 2^{k+1} \cdot (t+4)^k$.
\end{proof} 
	
Finally, we restate Lemma~\ref{lemma:3.2} and prove it.
\begin{lemma}
	Let $0 < \eps< \delta\le 1$.
 	Let $F_1, \ldots, F_k : \pmone^n \to [-1,1]$ be functions on disjoint input variables such that each $F_i$ is $\delta$-fooled by any $\eps$-biased distribution.
 	Let $H:[-1,1]^k \to [-1,1]$ be a multilinear function in its inputs.
 	Then $H(F_1(x), \ldots, F_k(x))$ is $(16^k \cdot 2\delta)$-fooled by any $\eps^k$-biased distribution.
\end{lemma}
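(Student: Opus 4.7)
The plan is to chain together the two tools set up immediately before the lemma: the sandwiching--small-bias equivalence of Lemma~\ref{lem:DETT}, and the composition theorem for sandwiching approximations under multilinear functions given in Theorem~\ref{thm:GMRTV:4.1-new-version}. The reduction is essentially mechanical once both are in place.

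First I would use the ``small-bias $\Rightarrow$ sandwiching'' direction of Lemma~\ref{lem:DETT}. By hypothesis each $F_i$ is $\delta$-fooled by every $\eps$-biased distribution, so that lemma supplies, for each $i$, a pair of $(2\delta)$-sandwiching approximations of spectral-norm at most $t := 1 + \delta/\eps$. I would then feed these into Theorem~\ref{thm:GMRTV:4.1-new-version} with the given multilinear $H:[-1,1]^k\to[-1,1]$: since the $F_i$'s are on disjoint variables, the conclusion of that theorem applies and yields $(16^k\cdot 2\delta)$-sandwiching approximations for $h(x)=H(F_1(x),\dots,F_k(x))$ of spectral-norm at most $S := 2^{k+1}(t+4)^k = 2^{k+1}(5+\delta/\eps)^k$.

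To finish, I would apply the ``sandwiching $\Rightarrow$ small-bias'' direction of Lemma~\ref{lem:DETT}: any $\eps'$-biased distribution $(16^k\cdot 2\delta)$-fools $h$ provided $S \le (16^k\cdot 2\delta)/\eps'$, i.e.\ provided
\[
\eps' \;\le\; \frac{16^k\cdot 2\delta}{2^{k+1}(5+\delta/\eps)^k} \;=\; \delta\cdot\Big(\tfrac{8\eps}{5\eps+\delta}\Big)^k.
\]
So it remains to verify $\eps^k \le \delta\cdot(8\eps/(5\eps+\delta))^k$ under the hypothesis $0<\eps<\delta\le 1$. Taking $k$-th roots, this reduces to $5\eps+\delta \le 8\delta^{1/k}$; since $\delta\le 1$ we have $\delta^{1/k}\ge\delta$, so it suffices to check $5\eps+\delta\le 8\delta$, equivalently $\eps\le 7\delta/5$, which follows immediately from $\eps<\delta$.

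There is no real obstacle here: both the compositional sandwiching step and the two translation steps are already packaged as lemmas, and the only arithmetic check is the routine $k$-th-root manipulation above. The only thing to be careful about is to apply the $\pm 1$-version (Theorem~\ref{thm:GMRTV:4.1-new-version}) rather than the original $[0,1]$-version, since $F_i$ and $H$ both have range $[-1,1]$; and to track that the extra $+4$ in the spectral-norm bound (compared to the $[0,1]$ statement) is absorbed harmlessly using $\delta/\eps\ge 1$.
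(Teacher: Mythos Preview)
Your proposal is correct and follows essentially the same three-step route as the paper: convert the small-bias hypothesis on each $F_i$ into $(2\delta)$-sandwiching approximations via Lemma~\ref{lem:DETT}, compose via Theorem~\ref{thm:GMRTV:4.1-new-version}, and convert back with the other direction of Lemma~\ref{lem:DETT}. The only difference is cosmetic: the paper defers the verification that $\eps' \ge \eps^k$ to ``a small calculation'', whereas you spell out the $k$-th-root reduction to $5\eps+\delta \le 8\delta^{1/k}$ explicitly.
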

\begin{proof}
	Using the second item in Lemma~\ref{lem:DETT}, since $F_1, \ldots, F_k$ are $\delta$-fooled by any $\eps$-biased distribution, we have that there exist $2\delta$-sandwiching approximations of spectral-norm at most $1  + \delta/\eps$.
	Thus by Thm.~\ref{thm:GMRTV:4.1-new-version}, $H(F_1, \ldots, F_k)$ has $ (16^k \cdot 2 \delta)$-sandwiching approximations of spectral-norm at most $2^{k+1} \cdot (\delta/\eps + 5)^k$.
	Set $\delta' := 16^k \cdot 2 \delta$ and $\eps' := \delta'/(2^{k+1} \cdot (\delta/\eps + 5)^k)$. Then, $H(F_1, \ldots, F_k)$ has $ \delta'$-sandwiching approximations of spectral-norm at most $\delta'/\eps'$.
	Using the first item in Lemma~\ref{lem:DETT} (noting that $\eps'<\delta'$), any $\eps'$-biased distribution $\delta'$-fools $H(F_1, \ldots, F_k)$.
	A small calculation shows that $\eps' \ge \eps^k$, hence any $\eps^k$-biased distribution also $\delta'$-fools $H(F_1, \ldots, F_k)$.
	\end{proof}


\subsection{Pseudorandom restrictions for the composition of 3ROBPs}\label{app:composition 3ROBPs}
We restate and prove Lemma~\ref{lemma:fooling H of width-3}.
\begin{lemma}
Let $f_1, \ldots, f_k$ be 3ROBPs on disjoint sets of variables of $[n]$. 
Let $H:\pmone^k \to \pmone$ be any Boolean function.
Then, $f = H(f_1, f_2, \ldots, f_k)$ is $(\delta \cdot (n+1)^{k})$-fooled by the pseudorandom partial assignment in Claim~\ref{claim:assigning} with parameter $\delta$.
\end{lemma}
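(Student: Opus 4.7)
The plan is to expand the Boolean function $H$ in its Fourier series. Since $H:\pmone^k\to\pmone$, we may write $H(y)=\sum_{S\subseteq[k]}\hat{H}(S)\prod_{i\in S}y_i$ with $|\hat{H}(S)|\le 1$ for every $S$ (each Fourier coefficient is an average of $\pmone$-values). Substituting gives $h(x)=\sum_S\hat{H}(S)F_S(x)$, where $F_S:=\prod_{i\in S}f_i$ is a product of $|S|$ 3ROBPs on pairwise disjoint blocks of variables of total length $\le n$. Writing $D$ for the distribution of the pseudorandom partial assignment from Claim~\ref{claim:assigning}, the triangle inequality reduces the task to proving, for every $S\subseteq[k]$, the bound
\[
|\E_U[F_S]-\E_D[F_S]|\le \delta\cdot n^{|S|}.
\]
Indeed, summing over $S$ and applying the binomial theorem yields $|\E_U[h]-\E_D[h]|\le \sum_{j=0}^k\binom{k}{j}\delta\cdot n^j=\delta(n+1)^k$, as claimed.

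I will prove the bound on $|\E_U[F_S]-\E_D[F_S]|$ by induction on $|S|$. The base cases $|S|\le 1$ are easy: if $|S|=0$ then $F_S\equiv 1$ and the error vanishes, and if $|S|=1$ then $F_S=f_i$ is itself a 3ROBP and Claim~\ref{claim:assigning} applies directly, giving error $\le\delta\le\delta n$. For the inductive step with $|S|\ge 2$, fix any $j\in S$, set $S':=S\setminus\{j\}$, and write $F_S=f_j\cdot F_{S'}$ where $f_j$ depends on $V_j$ and $F_{S'}$ depends on the disjoint $V_{-j}:=\bigcup_{i\in S'}V_i$. Under $U$, disjointness factorises the expectation: $\E_U[F_S]=\mu_j\cdot\E_U[F_{S'}]$ with $\mu_j:=\E_U[f_j]$. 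Under $D$, writing the pseudorandom assignment as first sampling the restriction $\rho=(T,y)$ and then $z\sim U_T$ uniformly, and using disjointness in the uniform $z$-component, I get $\E_D[F_S]=\E_\rho[f_j^\rho\cdot F_{S'}^\rho]$, where $f_j^\rho:=\E_z[f_j(\Sel_T(z,y))]\in[-1,1]$ depends only on $\rho|_{V_j}$ and $F_{S'}^\rho\in[-1,1]$ depends only on $\rho|_{V_{-j}}$. Decomposing the error as
\[
\E_D[F_S]-\E_U[F_S]=\mu_j\bigl(\E_D[F_{S'}]-\E_U[F_{S'}]\bigr)+\E_\rho\bigl[(f_j^\rho-\mu_j)\,F_{S'}^\rho\bigr],
\]
the first summand is at most $\delta\cdot n^{|S'|}=\delta\cdot n^{|S|-1}$ by the inductive hypothesis, consuming most of the target bound.

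The main obstacle is then controlling the residual covariance-like term $\E_\rho[(f_j^\rho-\mu_j)F_{S'}^\rho]$: a naive Fourier expansion would pay the spectral norm of $F_{S'}^\rho$, which can be exponential for a generic 3ROBP. My plan for overcoming this is to exploit the explicit small-bias structure of $\rho$ guaranteed by Claim~\ref{claim:assigning}---both $T$ and $y$ are drawn from $(\eps/n)^{\omega(1)}$-biased distributions, vastly smaller than $\delta$---so that the joint distribution of $(\rho|_{V_j},\rho|_{V_{-j}})$ is extremely close to the product of its marginals. Under the genuinely independent product sampling the term factors as $(\E_D[f_j]-\mu_j)\cdot\E_D[F_{S'}]$, which is bounded by $\delta$ via Claim~\ref{claim:assigning}; the extra discrepancy introduced by replacing the product with the small-biased joint distribution can be absorbed using the low-degree Fourier analysis from Section~\ref{sec:prelim} (Claims~\ref{claim:sampling-t} and \ref{claim:inclusion-exclusion}) together with the 3ROBP structure of $f_j$, yielding a bound of order $\delta\cdot n^{|S|-1}(n-1)$ on this covariance term. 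Combined with the inductive estimate for the marginal part this gives $|\E_D[F_S]-\E_U[F_S]|\le \delta\cdot n^{|S|}$, completing the induction and thus the lemma.
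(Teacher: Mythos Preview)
Your reduction to bounding $|\E_U[F_S]-\E_D[F_S]|$ for each $S$ via the Fourier expansion of $H$ is the same opening move the paper makes, and the final summation $\sum_S n^{|S|}=(n+1)^k$ is also how the paper arrives at the stated bound. The difference is in how you propose to control each term $F_S=\prod_{i\in S}f_i$.

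There is a genuine gap in your inductive step, specifically in the treatment of the ``covariance'' term $\E_\rho[(f_j^\rho-\mu_j)F_{S'}^\rho]$. You assert that because $T$ and $y$ are $(\eps/n)^{\omega(1)}$-biased, the joint law of $(\rho|_{V_j},\rho|_{V_{-j}})$ is close enough to the product of its marginals that this term is $O(\delta\cdot n^{|S|-1})$. But small-bias of $T$ and $y$ only fools \emph{parities} (equivalently, functions with small $L_1$ Fourier norm). The functions $f_j^\rho$ and $F_{S'}^\rho$ are bias-functions of width-3 ROBPs, and a width-3 ROBP can have exponentially large $L_1$ norm; there is no reason the product $(f_j^\rho-\mu_j)\cdot F_{S'}^\rho$, viewed as a function of the seed bits of $\rho$, is fooled by small-bias. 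Claims~\ref{claim:sampling-t} and~\ref{claim:inclusion-exclusion} concern the marginals of $T$ on small sets and do not help here: you would still be paying $L_1(F_{S'}^\rho)$, which you have no control over.

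The paper's proof does not treat Claim~\ref{claim:assigning} as a black box. It opens up the construction: Claim~\ref{claim:assigning} iterates Theorem~\ref{thm:main_two_steps}, and inside each iteration the bias-function of each $f_i$ is, via the relabeling trick (Theorem~\ref{thm:the-bias-trick}) and the BDVY decomposition (Theorem~\ref{thm:BDVY}), rewritten as a linear combination of XORs of \emph{short} width-3 ROBPs with total coefficient mass at most $|V(f_i)|$. The product $\prod_{i\in S}$ of these linear combinations is then itself a linear combination of XORs of short ROBPs with coefficient mass $\prod_{i\in S}|V(f_i)|$, and each such XOR is fooled directly by Theorem~\ref{thm:pseudorandom-restriction-XOR-short}. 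This is precisely the structural input that converts ``product of 3ROBPs on disjoint variables'' into something small-bias can handle; your induction tries to sidestep it and cannot.
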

\begin{proof} 
Claim~\ref{claim:assigning} applies Theorem~\ref{thm:main_two_steps} iteratively $t<n$ times with error parameter $\delta/n$. Thus, it suffices to show that under each application of pseudorandom restriction from Theorem~\ref{thm:main_two_steps}  the acceptance probability of $H(f_1, f_2, \ldots, f_k)$ changes by at most $(\delta/n) \cdot (n+1)^{k}$.

Let $\eps := \delta/n$.
Let $V(f_1), \ldots, V(f_k)$ be the sets of variables on which $f_1, \ldots, f_k$ depend.
We write $H$ in the Fourier basis:
$H(y_1, \ldots, y_k) = \sum_{S\subseteq [k]} \hat{H}(S) \cdot \prod_{i\in S}{y_i}$.
Thus,
$H(f_1(x), \ldots, f_k(x)) = \sum_{S\subseteq [k]} \hat{H}(S) \cdot \prod_{i\in S}{f_i(x)}$.
Recall that the pseudorandom assignment in Theorem~\ref{thm:main_two_steps} is composed of two stages:
Let $\eps_1 = \eps/2$ and $\eps_2 = \eps/2n$.
\begin{enumerate}
	\item Pick $T_0 \subseteq [n]$ using a $(\eps_1/n)^{10}$-biased distribution with marginals $1/2$.
	\item Assign the coordinates in $[n]\setminus T_0$ uniformly at random.
	\item
	\begin{enumerate}
	\item Pick $T\subseteq T_0$ using a $\delta_T$-biased distribution with marginals $p = 1/O(\log \log ( n/\eps_2))^{6}$.
	\item  Assign the coordinates in $T_0\setminus T$ uniformly at random.
	\item Assign the coordinates in $T$ using a $(\eps_2/n)^{O(\log \log (n/\eps_2))}$-biased distribution $\Dx$.
	\end{enumerate}
\end{enumerate}
Recall that for a fixed $T_0$,  the bias-function of any program $f_i$ behaves the same under any relabeling of the layers in $[n]\setminus T_0$. We imagine as if these layers are relabeled so that a collision is possible, and denote this relabeled program by $f_i^{T_0}$.
We have $\Bias_{T_0}(f_i)(x) = \E_{y\sim U_{[n]\setminus T_0}}[(f_i^{T_0})_{T_0|y}(x)]$ and similarly
since the sets $V(f_1), \ldots, V(f_k)$ are disjoint
$\Bias_{T_0}(H(f_1, \ldots, f_k))(x) = \E_{y\sim U_{[n]\setminus T_0}}[H((f_1^{T_0})_{T_0|y}(x), \ldots, (f_k^{T_0})_{T_0|y}(x))]$.
By Theorems~\ref{thm:the-bias-trick} and~\ref{thm:BDVY}, with  probability at least $1-\eps_1\cdot k$ the choice of $T_0$ and $y$, we can write each $(f_i^{T_0})_{T_0|y}(x)$ for $i=1, \ldots, k$ as a linear combination of $\prod_{j \in [m_i]}[f_{i,j}]$ where the sum of coefficients in absolute value is at most the number of variables in $f_i$ (i.e., $|V(f_i)|$), and each $f_{i,j}$ is a ROBP on at most $O(\log(n/\eps))$ bits.
Overall with high probability over $T_0, y$ the product $\prod_{i\in S} (f_i^{T_0})_{T_0|y}$ can be written as a linear combination of the functions $\prod_{i\in S} \prod_{j \in [m_i]}[f_{i,j}]$ where the sum of coefficients in absolute values in the linear combination is at most $\prod_{i\in S}{|V(f_i)|}$.
Thus, $H((f_1^{T_0})_{T_0|y}, \ldots (f_k^{T_0})_{T_0|y})$ can be written as a linear combination of XOR of $O(\log(n/\eps))$-length width-3 ROBPs where the sum of coefficients is a most
$$
\sum_{S\subseteq [k]} |\hat{H}(S)| \cdot \prod_{i\in S}|V(f_i)| 
\le \sum_{S\subseteq [k]} 1 \cdot \prod_{i\in S}|V(f_i)| 
= \prod_{i=1}^{k}(1+|V(f_i)|) 
\le (n+1)^k
$$

By Theorem~\ref{thm:pseudorandom-restriction-XOR-short}, each XOR of $O(\log(n/\eps))$-length width-3 ROBPs is $\eps_2$-fooled by the pseudorandom assignment defined by Step~3 above, thus the overall error is at most $\eps_1\cdot k + \eps_2 \cdot (n+1)^k \le \eps \cdot (n+1)^k = (\delta/n) \cdot (n+1)^k$.
\end{proof}

\end{document}